\newcommand{\argemp}[2]
  {\if&#1&\else#2\fi}
\newcommand{\argdef}[2]
  {\if&#1&#2\else#1\fi}
\newcommand{\argint}[3]
  {\if&#2&\else#1#2#3\fi}
\newcommand{\argext}[3]
  {\if&#1&#3\else#1\if&#3&\else#2#3\fi\fi}
\newcommandx{\argsubsup}[3][2=, 3=]
  {\def\argsubscript{{#2}}\def\argsuperscript{{#3}}#1}
\newcommandx{\argind}[9][2=, 3=, 4=, 5=, 6=, 7=, 8=, 9=]
  {%
  \switch[#1=]%
    \case{0}#2%
    \case{1}#3%
    \case{2}#4%
    \case{3}#5%
    \case{4}#6%
    \case{5}#7%
    \case{6}#8%
    \case{7}#9%
    \otherwise\ensuremath{\clubsuit}%
  \endswitch%
  }
\newcommand{\arga}[1]
  {#1}
\newcommand{\argb}[2]
  {\argext{\arga{#1}}{, \allowbreak}{#2}}
\newcommand{\argc}[3]
  {\argext{\argb{#1}{#2}}{, \allowbreak}{#3}}
\newcommand{\argd}[4]
  {\argext{\argc{#1}{#2}{#3}}{, \allowbreak}{#4}}
\newcommand{\arge}[5]
  {\argext{\argd{#1}{#2}{#3}{#4}}{, \allowbreak}{#5}}
\newcommand{\argf}[6]
  {\argext{\arge{#1}{#2}{#3}{#4}{#5}}{, \allowbreak}{#6}}
\newcommand{\argg}[7]
  {\argext{\argf{#1}{#2}{#3}{#4}{#5}{#6}}{, \allowbreak}{#7}}
\newcommand{\argh}[8]
  {\argext{\argg{#1}{#2}{#3}{#4}{#5}{#6}{#7}}{, \allowbreak}{#8}}
\newcommand{\argi}[9]
  {\argext{\argh{#1}{#2}{#3}{#4}{#5}{#6}{#7}{#8}}{, \allowbreak}{#9}}
\newcommand{\argj}[9]
  {%
  \def\valarga{#1}%
  \def\valargb{#2}%
  \def\valargc{#3}%
  \def\valargd{#4}%
  \def\valarge{#5}%
  \def\valargf{#6}%
  \def\valargg{#7}%
  \def\valargh{#8}%
  \def\valargi{#9}%
  \argauxj%
  }
\newcommand{\argauxj}[1]
  {%
  \argext%
    {
    \argi
      {\valarga} {\valargb} {\valargc} {\valargd} {\valarge} {\valargf}
      {\valargg} {\valargh} {\valargi}
    }
    {, \allowbreak}{#1}%
  }
\newcommand{\argk}[9]
  {%
  \def\valarga{#1}%
  \def\valargb{#2}%
  \def\valargc{#3}%
  \def\valargd{#4}%
  \def\valarge{#5}%
  \def\valargf{#6}%
  \def\valargg{#7}%
  \def\valargh{#8}%
  \def\valargi{#9}%
  \argauxk%
  }
\newcommand{\argauxk}[2]
  {\argext{\argauxj{#1}}{, \allowbreak}{#2}}
\newcommand{\argl}[9]
  {%
  \def\valarga{#1}%
  \def\valargb{#2}%
  \def\valargc{#3}%
  \def\valargd{#4}%
  \def\valarge{#5}%
  \def\valargf{#6}%
  \def\valargg{#7}%
  \def\valargh{#8}%
  \def\valargi{#9}%
  \argauxl%
  }
\newcommand{\argauxl}[3]
  {\argext{\argauxk{#1}{#2}}{, \allowbreak}{#3}}
\newcommand{\argm}[9]
  {%
  \def\valarga{#1}%
  \def\valargb{#2}%
  \def\valargc{#3}%
  \def\valargd{#4}%
  \def\valarge{#5}%
  \def\valargf{#6}%
  \def\valargg{#7}%
  \def\valargh{#8}%
  \def\valargi{#9}%
  \argauxm%
  }
\newcommand{\argauxm}[4]
  {\argext{\argauxl{#1}{#2}{#3}}{, \allowbreak}{#4}}
\newcommand{\argn}[9]
  {%
  \def\valarga{#1}%
  \def\valargb{#2}%
  \def\valargc{#3}%
  \def\valargd{#4}%
  \def\valarge{#5}%
  \def\valargf{#6}%
  \def\valargg{#7}%
  \def\valargh{#8}%
  \def\valargi{#9}%
  \argauxn%
  }
\newcommand{\argauxn}[5]
  {\argext{\argauxm{#1}{#2}{#3}{#4}}{, \allowbreak}{#5}}
\newcommand{\argo}[9]
  {%
  \def\valarga{#1}%
  \def\valargb{#2}%
  \def\valargc{#3}%
  \def\valargd{#4}%
  \def\valarge{#5}%
  \def\valargf{#6}%
  \def\valargg{#7}%
  \def\valargh{#8}%
  \def\valargi{#9}%
  \argauxo%
  }
\newcommand{\argauxo}[6]
  {\argext{\argauxn{#1}{#2}{#3}{#4}{#5}}{, \allowbreak}{#6}}
\newcommand{\argp}[9]
  {%
  \def\valarga{#1}%
  \def\valargb{#2}%
  \def\valargc{#3}%
  \def\valargd{#4}%
  \def\valarge{#5}%
  \def\valargf{#6}%
  \def\valargg{#7}%
  \def\valargh{#8}%
  \def\valargi{#9}%
  \argauxp%
  }
\newcommand{\argauxp}[7]
  {\argext{\argauxo{#1}{#2}{#3}{#4}{#5}{#6}}{, \allowbreak}{#7}}
\newcommand{\argq}[9]
  {%
  \def\valarga{#1}%
  \def\valargb{#2}%
  \def\valargc{#3}%
  \def\valargd{#4}%
  \def\valarge{#5}%
  \def\valargf{#6}%
  \def\valargg{#7}%
  \def\valargh{#8}%
  \def\valargi{#9}%
  \argauxq%
  }
\newcommand{\argauxq}[8]
  {\argext{\argauxp{#1}{#2}{#3}{#4}{#5}{#6}{#7}}{, \allowbreak}{#8}}
\newcommand{\argr}[9]
  {%
  \def\valarga{#1}%
  \def\valargb{#2}%
  \def\valargc{#3}%
  \def\valargd{#4}%
  \def\valarge{#5}%
  \def\valargf{#6}%
  \def\valargg{#7}%
  \def\valargh{#8}%
  \def\valargi{#9}%
  \argauxr%
  }
\newcommand{\argauxr}[9]
  {\argext{\argauxq{#1}{#2}{#3}{#4}{#5}{#6}{#7}{#8}}{, \allowbreak}{#9}}
\newcommand{\txtfnt}[2][]
  {{%
  \IfStrEq{#1}{}
    {#2}
    {%
    \StrLeft{#1}{2}[\optbgn]%
    \StrGobbleLeft{#1}{2}[\optend]%
    \IfStrEqCase{\optbgn}
      {%
      {Rm}{\rmfamily\txtfnt[\optend]{#2}}%
      {Sf}{\sffamily\txtfnt[\optend]{#2}}%
      {Tt}{\ttfamily\txtfnt[\optend]{#2}}%
      {Up}{\upshape\txtfnt[\optend]{#2}}%
      {It}{\itshape\txtfnt[\optend]{#2}}%
      {Sl}{\slshape\txtfnt[\optend]{#2}}%
      {Sc}{\scshape\txtfnt[\optend]{#2}}%
      {Md}{\mdseries\txtfnt[\optend]{#2}}%
      {Bf}{\bfseries\txtfnt[\optend]{#2}}%
      {Em}{\emph{\txtfnt[\optend]{#2}}}%
      }
      [\ensuremath{\clubsuit}]%
    }%
  }}
\newcommand{\txtsub}[2][]
  {\argemp{#2}{\ensuremath{_{\text{\txtfnt[#1]{#2}}}}}}
\newcommand{\txtsup}[2][]
  {\argemp{#2}{\ensuremath{^{\text{\txtfnt[#1]{#2}}}}}}
\newcommandx{\txt}[4][1=, 3=, 4=]
  {\text{\txtfnt[#1]{#2}\ensuremath{\txtsub[#1]{#3}\txtsup[#1]{#4}}}\xspace}
\newcommandx{\txtarg}[5][1=, 3=, 4=]
  {{\txt[#1]{#2}[#3][#4]\argint{(}{#5}{)}}\xspace}
\newcommand{\txtstyname}{RmScMd}
\newcommand{\txtname}[1][]
  {\txt[\argdef{#1}{\txtstyname}]}
\newcommand{\txtargname}[1][]
  {\txtarg[\argdef{#1}{\txtstyname}]}
\newcommand{\txtstyabr}{Em}
\newcommand{\txtabr}[1][]
  {\txt[\argdef{#1}{\txtstyabr}]}
\newcommandx{\mthfnt}[3][1=, 2=0]
  {{%
  \IfStrEqCase{#1}
    {%
    {}%
      {#3}%
    {Name}%
      {%
      \IfStrEqCase{#2}
        {%
        {0}{\mathcal{#3}}%
        {1}{\mathscr{#3}}%
        {2}{\mathfrak{#3}}%
        {3}{\mathbf{#3}}%
        }
        [\ensuremath{\clubsuit}]%
      }%
    {Set}%
      {%
      \IfStrEqCase{#2}
        {%
        {0}{\mathrm{#3}}%
        {1}{\mathbb{#3}}%
        {2}{\mathsf{#3}}%
        {3}{\mathtt{#3}}%
        }
        [\ensuremath{\clubsuit}]%
      }%
    {Fun}%
      {%
      \IfStrEqCase{#2}
        {%
        {0}{\mathsf{#3}}%
        {1}{\mathrm{#3}}%
        }
        [\ensuremath{\clubsuit}]%
      }%
    {Rel}%
      {%
      \IfStrEqCase{#2}
        {%
        {0}{\mathit{#3}}%
        {1}{\mathtt{#3}}%
        }
        [\ensuremath{\clubsuit}]%
      }%
    {Sym}%
      {%
      \IfStrEqCase{#2}
        {%
        {0}{\mathtt{#3}}%
        {1}{\mathbf{#3}}%
        }
        [\ensuremath{\clubsuit}]%
      }%
    {Elm}%
      {\mathnormal{#3}}
    }
    [\ensuremath{\clubsuit}]%
  }}
\newcommand{\mthsub}[1]
  {\argemp{#1}{\ensuremath{_{\mathnormal{#1}}}}}
\newcommand{\mthsup}[1]
  {\argemp{#1}{\ensuremath{^{\mathnormal{#1}}}}}
\newcommandx{\mth}[5][1=, 2=0, 4=, 5=]
  {{\ensuremath{\mthfnt[#1][#2]{#3}\mthsub{#4}\mthsup{#5}}}}
\newcommandx{\mtharg}[6][1=, 2=0, 4=, 5=]
  {{\mth[#1][#2]{#3}[#4][#5]\ensuremath{\argint{\!\left(}{#6}{\right)}}}}
\newcommand{\mthempty}
  {\mth[][]}
\newcommand{\mthstyname}{0}
\newcommand{\mthname}[1][]
  {\mth[Name][\argdef{#1}{\mthstyname}]}
\newcommand{\mthstyset}{0}
\newcommand{\mthset}[1][]
  {\mth[Set][\argdef{#1}{\mthstyset}]}
\newcommand{\mthargset}[1][]
  {\mtharg[Set][\argdef{#1}{\mthstyset}]}
\newcommand{\mthstyfun}{0}
\newcommand{\mthfun}[1][]
  {\mth[Fun][\argdef{#1}{\mthstyfun}]}
\newcommand{\mthargfun}[1][]
  {\mtharg[Fun][\argdef{#1}{\mthstyfun}]}
\newcommand{\mthstyrel}{0}
\newcommand{\mthrel}[1][]
  {\mth[Rel][\argdef{#1}{\mthstyrel}]}
\newcommand{\mthstysym}{0}
\newcommand{\mthsym}[1][]
  {\mth[Sym][\argdef{#1}{\mthstysym}]}
\newcommand{\mthstyelm}{0}
\newcommand{\mthelm}[1][]
  {\mth[Elm][\argdef{#1}{\mthstyelm}]}
\newcommandx{\AName}[4][1=, 2=, 3=, 4=]{\mthname[#4]{A#3}[#1][#2]}
\newcommandx{\BName}[4][1=, 2=, 3=, 4=]{\mthname[#4]{B#3}[#1][#2]}
\newcommandx{\CName}[4][1=, 2=, 3=, 4=]{\mthname[#4]{C#3}[#1][#2]}
\newcommandx{\DName}[4][1=, 2=, 3=, 4=]{\mthname[#4]{D#3}[#1][#2]}
\newcommandx{\EName}[4][1=, 2=, 3=, 4=]{\mthname[#4]{E#3}[#1][#2]}
\newcommandx{\FName}[4][1=, 2=, 3=, 4=]{\mthname[#4]{F#3}[#1][#2]}
\newcommandx{\GName}[4][1=, 2=, 3=, 4=]{\mthname[#4]{G#3}[#1][#2]}
\newcommandx{\HName}[4][1=, 2=, 3=, 4=]{\mthname[#4]{H#3}[#1][#2]}
\newcommandx{\IName}[4][1=, 2=, 3=, 4=]{\mthname[#4]{I#3}[#1][#2]}
\newcommandx{\JName}[4][1=, 2=, 3=, 4=]{\mthname[#4]{J#3}[#1][#2]}
\newcommandx{\KName}[4][1=, 2=, 3=, 4=]{\mthname[#4]{K#3}[#1][#2]}
\newcommandx{\LName}[4][1=, 2=, 3=, 4=]{\mthname[#4]{L#3}[#1][#2]}
\newcommandx{\MName}[4][1=, 2=, 3=, 4=]{\mthname[#4]{M#3}[#1][#2]}
\newcommandx{\NName}[4][1=, 2=, 3=, 4=]{\mthname[#4]{N#3}[#1][#2]}
\newcommandx{\OName}[4][1=, 2=, 3=, 4=]{\mthname[#4]{O#3}[#1][#2]}
\newcommandx{\PName}[4][1=, 2=, 3=, 4=]{\mthname[#4]{P#3}[#1][#2]}
\newcommandx{\QName}[4][1=, 2=, 3=, 4=]{\mthname[#4]{Q#3}[#1][#2]}
\newcommandx{\RName}[4][1=, 2=, 3=, 4=]{\mthname[#4]{R#3}[#1][#2]}
\newcommandx{\SName}[4][1=, 2=, 3=, 4=]{\mthname[#4]{S#3}[#1][#2]}
\newcommandx{\TName}[4][1=, 2=, 3=, 4=]{\mthname[#4]{T#3}[#1][#2]}
\newcommandx{\UName}[4][1=, 2=, 3=, 4=]{\mthname[#4]{U#3}[#1][#2]}
\newcommandx{\VName}[4][1=, 2=, 3=, 4=]{\mthname[#4]{V#3}[#1][#2]}
\newcommandx{\WName}[4][1=, 2=, 3=, 4=]{\mthname[#4]{W#3}[#1][#2]}
\newcommandx{\XName}[4][1=, 2=, 3=, 4=]{\mthname[#4]{X#3}[#1][#2]}
\newcommandx{\YName}[4][1=, 2=, 3=, 4=]{\mthname[#4]{Y#3}[#1][#2]}
\newcommandx{\ZName}[4][1=, 2=, 3=, 4=]{\mthname[#4]{Z#3}[#1][#2]}
\newcommandx{\ASet}[4][1=, 2=, 3=, 4=]{\mthset[#4]{A#3}[#1][#2]}
\newcommandx{\BSet}[4][1=, 2=, 3=, 4=]{\mthset[#4]{B#3}[#1][#2]}
\newcommandx{\CSet}[4][1=, 2=, 3=, 4=]{\mthset[#4]{C#3}[#1][#2]}
\newcommandx{\DSet}[4][1=, 2=, 3=, 4=]{\mthset[#4]{D#3}[#1][#2]}
\newcommandx{\ESet}[4][1=, 2=, 3=, 4=]{\mthset[#4]{E#3}[#1][#2]}
\newcommandx{\FSet}[4][1=, 2=, 3=, 4=]{\mthset[#4]{F#3}[#1][#2]}
\newcommandx{\GSet}[4][1=, 2=, 3=, 4=]{\mthset[#4]{G#3}[#1][#2]}
\newcommandx{\HSet}[4][1=, 2=, 3=, 4=]{\mthset[#4]{H#3}[#1][#2]}
\newcommandx{\ISet}[4][1=, 2=, 3=, 4=]{\mthset[#4]{I#3}[#1][#2]}
\newcommandx{\JSet}[4][1=, 2=, 3=, 4=]{\mthset[#4]{J#3}[#1][#2]}
\newcommandx{\KSet}[4][1=, 2=, 3=, 4=]{\mthset[#4]{K#3}[#1][#2]}
\newcommandx{\LSet}[4][1=, 2=, 3=, 4=]{\mthset[#4]{L#3}[#1][#2]}
\newcommandx{\MSet}[4][1=, 2=, 3=, 4=]{\mthset[#4]{M#3}[#1][#2]}
\newcommandx{\NSet}[4][1=, 2=, 3=, 4=]{\mthset[#4]{N#3}[#1][#2]}
\newcommandx{\OSet}[4][1=, 2=, 3=, 4=]{\mthset[#4]{O#3}[#1][#2]}
\newcommandx{\PSet}[4][1=, 2=, 3=, 4=]{\mthset[#4]{P#3}[#1][#2]}
\newcommandx{\QSet}[4][1=, 2=, 3=, 4=]{\mthset[#4]{Q#3}[#1][#2]}
\newcommandx{\RSet}[4][1=, 2=, 3=, 4=]{\mthset[#4]{R#3}[#1][#2]}
\newcommandx{\SSet}[4][1=, 2=, 3=, 4=]{\mthset[#4]{S#3}[#1][#2]}
\newcommandx{\TSet}[4][1=, 2=, 3=, 4=]{\mthset[#4]{T#3}[#1][#2]}
\newcommandx{\USet}[4][1=, 2=, 3=, 4=]{\mthset[#4]{U#3}[#1][#2]}
\newcommandx{\VSet}[4][1=, 2=, 3=, 4=]{\mthset[#4]{V#3}[#1][#2]}
\newcommandx{\WSet}[4][1=, 2=, 3=, 4=]{\mthset[#4]{W#3}[#1][#2]}
\newcommandx{\XSet}[4][1=, 2=, 3=, 4=]{\mthset[#4]{X#3}[#1][#2]}
\newcommandx{\YSet}[4][1=, 2=, 3=, 4=]{\mthset[#4]{Y#3}[#1][#2]}
\newcommandx{\ZSet}[4][1=, 2=, 3=, 4=]{\mthset[#4]{Z#3}[#1][#2]}
\newcommandx{\aSet}[4][1=, 2=, 3=, 4=]{\mthset[#4]{a#3}[#1][#2]}
\newcommandx{\bSet}[4][1=, 2=, 3=, 4=]{\mthset[#4]{b#3}[#1][#2]}
\newcommandx{\cSet}[4][1=, 2=, 3=, 4=]{\mthset[#4]{c#3}[#1][#2]}
\newcommandx{\dSet}[4][1=, 2=, 3=, 4=]{\mthset[#4]{d#3}[#1][#2]}
\newcommandx{\eSet}[4][1=, 2=, 3=, 4=]{\mthset[#4]{e#3}[#1][#2]}
\newcommandx{\fSet}[4][1=, 2=, 3=, 4=]{\mthset[#4]{f#3}[#1][#2]}
\newcommandx{\gSet}[4][1=, 2=, 3=, 4=]{\mthset[#4]{g#3}[#1][#2]}
\newcommandx{\hSet}[4][1=, 2=, 3=, 4=]{\mthset[#4]{h#3}[#1][#2]}
\newcommandx{\iSet}[4][1=, 2=, 3=, 4=]{\mthset[#4]{i#3}[#1][#2]}
\newcommandx{\jSet}[4][1=, 2=, 3=, 4=]{\mthset[#4]{j#3}[#1][#2]}
\newcommandx{\kSet}[4][1=, 2=, 3=, 4=]{\mthset[#4]{k#3}[#1][#2]}
\newcommandx{\lSet}[4][1=, 2=, 3=, 4=]{\mthset[#4]{l#3}[#1][#2]}
\newcommandx{\mSet}[4][1=, 2=, 3=, 4=]{\mthset[#4]{m#3}[#1][#2]}
\newcommandx{\nSet}[4][1=, 2=, 3=, 4=]{\mthset[#4]{n#3}[#1][#2]}
\newcommandx{\oSet}[4][1=, 2=, 3=, 4=]{\mthset[#4]{o#3}[#1][#2]}
\newcommandx{\pSet}[4][1=, 2=, 3=, 4=]{\mthset[#4]{p#3}[#1][#2]}
\newcommandx{\qSet}[4][1=, 2=, 3=, 4=]{\mthset[#4]{q#3}[#1][#2]}
\newcommandx{\rSet}[4][1=, 2=, 3=, 4=]{\mthset[#4]{r#3}[#1][#2]}
\newcommandx{\sSet}[4][1=, 2=, 3=, 4=]{\mthset[#4]{s#3}[#1][#2]}
\newcommandx{\tSet}[4][1=, 2=, 3=, 4=]{\mthset[#4]{t#3}[#1][#2]}
\newcommandx{\uSet}[4][1=, 2=, 3=, 4=]{\mthset[#4]{u#3}[#1][#2]}
\newcommandx{\vSet}[4][1=, 2=, 3=, 4=]{\mthset[#4]{v#3}[#1][#2]}
\newcommandx{\wSet}[4][1=, 2=, 3=, 4=]{\mthset[#4]{w#3}[#1][#2]}
\newcommandx{\xSet}[4][1=, 2=, 3=, 4=]{\mthset[#4]{x#3}[#1][#2]}
\newcommandx{\ySet}[4][1=, 2=, 3=, 4=]{\mthset[#4]{y#3}[#1][#2]}
\newcommandx{\zSet}[4][1=, 2=, 3=, 4=]{\mthset[#4]{z#3}[#1][#2]}
\newcommandx{\AFun}[4][1=, 2=, 3=, 4=]{\mthfun[#4]{A#3}[#1][#2]}
\newcommandx{\BFun}[4][1=, 2=, 3=, 4=]{\mthfun[#4]{B#3}[#1][#2]}
\newcommandx{\CFun}[4][1=, 2=, 3=, 4=]{\mthfun[#4]{C#3}[#1][#2]}
\newcommandx{\DFun}[4][1=, 2=, 3=, 4=]{\mthfun[#4]{D#3}[#1][#2]}
\newcommandx{\EFun}[4][1=, 2=, 3=, 4=]{\mthfun[#4]{E#3}[#1][#2]}
\newcommandx{\FFun}[4][1=, 2=, 3=, 4=]{\mthfun[#4]{F#3}[#1][#2]}
\newcommandx{\GFun}[4][1=, 2=, 3=, 4=]{\mthfun[#4]{G#3}[#1][#2]}
\newcommandx{\HFun}[4][1=, 2=, 3=, 4=]{\mthfun[#4]{H#3}[#1][#2]}
\newcommandx{\IFun}[4][1=, 2=, 3=, 4=]{\mthfun[#4]{I#3}[#1][#2]}
\newcommandx{\JFun}[4][1=, 2=, 3=, 4=]{\mthfun[#4]{J#3}[#1][#2]}
\newcommandx{\KFun}[4][1=, 2=, 3=, 4=]{\mthfun[#4]{K#3}[#1][#2]}
\newcommandx{\LFun}[4][1=, 2=, 3=, 4=]{\mthfun[#4]{L#3}[#1][#2]}
\newcommandx{\MFun}[4][1=, 2=, 3=, 4=]{\mthfun[#4]{M#3}[#1][#2]}
\newcommandx{\NFun}[4][1=, 2=, 3=, 4=]{\mthfun[#4]{N#3}[#1][#2]}
\newcommandx{\OFun}[4][1=, 2=, 3=, 4=]{\mthfun[#4]{O#3}[#1][#2]}
\newcommandx{\PFun}[4][1=, 2=, 3=, 4=]{\mthfun[#4]{P#3}[#1][#2]}
\newcommandx{\QFun}[4][1=, 2=, 3=, 4=]{\mthfun[#4]{Q#3}[#1][#2]}
\newcommandx{\RFun}[4][1=, 2=, 3=, 4=]{\mthfun[#4]{R#3}[#1][#2]}
\newcommandx{\SFun}[4][1=, 2=, 3=, 4=]{\mthfun[#4]{S#3}[#1][#2]}
\newcommandx{\TFun}[4][1=, 2=, 3=, 4=]{\mthfun[#4]{T#3}[#1][#2]}
\newcommandx{\UFun}[4][1=, 2=, 3=, 4=]{\mthfun[#4]{U#3}[#1][#2]}
\newcommandx{\VFun}[4][1=, 2=, 3=, 4=]{\mthfun[#4]{V#3}[#1][#2]}
\newcommandx{\WFun}[4][1=, 2=, 3=, 4=]{\mthfun[#4]{W#3}[#1][#2]}
\newcommandx{\XFun}[4][1=, 2=, 3=, 4=]{\mthfun[#4]{X#3}[#1][#2]}
\newcommandx{\YFun}[4][1=, 2=, 3=, 4=]{\mthfun[#4]{Y#3}[#1][#2]}
\newcommandx{\ZFun}[4][1=, 2=, 3=, 4=]{\mthfun[#4]{Z#3}[#1][#2]}
\newcommandx{\aFun}[4][1=, 2=, 3=, 4=]{\mthfun[#4]{a#3}[#1][#2]}
\newcommandx{\bFun}[4][1=, 2=, 3=, 4=]{\mthfun[#4]{b#3}[#1][#2]}
\newcommandx{\cFun}[4][1=, 2=, 3=, 4=]{\mthfun[#4]{c#3}[#1][#2]}
\newcommandx{\dFun}[4][1=, 2=, 3=, 4=]{\mthfun[#4]{d#3}[#1][#2]}
\newcommandx{\eFun}[4][1=, 2=, 3=, 4=]{\mthfun[#4]{e#3}[#1][#2]}
\newcommandx{\fFun}[4][1=, 2=, 3=, 4=]{\mthfun[#4]{f#3}[#1][#2]}
\newcommandx{\gFun}[4][1=, 2=, 3=, 4=]{\mthfun[#4]{g#3}[#1][#2]}
\newcommandx{\hFun}[4][1=, 2=, 3=, 4=]{\mthfun[#4]{h#3}[#1][#2]}
\newcommandx{\iFun}[4][1=, 2=, 3=, 4=]{\mthfun[#4]{i#3}[#1][#2]}
\newcommandx{\jFun}[4][1=, 2=, 3=, 4=]{\mthfun[#4]{j#3}[#1][#2]}
\newcommandx{\kFun}[4][1=, 2=, 3=, 4=]{\mthfun[#4]{k#3}[#1][#2]}
\newcommandx{\lFun}[4][1=, 2=, 3=, 4=]{\mthfun[#4]{l#3}[#1][#2]}
\newcommandx{\mFun}[4][1=, 2=, 3=, 4=]{\mthfun[#4]{m#3}[#1][#2]}
\newcommandx{\nFun}[4][1=, 2=, 3=, 4=]{\mthfun[#4]{n#3}[#1][#2]}
\newcommandx{\oFun}[4][1=, 2=, 3=, 4=]{\mthfun[#4]{o#3}[#1][#2]}
\newcommandx{\pFun}[4][1=, 2=, 3=, 4=]{\mthfun[#4]{p#3}[#1][#2]}
\newcommandx{\qFun}[4][1=, 2=, 3=, 4=]{\mthfun[#4]{q#3}[#1][#2]}
\newcommandx{\rFun}[4][1=, 2=, 3=, 4=]{\mthfun[#4]{r#3}[#1][#2]}
\newcommandx{\sFun}[4][1=, 2=, 3=, 4=]{\mthfun[#4]{s#3}[#1][#2]}
\newcommandx{\tFun}[4][1=, 2=, 3=, 4=]{\mthfun[#4]{t#3}[#1][#2]}
\newcommandx{\uFun}[4][1=, 2=, 3=, 4=]{\mthfun[#4]{u#3}[#1][#2]}
\newcommandx{\vFun}[4][1=, 2=, 3=, 4=]{\mthfun[#4]{v#3}[#1][#2]}
\newcommandx{\wFun}[4][1=, 2=, 3=, 4=]{\mthfun[#4]{w#3}[#1][#2]}
\newcommandx{\xFun}[4][1=, 2=, 3=, 4=]{\mthfun[#4]{x#3}[#1][#2]}
\newcommandx{\yFun}[4][1=, 2=, 3=, 4=]{\mthfun[#4]{y#3}[#1][#2]}
\newcommandx{\zFun}[4][1=, 2=, 3=, 4=]{\mthfun[#4]{z#3}[#1][#2]}
\newcommandx{\ARel}[4][1=, 2=, 3=, 4=]{\mthrel[#4]{A#3}[#1][#2]}
\newcommandx{\BRel}[4][1=, 2=, 3=, 4=]{\mthrel[#4]{B#3}[#1][#2]}
\newcommandx{\CRel}[4][1=, 2=, 3=, 4=]{\mthrel[#4]{C#3}[#1][#2]}
\newcommandx{\DRel}[4][1=, 2=, 3=, 4=]{\mthrel[#4]{D#3}[#1][#2]}
\newcommandx{\ERel}[4][1=, 2=, 3=, 4=]{\mthrel[#4]{E#3}[#1][#2]}
\newcommandx{\FRel}[4][1=, 2=, 3=, 4=]{\mthrel[#4]{F#3}[#1][#2]}
\newcommandx{\GRel}[4][1=, 2=, 3=, 4=]{\mthrel[#4]{G#3}[#1][#2]}
\newcommandx{\HRel}[4][1=, 2=, 3=, 4=]{\mthrel[#4]{H#3}[#1][#2]}
\newcommandx{\IRel}[4][1=, 2=, 3=, 4=]{\mthrel[#4]{I#3}[#1][#2]}
\newcommandx{\JRel}[4][1=, 2=, 3=, 4=]{\mthrel[#4]{J#3}[#1][#2]}
\newcommandx{\KRel}[4][1=, 2=, 3=, 4=]{\mthrel[#4]{K#3}[#1][#2]}
\newcommandx{\LRel}[4][1=, 2=, 3=, 4=]{\mthrel[#4]{L#3}[#1][#2]}
\newcommandx{\MRel}[4][1=, 2=, 3=, 4=]{\mthrel[#4]{M#3}[#1][#2]}
\newcommandx{\NRel}[4][1=, 2=, 3=, 4=]{\mthrel[#4]{N#3}[#1][#2]}
\newcommandx{\ORel}[4][1=, 2=, 3=, 4=]{\mthrel[#4]{O#3}[#1][#2]}
\newcommandx{\PRel}[4][1=, 2=, 3=, 4=]{\mthrel[#4]{P#3}[#1][#2]}
\newcommandx{\QRel}[4][1=, 2=, 3=, 4=]{\mthrel[#4]{Q#3}[#1][#2]}
\newcommandx{\RRel}[4][1=, 2=, 3=, 4=]{\mthrel[#4]{R#3}[#1][#2]}
\newcommandx{\SRel}[4][1=, 2=, 3=, 4=]{\mthrel[#4]{S#3}[#1][#2]}
\newcommandx{\TRel}[4][1=, 2=, 3=, 4=]{\mthrel[#4]{T#3}[#1][#2]}
\newcommandx{\URel}[4][1=, 2=, 3=, 4=]{\mthrel[#4]{U#3}[#1][#2]}
\newcommandx{\VRel}[4][1=, 2=, 3=, 4=]{\mthrel[#4]{V#3}[#1][#2]}
\newcommandx{\WRel}[4][1=, 2=, 3=, 4=]{\mthrel[#4]{W#3}[#1][#2]}
\newcommandx{\XRel}[4][1=, 2=, 3=, 4=]{\mthrel[#4]{X#3}[#1][#2]}
\newcommandx{\YRel}[4][1=, 2=, 3=, 4=]{\mthrel[#4]{Y#3}[#1][#2]}
\newcommandx{\ZRel}[4][1=, 2=, 3=, 4=]{\mthrel[#4]{Z#3}[#1][#2]}
\newcommandx{\aRel}[4][1=, 2=, 3=, 4=]{\mthrel[#4]{a#3}[#1][#2]}
\newcommandx{\bRel}[4][1=, 2=, 3=, 4=]{\mthrel[#4]{b#3}[#1][#2]}
\newcommandx{\cRel}[4][1=, 2=, 3=, 4=]{\mthrel[#4]{c#3}[#1][#2]}
\newcommandx{\dRel}[4][1=, 2=, 3=, 4=]{\mthrel[#4]{d#3}[#1][#2]}
\newcommandx{\eRel}[4][1=, 2=, 3=, 4=]{\mthrel[#4]{e#3}[#1][#2]}
\newcommandx{\fRel}[4][1=, 2=, 3=, 4=]{\mthrel[#4]{f#3}[#1][#2]}
\newcommandx{\gRel}[4][1=, 2=, 3=, 4=]{\mthrel[#4]{g#3}[#1][#2]}
\newcommandx{\hRel}[4][1=, 2=, 3=, 4=]{\mthrel[#4]{h#3}[#1][#2]}
\newcommandx{\iRel}[4][1=, 2=, 3=, 4=]{\mthrel[#4]{i#3}[#1][#2]}
\newcommandx{\jRel}[4][1=, 2=, 3=, 4=]{\mthrel[#4]{j#3}[#1][#2]}
\newcommandx{\kRel}[4][1=, 2=, 3=, 4=]{\mthrel[#4]{k#3}[#1][#2]}
\newcommandx{\lRel}[4][1=, 2=, 3=, 4=]{\mthrel[#4]{l#3}[#1][#2]}
\newcommandx{\mRel}[4][1=, 2=, 3=, 4=]{\mthrel[#4]{m#3}[#1][#2]}
\newcommandx{\nRel}[4][1=, 2=, 3=, 4=]{\mthrel[#4]{n#3}[#1][#2]}
\newcommandx{\oRel}[4][1=, 2=, 3=, 4=]{\mthrel[#4]{o#3}[#1][#2]}
\newcommandx{\pRel}[4][1=, 2=, 3=, 4=]{\mthrel[#4]{p#3}[#1][#2]}
\newcommandx{\qRel}[4][1=, 2=, 3=, 4=]{\mthrel[#4]{q#3}[#1][#2]}
\newcommandx{\rRel}[4][1=, 2=, 3=, 4=]{\mthrel[#4]{r#3}[#1][#2]}
\newcommandx{\sRel}[4][1=, 2=, 3=, 4=]{\mthrel[#4]{s#3}[#1][#2]}
\newcommandx{\tRel}[4][1=, 2=, 3=, 4=]{\mthrel[#4]{t#3}[#1][#2]}
\newcommandx{\uRel}[4][1=, 2=, 3=, 4=]{\mthrel[#4]{u#3}[#1][#2]}
\newcommandx{\vRel}[4][1=, 2=, 3=, 4=]{\mthrel[#4]{v#3}[#1][#2]}
\newcommandx{\wRel}[4][1=, 2=, 3=, 4=]{\mthrel[#4]{w#3}[#1][#2]}
\newcommandx{\xRel}[4][1=, 2=, 3=, 4=]{\mthrel[#4]{x#3}[#1][#2]}
\newcommandx{\yRel}[4][1=, 2=, 3=, 4=]{\mthrel[#4]{y#3}[#1][#2]}
\newcommandx{\zRel}[4][1=, 2=, 3=, 4=]{\mthrel[#4]{z#3}[#1][#2]}
\newcommandx{\ASym}[4][1=, 2=, 3=, 4=]{\mthsym[#4]{A#3}[#1][#2]}
\newcommandx{\BSym}[4][1=, 2=, 3=, 4=]{\mthsym[#4]{B#3}[#1][#2]}
\newcommandx{\CSym}[4][1=, 2=, 3=, 4=]{\mthsym[#4]{C#3}[#1][#2]}
\newcommandx{\DSym}[4][1=, 2=, 3=, 4=]{\mthsym[#4]{D#3}[#1][#2]}
\newcommandx{\ESym}[4][1=, 2=, 3=, 4=]{\mthsym[#4]{E#3}[#1][#2]}
\newcommandx{\FSym}[4][1=, 2=, 3=, 4=]{\mthsym[#4]{F#3}[#1][#2]}
\newcommandx{\GSym}[4][1=, 2=, 3=, 4=]{\mthsym[#4]{G#3}[#1][#2]}
\newcommandx{\HSym}[4][1=, 2=, 3=, 4=]{\mthsym[#4]{H#3}[#1][#2]}
\newcommandx{\ISym}[4][1=, 2=, 3=, 4=]{\mthsym[#4]{I#3}[#1][#2]}
\newcommandx{\JSym}[4][1=, 2=, 3=, 4=]{\mthsym[#4]{J#3}[#1][#2]}
\newcommandx{\KSym}[4][1=, 2=, 3=, 4=]{\mthsym[#4]{K#3}[#1][#2]}
\newcommandx{\LSym}[4][1=, 2=, 3=, 4=]{\mthsym[#4]{L#3}[#1][#2]}
\newcommandx{\MSym}[4][1=, 2=, 3=, 4=]{\mthsym[#4]{M#3}[#1][#2]}
\newcommandx{\NSym}[4][1=, 2=, 3=, 4=]{\mthsym[#4]{N#3}[#1][#2]}
\newcommandx{\OSym}[4][1=, 2=, 3=, 4=]{\mthsym[#4]{O#3}[#1][#2]}
\newcommandx{\PSym}[4][1=, 2=, 3=, 4=]{\mthsym[#4]{P#3}[#1][#2]}
\newcommandx{\QSym}[4][1=, 2=, 3=, 4=]{\mthsym[#4]{Q#3}[#1][#2]}
\newcommandx{\RSym}[4][1=, 2=, 3=, 4=]{\mthsym[#4]{R#3}[#1][#2]}
\newcommandx{\SSym}[4][1=, 2=, 3=, 4=]{\mthsym[#4]{S#3}[#1][#2]}
\newcommandx{\TSym}[4][1=, 2=, 3=, 4=]{\mthsym[#4]{T#3}[#1][#2]}
\newcommandx{\USym}[4][1=, 2=, 3=, 4=]{\mthsym[#4]{U#3}[#1][#2]}
\newcommandx{\VSym}[4][1=, 2=, 3=, 4=]{\mthsym[#4]{V#3}[#1][#2]}
\newcommandx{\WSym}[4][1=, 2=, 3=, 4=]{\mthsym[#4]{W#3}[#1][#2]}
\newcommandx{\XSym}[4][1=, 2=, 3=, 4=]{\mthsym[#4]{X#3}[#1][#2]}
\newcommandx{\YSym}[4][1=, 2=, 3=, 4=]{\mthsym[#4]{Y#3}[#1][#2]}
\newcommandx{\ZSym}[4][1=, 2=, 3=, 4=]{\mthsym[#4]{Z#3}[#1][#2]}
\newcommandx{\aSym}[4][1=, 2=, 3=, 4=]{\mthsym[#4]{a#3}[#1][#2]}
\newcommandx{\bSym}[4][1=, 2=, 3=, 4=]{\mthsym[#4]{b#3}[#1][#2]}
\newcommandx{\cSym}[4][1=, 2=, 3=, 4=]{\mthsym[#4]{c#3}[#1][#2]}
\newcommandx{\dSym}[4][1=, 2=, 3=, 4=]{\mthsym[#4]{d#3}[#1][#2]}
\newcommandx{\eSym}[4][1=, 2=, 3=, 4=]{\mthsym[#4]{e#3}[#1][#2]}
\newcommandx{\fSym}[4][1=, 2=, 3=, 4=]{\mthsym[#4]{f#3}[#1][#2]}
\newcommandx{\gSym}[4][1=, 2=, 3=, 4=]{\mthsym[#4]{g#3}[#1][#2]}
\newcommandx{\hSym}[4][1=, 2=, 3=, 4=]{\mthsym[#4]{h#3}[#1][#2]}
\newcommandx{\iSym}[4][1=, 2=, 3=, 4=]{\mthsym[#4]{i#3}[#1][#2]}
\newcommandx{\jSym}[4][1=, 2=, 3=, 4=]{\mthsym[#4]{j#3}[#1][#2]}
\newcommandx{\kSym}[4][1=, 2=, 3=, 4=]{\mthsym[#4]{k#3}[#1][#2]}
\newcommandx{\lSym}[4][1=, 2=, 3=, 4=]{\mthsym[#4]{l#3}[#1][#2]}
\newcommandx{\mSym}[4][1=, 2=, 3=, 4=]{\mthsym[#4]{m#3}[#1][#2]}
\newcommandx{\nSym}[4][1=, 2=, 3=, 4=]{\mthsym[#4]{n#3}[#1][#2]}
\newcommandx{\oSym}[4][1=, 2=, 3=, 4=]{\mthsym[#4]{o#3}[#1][#2]}
\newcommandx{\pSym}[4][1=, 2=, 3=, 4=]{\mthsym[#4]{p#3}[#1][#2]}
\newcommandx{\qSym}[4][1=, 2=, 3=, 4=]{\mthsym[#4]{q#3}[#1][#2]}
\newcommandx{\rSym}[4][1=, 2=, 3=, 4=]{\mthsym[#4]{r#3}[#1][#2]}
\newcommandx{\sSym}[4][1=, 2=, 3=, 4=]{\mthsym[#4]{s#3}[#1][#2]}
\newcommandx{\tSym}[4][1=, 2=, 3=, 4=]{\mthsym[#4]{t#3}[#1][#2]}
\newcommandx{\uSym}[4][1=, 2=, 3=, 4=]{\mthsym[#4]{u#3}[#1][#2]}
\newcommandx{\vSym}[4][1=, 2=, 3=, 4=]{\mthsym[#4]{v#3}[#1][#2]}
\newcommandx{\wSym}[4][1=, 2=, 3=, 4=]{\mthsym[#4]{w#3}[#1][#2]}
\newcommandx{\xSym}[4][1=, 2=, 3=, 4=]{\mthsym[#4]{x#3}[#1][#2]}
\newcommandx{\ySym}[4][1=, 2=, 3=, 4=]{\mthsym[#4]{y#3}[#1][#2]}
\newcommandx{\zSym}[4][1=, 2=, 3=, 4=]{\mthsym[#4]{z#3}[#1][#2]}
\newcommandx{\AElm}[4][1=, 2=, 3=, 4=]{\mthelm[#4]{A#3}[#1][#2]}
\newcommandx{\BElm}[4][1=, 2=, 3=, 4=]{\mthelm[#4]{B#3}[#1][#2]}
\newcommandx{\CElm}[4][1=, 2=, 3=, 4=]{\mthelm[#4]{C#3}[#1][#2]}
\newcommandx{\DElm}[4][1=, 2=, 3=, 4=]{\mthelm[#4]{D#3}[#1][#2]}
\newcommandx{\EElm}[4][1=, 2=, 3=, 4=]{\mthelm[#4]{E#3}[#1][#2]}
\newcommandx{\FElm}[4][1=, 2=, 3=, 4=]{\mthelm[#4]{F#3}[#1][#2]}
\newcommandx{\GElm}[4][1=, 2=, 3=, 4=]{\mthelm[#4]{G#3}[#1][#2]}
\newcommandx{\HElm}[4][1=, 2=, 3=, 4=]{\mthelm[#4]{H#3}[#1][#2]}
\newcommandx{\IElm}[4][1=, 2=, 3=, 4=]{\mthelm[#4]{I#3}[#1][#2]}
\newcommandx{\JElm}[4][1=, 2=, 3=, 4=]{\mthelm[#4]{J#3}[#1][#2]}
\newcommandx{\KElm}[4][1=, 2=, 3=, 4=]{\mthelm[#4]{K#3}[#1][#2]}
\newcommandx{\LElm}[4][1=, 2=, 3=, 4=]{\mthelm[#4]{L#3}[#1][#2]}
\newcommandx{\MElm}[4][1=, 2=, 3=, 4=]{\mthelm[#4]{M#3}[#1][#2]}
\newcommandx{\NElm}[4][1=, 2=, 3=, 4=]{\mthelm[#4]{N#3}[#1][#2]}
\newcommandx{\OElm}[4][1=, 2=, 3=, 4=]{\mthelm[#4]{O#3}[#1][#2]}
\newcommandx{\PElm}[4][1=, 2=, 3=, 4=]{\mthelm[#4]{P#3}[#1][#2]}
\newcommandx{\QElm}[4][1=, 2=, 3=, 4=]{\mthelm[#4]{Q#3}[#1][#2]}
\newcommandx{\RElm}[4][1=, 2=, 3=, 4=]{\mthelm[#4]{R#3}[#1][#2]}
\newcommandx{\SElm}[4][1=, 2=, 3=, 4=]{\mthelm[#4]{S#3}[#1][#2]}
\newcommandx{\TElm}[4][1=, 2=, 3=, 4=]{\mthelm[#4]{T#3}[#1][#2]}
\newcommandx{\UElm}[4][1=, 2=, 3=, 4=]{\mthelm[#4]{U#3}[#1][#2]}
\newcommandx{\VElm}[4][1=, 2=, 3=, 4=]{\mthelm[#4]{V#3}[#1][#2]}
\newcommandx{\WElm}[4][1=, 2=, 3=, 4=]{\mthelm[#4]{W#3}[#1][#2]}
\newcommandx{\XElm}[4][1=, 2=, 3=, 4=]{\mthelm[#4]{X#3}[#1][#2]}
\newcommandx{\YElm}[4][1=, 2=, 3=, 4=]{\mthelm[#4]{Y#3}[#1][#2]}
\newcommandx{\ZElm}[4][1=, 2=, 3=, 4=]{\mthelm[#4]{Z#3}[#1][#2]}
\newcommandx{\aElm}[4][1=, 2=, 3=, 4=]{\mthelm[#4]{a#3}[#1][#2]}
\newcommandx{\bElm}[4][1=, 2=, 3=, 4=]{\mthelm[#4]{b#3}[#1][#2]}
\newcommandx{\cElm}[4][1=, 2=, 3=, 4=]{\mthelm[#4]{c#3}[#1][#2]}
\newcommandx{\dElm}[4][1=, 2=, 3=, 4=]{\mthelm[#4]{d#3}[#1][#2]}
\newcommandx{\eElm}[4][1=, 2=, 3=, 4=]{\mthelm[#4]{e#3}[#1][#2]}
\newcommandx{\fElm}[4][1=, 2=, 3=, 4=]{\mthelm[#4]{f#3}[#1][#2]}
\newcommandx{\gElm}[4][1=, 2=, 3=, 4=]{\mthelm[#4]{g#3}[#1][#2]}
\newcommandx{\hElm}[4][1=, 2=, 3=, 4=]{\mthelm[#4]{h#3}[#1][#2]}
\newcommandx{\iElm}[4][1=, 2=, 3=, 4=]{\mthelm[#4]{i#3}[#1][#2]}
\newcommandx{\jElm}[4][1=, 2=, 3=, 4=]{\mthelm[#4]{j#3}[#1][#2]}
\newcommandx{\kElm}[4][1=, 2=, 3=, 4=]{\mthelm[#4]{k#3}[#1][#2]}
\newcommandx{\lElm}[4][1=, 2=, 3=, 4=]{\mthelm[#4]{l#3}[#1][#2]}
\newcommandx{\mElm}[4][1=, 2=, 3=, 4=]{\mthelm[#4]{m#3}[#1][#2]}
\newcommandx{\nElm}[4][1=, 2=, 3=, 4=]{\mthelm[#4]{n#3}[#1][#2]}
\newcommandx{\oElm}[4][1=, 2=, 3=, 4=]{\mthelm[#4]{o#3}[#1][#2]}
\newcommandx{\pElm}[4][1=, 2=, 3=, 4=]{\mthelm[#4]{p#3}[#1][#2]}
\newcommandx{\qElm}[4][1=, 2=, 3=, 4=]{\mthelm[#4]{q#3}[#1][#2]}
\newcommandx{\rElm}[4][1=, 2=, 3=, 4=]{\mthelm[#4]{r#3}[#1][#2]}
\newcommandx{\sElm}[4][1=, 2=, 3=, 4=]{\mthelm[#4]{s#3}[#1][#2]}
\newcommandx{\tElm}[4][1=, 2=, 3=, 4=]{\mthelm[#4]{t#3}[#1][#2]}
\newcommandx{\uElm}[4][1=, 2=, 3=, 4=]{\mthelm[#4]{u#3}[#1][#2]}
\newcommandx{\vElm}[4][1=, 2=, 3=, 4=]{\mthelm[#4]{v#3}[#1][#2]}
\newcommandx{\wElm}[4][1=, 2=, 3=, 4=]{\mthelm[#4]{w#3}[#1][#2]}
\newcommandx{\xElm}[4][1=, 2=, 3=, 4=]{\mthelm[#4]{x#3}[#1][#2]}
\newcommandx{\yElm}[4][1=, 2=, 3=, 4=]{\mthelm[#4]{y#3}[#1][#2]}
\newcommandx{\zElm}[4][1=, 2=, 3=, 4=]{\mthelm[#4]{z#3}[#1][#2]}
\newcommand{\eg}
  {\txtabr{e.g.}}
\newcommand{\etal}
  {\txtabr{et al.}}
\newcommand{\ie}
  {\txtabr{i.e.}}
\renewcommand{\iff}
  {\txtabr{iff}}
\newcommand{\resp}
  {\txtabr{resp.}}
\newcommand{\wrt}
  {\txtabr{w.r.t.}}
\newcommand{\defeq}
  {\ensuremath{\triangleq}}
\newcommand{\fst}
  {\mthargfun{fst}}
\newcommand{\lst}
  {\mthargfun{lst}}
\newcommand{\dual}[1]
  {\mthempty{\overline{#1}}}
\newcommand{\der}[1]
  {\mthempty{\widehat{#1}}}
\newcommand{\tuple}[1]
  {\ensuremath{\!\argint{\langle}{#1}{\rangle}}}
\newcommand{\tupleb}[2]
  {\tuple{\argb{#1}{#2}}}
\newcommand{\tuplec}[3]
  {\tuple{\argc{#1}{#2}{#3}}}
\newcommand{\tupled}[4]
  {\tuple{\argd{#1}{#2}{#3}{#4}}}
\newcommand{\tuplee}[5]
  {\tuple{\arge{#1}{#2}{#3}{#4}{#5}}}
\newcommand{\tuplef}[6]
  {\tuple{\argf{#1}{#2}{#3}{#4}{#5}{#6}}}
\newcommand{\tupleg}[7]
  {\tuple{\argg{#1}{#2}{#3}{#4}{#5}{#6}{#7}}}
\newcommand{\tupleh}[8]
  {\tuple{\argh{#1}{#2}{#3}{#4}{#5}{#6}{#7}{#8}}}
\newcommand{\tuplei}[9]
  {\tuple{\argi{#1}{#2}{#3}{#4}{#5}{#6}{#7}{#8}{#9}}}
\newcommand{\tuplej}[9]
  {%
  \def\defarga{#1}%
  \def\defargb{#2}%
  \def\defargc{#3}%
  \def\defargd{#4}%
  \def\defarge{#5}%
  \def\defargf{#6}%
  \def\defargg{#7}%
  \def\defargh{#8}%
  \def\defargi{#9}%
  \tupleauxj%
  }
\newcommand{\tuplek}[9]
  {%
  \def\defarga{#1}%
  \def\defargb{#2}%
  \def\defargc{#3}%
  \def\defargd{#4}%
  \def\defarge{#5}%
  \def\defargf{#6}%
  \def\defargg{#7}%
  \def\defargh{#8}%
  \def\defargi{#9}%
  \tupleauxk%
  }
\newcommand{\tuplel}[9]
  {%
  \def\defarga{#1}%
  \def\defargb{#2}%
  \def\defargc{#3}%
  \def\defargd{#4}%
  \def\defarge{#5}%
  \def\defargf{#6}%
  \def\defargg{#7}%
  \def\defargh{#8}%
  \def\defargi{#9}%
  \tupleauxl%
  }
\newcommand{\tuplem}[9]
  {%
  \def\defarga{#1}%
  \def\defargb{#2}%
  \def\defargc{#3}%
  \def\defargd{#4}%
  \def\defarge{#5}%
  \def\defargf{#6}%
  \def\defargg{#7}%
  \def\defargh{#8}%
  \def\defargi{#9}%
  \tupleauxm%
  }
\newcommand{\tuplen}[9]
  {%
  \def\defarga{#1}%
  \def\defargb{#2}%
  \def\defargc{#3}%
  \def\defargd{#4}%
  \def\defarge{#5}%
  \def\defargf{#6}%
  \def\defargg{#7}%
  \def\defargh{#8}%
  \def\defargi{#9}%
  \tupleauxn%
  }
\newcommand{\tupleo}[9]
  {%
  \def\defarga{#1}%
  \def\defargb{#2}%
  \def\defargc{#3}%
  \def\defargd{#4}%
  \def\defarge{#5}%
  \def\defargf{#6}%
  \def\defargg{#7}%
  \def\defargh{#8}%
  \def\defargi{#9}%
  \tupleauxo%
  }
\newcommand{\tuplep}[9]
  {%
  \def\defarga{#1}%
  \def\defargb{#2}%
  \def\defargc{#3}%
  \def\defargd{#4}%
  \def\defarge{#5}%
  \def\defargf{#6}%
  \def\defargg{#7}%
  \def\defargh{#8}%
  \def\defargi{#9}%
  \tupleauxp%
  }
\newcommand{\tupleq}[9]
  {%
  \def\defarga{#1}%
  \def\defargb{#2}%
  \def\defargc{#3}%
  \def\defargd{#4}%
  \def\defarge{#5}%
  \def\defargf{#6}%
  \def\defargg{#7}%
  \def\defargh{#8}%
  \def\defargi{#9}%
  \tupleauxq%
  }
\newcommand{\tupler}[9]
  {%
  \def\defarga{#1}%
  \def\defargb{#2}%
  \def\defargc{#3}%
  \def\defargd{#4}%
  \def\defarge{#5}%
  \def\defargf{#6}%
  \def\defargg{#7}%
  \def\defargh{#8}%
  \def\defargi{#9}%
  \tupleauxr%
  }
\newcommand{\tupleauxj}[1]
  {%
  \tuple{\argj{\defarga}{\defargb}{\defargc}{\defargd}{\defarge}{\defargf}%
    {\defargg}{\defargh}{\defargi}{#1}}%
  }
\newcommand{\tupleauxk}[2]
  {%
  \tuple{\argk{\defarga}{\defargb}{\defargc}{\defargd}{\defarge}{\defargf}%
    {\defargg}{\defargh}{\defargi}{#1}{#2}}%
  }
\newcommand{\tupleauxl}[3]
  {%
  \tuple{\argl{\defarga}{\defargb}{\defargc}{\defargd}{\defarge}{\defargf}%
    {\defargg}{\defargh}{\defargi}{#1}{#2}{#3}}%
  }
\newcommand{\tupleauxm}[4]
  {%
  \tuple{\argm{\defarga}{\defargb}{\defargc}{\defargd}{\defarge}{\defargf}%
    {\defargg}{\defargh}{\defargi}{#1}{#2}{#3}{#4}}%
  }
\newcommand{\tupleauxn}[5]
  {%
  \tuple{\argn{\defarga}{\defargb}{\defargc}{\defargd}{\defarge}{\defargf}%
    {\defargg}{\defargh}{\defargi}{#1}{#2}{#3}{#4}{#5}}%
  }
\newcommand{\tupleauxo}[6]
  {%
  \tuple{\argo{\defarga}{\defargb}{\defargc}{\defargd}{\defarge}{\defargf}%
    {\defargg}{\defargh}{\defargi}{#1}{#2}{#3}{#4}{#5}{#6}}%
  }
\newcommand{\tupleauxp}[7]
  {%
  \tuple{\argp{\defarga}{\defargb}{\defargc}{\defargd}{\defarge}{\defargf}%
    {\defargg}{\defargh}{\defargi}{#1}{#2}{#3}{#4}{#5}{#6}{#7}}%
  }
\newcommand{\tupleauxq}[8]
  {%
  \tuple{\argq{\defarga}{\defargb}{\defargc}{\defargd}{\defarge}{\defargf}%
    {\defargg}{\defargh}{\defargi}{#1}{#2}{#3}{#4}{#5}{#6}{#7}{#8}}%
  }
\newcommand{\tupleauxr}[9]
  {%
  \tuple{\argr{\defarga}{\defargb}{\defargc}{\defargd}{\defarge}{\defargf}%
    {\defargg}{\defargh}{\defargi}{#1}{#2}{#3}{#4}{#5}{#6}{#7}{#8}{#9}}%
  }
\newcommandx{\tupleauxbx}[2][1=, 2=]
  {%
  \tupleb
    {\argdef{#1}{\defarga[\argsubscript][\argsuperscript]}}
    {\argdef{#2}{\defargb[\argsubscript][\argsuperscript]}}%
  }
\newcommandx{\tupleauxcx}[3][1=, 2=, 3=]
  {%
  \tuplec
    {\argdef{#1}{\defarga[\argsubscript][\argsuperscript]}}
    {\argdef{#2}{\defargb[\argsubscript][\argsuperscript]}}
    {\argdef{#3}{\defargc[\argsubscript][\argsuperscript]}}%
  }
\newcommandx{\tupleauxdx}[4][1=, 2=, 3=, 4=]
  {%
  \tupled
    {\argdef{#1}{\defarga[\argsubscript][\argsuperscript]}}
    {\argdef{#2}{\defargb[\argsubscript][\argsuperscript]}}
    {\argdef{#3}{\defargc[\argsubscript][\argsuperscript]}}
    {\argdef{#4}{\defargd[\argsubscript][\argsuperscript]}}%
  }
\newcommandx{\tupleauxex}[5][1=, 2=, 3=, 4=, 5=]
  {%
  \tuplee
    {\argdef{#1}{\defarga[\argsubscript][\argsuperscript]}}
    {\argdef{#2}{\defargb[\argsubscript][\argsuperscript]}}
    {\argdef{#3}{\defargc[\argsubscript][\argsuperscript]}}
    {\argdef{#4}{\defargd[\argsubscript][\argsuperscript]}}
    {\argdef{#5}{\defarge[\argsubscript][\argsuperscript]}}%
  }
\newcommandx{\tupleauxfx}[6][1=, 2=, 3=, 4=, 5=, 6=]
  {%
  \tuplef
    {\argdef{#1}{\defarga[\argsubscript][\argsuperscript]}}
    {\argdef{#2}{\defargb[\argsubscript][\argsuperscript]}}
    {\argdef{#3}{\defargc[\argsubscript][\argsuperscript]}}
    {\argdef{#4}{\defargd[\argsubscript][\argsuperscript]}}
    {\argdef{#5}{\defarge[\argsubscript][\argsuperscript]}}
    {\argdef{#6}{\defargf[\argsubscript][\argsuperscript]}}%
  }
\newcommandx{\tupleauxgx}[7][1=, 2=, 3=, 4=, 5=, 6=, 7=]
  {%
  \tupleg
    {\argdef{#1}{\defarga[\argsubscript][\argsuperscript]}}
    {\argdef{#2}{\defargb[\argsubscript][\argsuperscript]}}
    {\argdef{#3}{\defargc[\argsubscript][\argsuperscript]}}
    {\argdef{#4}{\defargd[\argsubscript][\argsuperscript]}}
    {\argdef{#5}{\defarge[\argsubscript][\argsuperscript]}}
    {\argdef{#6}{\defargf[\argsubscript][\argsuperscript]}}
    {\argdef{#7}{\defargg[\argsubscript][\argsuperscript]}}%
  }
\newcommandx{\tupleauxhx}[8][1=, 2=, 3=, 4=, 5=, 6=, 7=, 8=]
  {%
  \tupleh
    {\argdef{#1}{\defarga[\argsubscript][\argsuperscript]}}
    {\argdef{#2}{\defargb[\argsubscript][\argsuperscript]}}
    {\argdef{#3}{\defargc[\argsubscript][\argsuperscript]}}
    {\argdef{#4}{\defargd[\argsubscript][\argsuperscript]}}
    {\argdef{#5}{\defarge[\argsubscript][\argsuperscript]}}
    {\argdef{#6}{\defargf[\argsubscript][\argsuperscript]}}
    {\argdef{#7}{\defargg[\argsubscript][\argsuperscript]}}
    {\argdef{#8}{\defargh[\argsubscript][\argsuperscript]}}%
  }
\newcommandx{\tupleauxix}[9][1=, 2=, 3=, 4=, 5=, 6=, 7=, 8=, 9=]
  {%
  \tuplei
    {\argdef{#1}{\defarga[\argsubscript][\argsuperscript]}}
    {\argdef{#2}{\defargb[\argsubscript][\argsuperscript]}}
    {\argdef{#3}{\defargc[\argsubscript][\argsuperscript]}}
    {\argdef{#4}{\defargd[\argsubscript][\argsuperscript]}}
    {\argdef{#5}{\defarge[\argsubscript][\argsuperscript]}}
    {\argdef{#6}{\defargf[\argsubscript][\argsuperscript]}}
    {\argdef{#7}{\defargg[\argsubscript][\argsuperscript]}}
    {\argdef{#8}{\defargh[\argsubscript][\argsuperscript]}}
    {\argdef{#9}{\defargi[\argsubscript][\argsuperscript]}}%
  }
\newcommandx{\tupleauxxjx}[9][1=, 2=, 3=, 4=, 5=, 6=, 7=, 8=, 9=]
  {%
  \def\optarga{#1}%
  \def\optargb{#2}%
  \def\optargc{#3}%
  \def\optargd{#4}%
  \def\optarge{#5}%
  \def\optargf{#6}%
  \def\optargg{#7}%
  \def\optargh{#8}%
  \def\optargi{#9}%
  \tupleauxxxjx%
  }
\newcommandx{\tupleauxxkx}[9][1=, 2=, 3=, 4=, 5=, 6=, 7=, 8=, 9=]
  {%
  \def\optarga{#1}%
  \def\optargb{#2}%
  \def\optargc{#3}%
  \def\optargd{#4}%
  \def\optarge{#5}%
  \def\optargf{#6}%
  \def\optargg{#7}%
  \def\optargh{#8}%
  \def\optargi{#9}%
  \tupleauxxxkx%
  }
\newcommandx{\tupleauxxlx}[9][1=, 2=, 3=, 4=, 5=, 6=, 7=, 8=, 9=]
  {%
  \def\optarga{#1}%
  \def\optargb{#2}%
  \def\optargc{#3}%
  \def\optargd{#4}%
  \def\optarge{#5}%
  \def\optargf{#6}%
  \def\optargg{#7}%
  \def\optargh{#8}%
  \def\optargi{#9}%
  \tupleauxxxlx%
  }
\newcommandx{\tupleauxxmx}[9][1=, 2=, 3=, 4=, 5=, 6=, 7=, 8=, 9=]
  {%
  \def\optarga{#1}%
  \def\optargb{#2}%
  \def\optargc{#3}%
  \def\optargd{#4}%
  \def\optarge{#5}%
  \def\optargf{#6}%
  \def\optargg{#7}%
  \def\optargh{#8}%
  \def\optargi{#9}%
  \tupleauxxxmx%
  }
\newcommandx{\tupleauxxnx}[9][1=, 2=, 3=, 4=, 5=, 6=, 7=, 8=, 9=]
  {%
  \def\optarga{#1}%
  \def\optargb{#2}%
  \def\optargc{#3}%
  \def\optargd{#4}%
  \def\optarge{#5}%
  \def\optargf{#6}%
  \def\optargg{#7}%
  \def\optargh{#8}%
  \def\optargi{#9}%
  \tupleauxxxnx%
  }
\newcommandx{\tupleauxxox}[9][1=, 2=, 3=, 4=, 5=, 6=, 7=, 8=, 9=]
  {%
  \def\optarga{#1}%
  \def\optargb{#2}%
  \def\optargc{#3}%
  \def\optargd{#4}%
  \def\optarge{#5}%
  \def\optargf{#6}%
  \def\optargg{#7}%
  \def\optargh{#8}%
  \def\optargi{#9}%
  \tupleauxxxox%
  }
\newcommandx{\tupleauxxpx}[9][1=, 2=, 3=, 4=, 5=, 6=, 7=, 8=, 9=]
  {%
  \def\optarga{#1}%
  \def\optargb{#2}%
  \def\optargc{#3}%
  \def\optargd{#4}%
  \def\optarge{#5}%
  \def\optargf{#6}%
  \def\optargg{#7}%
  \def\optargh{#8}%
  \def\optargi{#9}%
  \tupleauxxxpx%
  }
\newcommandx{\tupleauxxqx}[9][1=, 2=, 3=, 4=, 5=, 6=, 7=, 8=, 9=]
  {%
  \def\optarga{#1}%
  \def\optargb{#2}%
  \def\optargc{#3}%
  \def\optargd{#4}%
  \def\optarge{#5}%
  \def\optargf{#6}%
  \def\optargg{#7}%
  \def\optargh{#8}%
  \def\optargi{#9}%
  \tupleauxxxqx%
  }
\newcommandx{\tupleauxxrx}[9][1=, 2=, 3=, 4=, 5=, 6=, 7=, 8=, 9=]
  {%
  \def\optarga{#1}%
  \def\optargb{#2}%
  \def\optargc{#3}%
  \def\optargd{#4}%
  \def\optarge{#5}%
  \def\optargf{#6}%
  \def\optargg{#7}%
  \def\optargh{#8}%
  \def\optargi{#9}%
  \tupleauxxxrx%
  }
\newcommandx{\tupleauxxxjx}[1][1=]
  {%
  \tuplej
    {\argdef{\optarga}{\tuplearga[\argsubscript][\argsuperscript]}}
    {\argdef{\optargb}{\tupleargb[\argsubscript][\argsuperscript]}}
    {\argdef{\optargc}{\tupleargc[\argsubscript][\argsuperscript]}}
    {\argdef{\optargd}{\tupleargd[\argsubscript][\argsuperscript]}}
    {\argdef{\optarge}{\tuplearge[\argsubscript][\argsuperscript]}}
    {\argdef{\optargf}{\tupleargf[\argsubscript][\argsuperscript]}}
    {\argdef{\optargg}{\tupleargg[\argsubscript][\argsuperscript]}}
    {\argdef{\optargh}{\tupleargh[\argsubscript][\argsuperscript]}}
    {\argdef{\optargi}{\tupleargi[\argsubscript][\argsuperscript]}}
    {\argdef{#1}{\tupleargj[\argsubscript][\argsuperscript]}}%
  }
\newcommandx{\tupleauxxxkx}[2][1=, 2=]
  {%
  \tuplek
    {\argdef{\optarga}{\tuplearga[\argsubscript][\argsuperscript]}}
    {\argdef{\optargb}{\tupleargb[\argsubscript][\argsuperscript]}}
    {\argdef{\optargc}{\tupleargc[\argsubscript][\argsuperscript]}}
    {\argdef{\optargd}{\tupleargd[\argsubscript][\argsuperscript]}}
    {\argdef{\optarge}{\tuplearge[\argsubscript][\argsuperscript]}}
    {\argdef{\optargf}{\tupleargf[\argsubscript][\argsuperscript]}}
    {\argdef{\optargg}{\tupleargg[\argsubscript][\argsuperscript]}}
    {\argdef{\optargh}{\tupleargh[\argsubscript][\argsuperscript]}}
    {\argdef{\optargi}{\tupleargi[\argsubscript][\argsuperscript]}}
    {\argdef{#1}{\tupleargj[\argsubscript][\argsuperscript]}}
    {\argdef{#2}{\tupleargk[\argsubscript][\argsuperscript]}}
  }
\newcommandx{\tupleauxxxlx}[3][1=, 2=, 3=]
  {%
  \tuplel
    {\argdef{\optarga}{\tuplearga[\argsubscript][\argsuperscript]}}
    {\argdef{\optargb}{\tupleargb[\argsubscript][\argsuperscript]}}
    {\argdef{\optargc}{\tupleargc[\argsubscript][\argsuperscript]}}
    {\argdef{\optargd}{\tupleargd[\argsubscript][\argsuperscript]}}
    {\argdef{\optarge}{\tuplearge[\argsubscript][\argsuperscript]}}
    {\argdef{\optargf}{\tupleargf[\argsubscript][\argsuperscript]}}
    {\argdef{\optargg}{\tupleargg[\argsubscript][\argsuperscript]}}
    {\argdef{\optargh}{\tupleargh[\argsubscript][\argsuperscript]}}
    {\argdef{\optargi}{\tupleargi[\argsubscript][\argsuperscript]}}
    {\argdef{#1}{\tupleargj[\argsubscript][\argsuperscript]}}
    {\argdef{#2}{\tupleargk[\argsubscript][\argsuperscript]}}
    {\argdef{#3}{\tupleargl[\argsubscript][\argsuperscript]}}
  }
\newcommandx{\tupleauxxxmx}[4][1=, 2=, 3=, 4=]
  {%
  \tuplem
    {\argdef{\optarga}{\tuplearga[\argsubscript][\argsuperscript]}}
    {\argdef{\optargb}{\tupleargb[\argsubscript][\argsuperscript]}}
    {\argdef{\optargc}{\tupleargc[\argsubscript][\argsuperscript]}}
    {\argdef{\optargd}{\tupleargd[\argsubscript][\argsuperscript]}}
    {\argdef{\optarge}{\tuplearge[\argsubscript][\argsuperscript]}}
    {\argdef{\optargf}{\tupleargf[\argsubscript][\argsuperscript]}}
    {\argdef{\optargg}{\tupleargg[\argsubscript][\argsuperscript]}}
    {\argdef{\optargh}{\tupleargh[\argsubscript][\argsuperscript]}}
    {\argdef{\optargi}{\tupleargi[\argsubscript][\argsuperscript]}}
    {\argdef{#1}{\tupleargj[\argsubscript][\argsuperscript]}}
    {\argdef{#2}{\tupleargk[\argsubscript][\argsuperscript]}}
    {\argdef{#3}{\tupleargl[\argsubscript][\argsuperscript]}}
    {\argdef{#4}{\tupleargm[\argsubscript][\argsuperscript]}}
  }
\newcommandx{\tupleauxxxnx}[5][1=, 2=, 3=, 4=, 5=]
  {%
  \tuplen
    {\argdef{\optarga}{\tuplearga[\argsubscript][\argsuperscript]}}
    {\argdef{\optargb}{\tupleargb[\argsubscript][\argsuperscript]}}
    {\argdef{\optargc}{\tupleargc[\argsubscript][\argsuperscript]}}
    {\argdef{\optargd}{\tupleargd[\argsubscript][\argsuperscript]}}
    {\argdef{\optarge}{\tuplearge[\argsubscript][\argsuperscript]}}
    {\argdef{\optargf}{\tupleargf[\argsubscript][\argsuperscript]}}
    {\argdef{\optargg}{\tupleargg[\argsubscript][\argsuperscript]}}
    {\argdef{\optargh}{\tupleargh[\argsubscript][\argsuperscript]}}
    {\argdef{\optargi}{\tupleargi[\argsubscript][\argsuperscript]}}
    {\argdef{#1}{\tupleargj[\argsubscript][\argsuperscript]}}
    {\argdef{#2}{\tupleargk[\argsubscript][\argsuperscript]}}
    {\argdef{#3}{\tupleargl[\argsubscript][\argsuperscript]}}
    {\argdef{#4}{\tupleargm[\argsubscript][\argsuperscript]}}
    {\argdef{#5}{\tupleargn[\argsubscript][\argsuperscript]}}
  }
\newcommandx{\tupleauxxxox}[6][1=, 2=, 3=, 4=, 5=, 6=]
  {%
  \tupleo
    {\argdef{\optarga}{\tuplearga[\argsubscript][\argsuperscript]}}
    {\argdef{\optargb}{\tupleargb[\argsubscript][\argsuperscript]}}
    {\argdef{\optargc}{\tupleargc[\argsubscript][\argsuperscript]}}
    {\argdef{\optargd}{\tupleargd[\argsubscript][\argsuperscript]}}
    {\argdef{\optarge}{\tuplearge[\argsubscript][\argsuperscript]}}
    {\argdef{\optargf}{\tupleargf[\argsubscript][\argsuperscript]}}
    {\argdef{\optargg}{\tupleargg[\argsubscript][\argsuperscript]}}
    {\argdef{\optargh}{\tupleargh[\argsubscript][\argsuperscript]}}
    {\argdef{\optargi}{\tupleargi[\argsubscript][\argsuperscript]}}
    {\argdef{#1}{\tupleargj[\argsubscript][\argsuperscript]}}
    {\argdef{#2}{\tupleargk[\argsubscript][\argsuperscript]}}
    {\argdef{#3}{\tupleargl[\argsubscript][\argsuperscript]}}
    {\argdef{#4}{\tupleargm[\argsubscript][\argsuperscript]}}
    {\argdef{#5}{\tupleargn[\argsubscript][\argsuperscript]}}
    {\argdef{#6}{\tupleargo[\argsubscript][\argsuperscript]}}
  }
\newcommandx{\tupleauxxxpx}[7][1=, 2=, 3=, 4=, 5=, 6=, 7=]
  {%
  \tuplep
    {\argdef{\optarga}{\tuplearga[\argsubscript][\argsuperscript]}}
    {\argdef{\optargb}{\tupleargb[\argsubscript][\argsuperscript]}}
    {\argdef{\optargc}{\tupleargc[\argsubscript][\argsuperscript]}}
    {\argdef{\optargd}{\tupleargd[\argsubscript][\argsuperscript]}}
    {\argdef{\optarge}{\tuplearge[\argsubscript][\argsuperscript]}}
    {\argdef{\optargf}{\tupleargf[\argsubscript][\argsuperscript]}}
    {\argdef{\optargg}{\tupleargg[\argsubscript][\argsuperscript]}}
    {\argdef{\optargh}{\tupleargh[\argsubscript][\argsuperscript]}}
    {\argdef{\optargi}{\tupleargi[\argsubscript][\argsuperscript]}}
    {\argdef{#1}{\tupleargj[\argsubscript][\argsuperscript]}}
    {\argdef{#2}{\tupleargk[\argsubscript][\argsuperscript]}}
    {\argdef{#3}{\tupleargl[\argsubscript][\argsuperscript]}}
    {\argdef{#4}{\tupleargm[\argsubscript][\argsuperscript]}}
    {\argdef{#5}{\tupleargn[\argsubscript][\argsuperscript]}}
    {\argdef{#6}{\tupleargo[\argsubscript][\argsuperscript]}}
    {\argdef{#7}{\tupleargp[\argsubscript][\argsuperscript]}}
  }
\newcommandx{\tupleauxxxqx}[8][1=, 2=, 3=, 4=, 5=, 6=, 7=, 8=]
  {%
  \tupleq
    {\argdef{\optarga}{\tuplearga[\argsubscript][\argsuperscript]}}
    {\argdef{\optargb}{\tupleargb[\argsubscript][\argsuperscript]}}
    {\argdef{\optargc}{\tupleargc[\argsubscript][\argsuperscript]}}
    {\argdef{\optargd}{\tupleargd[\argsubscript][\argsuperscript]}}
    {\argdef{\optarge}{\tuplearge[\argsubscript][\argsuperscript]}}
    {\argdef{\optargf}{\tupleargf[\argsubscript][\argsuperscript]}}
    {\argdef{\optargg}{\tupleargg[\argsubscript][\argsuperscript]}}
    {\argdef{\optargh}{\tupleargh[\argsubscript][\argsuperscript]}}
    {\argdef{\optargi}{\tupleargi[\argsubscript][\argsuperscript]}}
    {\argdef{#1}{\tupleargj[\argsubscript][\argsuperscript]}}
    {\argdef{#2}{\tupleargk[\argsubscript][\argsuperscript]}}
    {\argdef{#3}{\tupleargl[\argsubscript][\argsuperscript]}}
    {\argdef{#4}{\tupleargm[\argsubscript][\argsuperscript]}}
    {\argdef{#5}{\tupleargn[\argsubscript][\argsuperscript]}}
    {\argdef{#6}{\tupleargo[\argsubscript][\argsuperscript]}}
    {\argdef{#7}{\tupleargp[\argsubscript][\argsuperscript]}}
    {\argdef{#8}{\tupleargq[\argsubscript][\argsuperscript]}}
  }
\newcommandx{\tupleauxxxrx}[9][1=, 2=, 3=, 4=, 5=, 6=, 7=, 8=, 9=]
  {%
  \tupler
    {\argdef{\optarga}{\tuplearga[\argsubscript][\argsuperscript]}}
    {\argdef{\optargb}{\tupleargb[\argsubscript][\argsuperscript]}}
    {\argdef{\optargc}{\tupleargc[\argsubscript][\argsuperscript]}}
    {\argdef{\optargd}{\tupleargd[\argsubscript][\argsuperscript]}}
    {\argdef{\optarge}{\tuplearge[\argsubscript][\argsuperscript]}}
    {\argdef{\optargf}{\tupleargf[\argsubscript][\argsuperscript]}}
    {\argdef{\optargg}{\tupleargg[\argsubscript][\argsuperscript]}}
    {\argdef{\optargh}{\tupleargh[\argsubscript][\argsuperscript]}}
    {\argdef{\optargi}{\tupleargi[\argsubscript][\argsuperscript]}}
    {\argdef{#1}{\tupleargj[\argsubscript][\argsuperscript]}}
    {\argdef{#2}{\tupleargk[\argsubscript][\argsuperscript]}}
    {\argdef{#3}{\tupleargl[\argsubscript][\argsuperscript]}}
    {\argdef{#4}{\tupleargm[\argsubscript][\argsuperscript]}}
    {\argdef{#5}{\tupleargn[\argsubscript][\argsuperscript]}}
    {\argdef{#6}{\tupleargo[\argsubscript][\argsuperscript]}}
    {\argdef{#7}{\tupleargp[\argsubscript][\argsuperscript]}}
    {\argdef{#8}{\tupleargq[\argsubscript][\argsuperscript]}}
    {\argdef{#9}{\tupleargr[\argsubscript][\argsuperscript]}}%
  }
\newcommand{\set}[2]
  {\ensuremath{\argint{\{}{\argext{#1}{\allowbreak:\allowbreak}{#2}}{\}}}}
\newcommand{\pow}[1]
  {\ensuremath{2^{#1}}}
\newcommand{\card}[1]
  {\mthempty{\argint{\vert}{#1}{\vert}}}
\newcommand{\dom}
  {\mthargfun{dom}}
\newcommand{\cmp}
  {\ensuremath{\circ}}
\newcommand{\rst}
  {\mthempty{\upharpoonright}}
\newcommandx{\pto}[2][1=, 2=]
  {\ensuremath{\rightharpoonup}}
\newcommandx{\cto}[2][1=, 2=]
  {\:\mthempty{\to}[#1][#2]\:}
\newcommandx{\cpto}[2][1=, 2=]
  {\:\mthempty{\pto}[#1][#2]\:}
\newcommand{\emptyfun}
  {\mthempty{\varnothing}}
\newcommand{\AOmicron}
  {\mthargset{O}}
\newcommand{\SetN}
  {\mthset[1]{N}}
\newcommand{\SetNI}
  {\mthset[1]{N}[\infty]}
\newcommand{\SetZ}
  {\mthset[1]{Z}}
\newcommand{\numcc}[2]
  {\mthempty{[\argb{#1}{#2}]}}
\newcommand{\numco}[2]
  {\mthempty{[\argb{#1}{#2})}}
\DeclareRobustCommand{\sup}
  {\mthfun{sup}}
\DeclareRobustCommand{\min}
  {\mthfun{min}}
\DeclareRobustCommand{\max}
  {\mthfun{max}}
\DeclareMathOperator*{\argmax}
  {\mthfun{arg max}}
\DeclareMathOperator*{\argmin}
  {\mthfun{arg min}}
\newcommandx{\EF}[5][1=, 2=, 3=, 4=, 5=]
  {\txtargname{EF#5{\small\argint{$[$}{#1}{$]$}}}[#2][#3]{#4}}
\newcommandx{\BG}[5][1=, 2=, 3=, 4=, 5=]
  {\txtargname{BG#5{\small\argint{$[$}{#1}{$]$}}}[#2][#3]{#4}}
\newcommandx{\CG}[5][1=, 2=, 3=, 4=, 5=]
  {\txtargname{CG#5{\small\argint{$[$}{#1}{$]$}}}[#2][#3]{#4}}
\newcommandx{\PG}[5][1=, 2=, 3=, 4=, 5=]
  {\txtargname{PG#5{\small\argint{$[$}{#1}{$]$}}}[#2][#3]{#4}}
\newcommandx{\RG}[5][1=, 2=, 3=, 4=, 5=]
  {\txtargname{RG#5{\small\argint{$[$}{#1}{$]$}}}[#2][#3]{#4}}
\newcommandx{\SG}[5][1=, 2=, 3=, 4=, 5=]
  {\txtargname{SG#5{\small\argint{$[$}{#1}{$]$}}}[#2][#3]{#4}}
\newcommandx{\MG}[5][1=, 2=, 3=, 4=, 5=]
  {\txtargname{MG#5{\small\argint{$[$}{#1}{$]$}}}[#2][#3]{#4}}
\newcommandx{\EG}[5][1=, 2=, 3=, 4=, 5=]
  {\txtargname{EG#5{\small\argint{$[$}{#1}{$]$}}}[#2][#3]{#4}}
\newcommandx{\MPG}[5][1=, 2=, 3=, 4=, 5=]
  {\txtargname{MPG#5{\small\argint{$[$}{#1}{$]$}}}[#2][#3]{#4}}
\newcommandx{\DPG}[5][1=, 2=, 3=, 4=, 5=]
  {\txtargname{DPG#5{\small\argint{$[$}{#1}{$]$}}}[#2][#3]{#4}}
\renewcommandx{\SG}[5][1=, 2=, 3=, 4=, 5=]
  {\txtargname{SG#5{\small\argint{$[$}{#1}{$]$}}}[#2][#3]{#4}}
\newcommandx{\BF}[5][1=, 2=, 3=, 4=, 5=]
  {\txtargname{BF#5{\small\argint{$[$}{#1}{$]$}}}[#2][#3]{#4}}
\newcommandx{\FOL}[5][1=, 2=, 3=, 4=, 5=]
  {\txtargname{FOL#5{\small\argint{$[$}{#1}{$]$}}}[#2][#3]{#4}}
\newcommandx{\SOL}[5][1=, 2=, 3=, 4=, 5=]
  {\txtargname{SOL#5{\small\argint{$[$}{#1}{$]$}}}[#2][#3]{#4}}
\newcommandx{\TL}[5][1=, 2=, 3=, 4=, 5=]
  {\txtargname{TL#5{\small\argint{$[$}{#1}{$]$}}}[#2][#3]{#4}}
\newcommandx{\PL}[5][1=, 2=, 3=, 4=, 5=]
  {\txtargname{PL#5{\small\argint{$[$}{#1}{$]$}}}[#2][#3]{#4}}
\newcommandx{\ML}[5][1=, 2=, 3=, 4=, 5=]
  {\txtargname{ML#5{\small\argint{$[$}{#1}{$]$}}}[#2][#3]{#4}}
\newcommandx{\MC}[5][1=, 2=, 3=, 4=, 5=]
  {\txtargname{$\mu$Calculus#5{\small\argint{$[$}{#1}{$]$}}}[#2][#3]{#4}}
\newcommandx{\LTL}[5][1=, 2=, 3=, 4=, 5=]
  {\txtargname{LTL#5{\small\argint{$[$}{#1}{$]$}}}[#2][#3]{#4}}
\newcommandx{\PTL}[5][1=, 2=, 3=, 4=, 5=]
  {\txtargname{PTL#5{\small\argint{$[$}{#1}{$]$}}}[#2][#3]{#4}}
\newcommandx{\CTL}[5][1=, 2=, 3=, 4=, 5=]
  {\txtargname{CTL#5{\small\argint{$[$}{#1}{$]$}}}[#2][#3]{#4}}
\newcommandx{\CTLP}[5][1=, 2=, 3=, 4=, 5=]
  {\txtargname{CTL$^{+}$#5{\small\argint{$[$}{#1}{$]$}}}[#2][#3]{#4}}
\newcommandx{\CTLS}[5][1=, 2=, 3=, 4=, 5=]
  {\txtargname{CTL$^{\star}$#5{\small\argint{$[$}{#1}{$]$}}}[#2][#3]{#4}}
\newcommandx{\STL}[5][1=, 2=, 3=, 4=, 5=]
  {\txtargname{STL#5{\small\argint{$[$}{#1}{$]$}}}[#2][#3]{#4}}
\newcommandx{\STLP}[5][1=, 2=, 3=, 4=, 5=]
  {\txtargname{STL$^{+}$#5{\small\argint{$[$}{#1}{$]$}}}[#2][#3]{#4}}
\newcommandx{\STLS}[5][1=, 2=, 3=, 4=, 5=]
  {\txtargname{STL$^{\star}$#5{\small\argint{$[$}{#1}{$]$}}}[#2][#3]{#4}}
\newcommandx{\ATL}[5][1=, 2=, 3=, 4=, 5=]
  {\txtargname{ATL#5{\small\argint{$[$}{#1}{$]$}}}[#2][#3]{#4}}
\newcommandx{\ATLP}[5][1=, 2=, 3=, 4=, 5=]
  {\txtargname{ATL$^{+}$#5{\small\argint{$[$}{#1}{$]$}}}[#2][#3]{#4}}
\newcommandx{\ATLS}[5][1=, 2=, 3=, 4=, 5=]
  {\txtargname{ATL$^{\star}$#5{\small\argint{$[$}{#1}{$]$}}}[#2][#3]{#4}}
\newcommandx{\SL}[5][1=, 2=, 3=, 4=, 5=]
  {\txtargname{SL#5{\small\argint{$[$}{#1}{$]$}}}[#2][#3]{#4}}
\newcommandx{\LogTime}[4][1=, 2=, 3=, 4=]
  {\txtargname{LogTime#4}[#2][#3]{#1}}
\newcommandx{\LogTimeE}[4][1=, 2=, 3=, 4=]
  {\LogTime[#1][#2][#3][#4]-\EComplexity}
\newcommandx{\LogTimeH}[4][1=, 2=, 3=, 4=]
  {\LogTime[#1][#2][#3][#4]-\HComplexity}
\newcommandx{\LogTimeC}[4][1=, 2=, 3=, 4=]
  {\LogTime[#1][#2][#3][#4]-\CComplexity}
\newcommandx{\LogSpace}[4][1=, 2=, 3=, 4=]
  {\txtargname{LogSpace#4}[#2][#3]{#1}}
\newcommandx{\LogSpaceE}[4][1=, 2=, 3=, 4=]
  {\LogSpace[#1][#2][#3][#4]-\EComplexity}
\newcommandx{\LogSpaceH}[4][1=, 2=, 3=, 4=]
  {\LogSpace[#1][#2][#3][#4]-\HComplexity}
\newcommandx{\LogSpaceC}[4][1=, 2=, 3=, 4=]
  {\LogSpace[#1][#2][#3][#4]-\CComplexity}
\newcommandx{\PTime}[4][1=, 2=, 3=, 4=]
  {\txtargname{PTime#4}[#2][#3]{#1}}
\newcommandx{\PTimeE}[4][1=, 2=, 3=, 4=]
  {\PTime[#1][#2][#3][#4]-\EComplexity}
\newcommandx{\PTimeH}[4][1=, 2=, 3=, 4=]
  {\PTime[#1][#2][#3][#4]-\HComplexity}
\newcommandx{\PTimeC}[4][1=, 2=, 3=, 4=]
  {\PTime[#1][#2][#3][#4]-\CComplexity}
\newcommand{\UPTime}
  {\UComplexity\PTime}
\newcommand{\CoUPTime}
  {\CoComplexity\UPTime}
\newcommand{\NPTime}
  {\NComplexity\PTime}
\newcommand{\CoNPTime}
  {\CoComplexity\NPTime}
\newcommandx{\PSpace}[4][1=, 2=, 3=, 4=]
  {\txtargname{PSpace#4}[#2][#3]{#1}}
\newcommandx{\PSpaceE}[4][1=, 2=, 3=, 4=]
  {\PSpace[#1][#2][#3][#4]-\EComplexity}
\newcommandx{\PSpaceH}[4][1=, 2=, 3=, 4=]
  {\PSpace[#1][#2][#3][#4]-\HComplexity}
\newcommandx{\PSpaceC}[4][1=, 2=, 3=, 4=]
  {\PSpace[#1][#2][#3][#4]-\CComplexity}
\newcommandx{\ExpTime}[4][1=, 2=, 3=, 4=]
  {\txtargname{ExpTime#4}[#2][#3]{#1}}
\newcommandx{\ExpTimeE}[4][1=, 2=, 3=, 4=]
  {\ExpTime[#1][#2][#3][#4]-\EComplexity}
\newcommandx{\ExpTimeH}[4][1=, 2=, 3=, 4=]
  {\ExpTime[#1][#2][#3][#4]-\HComplexity}
\newcommandx{\ExpTimeC}[4][1=, 2=, 3=, 4=]
  {\ExpTime[#1][#2][#3][#4]-\CComplexity}
\newcommandx{\ExpSpace}[4][1=, 2=, 3=, 4=]
  {\txtargname{ExpSpace#4}[#2][#3]{#1}}
\newcommandx{\ExpSpaceE}[4][1=, 2=, 3=, 4=]
  {\ExpSpace[#1][#2][#3][#4]-\EComplexity}
\newcommandx{\ExpSpaceH}[4][1=, 2=, 3=, 4=]
  {\ExpSpace[#1][#2][#3][#4]-\HComplexity}
\newcommandx{\ExpSpaceC}[4][1=, 2=, 3=, 4=]
  {\ExpSpace[#1][#2][#3][#4]-\CComplexity}
\newcommandx{\NonElm}[4][1=, 2=, 3=, 4=]
  {\txtargname{NonElementary#4}[#2][#3]{#1}}
\newcommandx{\NonElmE}[4][1=, 2=, 3=, 4=]
  {\NonElm[#1][#2][#3][#4]-\EComplexity}
\newcommandx{\NonElmH}[4][1=, 2=, 3=, 4=]
  {\NonElm[#1][#2][#3][#4]-\HComplexity}
\newcommandx{\NonElmC}[4][1=, 2=, 3=, 4=]
  {\NonElm[#1][#2][#3][#4]-\CComplexity}
\newcommandx{\NonElmTime}[4][1=, 2=, 3=, 4=]
  {\txtargname{NonElementaryTime#4}[#2][#3]{#1}}
\newcommandx{\NonElmTimeE}[4][1=, 2=, 3=, 4=]
  {\NonElmTime[#1][#2][#3][#4]-\EComplexity}
\newcommandx{\NonElmTimeH}[4][1=, 2=, 3=, 4=]
  {\NonElmTime[#1][#2][#3][#4]-\HComplexity}
\newcommandx{\NonElmTimeC}[4][1=, 2=, 3=, 4=]
  {\NonElmTime[#1][#2][#3][#4]-\CComplexity}
\newcommandx{\NonElmSpace}[4][1=, 2=, 3=, 4=]
  {\txtargname{NonElementarySpace#4}[#2][#3]{#1}}
\newcommandx{\NonElmSpaceE}[4][1=, 2=, 3=, 4=]
  {\NonElmSpace[#1][#2][#3][#4]-\EComplexity}
\newcommandx{\NonElmSpaceH}[4][1=, 2=, 3=, 4=]
  {\NonElmSpace[#1][#2][#3][#4]-\HComplexity}
\newcommandx{\NonElmSpaceC}[4][1=, 2=, 3=, 4=]
  {\NonElmSpace[#1][#2][#3][#4]-\CComplexity}
\newcommandx{\DLHier}[4][2=, 3=, 4=]
  {\mthargset[0]{\Delta#4}[#1][#3]{#2}}
\newcommandx{\DLHierE}[4][2=, 3=, 4=]
  {\DLHier{#1}[#2][#3][#4]-\EComplexity}
\newcommandx{\DLHierH}[4][2=, 3=, 4=]
  {\DLHier{#1}[#2][#3][#4]-\HComplexity}
\newcommandx{\DLHierC}[4][2=, 3=, 4=]
  {\DLHier{#1}[#2][#3][#4]-\CComplexity}
\newcommandx{\ELHier}[4][2=, 3=, 4=]
  {\mthargset[0]{\Sigma#4}[#1][#3]{#2}}
\newcommandx{\ELHierE}[4][2=, 3=, 4=]
  {\ELHier{#1}[#2][#3][#4]-\EComplexity}
\newcommandx{\ELHierH}[4][2=, 3=, 4=]
  {\ELHier{#1}[#2][#3][#4]-\HComplexity}
\newcommandx{\ELHierC}[4][2=, 3=, 4=]
  {\ELHier{#1}[#2][#3][#4]-\CComplexity}
\newcommandx{\ULHier}[4][2=, 3=, 4=]
  {\mthargset[0]{\Pi#4}[#1][#3]{#2}}
\newcommandx{\ULHierE}[4][2=, 3=, 4=]
  {\ULHier{#1}[#2][#3][#4]-\EComplexity}
\newcommandx{\ULHierH}[4][2=, 3=, 4=]
  {\ULHier{#1}[#2][#3][#4]-\HComplexity}
\newcommandx{\ULHierC}[4][2=, 3=, 4=]
  {\ULHier{#1}[#2][#3][#4]-\CComplexity}
\newcommandx{\DBHier}[4][2=, 3=, 4=]
  {\mthargset[3]{\Delta#4}[#1][#3]{#2}}
\newcommandx{\DBHierE}[4][2=, 3=, 4=]
  {\DBHier{#1}[#2][#3][#4]-\EComplexity}
\newcommandx{\DBHierH}[4][2=, 3=, 4=]
  {\DBHier{#1}[#2][#3][#4]-\HComplexity}
\newcommandx{\DBHierC}[4][2=, 3=, 4=]
  {\DBHier{#1}[#2][#3][#4]-\CComplexity}
\newcommandx{\EBHier}[4][2=, 3=, 4=]
  {\mthargset[3]{\Sigma#4}[#1][#3]{#2}}
\newcommandx{\EBHierE}[4][2=, 3=, 4=]
  {\EBHier{#1}[#2][#3][#4]-\EComplexity}
\newcommandx{\EBHierH}[4][2=, 3=, 4=]
  {\EBHier{#1}[#2][#3][#4]-\HComplexity}
\newcommandx{\EBHierC}[4][2=, 3=, 4=]
  {\EBHier{#1}[#2][#3][#4]-\CComplexity}
\newcommandx{\UBHier}[4][2=, 3=, 4=]
  {\mthargset[3]{\Pi#4}[#1][#3]{#2}}
\newcommandx{\UBHierE}[4][2=, 3=, 4=]
  {\UBHier{#1}[#2][#3][#4]-\EComplexity}
\newcommandx{\UBHierH}[4][2=, 3=, 4=]
  {\UBHier{#1}[#2][#3][#4]-\HComplexity}
\newcommandx{\UBHierC}[4][2=, 3=, 4=]
  {\UBHier{#1}[#2][#3][#4]-\CComplexity}
\newcommandx{\DPolHier}[4][2=, 3=, 4=]
  {\DLHier{#1}[#2][\argb{\mathrm{P}}{#3}][#4]}
\newcommandx{\DPolHierE}[4][2=, 3=, 4=]
  {\DPolHier{#1}[#2][#3][#4]-\EComplexity}
\newcommandx{\DPolHierH}[4][2=, 3=, 4=]
  {\DPolHier{#1}[#2][#3][#4]-\HComplexity}
\newcommandx{\DPolHierC}[4][2=, 3=, 4=]
  {\DPolHier{#1}[#2][#3][#4]-\CComplexity}
\newcommandx{\EPolHier}[4][2=, 3=, 4=]
  {\ELHier{#1}[#2][\argb{\mathrm{P}}{#3}][#4]}
\newcommandx{\EPolHierE}[4][2=, 3=, 4=]
  {\EPolHier{#1}[#2][#3][#4]-\EComplexity}
\newcommandx{\EPolHierH}[4][2=, 3=, 4=]
  {\EPolHier{#1}[#2][#3][#4]-\HComplexity}
\newcommandx{\EPolHierC}[4][2=, 3=, 4=]
  {\EPolHier{#1}[#2][#3][#4]-\CComplexity}
\newcommandx{\UPolHier}[4][2=, 3=, 4=]
  {\ULHier{#1}[#2][\argb{\mathrm{P}}{#3}][#4]}
\newcommandx{\UPolHierE}[4][2=, 3=, 4=]
  {\UPolHier{#1}[#2][#3][#4]-\EComplexity}
\newcommandx{\UPolHierH}[4][2=, 3=, 4=]
  {\UPolHier{#1}[#2][#3][#4]-\HComplexity}
\newcommandx{\UPolHierC}[4][2=, 3=, 4=]
  {\UPolHier{#1}[#2][#3][#4]-\CComplexity}
\newcommandx{\DAriHier}[4][2=, 3=, 4=]
  {\DLHier{#1}[#2][\argb{0}{#3}][#4]}
\newcommandx{\DAriHierE}[4][2=, 3=, 4=]
  {\DAriHier{#1}[#2][#3][#4]-\EComplexity}
\newcommandx{\DAriHierH}[4][2=, 3=, 4=]
  {\DAriHier{#1}[#2][#3][#4]-\HComplexity}
\newcommandx{\DAriHierC}[4][2=, 3=, 4=]
  {\DAriHier{#1}[#2][#3][#4]-\CComplexity}
\newcommandx{\EAriHier}[4][2=, 3=, 4=]
  {\ELHier{#1}[#2][\argb{0}{#3}][#4]}
\newcommandx{\EAriHierE}[4][2=, 3=, 4=]
  {\EAriHier{#1}[#2][#3][#4]-\EComplexity}
\newcommandx{\EAriHierH}[4][2=, 3=, 4=]
  {\EAriHier{#1}[#2][#3][#4]-\HComplexity}
\newcommandx{\EAriHierC}[4][2=, 3=, 4=]
  {\EAriHier{#1}[#2][#3][#4]-\CComplexity}
\newcommandx{\UAriHier}[4][2=, 3=, 4=]
  {\ULHier{#1}[#2][\argb{0}{#3}][#4]}
\newcommandx{\UAriHierE}[4][2=, 3=, 4=]
  {\UAriHier{#1}[#2][#3][#4]-\EComplexity}
\newcommandx{\UAriHierH}[4][2=, 3=, 4=]
  {\UAriHier{#1}[#2][#3][#4]-\HComplexity}
\newcommandx{\UAriHierC}[4][2=, 3=, 4=]
  {\UAriHier{#1}[#2][#3][#4]-\CComplexity}
\newcommandx{\DAnaHier}[4][2=, 3=, 4=]
  {\DLHier{#1}[#2][\argb{1}{#3}][#4]}
\newcommandx{\DAnaHierE}[4][2=, 3=, 4=]
  {\DAnaHier{#1}[#2][#3][#4]-\EComplexity}
\newcommandx{\DAnaHierH}[4][2=, 3=, 4=]
  {\DAnaHier{#1}[#2][#3][#4]-\HComplexity}
\newcommandx{\DAnaHierC}[4][2=, 3=, 4=]
  {\DAnaHier{#1}[#2][#3][#4]-\CComplexity}
\newcommandx{\EAnaHier}[4][2=, 3=, 4=]
  {\ELHier{#1}[#2][\argb{1}{#3}][#4]}
\newcommandx{\EAnaHierE}[4][2=, 3=, 4=]
  {\EAnaHier{#1}[#2][#3][#4]-\EComplexity}
\newcommandx{\EAnaHierH}[4][2=, 3=, 4=]
  {\EAnaHier{#1}[#2][#3][#4]-\HComplexity}
\newcommandx{\EAnaHierC}[4][2=, 3=, 4=]
  {\EAnaHier{#1}[#2][#3][#4]-\CComplexity}
\newcommandx{\UAnaHier}[4][2=, 3=, 4=]
  {\ULHier{#1}[#2][\argb{1}{#3}][#4]}
\newcommandx{\UAnaHierE}[4][2=, 3=, 4=]
  {\UAnaHier{#1}[#2][#3][#4]-\EComplexity}
\newcommandx{\UAnaHierH}[4][2=, 3=, 4=]
  {\UAnaHier{#1}[#2][#3][#4]-\HComplexity}
\newcommandx{\UAnaHierC}[4][2=, 3=, 4=]
  {\UAnaHier{#1}[#2][#3][#4]-\CComplexity}
\newcommandx{\DBorHier}[4][2=, 3=, 4=]
  {\DBHier{#1}[#2][\argb{\mathrm{B}}{#3}][#4]}
\newcommandx{\DBorHierE}[4][2=, 3=, 4=]
  {\DBorHier{#1}[#2][#3][#4]-\EComplexity}
\newcommandx{\DBorHierH}[4][2=, 3=, 4=]
  {\DBorHier{#1}[#2][#3][#4]-\HComplexity}
\newcommandx{\DBorHierC}[4][2=, 3=, 4=]
  {\DBorHier{#1}[#2][#3][#4]-\CComplexity}
\newcommandx{\EBorHier}[4][2=, 3=, 4=]
  {\EBHier{#1}[#2][\argb{\mathrm{B}}{#3}][#4]}
\newcommandx{\EBorHierE}[4][2=, 3=, 4=]
  {\EBorHier{#1}[#2][#3][#4]-\EComplexity}
\newcommandx{\EBorHierH}[4][2=, 3=, 4=]
  {\EBorHier{#1}[#2][#3][#4]-\HComplexity}
\newcommandx{\EBorHierC}[4][2=, 3=, 4=]
  {\EBorHier{#1}[#2][#3][#4]-\CComplexity}
\newcommandx{\UBorHier}[4][2=, 3=, 4=]
  {\UBHier{#1}[#2][\argb{\mathrm{B}}{#3}][#4]}
\newcommandx{\UBorHierE}[4][2=, 3=, 4=]
  {\UBorHier{#1}[#2][#3][#4]-\EComplexity}
\newcommandx{\UBorHierH}[4][2=, 3=, 4=]
  {\UBorHier{#1}[#2][#3][#4]-\HComplexity}
\newcommandx{\UBorHierC}[4][2=, 3=, 4=]
  {\UBorHier{#1}[#2][#3][#4]-\CComplexity}
\newcommand{\EComplexity}
  {{\txtname{easy}}}
\newcommand{\HComplexity}
  {{\txtname{hard}}}
\newcommand{\CComplexity}
  {{\txtname{complete}}}
\newcommand{\UComplexity}
  {{\txtname{U}}}
\newcommand{\NComplexity}
  {{\txtname{N}}}
\newcommand{\CoComplexity}
  {{\txtname{Co}}}
\newtheorem{definition}{Definition}
\newtheorem{proposition}{Proposition}
\newtheorem{lemma}{Lemma}
\newtheorem{theorem}{Theorem}
\newcommandx{\PP}[5][1=, 2=, 3=, 4=, 5=]
  {\txtargname{PP#5{\small\argint{$[$}{#1}{$]$}}}[#2][#3]{#4}}
\newcommandx{\PR}[5][1=, 2=, 3=, 4=, 5=]
  {\txtargname{PR#5{\small\argint{$[$}{#1}{$]$}}}[#2][#3]{#4}}
\newcommandx{\BRIM}[5][1=, 2=, 3=, 4=, 5=]
  {\txtargname{SEPM#5{\small\argint{$[$}{#1}{$]$}}}[#2][#3]{#4}}
\newcommandx{\QDPM}[5][1=, 2=, 3=, 4=, 5=]
  {\txtargname{QDPM#5{\small\argint{$[$}{#1}{$]$}}}[#2][#3]{#4}}
\newcommand{\gamname}{\Game}
\newcommand{\GamName}
  {\mthname{\gamname}}
\newcommand{\movrel}{Mv}
\newcommandx{\MovRel}[3][1=, 2=, 3=]
  {\mthrel{\movrel#3}[#1][#2]}
\newcommand{\winset}{Wn}
\newcommandx{\WinSet}[3][1=, 2=, 3=]
  {\mthset{\winset#3}[#1][#2]}
\newcommand{\prtset}{Pr}
\newcommandx{\PrtSet}[3][1=, 2=, 3=]
  {\mthset{\prtset#3}[#1][#2]}
\newcommand{\prtsym}{p}
\newcommandx{\prtSym}[3][1=, 2=, 3=]
  {\mthsym{\prtsym#3}[#1][#2]}
\newcommand{\prtelm}{p}
\newcommandx{\prtElm}[3][1=, 2=, 3=]
  {\mthelm{\prtelm#3}[#1][#2]}
\newcommand{\prtfun}{pr}
\newcommandx{\prtFun}[4][1=, 2=, 3=, 4=]
  {\mthargfun{\prtfun#4}[#1][#2]{#3}}
\newcommand{\depfun}{dep}
\newcommandx{\depFun}[4][1=, 2=, 3=, 4=]
  {\mthargfun{\depfun#4}[#1][#2]{#3}}
\newcommand{\escfun}{esc}
\newcommandx{\escFun}[4][1=, 2=, 3=, 4=]
  {\mthargfun{\escfun#4}[#1][#2]{#3}}
\newcommand{\intfun}{int}
\newcommandx{\intFun}[4][1=, 2=, 3=, 4=]
  {\mthargfun{\intfun#4}[#1][#2]{#3}}
\newcommand{\styfun}{sty}
\newcommandx{\styFun}[4][1=, 2=, 3=, 4=]
  {\mthargfun{\styfun#4}[#1][#2]{#3}}
\newcommand{\srcfun}{src}
\newcommandx{\srcFun}[4][1=, 2=, 3=, 4=]
  {\mthargfun{\srcfun#4}[#1][#2]{#3}}
\newcommand{\regset}{Rg}
\newcommandx{\RegSet}[3][1=, 2=, 3=]
  {\mthset{\regset#3}[#1][#2]}
\newcommand{\qdset}{QD}
\newcommandx{\QDSet}[3][1=, 2=, 3=]
  {\mthset{\qdset#3}[#1][#2]}
\newcommand{\diselm}{\top}
\newcommandx{\disElm}[3][1=, 2=, 3=]
  {\mthelm{\diselm#3}[#1][#2]}
\newcommand{\ordrel}{\prec}
\newcommandx{\ordRel}[3][1=, 2=, 3=]
  {\mthrel{\ordrel#3}[#1][#2]}
\newcommand{\comrel}{\Yright}
\newcommandx{\comRel}[3][1=, 2=, 3=]
  {\mthrel{\comrel#3}[#1][#2]}
\newcommand{\qryfun}{\Re}
\newcommandx{\qryFun}[4][1=, 2=, 3=, 4=]
  {\mthargfun{\qryfun#4}[#1][#2]{#3}}
\newcommand{\sucfun}{\downarrow}
\newcommandx{\sucFun}[4][1=, 2=, 3=, 4=]
  {\mthargfun{\sucfun#4}[#1][#2]{#3}}
\newcommand{\strset}{Str}
\newcommandx{\StrSet}[3][1=, 2=, 3=]
  {\mthset{\strset#3}[#1][#2]}
\newcommand{\strsym}{\sigma}
\newcommandx{\strSym}[4][1=, 2=, 3=, 4=]
  {\mthargfun{\strsym#4}[#1][#2]{#3}}
\newcommand{\strelm}{\sigma}
\newcommandx{\strElm}[4][1=, 2=, 3=, 4=]
  {\mthargfun{\strelm#4}[#1][#2]{#3}}
\newcommand{\playset}{Plays}
\newcommandx{\PlaySet}[4][1=, 2=, 3=, 4=]
  {\mthset{\playset#4}[#1][#2]{#3}}
\newcommand{\playfun}{play}
\newcommandx{\playFun}[4][1=, 2=, 3=, 4=]
  {\mthargfun{\playfun#4}[#1][#2]{#3}}
\newcommand{\pthset}{Pth}
\newcommandx{\PthSet}[3][1=, 2=, 3=]
  {\mthset{\pthset#3}[#1][#2]}
\newcommand{\pthsym}{\pi}
\newcommandx{\pthSym}[3][1=, 2=, 3=]
  {\mthsym{\pthsym#3}[#1][#2]}
\newcommand{\pthelm}{\pi}
\newcommandx{\pthElm}[3][1=, 2=, 3=]
  {\mthelm{\pthelm#3}[#1][#2]}
\newcommand{\hstset}{Hst}
\newcommandx{\HstSet}[3][1=, 2=, 3=]
  {\mthset{\hstset#3}[#1][#2]}
\newcommand{\prfset}{Pf}
\newcommandx{\PrfSet}[3][1=, 2=, 3=]
  {\mthset{\prfset#3}[#1][#2]}
\newcommand{\prfsym}{f}
\newcommandx{\prfSym}[3][1=, 2=, 3=]
  {\mthsym{\prfsym#3}[#1][#2]}
\newcommand{\prfelm}{f}
\newcommandx{\prfElm}[3][1=, 2=, 3=]
  {\mthelm{\prfelm#3}[#1][#2]}
\newcommand{\refrel}{\trianglelefteq}
\newcommandx{\refRel}[3][1=, 2=, 3=]
  {\mthrel{\refrel#3}[#1][#2]}
\newcommand{\plrrel}{\leq}
\newcommandx{\plrRel}[3][1=, 2=, 3=]
  {\mthrel{\plrrel#3}[#1][#2]}
\newcommand{\gaufun}{\eth}
\newcommandx{\gauFun}[4][1=, 2=, 3=, 4=]
  {\mthargfun{\gaufun#4}[#1][#2]{#3}}
\newcommand{\noropr}{\upharpoonright}
\newcommandx{\norOpr}[4][1=, 2=, 3=, 4=]
  {\mthargfun{\noropr#4}[#1][#2]{#3}}
\newcommand{\evlset}{Ev}
\newcommandx{\EvlSet}[3][1=, 2=, 3=]
  {\mthset{\evlset#3}[#1][#2]}
\newcommand{\evlsym}{\eta}
\newcommandx{\evlSym}[3][1=, 2=, 3=]
  {\mthsym{\evlsym#3}[#1][#2]}
\newcommand{\evlelm}{\eta}
\newcommandx{\evlElm}[3][1=, 2=, 3=]
  {\mthelm{\evlelm#3}[#1][#2]}
\newcommand{\qdeset}{Ev}
\newcommandx{\QDESet}[3][1=, 2=, 3=]
  {\mthset{\qdeset#3}[#1][#2]}
\newcommand{\qdesym}{\mu}
\newcommandx{\qdeSym}[3][1=, 2=, 3=]
  {\mthsym{\qdesym#3}[#1][#2]}
\newcommand{\qdeelm}{\mu}
\newcommandx{\qdeElm}[3][1=, 2=, 3=]
  {\mthelm{\qdeelm#3}[#1][#2]}
\newcommand{\bndfun}{\#}
\newcommandx{\bndFun}[4][1=, 2=, 3=, 4=]
  {\mthargfun{\bndfun#4}[#1][#2]{#3}}
\newcommand{\denot}[1]
  {\mthempty{\argint{\lVert}{#1}{\rVert}}}
\newcommand{\crtset}{Cr}
\newcommandx{\CrtSet}[3][1=, 2=, 3=]
  {\mthset{\crtset#3}[#1][#2]}
\newcommand{\ifpfun}{ifp}
\newcommandx{\ifpFun}[4][1=, 2=, 3=, 4=]
  {\mthargfun{\ifpfun#4}[#1][#2]{#3}}
\newcommand{\lfpfun}{lfp}
\newcommandx{\lfpFun}[4][1=, 2=, 3=, 4=]
  {\mthargfun{\lfpfun#4}[#1][#2]{#3}}
\newcommand{\rwaset}{RWA}
\newcommandx{\RWASet}[3][1=, 2=, 3=]
  {\mthset{\rwaset#3}[#1][#2]}
\newcommand{\nabfun}{\nabla}
\newcommandx{\nabFun}[4][1=, 2=, 3=, 4=]
  {\mthargfun{\nabfun#4}[#1][#2]{#3}}
\newcommand{\delfun}{\Delta}
\newcommandx{\delFun}[4][1=, 2=, 3=, 4=]
  {\mthargfun{\delfun#4}[#1][#2]{#3}}
\def\forallcmd#1{\ifx#1\forallcmd\else\defcmd#1\expandafter\forallcmd\fi}
\newcommandx{\DefMacroStructure}[5][2=, 3=, 4=, 5=]
  {
  \DefMacroName{#1}[#2]
  \DefMacroSet{#1}[#3][#4][#5]
  }
\newcommandx{\DefMacroName}[2][2=]
  {
  \csdef{#1Name}{\mthname{\argdef{#2}{#1}}}
  }
\newcommandx{\DefMacroSet}[4][2=, 3=, 4=]
  {
  \csdef{#1Set}{\mthset{\argdef{#2}{#1}}}
  \caselower[q]{#1}
  \DefMacroElm{\thestring}[#3]
  \DefMacroSym{\thestring}[#4]
  }
\newcommandx{\DefMacroElm}[2][2=]
  {
  \csdef{#1Elm}{\mthelm{\argdef{#2}{#1}}}
  \def\defcmd##1{\csdef{##1#1Elm}{\mthelm{##1}}}
  \forallcmd abcdefghijklmnopqrstuvwxyz\forallcmd
  \forallcmd ABCDEFGHIJKLMNOPQRSTUVWXYZ\forallcmd
  }
\newcommandx{\DefMacroSym}[2][2=]
  {
  \csdef{#1Sym}{\mthsym{\argdef{#2}{#1}}}
  \def\defcmd##1{\csdef{##1#1Sym}{\mthsym{##1}}}
  \forallcmd abcdefghijklmnopqrstuvwxyz\forallcmd
  \forallcmd ABCDEFGHIJKLMNOPQRSTUVWXYZ\forallcmd
  }
\newcommandx{\DefMacroFun}[2][2=]
  {
  \csdef{#1Fun}{\mthfun{\argdef{#2}{#1}}}
  }
\newcommandx{\DefMacroRel}[2][2=]
  {
  \csdef{#1Rel}{\mthrel{\argdef{#2}{#1}}}
  }
\newcommand{\renewcounter}[2]
  {
  \setcounterref{#1}{#2}
  \addtocounter{#1}{-1}
  }
  \newcommand{\tabsimexm}
    {
    \setlength{\tabcolsep}{5.50pt}
    \begin{tabular}{c|c|c}
      \begin{minipage}[t]{0.3\textwidth}
        \vspace{-.30em}
        \figsima
        \vspace{-0.5em}
      \end{minipage} &
      \begin{minipage}[t]{0.3\textwidth}
        \vspace{-.30em}
        \figsimb
        \vspace{-0.5em}
      \end{minipage} &
      \begin{minipage}[t]{0.3\textwidth}
        \vspace{-0.30em}
        \figsimc
        \vspace{-0.5em}
      \end{minipage}
      \\\hline
      \begin{minipage}[t]{0.3\textwidth}
        \vspace{-0.30em}
        \figsimd
        \vspace{-1em}
      \end{minipage} &
      \begin{minipage}[t]{0.3\textwidth}
        \vspace{-0.30em}
        \figsime
        \vspace{0.1pt}
        \vspace{-1em}
      \end{minipage} &
      \begin{minipage}[t]{0.3\textwidth}
        \vspace{-0.30em}
        \figsimf
        \vspace{-1em}
      \end{minipage}
    \end{tabular}
    }
  \newcommand{\algprg}
    {
    \begin{algorithm}[H]
      \caption{\label{alg:prg} Progress Operator}
      \Signature{$\prgFun[+] \colon \QDRSet \pto \QDRSet$} \;
      \Function{$\prgFun[+](\qdrElm)$}
        {
        \nl $\QSet \gets \dmnFun(\qdrElm)$ \;
        \nl \While{$\escFun(\qdrElm, \QSet) \neq \emptyset$}
          {
          \nl $\ESet \gets \bepFun(\qdrElm, \QSet)$ \;
          \nl $\qdrElm \gets \liftFun(\qdrElm, \ESet, \dual{\QSet})$ \;
          \nl $\QSet \gets \QSet \setminus \ESet$ \;
          }
        \nl $\qdrElm \gets \winFun(\qdrElm, \QSet)$ \;
        \nl \Return $\qdrElm$ \;
        }
    \end{algorithm}
    }
  \newcommand{\algsol}
    {
    \SetInd{0.5em}{0.5em}
    \begin{algorithm}[H]
      \caption{\label{alg:sol} \MPG Solver}
      \Signature{$\solFun \colon \MPG \to \QDRSet$} \;
      \Procedure{$\solFun(\GamName)$}
        {
        \nl $\qdrElm \gets (\{ \posElm \in \PosSet \mapsto 0 \}, \emptyfun)$ \;
        \nl $\cFun \gets \{ \posElm \in \PosSet[\OppSym] \mapsto \card{\set{
            \uposElm \in \MovRel(\posElm) }{ \mfElm[{\qdrElm}](\posElm) \geq
            \mfElm[{\qdrElm}](\uposElm) + \posElm}} \}\!\!\!\!\!\!\!\!\!\!\!\!$
            \;
        \nl $(\NSet[0], \NSet[+]) \gets (\set{ \posElm \in \PosSet }{
            \wghFun(\posElm) > 0 }, \emptyset)$ \;
        \nl \While{$\NSet[0] \neq \emptyset \lor \NSet[+] \neq \emptyset$}
          {
          \nl $(\NSet[0], \ASet) \gets \prgFun[0](\NSet[0])$ \;
          \nl $\NSet[+] \gets \NSet[+] \cup \ASet$ \;
          \nl $(\ASet, \NSet[+]) \gets \prgFun[+](\NSet[+])$ \;
          \nl $\NSet[0] \gets \NSet[0] \cup \ASet$ \;
          }
        \nl \Return $\qdrElm$ \;
        }
    \end{algorithm}
    }
  \newcommand{\algprgzer}
    {
    \SetInd{0.5em}{0.5em}
    \begin{algorithm}[H]
      \caption{\label{alg:prgzer} Efficient Progress Zero Operator}
      \Signature{$\prgFun[0] \colon \pow{\PosSet} \to \pow{\PosSet} \times
        \pow{\PosSet}$} \;
      \Procedure{$\prgFun[0](\NSet)$}
        {
        \nl $\ZSet \gets \emptyset$ \;
        \nl $\der{\mfElm} \gets \{ \posElm \in \NSet \mapsto
            \mfElm[{\qdrElm}](\posElm) \}$ \;
        \nl $\qdrElm \gets \sup \{ \qdrElm, \liftFun(\qdrElm, \NSet, \PosSet)
            \}$ \;
        \nl $\cFun \gets {\cFun}[ \posElm \in \NSet \cap \PosSet[\OppSym]
            \mapsto \card{\set{ \uposElm \in \MovRel(\posElm) }{
            \mfElm[{\qdrElm}](\posElm) \geq \mfElm[{\qdrElm}](\uposElm) +
            \posElm}} ]$ \;
        \nl \ForEach{$(\posElm, \uposElm) \in \MovRel; \: \uposElm \in \NSet; \:
            \mfElm[{\qdrElm}](\posElm) < \mfElm[{\qdrElm}](\uposElm) + \posElm$}
          {
          \nl \eIf{$\posElm \in \PosSet[\PlrSym]$}
            {
            \nl $\ZSet \gets \ZSet \cup \posElm$ \;
            }
            {
            \nl \lIf{$\posElm \not\in \NSet \land \mfElm[{\qdrElm}](\posElm)
                \geq \der{\mfElm}(\uposElm) + \posElm$} {$\cFun(\posElm) \gets
                \cFun(\posElm) - 1$}
            \nl \lIf{$\cFun(\posElm) = 0$} {$\ZSet \gets \ZSet \cup \posElm$}
            }
          }
        \nl \Return $(\ZSet \cap \mfElm[{\qdrElm}][-1](0), \ZSet \setminus
            \mfElm[{\qdrElm}][-1](0))$ \;
        }
    \end{algorithm}
    }
  \newcommand{\algdmn}
    {
    \SetInd{0.5em}{0.5em}
    \begin{algorithm}[H]
      \caption{\label{alg:dmn} Efficient Quasi Dominion Operator}
      \Signature{$\dmnFun \colon \pow{\PosSet} \to \pow{\PosSet}$} \;
      \Procedure{$\dmnFun(\NSet)$}
        {
        \nl $\QSet \gets \emptyset$ \;
        \nl $\dFun \gets \emptyfun$ \;
        \nl \While{$\NSet \not\subseteq \QSet$}
          {
          \nl $\QSet \gets \QSet \cup \NSet$ \;
          \nl $\NSet \gets \preFun(\NSet)$ \;
          }
        \nl \Return $\QSet$ \;
        }
      \Signature{$\preFun \colon \pow{\PosSet} \to \pow{\PosSet}$} \;
      \Procedure{$\preFun(\NSet)$}
        {
        \nl $\ZSet \gets \emptyset$ \;
        \nl $\dFun \gets {\dFun}[ \posElm \in (\MovRel[][-1](\NSet) \cap
            \PosSet[\OppSym]) \setminus \dom{\dFun} \mapsto \cFun(\posElm) ]$ \;
        \nl \ForEach{$(\posElm, \uposElm) \in \MovRel; \: \uposElm \in \NSet$}
          {
          \nl \eIf{$\posElm \in \PosSet[\PlrSym]$}
            {
            \nl \lIf{$\strElm[\qdrElm](\posElm) = \uposElm$} {$\ZSet \gets \ZSet
                \cup \posElm$}
            }
            {
            \nl \lIf{$\mfElm[{\qdrElm}](\posElm) \geq
                \mfElm[{\qdrElm}](\uposElm) + \posElm$} {$\dFun(\posElm) \gets
                \dFun(\posElm) - 1$}
            \nl \lIf{$\dFun(\posElm) = 0$} {$\ZSet \gets \ZSet \cup \posElm$}
            }
          }
        \nl \Return $\ZSet$ \;
        }
    \end{algorithm}
    }
  \newcommand{\algprgpls}
    {
    \SetInd{0.5em}{0.5em}
    \begin{algorithm}[H]
      \caption{\label{alg:prgpls} Efficient Progress Plus Operator}
      \Signature{$\prgFun[+] \colon \pow{\PosSet} \to \pow{\PosSet} \times
        \pow{\PosSet}$} \;
      \Procedure{$\prgFun[+](\NSet)$}
        {
        \nl $\ZSet \gets \emptyset$ \;
        \nl $\QSet \gets \dmnFun(\qdrElm, \cFun, \NSet)$ \;
        \nl $\der{\mfElm} \gets \{ \posElm \in \QSet \mapsto
            \mfElm[{\qdrElm}](\posElm) \}$ \;
        \nl $\TSet \gets \{ (\posElm, \befFun(\mfElm[\qdrElm], \QSet, \posElm))
            \in \escFun(\qdrElm, \QSet) \times \SetN \}$ \;
        \nl $\gFun \gets \{ \posElm \in \QSet \cap \PosSet[\PlrSym] \mapsto
            \card{\set{ \uposElm \in \MovRel(\posElm) \cap \QSet }{
            \strElm[\qdrElm](\posElm) = \uposElm \lor \mfElm[{\qdrElm}](\posElm)
            < \mfElm[{\qdrElm}](\uposElm) + \posElm}} \}$ \;
        \nl \While{$\TSet \neq \emptyset$}
          {
          \nl $(\ESet, \TSet) \gets \extminFun(\TSet)$ \;
          \nl $\qdrElm \gets \liftFun(\qdrElm, \ESet, \dual{\QSet})$ \;
          \nl $\QSet \gets \QSet \setminus \ESet$ \;
          \nl $\cFun \gets {\cFun}[ \posElm \in \ESet \cap \PosSet[\OppSym]
              \mapsto \card{\set{ \uposElm \in \MovRel(\posElm) }{
              \mfElm[{\qdrElm}](\posElm) \geq \mfElm[{\qdrElm}](\uposElm) +
              \posElm}} ]$ \;
          \nl \ForEach{$(\posElm, \uposElm) \in \MovRel; \: \uposElm \in \ESet$}
            {
            \nl \uIf{$\posElm \in \QSet$}
              {
              \nl \eIf{$\posElm \in \PosSet[\OppSym]$}
                {
                \nl {$\TSet \gets \TSet \cup (\posElm,
                    \mfElm[{\qdrElm}](\uposElm) + \posElm -
                    \mfElm[{\qdrElm}](\posElm))$} \;
                }
                {
                \nl \lIf{$\strElm[\qdrElm](\posElm) = \uposElm \lor
                    \mfElm[{\qdrElm}](\posElm) < \der{\mfElm}(\uposElm) +
                    \posElm$} {$\gFun(\posElm) \gets \gFun(\posElm) - 1$}
                \nl \lIf{$\gFun(\posElm) = 0$} {$\TSet \gets \TSet \cup
                    (\posElm, \befFun(\mfElm[\qdrElm], \QSet, \posElm))$}
                }
              }
            \nl \ElseIf{$\mfElm[{\qdrElm}](\posElm) <
                \mfElm[{\qdrElm}](\uposElm) + \posElm$}
              {
              \nl \uIf{$\posElm \in \PosSet[\PlrSym]$}
                {
                \nl $\ZSet \gets \ZSet \cup \posElm$ \;
                }
              \nl \ElseIf{$\mfElm[{\qdrElm}](\posElm) = 0$}
                {
                \nl \lIf{$\der{\mfElm}(\uposElm) + \posElm = 0$}
                    {$\cFun(\posElm) \gets \cFun(\posElm) - 1$}
                \nl \lIf{$\cFun(\posElm) = 0$} {$\ZSet \gets \ZSet \cup
                    \posElm$}
                }
              }
            }
          }
        \nl $\qdrElm \gets \winFun(\qdrElm, \QSet)$ \;
        \nl \ForEach{$(\posElm, \uposElm) \in \MovRel; \: \uposElm \in \QSet$}
          {
          \nl \uIf{$\posElm \in \PosSet[\PlrSym]$}
            {
            \nl $\ZSet \gets \ZSet \cup \posElm$ \;
            }
          \nl \ElseIf{$\mfElm[{\qdrElm}](\posElm) = 0$}
            {
            \nl \lIf{$\der{\mfElm}(\uposElm) + \posElm = 0$} {$\cFun(\posElm)
                \gets \cFun(\posElm) - 1$}
            \nl \lIf{$\cFun(\posElm) = 0$} {$\ZSet \gets \ZSet \cup
                \posElm$}
            }
          }
        \nl \Return $(\ZSet \cap \mfElm[{\qdrElm}][-1](0), \ZSet \setminus
            \mfElm[{\qdrElm}][-1](0))$ \;
        }
    \end{algorithm}
    }
\begin{document}

  \title{Solving Mean-Payoff Games via Quasi Dominions}

  \author
    {
    \IEEEauthorblockN
      {Massimo Benerecetti \& Daniele Dell'Erba \& Fabio Mogavero \\
      \{massimo.benerecetti, daniele.dellerba, fabio.mogavero\}@unina.it}
    \IEEEauthorblockA
      {Universit\`a degli Studi di Napoli Federico II, Napoli, Italy}
    }

  \maketitle



\begin{abstract}

  We propose a novel algorithm for the solution of \emph{mean-payoff games} that
  merges together two seemingly unrelated concepts introduced in the context of
  parity games, \emph{small progress measures} and \emph{quasi dominions}.
  We show that the integration of the two notions can be highly beneficial and
  significantly speeds up convergence to the problem solution.
  Experiments show that the resulting algorithm performs orders of magnitude
  better than the asymptotically-best solution algorithm currently known,
  without sacrificing on the worst-case complexity.

\end{abstract}




\begin{section}{Introduction}

  In this article we consider the problem of solving \emph{mean-payoff games},
  namely infinite-duration perfect-information two-player games played on
  weighted directed graphs, each of whose vertexes is controlled by one of the
  two players.
  The game starts at an arbitrary vertex and, during its evolution, each player
  can take moves at the vertexes it controls, by choosing one of the outgoing
  edges.
  The moves selected by the two players induce an infinite sequence of vertices,
  called play.
  The payoff of any prefix of a play is the sum of the weights of its edges.
  A play is winning if it satisfies the game objective, called
  \emph{mean-payoff objective}, which requires that the limit of the
  \emph{mean payoff}, taken over the prefixes lengths, never falls below a given
  \emph{threshold} $\nu$.
  \\\indent
  Mean-payoff games have been first introduced and studied by Ehrenfeucht and
  Mycielski in~\cite{EM79}, who showed that positional strategies suffice to
  obtain the optimal value.
  A slightly generalized version was also considered by Gurvich~\etal
  in~\cite{GKK88}.
  Positional determinacy entails that the decision problem for these games lies
  in $\NPTime \cap \CoNPTime$~\cite{ZP96}, and it was later shown to belong to
  $\UPTime \cap \CoUPTime$~\cite{Jur98}, being $\UPTime$ the class of
  unambiguous non-deterministic polynomial time.
  This result gives the problem a rather peculiar complexity status, shared by
  very few other problems, such as integer
  factorization~\cite{FK92},~\cite{AKS04} and parity games~\cite{Jur98}.
  Despite various attempts~\cite{GKK88,ZP96,Pis99,DG06,BV07}, no polynomial-time
  algorithm for the mean-payoff game problems is known so far.
  \\\indent
  A different formulation of the game objective allows to define another class
  of quantitative games, known as \emph{energy games}.
  The \emph{energy objective} requires that, given an initial value $\cElm$,
  called \emph{credit}, the sum of $\cElm$ and the \emph{payoff} of every prefix
  of the play never falls below $0$.
  These games, however, are tightly connected to mean-payoff games, as the two
  type of games have been proved to be log-space equivalent~\cite{BFLMS08}.
  They are also related to other more complex forms of quantitative games.
  In particular, unambiguous polynomial-time reductions~\cite{Jur98} exist from
  these games to \emph{discounted payoff}~\cite{ZP96} and \emph{simple
  stochastic games}~\cite{Con92}.
  \\\indent
  Recently, a fair amount of work in formal verification has been directed to
  consider, besides correctness properties of computational systems, also
  quantitative specifications, in order to express performance measures and
  resource requirements, such as quality of service, bandwidth and power
  consumption and, more generally, bounded resources.
  Mean-payoff and energy games also have important practical applications in
  system verification and synthesis.
  In~\cite{BCHJ09} the authors show how quantitative aspects, interpreted as
  penalties and rewards associated to the system choices, allow for expressing
  optimality requirements encoded as mean-payoff objectives for the automatic
  synthesis of systems that also satisfy parity objectives.
  With similar application contexts in mind,~\cite{BCHK11} and~\cite{BBFR13}
  further contribute to that effort, by providing complexity results and
  practical solutions for the verification and automatic synthesis of reactive
  systems from quantitative specifications expressed in linear time temporal
  logic extended with mean-payoff and energy objectives.
  Further applications to temporal networks have been studied in~\cite{CR15}
  and~\cite{CPR17}.
  Consequently, efficient algorithms to solve mean-payoff games become essential
  ingredients to tackle these problems in practice.
  \\\indent
  Several algorithms have been devised in the past for the solution of the
  decision problem for mean-payoff games, which asks whether there exists a
  strategy for one of the players that grants the mean-payoff objective.
  The very first deterministic algorithm was proposed in~\cite{ZP96}, where it
  is shown that the problem can be solved with $\AOmicron{n^{3} \cdot m \cdot
  \WElm}$ arithmetic operations, with $n$ and $m$ the number of positions and
  moves, respectively, and $\WElm$ the maximal absolute weight in the game.
  A strategy improvement approach, based on iteratively adjusting a randomly
  chosen initial strategy for one player until a winning strategy is obtained,
  is presented in~\cite{Sch08a}, which has an exponential upper bound.
  The algorithm by Lifshits and Pavlov~\cite{LP07}, which runs in time
  $\AOmicron{n \cdot m \cdot 2^{n} \cdot \log_{2} \WElm}$, computes the
  ``potential'' of each game position, which corresponds to the initial credit
  that the player needs in order to win the game from that position.
  Algorithms based on the solution of linear feasibility problems over the
  tropical semiring have been also provided in~\cite{ABG14,ABG14a,ABGJ15}.
  The best known deterministic algorithm to date, which requires $\AOmicron{n
  \cdot m \cdot \WElm}$ arithmetic operations, was proposed by Brim
  \etal~\cite{BCDGR11}.
  They adapt to energy and mean-payoff games the notion of progress
  measures~\cite{Kla91}, as applied to parity games in~\cite{Jur00}.
  The approach was further developed in~\cite{CR17} to obtain the same
  complexity bound for the optimal strategy synthesis problem.
  A strategy-improvement refinement of this technique has been introduced
  in~\cite{BC12}.
  Finally, Bjork \etal~\cite{BSV04} proposed a randomized strategy-improvement
  based algorithm running in time $\min \{ \AOmicron{n^2 \cdot m \cdot \WElm},
  2^{\AOmicron{\sqrt{n} \cdot \log n}} \}$.
  \\\indent
  Our contribution is a novel mean-payoff progress measure approach that
  enriches such measures with the notion of \emph{quasi dominions}, originally
  introduced in~\cite{BDM16} for parity games.
  These are sets of positions with the property that as long as the opponent
  chooses to play to remain in the set, it loses the game for sure, hence its
  best choice is always to try to escape.
  A quasi dominion from where is not possible escaping is a winning set for the
  other player.
  Progress measure approaches, such as the one of~\cite{BCDGR11}, typically
  focus on finding the best choices of the opponent and little information is
  gathered on the other player.
  In this sense, they are intrinsically asymmetric.
  Enriching the approach with quasi dominions can be viewed as a way to also
  encode the best choices of the player, information that can be exploited to
  speed up convergence significantly.
  The main difficulty here is that suitable lift operators in the new setting do
  not enjoy monotonicity.
  Such a property makes proving completeness of classic progress measure
  approaches almost straightforward, as monotonic operators do admit a least
  fixpoint.
  Instead, the lift operator we propose is only inflationary (specifically,
  non-decreasing) and, while still admitting fixpoints~\cite{Bou49,Wit50}, need
  not have a least one.
  Hence, providing a complete solution algorithm proves more challenging.
  The advantages, however, are significant.
  On the one hand, the new algorithm still enjoys the same worst-case complexity
  of the best known algorithm for the problem proposed in~\cite{BCDGR11}.
  On the other hand, we show that there exist families of games on which the
  classic approach requires a number of operations that can be made arbitrarily
  larger than the one required by the new approach.
  Experimental results also witness the fact that this phenomenon is by no means
  isolated, as the new algorithm performs orders of magnitude better than the
  algorithm developed in~\cite{BCDGR11}.

\end{section}




\begin{section}{Mean-Payoff Games}
  \label{sec:mpg}

  A two-player turn-based \emph{arena} is a tuple $\AName =
  \tuplec{\PosSet[\PlrSym]}{\PosSet[\OppSym]}{\MovRel}$, with $\PosSet[\PlrSym]
  \cap \PosSet[\OppSym] = \emptyset$ and $\PosSet \defeq \PosSet[\PlrSym] \cup
  \PosSet[\OppSym]$, such that $\tupleb{\PosSet}{\MovRel}$ is a finite directed
  graph without sinks.
  $\PosSet[\PlrSym]$ (\resp, $\PosSet[\OppSym]$) is the set of positions of
  player $\PlrSym$ (\resp, $\OppSym$) and $\MovRel \subseteq \PosSet \times
  \PosSet$ is a left-total relation describing all possible moves.
  A \emph{path} in $\VSet \subseteq \PosSet$ is a finite or infinite sequence
  $\pthElm \in \PthSet(\VSet)$ of positions in $\VSet$ compatible with the move
  relation, \ie, $(\pthElm_{i}, \pthElm_{i + 1}) \in \MovRel$, for all $i \in
  \numco{0}{\card{\pthElm} - 1}$.
  A positional \emph{strategy} for player $\alpha \in \{ \PlrSym, \OppSym \}$ on
  $\VSet \subseteq \PosSet$ is a function
  $\strElm[\alpha] \in \StrSet[\alpha](\VSet) \subseteq (\VSet \cap
  \PosSet[\alpha]) \to \PosSet$, mapping each $\alpha$-position $\posElm$ in the
  domain of $\strElm[\alpha]$ to position $\strElm[\alpha](\posElm)$ compatible
  with the move relation, \ie,
  $(\posElm, \strElm[\alpha](\posElm)) \in \MovRel$.
  With $\StrSet[\alpha](\VSet)$ we denote the set of all $\alpha$-strategies on
  $\VSet$, while $\StrSet[\alpha]$ denotes $\bigcup_{\VSet \subseteq \PosSet}
  \StrSet[\alpha](\VSet)$.
  A \emph{play} in $\VSet \subseteq \PosSet$ from a position $\posElm \in \VSet$
  \wrt a pair of strategies $(\strElm[\PlrSym], \strElm[\OppSym]) \in
  \StrSet[\PlrSym](\VSet) \times \StrSet[\OppSym](\VSet)$, called
  \emph{$((\strElm[\PlrSym], \strElm[\OppSym]), \posElm)$-play}, is a path
  $\pthElm \in \PthSet(\VSet)$ such that $\pthElm[0] = \posElm$ and, for all $i
  \in \numco{0}{\card{\pthElm} - 1}$, if $\pthElm_{i} \in \PosSet[\PlrSym]$ then
  $\pthElm_{i + 1} = \strElm[\PlrSym](\pthElm_{i})$ else $\pthElm_{i + 1} =
  \strElm[\OppSym](\pthElm_{i})$.
  The \emph{play function} $\playFun : (\StrSet[\PlrSym](\VSet) \times
  \StrSet[\OppSym](\VSet)) \times \VSet \to \PthSet(\VSet)$ returns, for each
  position $\posElm \in \VSet$ and pair of strategies $(\strElm[\PlrSym],
  \strElm[\OppSym]) \in \StrSet[\PlrSym](\VSet) \times \StrSet[\OppSym](\VSet)$,
  the maximal $((\strElm[\PlrSym], \strElm[\OppSym]), \posElm)$-play
  $\playFun((\strElm[\PlrSym], \strElm[\OppSym]), \posElm)$.
  If a pair $(\strElm[\PlrSym], \strElm[\OppSym]) \in \StrSet[\PlrSym](\VSet)
  \times \StrSet[\OppSym](\VSet)$ induces a finite play starting from position
  $\posElm \in \VSet$, then $\playFun((\strElm[\PlrSym], \strElm[\OppSym]),
  \posElm)$ identifies the maximal prefix of that play that is contained in
  $\VSet$.

  A \emph{mean-payoff game} (\MPG for short) is a tuple $\GamName =
  \tuplec{\AName}{\WghSet}{\wghFun}$, where $\AName$ is an arena, $\WghSet
  \subset \SetZ$ is a finite set of integer weights, and $\wghFun \colon
  \PosSet \to \WghSet$ is a \emph{weight function} assigning a weight to each
  position.
  $\PosSet[][+]$ (\resp, $\PosSet[][-]$) denotes the set of positive-weight
  positions (\resp, non-positive-weight positions).
  For convenience, we shall refer to non-positive weights as negative weights.
  Notice that this definition of \MPG is equivalent to the classic formulation
  in which the weights label the moves, instead.
  The weight function naturally extends to paths, by setting $\wghFun(\pthElm)
  \defeq \sum_{i = 0}^{\card{\pthElm} - 1} \wghFun(\pthElm_{i})$.
  The goal of player $\PlrSym$ (\resp, $\OppSym$) is to maximize (\resp,
  minimize) $\vFun(\pthElm) \defeq \liminf_{i \to \infty} \frac{1}{i} \cdot
  \wghFun(\pthElm_{\leq i})$, where $\pthElm_{\leq i}$ is the prefix up to index
  $i$.
  Given a threshold $\nu$, a set of positions $\VSet \subseteq \PosSet$ is a
  $\PlrSym$-\emph{dominion}, if there exists a $\PlrSym$-strategy
  $\strElm[\PlrSym] \in \StrSet[\PlrSym](\VSet)$ such that, for all
  $\OppSym$-strategies $\strElm[\OppSym] \in \StrSet[\OppSym](\VSet)$ and
  positions $\posElm \in \VSet$, the induced play $\pthElm =
  \playFun((\strElm[\PlrSym], \strElm[\OppSym]), \posElm)$ satisfies
  $\vFun(\pthElm) > \nu$.
  The pair of winning regions $(\WinSet[\PlrSym],\WinSet[\OppSym])$ forms a
  $\nu$-mean partition.
  Assuming $\nu$ integer, the $\nu$-mean partition problem is equivalent to the
  $0$-mean partition one, as we can subtract $\nu$ to the weights of all the
  positions.
  As a consequence, the \MPG decision problem can be equivalently restated as
  deciding whether player $\PlrSym$ (\resp, $\OppSym$) has a strategy to enforce
  $\liminf_{i \to \infty} \frac{1}{i} \cdot \wghFun(\pthElm_{\leq i}) > 0$
  (\resp, $\liminf_{i \to \infty} \frac{1}{i} \cdot \wghFun(\pthElm_{\leq i})
  \leq 0$), for all the resulting plays $\pi$.

\end{section}




\begin{section}{Solving Mean-Payoff Games via Progress Measures}
  \label{sec:solprgmsr}

  The abstract notion of progress measure~\cite{Kla91} has been introduced as a
  way to encode global properties on paths of a graph by means of simpler local
  properties of adjacent vertexes.
  In the context of {\MPG}s, the graph property of interest, called
  \emph{mean-payoff property}, requires that the mean payoff of every infinite
  path in the graph be non-positive.
  More precisely, in game theoretic terms, a \emph{mean-payoff progress measure}
  witnesses the existence of strategy $\strElm[\OppSym]$ for player $\OppSym$
  such that each path in the graph induced by fixing that strategy on the arena
  satisfies the desired property.
  A mean-payoff progress measure associates with each vertex of the underlying
  graph a value, called \emph{measures}, taken from the set of extended natural
  numbers $\SetNI \defeq \SetN \cup \{\infty\}$, endowed with an ordering
  relation $\leq$ and an addition operation $+$, which extend the standard
  ordering and addition over the naturals in the usual way.
  Measures are associated with positions in the game and the measure of a
  position $\posElm$ can intuitively be interpreted as an estimate of the payoff
  that player $\PlrSym$ can enforce on the plays starting in $\posElm$.
  In this sense, they measure ``how far'' $\posElm$ is from satisfying the
  mean-payoff property, with the maximal measure $\infty$ denoting failure of
  the property for $\posElm$.
  More precisely, the $\OppSym$-strategy induced by a progress measure ensures
  that measures do not increase along the paths of the induced graph.
  This, in turn, ensures that every path eventually gets trapped in a
  non-positive-weight cycle, thereby witnessing a win for player $\OppSym$.

  To obtain a progress measure, one starts from some suitable association of
  position of the game with measures.
  The local information encoded by these measures is then propagated back along
  the edges of the underlying graph so as to associate with each position the
  information gathered along plays of some finite length starting from that
  position.
  The propagation process is performed according to the following intuition.
  The measures of positions adjacent to $\posElm$ are propagated back to
  $\posElm$ only if those measures push $\posElm$ further away from the
  property.
  This propagation is achieved by means of a measure stretch operation $+$,
  which adds, when appropriate, the weight of an adjacent position to the
  measure of a given position.
  This is established by comparing the measure of $\posElm$ with those of its
  adjacent positions, since, for each position $\posElm$, the mean-payoff
  property is defined in terms of the sum of the weights encountered along the
  plays from that position.
  The process ends when no position can be pushed further away from the property
  and each position is not dominated by any, respectively one, of its adjacents,
  depending on whether that position belongs to player $\PlrSym$ or to player
  $\OppSym$, respectively.
  The positions that did not reach measure $\infty$ are those from which player
  $\OppSym$ can win game and the set of measures currently associated with such
  positions forms a mean-payoff progress measure for the game.

  To make the above intuitions precise, we introduce the notion of measure
  function, progress measure, and an algorithm for computing progress measures
  correctly.
  It is worth noticing that the progress-measure based approach as described
  in~\cite{BCDGR11}, called \BRIM from now on, can be easily recast equivalently
  in the form below.
  A \emph{measure function} $\mfElm \colon \PosSet \cto \SetNI$ maps each
  position $\posElm$ in the game to a suitable measure $\mfElm(\posElm)$.
  The order $\leq$ of the measures naturally induces a pointwise partial order
  $\sqsubseteq$ on the measure functions defined in the usual way, namely, for
  any two measure functions $\mfElm[1]$ and $\mfElm[2]$, we write $\msrElm[1]
  \sqsubseteq \msrElm[2]$ if $\mfElm[1](\posElm) \leq \mfElm[2](\posElm)$, for
  all positions $\posElm$.
  The set of measure functions over a measure space, together with the induced
  ordering $\sqsubseteq$, forms a \emph{measure-function space}.

  \begin{definition}[Measure-Function Space]
    \label{def:msrfunspc}
    The \emph{measure-function space} is the \emph{partial order} $\MFName
    \defeq \tupleb {\MFSet} {\sqsubseteq}$ whose components are defined as
    reported in the following:
    \begin{enumerate}
      \item\label{def:msrfunspc(set)}
        $\MFSet \defeq \PosSet \to \SetNI$ is the set of all functions $\mfElm
        \in \MFSet$, called \emph{measure functions}, mapping each position
        $\posElm \in \PosSet$ to a measure $\mfElm(\posElm) \in \SetNI$;
      \item\label{def:msrfunspc(ord)}
        for all $\mfElm[1], \mfElm[2] \in \MFSet$, it holds that $\mfElm[1]
        \sqsubseteq \mfElm[2]$ if $\mfElm[1](\posElm) \leq \mfElm[2](\posElm)$,
        for all positions $\posElm \in \PosSet$.
    \end{enumerate}
    The \emph{$\PlrSym$-denotation} (\resp, \emph{$\OppSym$-denotation}) of a
    measure function $\mfElm \in \MFSet$ is the set $\denot{\mfElm}[\PlrSym]
    \defeq \mfElm[][-1](\infty)$ (\resp, $\denot{\mfElm}[\OppSym] \defeq
    \dual{\mfElm[][-1](\infty)}$) of all positions having maximal (\resp,
    non-maximal) measure associated within $\mfElm$.
  \end{definition}

  Assuming that a given position $\posElm$ has an adjacent with measure $\eta$,
  a measure update of $\eta$ \wrt $\posElm$ is obtained by the stretch operator
  $+ \colon \SetNI \times \PosSet \to \SetNI$, defined as
  \[
    \msrElm + \posElm \defeq \max \{ 0, \msrElm + \wghFun(\posElm) \},
  \]
  which corresponds to the payoff estimate that the given position will obtain
  by choosing to follow the move leading to the prescribed adjacent.

  A \emph{mean-payoff progress measure} is such that the measure associated with
  each game position $\posElm$ needs not be increased further in order to beat
  the actual payoff of the plays starting from $\posElm$.
  In particular, it can be defined by taking into account the opposite attitude
  of the two players in the game.
  While the player $\PlrSym$ tries to push toward higher measures, the player
  $\OppSym$ will try to keep the measures as low as possible.
  A measure function in which the payoff of each $\PlrSym$-position (\resp,
  $\OppSym$-position) $\posElm$ is not dominated by the payoff of all (\resp,
  some of) its adjacents augmented with the weight of $\posElm$ itself meets the
  requirements.

  \begin{definition}[Progress Measure]
    \label{def:prgmsr}
    A measure function $\mfElm \in \MFSet$ is a \emph{progress measure} if the
    following two conditions hold true, for all positions $\posElm \in \PosSet$:
    \begin{enumerate}
      \item\label{def:prgmsr(plr)}
        $\mfElm(\uposElm) + \posElm \leq \mfElm(\posElm)$, for all adjacents
        $\uposElm \in \MovRel(\posElm)$ of $\posElm$, if $\posElm \in
        \PosSet[\PlrSym]$;
      \item\label{def:prgmsr(opp)}
        $\mfElm(\uposElm) + \posElm \leq \mfElm(\posElm)$, for some adjacent
        $\uposElm \in \MovRel(\posElm)$ of $\posElm$, if $\posElm \in
        \PosSet[\OppSym]$.
    \end{enumerate}
  \end{definition}

  The following theorem states the fundamental property of progress measures,
  namely, that every position associated with a non-maximal value is won by
  player $\OppSym$.

  \begin{theorem}[Progress Measure]
    \label{thm:prgmsr}
    Let $\mfElm \in \MFSet$ be a progress measure.
    Then, $\denot{\mfElm}[\OppSym] \subseteq \WinSet[\OppSym]$.
  \end{theorem}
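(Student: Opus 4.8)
The plan is to extract from the progress measure $\mfElm$ a single positional $\OppSym$-strategy and to prove that it forces every play issued from a finite-measure position to be won by $\OppSym$; since (after the reduction to the $0$-mean partition already carried out in Section~\ref{sec:mpg}) $\WinSet[\OppSym]$ collects exactly the positions from which $\OppSym$ can secure $\vFun(\pthElm) \leq 0$, this will yield the inclusion $\denot{\mfElm}[\OppSym] \subseteq \WinSet[\OppSym]$. Concretely, on the domain $\VSet \defeq \denot{\mfElm}[\OppSym]$ I would define $\strElm[\OppSym] \in \StrSet[\OppSym](\VSet)$ by choosing, for each $\OppSym$-position $\posElm \in \VSet$, an adjacent $\uposElm \in \MovRel(\posElm)$ witnessing condition~\ref{def:prgmsr(opp)} of Definition~\ref{def:prgmsr}, i.e.\ one with $\mfElm(\uposElm) + \posElm \leq \mfElm(\posElm)$.

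The first step is to show that $\VSet$ is closed under $\strElm[\OppSym]$: every play $\pthElm$ consistent with $\strElm[\OppSym]$ and starting in $\VSet$ remains in $\VSet$. I would argue index by index. If $\pthElm_{i} \in \PosSet[\OppSym] \cap \VSet$, then the move prescribed by $\strElm[\OppSym]$ gives $\max \{ 0, \mfElm(\pthElm_{i + 1}) + \wghFun(\pthElm_{i}) \} \leq \mfElm(\pthElm_{i}) < \infty$, whence $\mfElm(\pthElm_{i + 1}) < \infty$; if instead $\pthElm_{i} \in \PosSet[\PlrSym] \cap \VSet$, condition~\ref{def:prgmsr(plr)} of Definition~\ref{def:prgmsr} yields the same bound for \emph{every} successor, so whatever move $\PlrSym$ takes keeps the play inside $\VSet$. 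As the arena is sink-free, the play is infinite and $\strElm[\OppSym]$ is defined all along it. Unfolding the stretch operator at each index then produces the pointwise inequality $\wghFun(\pthElm_{i}) \leq \mfElm(\pthElm_{i}) - \mfElm(\pthElm_{i + 1})$.

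The second step is a telescoping estimate. Summing the last inequality over $i \in \numcc{0}{k}$ collapses the right-hand side, and using $\mfElm(\pthElm_{k + 1}) \geq 0$ (measures live in $\SetNI$) leaves $\wghFun(\pthElm_{\leq k}) \leq \mfElm(\pthElm_{0})$ for every $k$, a finite constant since $\pthElm_{0} \in \VSet$. Dividing by $k$ and passing to the limit gives $\vFun(\pthElm) = \liminf_{k \to \infty} \frac{1}{k} \cdot \wghFun(\pthElm_{\leq k}) \leq 0$. As this holds for every $\PlrSym$-strategy and every starting position in $\VSet$, the set $\VSet = \denot{\mfElm}[\OppSym]$ is won by $\OppSym$ and therefore contained in $\WinSet[\OppSym]$.

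I expect the only delicate point to be the bookkeeping around the truncation at $0$ built into the stretch operator: the hypothesis $\max \{ 0, \mfElm(\uposElm) + \wghFun(\posElm) \} \leq \mfElm(\posElm)$ must be used to read off simultaneously that the measure stays non-negative and that the \emph{untruncated} bound $\mfElm(\uposElm) + \wghFun(\posElm) \leq \mfElm(\posElm)$ holds, the latter being precisely what legitimises the telescoping sum. It is worth stressing that, unlike in the classical completeness arguments, no monotonicity or least-fixpoint property of the lift operator is invoked here; the two local conditions defining a progress measure suffice on their own to establish soundness.
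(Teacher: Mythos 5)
Your proof is correct, and it reaches the conclusion by a genuinely different route from the paper's. The two arguments share their setup: the positional $\OppSym$-strategy is extracted from Condition~\ref{def:prgmsr(opp)} of Definition~\ref{def:prgmsr}, and both hinge on the same local inequality $\mfElm(\pthElm_{i+1}) + \wghFun(\pthElm_{i}) \leq \mfElm(\pthElm_{i})$, which, as you observe, must be read off from the truncated stretch operator together with the propagation of finiteness along the play. The difference is in the global step. The paper argues by contradiction: it assumes some induced play is won by player $\PlrSym$, uses finiteness of the arena to extract a repeated position enclosing a segment of strictly positive weight, and telescopes the local inequality around that cycle to conclude that its weight is at most $0$, a contradiction. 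You instead telescope over every finite prefix, obtaining the uniform bound $\wghFun(\pthElm_{\leq k}) \leq \mfElm(\pthElm_{0}) < \infty$, and divide by $k$ to get $\vFun(\pthElm) \leq 0$ outright. Your route is more elementary and slightly more general: it needs neither the cycle extraction nor, in fact, the finiteness of the arena; it makes explicit the closure of $\denot{\mfElm}[\OppSym]$ under the chosen $\OppSym$-strategy and arbitrary $\PlrSym$-moves, a point the paper compresses into a claimed ``trivial induction''; it yields the quantitative by-product that no prefix payoff ever exceeds the initial measure; and it sidesteps the paper's intermediate assertion that measures along the play are bounded by $\SElm$, which for an arbitrary progress measure is not literally true (only finiteness holds, and only finiteness is needed). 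What the paper's formulation buys in return is the cycle-level picture---a $\PlrSym$-winning play must close a strictly positive cycle---which matches the quasi-dominion intuition driving the rest of the paper, but is not needed to prove this theorem.
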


  In order to obtain a progress measure from a given measure function, one can
  iteratively adjust the current measure values in such a way to force the
  progress condition above among adjacent positions.
  To this end, we define the \emph{lift operator} $\liftFun \colon \MFSet \to
  \MFSet$ as follows:
  \[
    \liftFun (\mfElm)(\posElm) \defeq
    \begin{cases}
      \max {\set{ \mfElm(\wposElm) + \posElm }{ \wposElm \in \MovRel(\posElm)
      }},
      & \text{if } \posElm \in \PosSet[\PlrSym]; \\
      \min {\set{ \mfElm(\wposElm) + \posElm }{ \wposElm \in \MovRel(\posElm)
      }},
      & \text{otherwise}.
    \end{cases}
  \]
  Note that the lift operator is clearly monotone and, therefore, admits a least
  fixpoint.
  A mean-payoff progress measure can, then, be obtained by repeatedly applying
  this operator until a fixpoint is reached, starting from the minimal measure
  function $\mfElm[0] \defeq \{ \posElm \in \PosSet \mapsto 0 \}$ that assigns
  measure $0$ to all the positions in the game.
  The following \emph{solver operator} applied to $\mfElm[0]$ computes the
  desired solution:
  \[
    \solFun \defeq \lfpFun\, \mfElm \,.\, \liftFun(\mfElm) \colon \MFSet \to
    \MFSet.
  \]
  Observe that the measures generated by the procedure outlined above have a
  fairly natural interpretation.
  Each positive measure, indeed, under-approximates the weight that player
  $\PlrSym$ can enforce along finite prefixes of the plays from the
  corresponding positions.
  This follows from the fact that, while player $\PlrSym$ maximizes its measures
  along the outgoing moves, player $\OppSym$ minimizes them.
  In this sense, each positive measure witnesses the existence of a
  positively-weighted finite prefix of a play that player $\PlrSym$ can enforce.
  Let $\SElm \defeq \sum \set{ \wghFun(\posElm) \in \SetN }{ \posElm \in \PosSet
  \land \wghFun(\posElm) > 0 }$ be the sum of all the positive weights in the
  game.
  Clearly, the maximal payoff of a simple play in the underlying graph cannot
  exceed $\SElm$.
  Therefore, a measure greater than $\SElm$ witnesses the existence of a cycle
  whose payoff diverges to infinity and is won, thus, by player $\PlrSym$.
  Hence, any measure strictly greater than $\SElm$ can be substituted with the
  value $\infty$.
  This observation established the termination of the algorithm and is
  instrumental to its completeness proof.
  Indeed, at the fixpoint, the measures actually coincide with the highest
  payoff player $\PlrSym$ is able to guarantee.
  Soundness and completeness of the above procedure have been established
  in~\cite{BCDGR11}, where the authors also show that, despite the algorithm
  requiring $\AOmicron{n \cdot \SElm} = \AOmicron{n^{2} \cdot \WElm}$ lift
  operations in the worst-case, with $n$ the number of positions and $\WElm$ the
  maximal positive weight in the game, the overall cost of these lift operations
  is $\AOmicron{\SElm \cdot m \cdot \log \SElm} = \AOmicron{n \cdot m \cdot
  \WElm \cdot \log(n \cdot W)}$, with $m$ the number of moves and
  $\AOmicron{\log \SElm}$ the cost of each arithmetic operation necessary to
  compute the stretch of the measures.

\end{section}




\begin{section}{Solving Mean-Payoff Games via Quasi Dominions}
  \label{sec:solqsidom}

  \begin{wrapfigure}[6]{i}{0.155\textwidth}
    \vspace{-1.75em}
    \figmpgexma
    \vspace{-0.75em}
    \caption{\label{fig:mpgexma} An \MPG.}
  \end{wrapfigure}
  Let us consider the simple example game depicted in Figure~\ref{fig:mpgexma},
  where the shape of each position indicates the owner, circles for player
  $\PlrSym$ and square for its opponent $\OppSym$, and, in each label of the
  form $\ell / \wghElm$, the letter $\wghElm$ corresponds to the associated
  weight, where we assume $k > 1$.
  Starting from the smallest measure function $\mfElm[0] = \{ \aSym, \bSym,
  \cSym, \dSym \mapsto 0 \}$, the first application of the lift operator returns
  $\mfElm[1] = \{ \aSym \mapsto k; \bSym, \cSym \mapsto 0; \dSym \mapsto 1 \} =
  \liftFun(\mfElm[0])$.
  After that step, the following iterations of the fixpoint alternatively
  updates positions $\cSym$ and $\dSym$, since the other ones already satisfy
  the progress condition.
  Being $\cSym \in \PosSet[\OppSym]$, the lift operator chooses for it the
  measure computed along the move $(\cSym, \dSym)$, thus obtaining
  $\mfElm[2](\cSym) = \liftFun(\mfElm[1])(\cSym) = \mfElm[1](\dSym) = 1$.
  Subsequently, $\dSym$ is updated to $\mfElm[3](\dSym) =
  \liftFun(\mfElm[2])(\dSym) = \mfElm[2](\cSym) + 1 = 2$.
  A progress measure is obtained after exactly $2k + 1$ iterations, when the
  measure of $\cSym$ reaches value $k$ and $\dSym$ value $k + 1$.
  Note, however, that the choice of the move $(\cSym, \dSym)$ is clearly a
  losing strategy for player $\OppSym$, as remaining in the highlighted region
  would make the payoff from position $\cSym$ diverge.
  Therefore, the only reasonable choice for player $\OppSym$ is to exit from
  that region by taking the move leading to position $\aSym$.
  An operator able to diagnose this phenomenon early on could immediately
  discard the move $(\cSym, \dSym)$ and jump directly to the correct payoff
  obtained by choosing the move to position $\aSym$.
  As we shall see, such an operator might lose the monotonicity property and
  recovering the completeness of the resulting approach will prove more
  involved.

  In the rest of this article we shall devise a progress operator that does
  precisely that.
  To this end, we start by providing a notion of \emph{quasi dominion},
  originally introduced for parity games in~\cite{BDM16}, which can be exploited
  in the context of {\MPG}s.

  \begin{definition}[Quasi Dominion]
    \label{def:qsidom}
    An arbitrary set of positions $\QSet \subseteq \PosSet$ is a \emph{quasi
    $\PlrSym$-dominion} if there exists a $\PlrSym$-strategy $\strElm[\PlrSym]
    \in \StrSet[\PlrSym](\QSet)$, called \emph{$\PlrSym$-witness for $\QSet$},
    such that, for all $\OppSym$-strategies $\strElm[\OppSym] \in
    \StrSet[\OppSym](\QSet)$ and positions $\posElm \in \QSet$, the induced play
    $\pthElm = \playFun((\strElm[\PlrSym], \strElm[\OppSym]), \posElm)$, called
    \emph{$(\strElm[\PlrSym], \posElm)$-play in $\QSet$}, satisfies
    $\wghFun(\pthElm) > 0$.
    If the condition $\wghFun(\pthElm) > 0$ holds only for infinite plays
    $\pthElm$, then $\QSet$ is called \emph{weak quasi $\PlrSym$-dominion}.
  \end{definition}

  Essentially, a quasi $\PlrSym$-dominion consists in a set $\QSet$ of positions
  starting from which player $\PlrSym$ can force plays in $\QSet$ of positive
  weight.
  Analogously, any infinite play that player $\PlrSym$ can force in a weak quasi
  $\PlrSym$-dominion has positive weight.
  Clearly, any quasi $\PlrSym$-dominion is also a weak quasi $\PlrSym$-dominion.
  Moreover, the latter are closed under subsets, while the former are not.
  It is an immediate consequence of the definition above that all infinite
  plays induced by the $\PlrSym$-witness, if any, necessarily have infinite
  weight and, thus, are winning for player $\PlrSym$.
  Indeed, every such a play $\pthElm$ is regular, \ie it can be decomposed into
  a prefix $\pthElm'$ and a simple cycle $(\pthElm'')^{\omega}$, \ie $\pthElm =
  \pthElm' (\pthElm'')^{\omega}$, since the strategies we are considering are
  memoryless.
  Now, $\wghFun((\pthElm'')^{\omega}) > 0$, so, $\wghFun(\pthElm'') > 0$, which
  implies $\wghFun((\pthElm'')^{\omega}) = \infty$.
  Hence, $\wghFun(\pthElm) = \infty$.

  \begin{proposition}
    \label{prp:win}
    Let $\QSet$ be a weak quasi $\PlrSym$-dominion with $\strElm[\PlrSym] \in
    \StrSet[\PlrSym](\QSet)$ one of its $\PlrSym$-witnesses and $\QSet[][\star]
    \subseteq \QSet$.
    Then, for all $\OppSym$-strategies $\strElm[\OppSym] \in
    \StrSet[\OppSym](\QSet[][\star])$ and positions $\posElm \in \QSet[][\star]$
    the following holds: if the $(\strElm[\PlrSym]_{\rst \QSet[][\star]},
    \posElm)$-play $\pthElm = \playFun((\strElm[\PlrSym]_{\rst \QSet[][\star]},
    \strElm[\OppSym]), \posElm)$ is infinite, then $\wghFun(\pthElm) = \infty$.
  \end{proposition}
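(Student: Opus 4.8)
The plan is to reduce the statement to the defining property of the witness $\strElm[\PlrSym]$ on the whole of $\QSet$, and then to upgrade mere positivity of the play weight to divergence through the usual lasso decomposition. First I would observe that, being a play in $\QSet[][\star]$, the path $\pthElm = \playFun((\strElm[\PlrSym]_{\rst \QSet[][\star]}, \strElm[\OppSym]), \posElm)$ lies entirely inside $\QSet[][\star] \subseteq \QSet$, and that, both $\strElm[\PlrSym]_{\rst \QSet[][\star]}$ and $\strElm[\OppSym]$ being positional, $\pthElm$ is ultimately periodic: $\pthElm = \pthElm' (\pthElm'')^{\omega}$ for a finite prefix $\pthElm'$ and a simple cycle $\pthElm''$ all of whose positions belong to $\QSet[][\star]$. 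Since $\strElm[\PlrSym]_{\rst \QSet[][\star]}$ coincides with $\strElm[\PlrSym]$ on every $\PlrSym$-position of $\QSet[][\star]$, the path $\pthElm$ is also compatible with the unrestricted witness $\strElm[\PlrSym]$.

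Next I would transfer the opponent's choices back to the full quasi dominion. As the arena has no sinks, $\strElm[\OppSym] \in \StrSet[\OppSym](\QSet[][\star])$ extends to some $\adj{\strElm[\OppSym]} \in \StrSet[\OppSym](\QSet)$ that agrees with $\strElm[\OppSym]$ on $\QSet[][\star]$. Let $\qElm$ be the first position of the cycle $\pthElm''$, so $\qElm \in \QSet[][\star] \subseteq \QSet$. By positionality, and because $\strElm[\PlrSym]$ and $\adj{\strElm[\OppSym]}$ reproduce on $\QSet[][\star]$ exactly the moves that generated $\pthElm$, the play $\playFun((\strElm[\PlrSym], \adj{\strElm[\OppSym]}), \qElm)$ equals the periodic word $(\pthElm'')^{\omega}$, which never leaves $\QSet[][\star] \subseteq \QSet$ and is therefore not truncated by $\playFun$. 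Thus $(\pthElm'')^{\omega}$ is an infinite $(\strElm[\PlrSym], \qElm)$-play in $\QSet$, and the fact that $\QSet$ is a weak quasi $\PlrSym$-dominion with witness $\strElm[\PlrSym]$ (Definition~\ref{def:qsidom}) yields $\wghFun((\pthElm'')^{\omega}) > 0$.

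Finally I would close with a short arithmetic step: since $\wghFun((\pthElm'')^{\omega})$ is the running sum of a sequence of period $\card{\pthElm''}$ whose per-period total is $\wghFun(\pthElm'')$, its strict positivity forces $\wghFun(\pthElm'') > 0$, whence $\wghFun((\pthElm'')^{\omega}) = \infty$ and, the finite prefix $\pthElm'$ contributing a finite amount, $\wghFun(\pthElm) = \wghFun(\pthElm') + \wghFun((\pthElm'')^{\omega}) = \infty$. The delicate point, and the one I would handle with care, is precisely the transfer step of the second paragraph: the witness property is available only for $\strElm[\PlrSym]$ on $\QSet$, not for its restriction, so correctness rests on the twin observations that the play stays inside $\QSet[][\star]$ (where restriction and witness agree) and that extending $\strElm[\OppSym]$ to $\QSet$ is harmless — this is exactly the reasoning behind closure of weak quasi dominions under subsets. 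Restarting the play at $\qElm$ rather than at $\posElm$ is equally essential, since invoking the property at $\posElm$ alone would only give $\wghFun(\pthElm) > 0$ and not the required divergence to $\infty$.
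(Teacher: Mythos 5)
Your proof is correct and follows essentially the same route as the paper: decompose the infinite play, which is induced by positional strategies, into a lasso $\pthElm'(\pthElm'')^{\omega}$ with $\pthElm''$ a simple cycle, apply the weak quasi $\PlrSym$-dominion property of $\QSet$ to the cyclic tail to obtain $\wghFun((\pthElm'')^{\omega}) > 0$, deduce $\wghFun(\pthElm'') > 0$, and conclude $\wghFun(\pthElm) = \infty$. The only difference is that you make explicit the transfer step -- extending $\strElm[\OppSym]$ from $\QSet[][\star]$ to $\QSet$ and restarting the play at the first position of the cycle so that the witness property of $\QSet$ applies -- which the paper leaves implicit when it asserts that every infinite play induced by the witness is regular and has a positively-weighted cycle.
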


  From Proposition~\ref{prp:win}, it directly follows that, if a weak quasi
  $\PlrSym$-dominion $\QSet$ is \emph{closed} \wrt its \emph{$\PlrSym$-witness},
  namely all the induced plays are infinite, then it is a $\PlrSym$-dominion,
  hence is contained in $\WinSet[\PlrSym]$.

  Consider again the example of Figure~\ref{fig:mpgexma}.
  The set of position $\QSet \defeq \{ \aSym, \cSym, \dSym \}$ forms a quasi
  $\PlrSym$-dominion whose $\PlrSym$-witness is the only possible
  $\PlrSym$-strategy mapping position $\dSym$ to $\cSym$.
  Indeed, any infinite play remaining in $\QSet$ forever and compatible with
  that strategy (\eg, the play from position $\cSym$ when player $\OppSym$
  chooses the move from $\cSym$ leading to $\dSym$ or the one from $\aSym$ to
  itself or the one from $\aSym$ to $\dSym$) grants an infinite payoff.
  Any finite compatible play, instead, ends in position $\aSym$ (\eg, the play
  from $\cSym$ when player $\OppSym$ chooses the move from $\cSym$ to $\aSym$
  and then one from $\aSym$ to $\bSym$) giving a payoff of at least $k > 0$.
  On the other hand, $\QSet[][\star] \defeq \{ \cSym, \dSym \}$ is only a weak
  quasi $\PlrSym$-dominion, as player $\OppSym$ can force a play of weight $0$
  from position $\cSym$, by choosing the exiting move $(\cSym, \aSym)$.
  However, the internal move $(\cSym, \dSym)$ would lead to an infinite play in
  $\QSet[][\star]$ of infinite weight.

  The crucial observation here is that the best choice for player $\OppSym$ in
  any position of a (weak) quasi $\PlrSym$-dominion is to exit from it as soon
  as it can, while the best choice for player $\PlrSym$ is to remain inside it
  as long as possible.
  The idea of the algorithm we propose in this section is to precisely exploit
  the information provided by the quasi dominions in the following way.
  Consider the example above.
  In position $\aSym$ player $\OppSym$ must choose to exit from $\QSet = \{
  \aSym, \cSym, \dSym \}$, by taking the move $(\aSym, \bSym)$, without changing
  its measure, which would corresponds to its weight $k$.
  On the other hand, the best choice for player $\OppSym$ in position $\cSym$ is
  to exit from the weak quasi-dominion $\QSet[][\star] = \{ \cSym, \dSym \}$, by
  choosing the move $(\cSym, \aSym)$ and lifting its measure from $0$ to $k$.
  Note that this contrasts with the minimal measure-increase policy for player
  $\OppSym$ employed in~\cite{BCDGR11}, which would keep choosing to leave
  $\cSym$ in the quasi-dominion by following the move to $\dSym$, which gives
  the minimal increase in measure of value $1$.
  Once $\cSym$ is out of the quasi-dominion, though, the only possible move for
  player $\PlrSym$ is to follow $\cSym$, taking measure $k + 1$.
  The resulting measure function is a progress measure and the solution has,
  thus, been reached.

  In order to make this intuitive idea precise, we need to be able to identify
  quasi dominions first.
  Interestingly enough, the measure functions $\mfElm$ defined in the previous
  section do allow to identify a quasi dominion, namely the set of positions
  $\dual{\mfElm[][-1](0)}$ having positive measure.
  Indeed, as observed at the end of that section, a positive measure witnesses
  the existence of a positively-weighted finite play that player $\PlrSym$ can
  enforce from that position onward, which is precisely the requirement of
  Definition~\ref{def:qsidom}.
  In the example of Figure~\ref{fig:mpgexma}, $\dual{\mfElm[0][-1](0)} =
  \emptyset$ and $\dual{\mfElm[1][-1](0)} = \{ \aSym, \cSym, \dSym\}$ are both
  quasi dominions, the first one \wrt the empty $\PlrSym$-witness and the second
  one \wrt the $\PlrSym$-witness $\strElm[\PlrSym](\dSym) = \cSym$.

  We shall keep the quasi-dominion information in pairs $(\mfElm, \strElm)$,
  called \emph{quasi-dominion representations} (\emph{\qdr}, for short),
  composed of a measure function $\mfElm$ and a $\PlrSym$-strategy $\strElm$,
  which corresponds to one of the $\PlrSym$-witnesses of the set of positions
  with positive measure in $\mfElm$.
  The connection between these two components is formalized in the definition
  below that also provides the partial order over which the new algorithm
  operates.

  \begin{definition}[\QDR Space]
    \label{def:qsidomrepspc}
    The \emph{quasi-dominion-representation space} is the \emph{partial order}
    $\QDRName \defeq \tupleb{\QDRSet}{\sqsubseteq}$, whose components are
    defined as prescribed in the following:
    \begin{enumerate}
      \item\label{def:qsidomrepspc(set)}
        $\QDRSet \subseteq \MFSet \times \StrSet[\PlrSym]$ is the set of all
        pairs $\qdrElm \defeq (\mfElm[\qdrElm], \strElm[\qdrElm]) \in \QDRSet$,
        called \emph{quasi-dominion-representations}, composed of a measure
        function $\mfElm[\qdrElm] \in \MFSet$ and a $\PlrSym$-strategy
        $\strElm[\qdrElm] \in \StrSet[\PlrSym](\qsiFun(\qdrElm))$, where
        $\qsiFun(\qdrElm) \defeq \dual{\mfElm[\qdrElm][-1](0)}$, for which the
        following four conditions hold:
        \begin{enumerate}
          \item\label{def:qsidomrepspc(set:qsi)}
            $\qsiFun(\qdrElm)$ is a quasi $\PlrSym$-dominion enjoying
            $\strElm[\qdrElm]$ as a $\PlrSym$-witness;
          \item\label{def:qsidomrepspc(set:dom)}
            $\denot{\mfElm[\qdrElm]}[\PlrSym]$ is a $\PlrSym$-dominion;
          \item\label{def:qsidomrepspc(set:plr)}
            $\mfElm[\qdrElm](\posElm) \leq
            \mfElm[\qdrElm](\strElm[\qdrElm](\posElm)) + \posElm$, for all
            $\PlrSym$-positions $\posElm \in \qsiFun(\qdrElm) \cap
            \PosSet[\PlrSym]$;
          \item\label{def:qsidomrepspc(set:opp)}
            $\mfElm[\qdrElm](\posElm) \leq \mfElm[\qdrElm](\uposElm) + \posElm$,
            for all $\OppSym$-positions $\posElm \in \qsiFun(\qdrElm) \cap
            \PosSet[\OppSym]$ and adjacents $\uposElm \in \MovRel(\posElm)$;
        \end{enumerate}
      \item\label{def:qsidomrepspc(ord)}
        for all $\qdrElm[1], \qdrElm[2] \in \QDRSet$, it holds that $\qdrElm[1]
        \sqsubseteq \qdrElm[2]$ if $\mfElm[{\qdrElm[1]}] \sqsubseteq
        \mfElm[{\qdrElm[2]}]$ and $\strElm[{\qdrElm[1]}](\posElm) =
        \strElm[{\qdrElm[2]}](\posElm)$, for all $\PlrSym$-positions $\posElm
        \in \qsiFun(\qdrElm[1]) \cap \PosSet[\PlrSym]$ with
        $\mfElm[{\qdrElm[1]}](\posElm) = \mfElm[{\qdrElm[2]}](\posElm)$.
    \end{enumerate}
    The \emph{$\alpha$-denotation} $\denot{\qdrElm}[\alpha]$ of a \qdr
    $\qdrElm$, with $\alpha \in \{ \PlrSym, \OppSym \}$, is the
    $\alpha$-denotation $\denot{\mfElm[\qdrElm]}[\alpha]$ of its measure
    function.
  \end{definition}

  Condition~\ref{def:qsidomrepspc(set:qsi)} is obvious.
  Condition~\ref{def:qsidomrepspc(set:dom)}, instead, requires that every
  position with infinite measure is indeed won by player $\PlrSym$ and is
  crucial to guarantee the completeness of the algorithm.
  Finally, Conditions~\ref{def:qsidomrepspc(set:plr)}
  and~\ref{def:qsidomrepspc(set:opp)} ensure that every positive measure
  under approximates the actual weight of some finite play within the induced
  quasi dominion.
  This is formally captured by the following proposition, which can be easily
  proved by induction on the length of the play.

  \begin{proposition}
    \label{prp:qdrpth}
    Let $\qdrElm$ be a \qdr and $\posElm \pthElm \uposElm$ a finite path
    starting at position $\posElm \in \PosSet$ and terminating in position
    $\uposElm \in \PosSet$ compatible with the $\PlrSym$-strategy
    $\strElm[\qdrElm]$.
    Then, $\mfElm[\qdrElm](\posElm) \leq \wghFun(\posElm \pthElm) +
    \mfElm[\qdrElm](\uposElm)$.
  \end{proposition}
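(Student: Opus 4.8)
The plan is to prove the inequality by induction on the number of edges of the path $\posElm\pthElm\uposElm$, exactly as the remark preceding the statement suggests. The base case is the empty path, where $\posElm = \uposElm$ and $\wghFun(\posElm\pthElm) = 0$, so the claim reduces to the trivial equality $\mfElm[\qdrElm](\posElm) \le \mfElm[\qdrElm](\uposElm)$. For the inductive step, let $\wposElm \in \MovRel(\posElm)$ be the successor of $\posElm$ along the path, so that the path factors into the first move $(\posElm, \wposElm)$ followed by a strictly shorter path from $\wposElm$ to $\uposElm$ that is still compatible with $\strElm[\qdrElm]$. The induction hypothesis applied to this shorter path gives $\mfElm[\qdrElm](\wposElm) \le \wghFun(\wposElm \cdots) + \mfElm[\qdrElm](\uposElm)$, where $\wghFun(\wposElm\cdots)$ denotes the weight of the remaining path up to, but excluding, $\uposElm$. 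Since $\wghFun(\posElm\pthElm) = \wghFun(\posElm) + \wghFun(\wposElm\cdots)$, it suffices to establish the single-move inequality $\mfElm[\qdrElm](\posElm) \le \wghFun(\posElm) + \mfElm[\qdrElm](\wposElm)$ and then chain it with the hypothesis.

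For the single-move inequality I would split on the owner of $\posElm$. Since $\posElm$ is a non-terminal position of a path compatible with $\strElm[\qdrElm]$, it lies in the quasi dominion $\qsiFun(\qdrElm)$ --- this is where $\strElm[\qdrElm]$ is defined and where Conditions~\ref{def:qsidomrepspc(set:plr)} and~\ref{def:qsidomrepspc(set:opp)} carry content --- so in particular $\mfElm[\qdrElm](\posElm) > 0$. If $\posElm \in \PosSet[\PlrSym]$, then compatibility forces $\wposElm = \strElm[\qdrElm](\posElm)$ and Condition~\ref{def:qsidomrepspc(set:plr)} yields $\mfElm[\qdrElm](\posElm) \le \mfElm[\qdrElm](\wposElm) + \posElm$; if $\posElm \in \PosSet[\OppSym]$, then Condition~\ref{def:qsidomrepspc(set:opp)}, instantiated at the adjacent $\wposElm$, gives the same bound $\mfElm[\qdrElm](\posElm) \le \mfElm[\qdrElm](\wposElm) + \posElm$. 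In both cases we obtain $\mfElm[\qdrElm](\posElm) \le \mfElm[\qdrElm](\wposElm) + \posElm = \max\{0, \mfElm[\qdrElm](\wposElm) + \wghFun(\posElm)\}$ by unfolding the stretch operator.

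The main obstacle is precisely the truncation at $0$ built into the stretch operator: the bound just obtained involves $\max\{0, \mfElm[\qdrElm](\wposElm) + \wghFun(\posElm)\}$, whereas the single-move inequality we want is the untruncated $\mfElm[\qdrElm](\posElm) \le \wghFun(\posElm) + \mfElm[\qdrElm](\wposElm)$. This is where the positivity $\mfElm[\qdrElm](\posElm) > 0$ of quasi-dominion positions is essential: from $0 < \mfElm[\qdrElm](\posElm) \le \max\{0, \mfElm[\qdrElm](\wposElm) + \wghFun(\posElm)\}$ the maximum cannot be attained by its first argument, hence it equals $\mfElm[\qdrElm](\wposElm) + \wghFun(\posElm)$, and the truncation dissolves. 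The degenerate case $\mfElm[\qdrElm](\wposElm) = \infty$ is immediate, since then the right-hand side is $\infty$; and the arithmetic over $\SetNI$ behaves as required because all intermediate measures are finite and non-negative. Substituting the dissolved bound into the chain of the first paragraph closes the induction. I expect the only genuinely delicate point to be the careful bookkeeping that keeps every non-terminal position of the path inside $\qsiFun(\qdrElm)$, so that the defining conditions of a \qdr are applicable at each step.
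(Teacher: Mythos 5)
Your induction is precisely the argument the paper has in mind---the paper supplies no explicit proof, only the remark that the proposition ``can be easily proved by induction on the length of the play''---and you correctly isolate the one real subtlety: strict positivity of the measure at the position being stepped over is what collapses the truncated stretch $\max\{0, \mfElm[\qdrElm](\wposElm) + \wghFun(\posElm)\}$ to the untruncated $\mfElm[\qdrElm](\wposElm) + \wghFun(\posElm)$. The base case, the decomposition of the path, the owner case split via Conditions~\ref{def:qsidomrepspc(set:plr)} and~\ref{def:qsidomrepspc(set:opp)} of Definition~\ref{def:qsidomrepspc}, and the chaining over $\SetNI$ are all sound.

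The gap is the sentence claiming that, \emph{because} $\posElm$ is a non-terminal position of a path compatible with $\strElm[\qdrElm]$, it lies in $\qsiFun(\qdrElm)$. This does not follow from compatibility, and no amount of bookkeeping can derive it: compatibility constrains nothing at $\OppSym$-positions, nor at $\PlrSym$-positions outside the domain $\qsiFun(\qdrElm) \cap \PosSet[\PlrSym]$ of the partial strategy $\strElm[\qdrElm]$, so a compatible path may perfectly well pass through positions of measure zero. Indeed, read literally the proposition itself is false: for the initial \qdr $(\mfElm[0], \strElm[0])$ every measure is $0$ and the strategy is empty, hence \emph{every} path is compatible, and any single move $(\posElm, \uposElm) \in \MovRel$ out of a position with $\wghFun(\posElm) < 0$ violates $\mfElm[\qdrElm](\posElm) \leq \wghFun(\posElm) + \mfElm[\qdrElm](\uposElm)$. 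The same failure occurs at intermediate zero-measure positions, so it would not even suffice to assume $\posElm \in \qsiFun(\qdrElm)$. The statement must therefore be read with the implicit hypothesis that all positions of the path, except possibly the last one, belong to $\qsiFun(\qdrElm)$---equivalently, that the path is a play inside the quasi dominion, which is exactly how the paper invokes the proposition in the proofs of Lemma~\ref{lmm:qdrstr}, Theorem~\ref{thm:tot}, and Lemma~\ref{lmm:qdrbnd}, and is also what your case split for $\posElm \in \PosSet[\PlrSym]$ already presupposes, since $\strElm[\qdrElm](\posElm)$ is undefined otherwise. Once that membership is taken as a hypothesis rather than derived, your induction goes through verbatim.
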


  It is immediate to see that every MPG admits a non-trivial \QDR space, since
  the pair $(\mfElm[0], \strElm[0])$, with $\mfElm[0]$ the smallest measure
  function and $\strElm[0]$ the empty strategy, trivially satisfies all the
  required conditions.

  \begin{proposition}
    \label{prp:qdrnonemp}
    Every \MPG has a non-empty \QDR space associated with it.
  \end{proposition}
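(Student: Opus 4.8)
The plan is to establish non-emptiness constructively, by exhibiting a single witness in $\QDRSet$. The obvious candidate is the degenerate pair $\qdrElm[0] \defeq (\mfElm[0], \strElm[0])$, where $\mfElm[0] \defeq \{ \posElm \in \PosSet \mapsto 0 \}$ is the constant measure function already used to seed the solver of the previous section, and $\strElm[0]$ is the empty $\PlrSym$-strategy, \ie the unique element of $\StrSet[\PlrSym](\emptyset)$. Everything then reduces to verifying that $\qdrElm[0]$ meets the four membership requirements of Definition~\ref{def:qsidomrepspc}.

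First I would record the single observation that drives the whole argument: both index sets appearing in those requirements are empty. Indeed, since $\mfElm[0]$ sends every position to $0$, its associated quasi dominion is $\qsiFun(\qdrElm[0]) = \dual{\mfElm[0][-1](0)} = \dual{\PosSet} = \emptyset$, and, since no position carries measure $\infty$, its $\PlrSym$-denotation is $\denot{\mfElm[0]}[\PlrSym] = \mfElm[0][-1](\infty) = \emptyset$. Feeding these two identities into the definition makes each clause degenerate. Condition~\ref{def:qsidomrepspc(set:qsi)} requires $\emptyset$ to be a quasi $\PlrSym$-dominion witnessed by $\strElm[0]$, which holds vacuously: the defining clause of Definition~\ref{def:qsidom} universally quantifies over positions of the set and there are none. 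Condition~\ref{def:qsidomrepspc(set:dom)} requires the empty set to be a $\PlrSym$-dominion, hence contained in $\WinSet[\PlrSym]$, again vacuously. Finally, Conditions~\ref{def:qsidomrepspc(set:plr)} and~\ref{def:qsidomrepspc(set:opp)}, being quantified over $\qsiFun(\qdrElm[0]) \cap \PosSet[\PlrSym] = \emptyset$ and $\qsiFun(\qdrElm[0]) \cap \PosSet[\OppSym] = \emptyset$ respectively, carry no premises to discharge. Thus $\qdrElm[0] \in \QDRSet$ and the space is non-empty.

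Since every step is a direct unfolding of the definitions, there is no genuine obstacle here; the only point I would take care to state explicitly is the legitimacy of the vacuous cases. Concretely, I would note that the empty map is indeed a well-formed member of $\StrSet[\PlrSym](\emptyset)$, and that both \emph{quasi $\PlrSym$-dominion} and \emph{$\PlrSym$-dominion} are defined by an ``exists a $\PlrSym$-strategy such that for all $\OppSym$-strategies and all positions $\posElm$ in the set $\ldots$'' pattern, so that an empty position set renders the inner universal, and therefore the entire clause, trivially true. This forestalls the only conceivable objection, namely that the degenerate pair might escape the intended scope of the definitions.
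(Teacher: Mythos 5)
Your proposal is correct and follows exactly the paper's own argument: the paper justifies the proposition by observing that the pair $(\mfElm[0], \strElm[0])$, with $\mfElm[0]$ the smallest measure function and $\strElm[0]$ the empty strategy, trivially satisfies all conditions of Definition~\ref{def:qsidomrepspc}. Your write-up merely makes explicit the vacuity of each clause (since $\qsiFun(\qdrElm[0]) = \emptyset$ and $\denot{\mfElm[0]}[\PlrSym] = \emptyset$), which is a faithful elaboration of the same one-line argument.
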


  The solution procedure we propose, called \QDPM from now on, can intuitively
  be broken down as an alternation of two phases.
  The first one tries to lift the measures of positions outside the quasi
  dominion $\qsiFun(\qdrElm)$ in order to extend it, while the second one lifts
  the positions inside $\qsiFun(\qdrElm)$ that can be forced to exit from it by
  player $\OppSym$.
  The algorithm terminates when no new position can be absorbed within the quasi
  dominion and no measure needs to be lifted to allow the $\OppSym$-winning
  positions to exit from it, when possible.
  To this end, we define a controlled lift operator $\liftFun \colon \QDRSet
  \!\times\! \pow{\PosSet} \!\times\! \pow{\PosSet} \pto \QDRSet$ that works on
  {\qdr}s and takes two additional parameters, a source and a target set of
  positions.
  The intended meaning is that we want to restrict the application of the lift
  operation to the positions in the source set $\SSet$, while using only the
  moves leading to the target set $\TSet$.
  The different nature of the two types of lifting operations is reflected in
  the actual values of the source and target parameters.
  \[
    \liftFun (\qdrElm, \SSet, \TSet) \defeq \qdrElm[][\star], \text{ where}
  \]
  \[
    \mfElm[{\qdrElm[][\star]}](\posElm) \defeq
    \begin{cases}
      \max {\set{ \mfElm[\qdrElm](\uposElm) + \posElm }{ \uposElm \in
      \MovRel(\posElm) \cap \TSet }},
      & \text{if } \posElm \in \SSet \cap \PosSet[\PlrSym]; \\
      \min {\set{ \mfElm[\qdrElm](\uposElm) + \posElm }{ \uposElm \in
      \MovRel(\posElm) \cap \TSet }},
      & \text{if } \posElm \in \SSet \cap \PosSet[\OppSym]; \\
      \mfElm[\qdrElm](\posElm),
      & \text{otherwise};
    \end{cases}
  \]
  and, for all $\PlrSym$-positions $\posElm \in \qsiFun(\qdrElm[][\star]) \cap
  \PosSet[\PlrSym]$,
  \[
    \strElm[ {\qdrElm[][\star]} ](\posElm) \in \argmax_{\uposElm \in
    \MovRel(\posElm) \cap \TSet}\: \mfElm[\qdrElm](\uposElm) + \posElm, \text{
    if } \mfElm[{\qdrElm[][\star]}](\posElm) \neq \mfElm[\qdrElm](\posElm),
    \text{ and } \strElm[ {\qdrElm[][\star]} ](\posElm) =
    \strElm[\qdrElm](\posElm), \text{ otherwise} .
  \]
  Except for the restriction on the outgoing moves considered, which are those
  leading to the targets in $\TSet$, the lift operator acts on the measure
  component of a \qdr very much like the original lift operator does.
  In order to ensure that the result is still a \qdr, however, the lift operator
  must also update the $\PlrSym$-witness of the quasi dominion.
  This is required to guarantee that Conditions~\ref{def:qsidomrepspc(set:qsi)}
  and~\ref{def:qsidomrepspc(set:plr)} of Definition~\ref{def:qsidomrepspc} are
  preserved.
  If the measure of a $\PlrSym$-position $\posElm$ is not affected by the lift,
  the $\PlrSym$-witness must not change for that position.
  On the other hand, if the application of the lift operation increases the
  measure, then the $\PlrSym$-witness on $\posElm$ needs to be updated to any
  move $(\posElm, \uposElm)$ that grants measure
  $\mfElm[{\qdrElm[][\star]}](\posElm)$ to $\posElm$.
  In principle, more than one such move may exist and any one of them can serve
  the purpose as witness.

  The solution algorithm can then be expressed as the inflationary
  fixpoint~\cite{Bou49,Wit50} of the composition of the two phases mentioned
  above, defined by the progress operators $\prgFun[0]$ and $\prgFun[+]$.
  \[
    \solFun \defeq \ifpFun\, \qdrElm \,.\, \prgFun[+](\prgFun[0](\qdrElm))
    \colon \QDRSet \pto \QDRSet.
  \]
  The first phase is computed by the operator $\prgFun[0] \colon \QDRSet \pto
  \QDRSet$, defined as follows:
  \[
    \prgFun[0](\qdrElm) \defeq \sup \{ \qdrElm, \liftFun(\qdrElm,
    \dual{\qsiFun(\qdrElm)}, \PosSet) \}.
  \]
  This operator is responsible of enforcing the progress condition on the
  positions outside the quasi dominion $\qsiFun(\qdrElm)$ that do not satisfy
  the inequalities between the measures along a move leading to
  $\qsiFun(\qdrElm)$ itself.
  It does that by applying the lift operator with $\dual{\qsiFun(\qdrElm)}$ as
  source and no restrictions on the moves.
  Those position that acquire a positive measure in this phase contribute to
  enlarging the current quasi dominion.
  Observe that the strategy component of the \qdr is updated so that it is a
  $\PlrSym$-witness of the new quasi dominion.
  To guarantee that measures never decrease, the supremum \wrt the \QDR-space
  ordering is taken as result.

  \begin{lemma}
    \label{lmm:prgzer}
    Let $\qdrElm \in \QDRSet$ be a fixpoint of $\prgFun[0]$.
    Then, $\mfElm[\qdrElm]$ is a progress measure over
    $\dual{\qsiFun(\qdrElm)}$.
  \end{lemma}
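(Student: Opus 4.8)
The plan is to unwind the fixpoint hypothesis and show that, restricted to the zero-measure positions, it is literally the pair of progress conditions of Definition~\ref{def:prgmsr}. Write $\qdrElm[][\star] \defeq \liftFun(\qdrElm, \dual{\qsiFun(\qdrElm)}, \PosSet)$ for the single lift that $\prgFun[0]$ performs, so that $\prgFun[0](\qdrElm) = \sup \set{ \qdrElm }{ \qdrElm[][\star] }$ in the obvious read of the definition. The first step is to observe that, since $\qdrElm[][\star]$ belongs to the set whose supremum defines $\prgFun[0](\qdrElm)$, it is bounded above by that supremum; hence the fixpoint hypothesis $\prgFun[0](\qdrElm) = \qdrElm$ gives at once $\qdrElm[][\star] \sqsubseteq \qdrElm$. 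By the measure-function component of the $\QDR$-space ordering (Definition~\ref{def:qsidomrepspc(ord)}) this means $\mfElm[{\qdrElm[][\star]}](\posElm) \leq \mfElm[\qdrElm](\posElm)$ for every position $\posElm \in \PosSet$, and in particular for every $\posElm \in \dual{\qsiFun(\qdrElm)}$.

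Next I would fix an arbitrary $\posElm \in \dual{\qsiFun(\qdrElm)}$ and record two facts about it. First, because $\qsiFun(\qdrElm) = \dual{\mfElm[\qdrElm][-1](0)}$, its complement $\dual{\qsiFun(\qdrElm)}$ is exactly $\mfElm[\qdrElm][-1](0)$, so $\mfElm[\qdrElm](\posElm) = 0$. Second, $\posElm$ lies in the source set $\SSet = \dual{\qsiFun(\qdrElm)}$ of the lift and the target set is $\TSet = \PosSet$, whence $\MovRel(\posElm) \cap \TSet = \MovRel(\posElm)$, a non-empty set since the arena has no sinks. Therefore the definition of $\liftFun$ computes $\mfElm[{\qdrElm[][\star]}](\posElm)$ as $\max \set{ \mfElm[\qdrElm](\uposElm) + \posElm }{ \uposElm \in \MovRel(\posElm) }$ when $\posElm \in \PosSet[\PlrSym]$ and as $\min \set{ \mfElm[\qdrElm](\uposElm) + \posElm }{ \uposElm \in \MovRel(\posElm) }$ when $\posElm \in \PosSet[\OppSym]$, and the inequality from the previous step bounds this extremum by $\mfElm[\qdrElm](\posElm) = 0$.

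The conclusion then follows by a two-case analysis, the key being that every stretched value $\mfElm[\qdrElm](\uposElm) + \posElm = \max \set{ 0 }{ \mfElm[\qdrElm](\uposElm) + \wghFun(\posElm) }$ is non-negative by definition of the stretch operator. If $\posElm \in \PosSet[\PlrSym]$, then $\max \set{ \mfElm[\qdrElm](\uposElm) + \posElm }{ \uposElm \in \MovRel(\posElm) } \leq 0$ forces $\mfElm[\qdrElm](\uposElm) + \posElm \leq 0 = \mfElm[\qdrElm](\posElm)$ for \emph{every} adjacent $\uposElm \in \MovRel(\posElm)$, which is precisely Condition~\ref{def:prgmsr(plr)}. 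If $\posElm \in \PosSet[\OppSym]$, then $\min \set{ \mfElm[\qdrElm](\uposElm) + \posElm }{ \uposElm \in \MovRel(\posElm) } \leq 0$ yields \emph{some} adjacent $\uposElm$ with $\mfElm[\qdrElm](\uposElm) + \posElm \leq 0 = \mfElm[\qdrElm](\posElm)$, which is Condition~\ref{def:prgmsr(opp)}. As $\posElm$ was an arbitrary element of $\dual{\qsiFun(\qdrElm)}$, this establishes that $\mfElm[\qdrElm]$ is a progress measure over $\dual{\qsiFun(\qdrElm)}$.

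Since the argument is essentially definitional, I do not expect a genuine obstacle; the two points that warrant care are both minor. The first is the reduction in the opening step: I rely only on $\qdrElm[][\star]$ being below an \emph{upper} bound of $\set{ \qdrElm }{ \qdrElm[][\star] }$, so the claim does not even need the supremum to be a least upper bound, only that $\prgFun[0](\qdrElm)$ is well defined, which it is as the operator's value at $\qdrElm$. The second is the use of non-negativity of the stretch operator, which is exactly what converts the bound ``$\leq 0$'' into the required progress inequalities ``$\leq \mfElm[\qdrElm](\posElm)$'' at zero-measure positions rather than some weaker strict comparison.
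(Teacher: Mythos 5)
Your proof is correct and takes essentially the same route as the paper's: both extract $\qdrElm[][\star] \defeq \liftFun(\qdrElm, \dual{\qsiFun(\qdrElm)}, \PosSet) \sqsubseteq \qdrElm$ from the fixpoint hypothesis, then do the player/opponent case split using the $\max$/$\min$ in the definition of $\liftFun$ together with $\mfElm[{\qdrElm[][\star]}](\posElm) \leq \mfElm[\qdrElm](\posElm)$. The only (harmless) cosmetic differences are that you additionally evaluate $\mfElm[\qdrElm](\posElm) = 0$ on $\dual{\qsiFun(\qdrElm)}$, whereas the paper compares against $\mfElm[\qdrElm](\posElm)$ directly, and your closing remark about non-negativity of the stretch operator is not actually needed for the argument.
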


  The second phase, instead, implements the mechanism intuitively described
  above, while analyzing the simple example of Figure~\ref{fig:mpgexma}.
  This is achieved by the operator $\prgFun[+]$ reported in
  Algorithm~\ref{alg:prg}.
  The procedure iteratively examines the current quasi dominion by lifting the
  measures of the positions that must exit from it.
  Specifically, it processes $\qsiFun(\qdrElm)$ layer by layer, starting from
  the outer layer of positions that must escape from.
  The process ends when a, possibly empty, closed weak quasi dominion is
  obtained.
  Recall that all the positions in a closed weak quasi dominion are necessarily
  winning for player $\PlrSym$, due to Proposition~\ref{prp:win}.
  We distinguish two sets of positions in $\qsiFun(\qdrElm)$.
  Those that already satisfy the progress condition and those that do not.
  The measures of first ones already witness an escape route from
  $\qsiFun(\qdrElm)$.
  The other ones, instead, are those whose current choice is to remain inside
  it.
  For instance, when considering the measure function $\mfElm[2]$ in the example
  of Figure~\ref{fig:mpgexma}, position $\aSym$ belongs to the first set, while
  positions $\cSym$ and $\dSym$ to the second one, since the choice of $\cSym$
  is to follow the internal move $(\cSym, \dSym)$.

  Since the only positions that change measure are those in the second set, only
  such positions need to be examined.
  To identify them, which form a weak quasi dominion $\dmnFun(\qdrElm)$ strictly
  contained in $\qsiFun(\qdrElm)$, we proceed as follows.
  First, we collect the set $\nppFun(\qdrElm)$ of positions in
  $\qsiFun(\qdrElm)$ that do not satisfy the progress condition, called the
  \emph{non-progress positions}.
  Then, we compute the set of positions that will have no choice other than
  reaching $\nppFun(\qdrElm)$.
  The non-progress positions are computed as follows.
  \begin{linenomath}
  \begin{align*}
    \nppFun(\qdrElm)
    & \defeq {\set{ \posElm \in \qsiFun(\qdrElm) \cap \PosSet[\PlrSym] }{
    \exists \uposElm \in \MovRel(\posElm) \,.\, \mfElm[{\qdrElm}](\posElm) <
    \mfElm[{\qdrElm}](\uposElm) + \posElm }} \\
    & \,\cup {\set{ \posElm \in \qsiFun(\qdrElm) \cap \PosSet[\OppSym] }{
    \forall \uposElm \in \MovRel(\posElm) \,.\, \mfElm[{\qdrElm}](\posElm) <
    \mfElm[{\qdrElm}](\uposElm) + \posElm }}.
  \end{align*}
  \end{linenomath}
  The remaining positions in $\dmnFun(\qdrElm)$ are collected as the
  inflationary fixpoint of the following operator.
  \begin{linenomath}
  \begin{align*}
    \preFun(\qdrElm, \QSet)
    & \defeq \QSet \cup {\set{ \posElm \in \qsiFun(\qdrElm) \cap
    \PosSet[\PlrSym] }{ \strElm[{\qdrElm}](\posElm) \in \QSet }} \\
    & \,\cup {\set{ \posElm \in \qsiFun(\qdrElm) \cap \PosSet[\OppSym] }{
    \forall \uposElm \in \MovRel(\posElm) \setminus \QSet \,.\,
    \mfElm[{\qdrElm}](\posElm) < \mfElm[{\qdrElm}](\uposElm) + \posElm }}.
  \end{align*}
  \end{linenomath}
  The final result is
  \[
    \dmnFun(\qdrElm) \defeq (\ifpFun\, \QSet \,.\, \preFun(\qdrElm,
    \QSet))(\nppFun(\qdrElm))
  \]
  Intuitively, $\dmnFun(\qdrElm)$ contains all the $\PlrSym$-positions that are
  forced to reach $\nppFun(\qdrElm)$ via the quasi-dominion $\PlrSym$-witness
  and all the $\OppSym$-positions that can only avoid reaching
  $\nppFun(\qdrElm)$ by strictly increasing their measure, which player
  $\OppSym$ wants obviously to prevent.

  It is important to observe that, from a functional view-point, the progress
  operator $\prgFun[+]$ would work just as well if applied to
  the entire quasi dominion $\qsiFun(\qdrElm)$, since it would simply leave
  unchanged the measure of those positions that already satisfy the progress
  condition.
  However, it is crucial that only the positions in $\dmnFun(\qdrElm)$ are
  processed in order to achieve the best asymptotic complexity bound known to
  date.
  We shall reiterate on this point later on.

  \begin{wrapfigure}[11]{r}{0.38\textwidth}
    \vspace{-2.25em}
    \algprg
    \vspace{-2.0em}
  \end{wrapfigure}
  At each iteration of the while-loop of Algorithm~\ref{alg:prg}, let $\QSet$
  denote the current (weak) quasi dominion, initially set to $\dmnFun(\qdrElm)$
  (Line~1).
  It first identifies the positions in $\QSet$ that can immediately escape from
  it (Line~2).
  Those are \emph{(i)} all the $\OppSym$-position with a move leading outside of
  $\QSet$ and \emph{(ii)} the $\PlrSym$-positions $\posElm$ whose
  $\PlrSym$-witness $\strElm[\qdrElm]$ forces $\posElm$ to exit from $\QSet$,
  namely $\strElm[\qdrElm](\posElm) \not\in \QSet$, and that cannot strictly
  increase their measure by choosing to remain in $\QSet$.
  While the condition for $\OppSym$-position is obvious, the one for
  $\PlrSym$-positions require some explanation.
  The crucial observation here is that, while player $\PlrSym$ does indeed
  prefer to remain in the quasi dominion, it can only do so while ensuring that
  by changing strategy it does not enable infinite plays within $\QSet$ that are
  winning for the adversary.
  In other words, the new $\PlrSym$-strategy must still be a $\PlrSym$-witness
  for $\QSet$ and this can only be ensured if the new choice strictly increases
  its measure.
  The operator $\escFun \colon \QDRSet \!\times\! \pow{\PosSet} \to
  \pow{\PosSet}$ formalizes the idea:
  \begin{linenomath}
  \begin{align*}
    \escFun(\qdrElm, \QSet)
    & \defeq {\set{ \posElm \in \QSet \cap \PosSet[\OppSym] }{ \MovRel(\posElm)
    \setminus \QSet \neq \emptyset }} \\
    & \:\cup {\set{ \posElm \in \QSet \cap \PosSet[\PlrSym] }{
    \strElm[\qdrElm](\posElm) \not\in \QSet \land \forall \uposElm \in
    \MovRel(\posElm) \cap \QSet \,.\, \mfElm[\qdrElm](\uposElm) + \posElm \leq
    \mfElm[\qdrElm](\posElm)) }}.
  \end{align*}
  \end{linenomath}

  \begin{wrapfigure}[9]{i}{0.195\textwidth}
    \vspace{-1.50em}
    \figmpgexmb
    \vspace{-1.0em}
    \caption{\label{fig:mpgexmb} Another \MPG.}
  \end{wrapfigure}
  Consider, for instance, the example in Figure~\ref{fig:mpgexmb} and a \qdr
  $\qdrElm$ such that $\mfElm[\qdrElm] = \{ \aSym \mapsto 3; \bSym \mapsto 2;
  \cSym, \dSym, \fSym \mapsto 1; \eSym \mapsto 0 \}$ and $\strElm[\qdrElm] = \{
  \bSym \mapsto \aSym; \fSym \mapsto \dSym \}$.
  In this case, we have $\QSet[\qdrElm] = \{ \aSym, \bSym, \cSym, \dSym, \fSym
  \}$ and $\dmnFun(\qdrElm) = \{ \cSym, \dSym, \fSym \}$, since $\cSym$ is the
  only non-progress positions, $\dSym$ is forced to follow $\cSym$ in order to
  avoid the measure increase required to reach $\bSym$, and $\fSym$ is forced by
  the $\PlrSym$-witness to reach $\dSym$.
  Now, consider the situation where the current weak quasi dominion is $\QSet =
  \{ \cSym, \fSym \}$, \ie after $\dSym$ has escaped from $\dmnFun(\qdrElm)$.
  The escape set of $\QSet$ is $\{\cSym, \fSym\}$.
  To see why the $\PlrSym$-position $\fSym$ is escaping, observe that
  $\mfElm[\qdrElm](\fSym) + \fSym = 1 = \mfElm[\qdrElm](\fSym)$ and that,
  indeed, should player $\PlrSym$ choose to change its strategy and take the
  move $(\fSym, \fSym)$ to remain in $\QSet$, it would obtain an infinite play
  with payoff $0$, thus violating the definition of weak quasi dominion.

  Before proceeding, we want to stress an easy consequence of the definition of
  the notion of escape set and Conditions~\ref{def:qsidomrepspc(set:plr)}
  and~\ref{def:qsidomrepspc(set:opp)} of Definition~\ref{def:qsidomrepspc}, \ie,
  that every escape position of the quasi dominion $\qsiFun(\qdrElm)$ can only
  assume its weight as possible measure inside a \qdr $\qdrElm$, as reported is
  the following proposition.
  This observation, together with Proposition~\ref{prp:qdrpth}, precisely
  ensures that the measure of a position $\posElm \in \qsiFun(\qdrElm)$ is an
  under approximation of the weight of all finite plays leaving
  $\qsiFun(\qdrElm)$.

  \begin{proposition}
    \label{prp:qdresc}
    Let $\qdrElm$ be a \qdr.
    Then, $\mfElm[\qdrElm](\posElm) = \wghFun(\posElm) > 0$, for all $\posElm
    \in \escFun(\qdrElm, \qsiFun(\qdrElm))$.
  \end{proposition}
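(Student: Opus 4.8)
The plan is to prove the two assertions of the statement separately, exploiting the fact that the defining feature of an escape position is the presence of a move leaving $\qsiFun(\qdrElm)$, hence landing in a position of measure $0$. First I would dispose of the strict positivity, which is essentially free: since $\escFun(\qdrElm, \qsiFun(\qdrElm)) \subseteq \qsiFun(\qdrElm) = \dual{\mfElm[\qdrElm][-1](0)}$, every escape position $\posElm$ has $\mfElm[\qdrElm](\posElm) > 0$ by definition of $\qsiFun$. So only the equality $\mfElm[\qdrElm](\posElm) = \wghFun(\posElm)$ (and the induced positivity of the weight) needs genuine argument, and I would attack it by bounding $\mfElm[\qdrElm](\posElm)$ from above and below.

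For the upper bound, fix $\posElm \in \escFun(\qdrElm, \qsiFun(\qdrElm))$ together with an escaping move $(\posElm, \uposElm) \in \MovRel$ with $\uposElm \notin \qsiFun(\qdrElm)$, so that $\mfElm[\qdrElm](\uposElm) = 0$; for an $\OppSym$-position this is any witnessing outgoing move, while for a $\PlrSym$-position it is the witness move $\uposElm = \strElm[\qdrElm](\posElm)$, which by membership in $\escFun$ lies outside $\qsiFun(\qdrElm)$. Applying Condition~\ref{def:qsidomrepspc(set:opp)} along this move when $\posElm \in \PosSet[\OppSym]$ (resp.\ Condition~\ref{def:qsidomrepspc(set:plr)} when $\posElm \in \PosSet[\PlrSym]$) gives $\mfElm[\qdrElm](\posElm) \leq \mfElm[\qdrElm](\uposElm) + \posElm = \max\{0, \wghFun(\posElm)\}$. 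Since the left-hand side is strictly positive, the maximum cannot be $0$, so $\wghFun(\posElm) > 0$ and $\mfElm[\qdrElm](\posElm) \leq \wghFun(\posElm)$. This already yields the positivity of the weight and the under-approximation half of the claim, which is in fact the only half needed for the downstream use of the proposition together with Proposition~\ref{prp:qdrpth}.

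The reverse inequality $\mfElm[\qdrElm](\posElm) \geq \wghFun(\posElm)$ is where the difficulty lies, and I expect it to be the main obstacle, because the conditions of Definition~\ref{def:qsidomrepspc} are intrinsically one-sided: they force measures to \emph{under}-approximate their stretched adjacents and therefore supply no lower bound directly. For a $\PlrSym$-escape position I would use the second conjunct of its membership in $\escFun$, namely that every internal adjacent $\uposElm \in \MovRel(\posElm) \cap \qsiFun(\qdrElm)$ satisfies $\mfElm[\qdrElm](\uposElm) + \posElm \leq \mfElm[\qdrElm](\posElm)$; as such a $\uposElm$ has positive measure and $\wghFun(\posElm) > 0$, its stretch equals $\mfElm[\qdrElm](\uposElm) + \wghFun(\posElm) > \wghFun(\posElm) \geq \mfElm[\qdrElm](\posElm)$, which is contradictory, so no internal adjacent survives and the measure is pinned onto the value realized by the escaping witness move. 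For an $\OppSym$-escape position the escaping neighbour of measure $0$ is precisely the successor realizing the minimal stretched value $\wghFun(\posElm)$, every other successor lying in $\qsiFun(\qdrElm)$ contributing a strictly larger value; to turn this into $\mfElm[\qdrElm](\posElm) \geq \wghFun(\posElm)$ I would need the tightness (realization) of the $\OppSym$-measure against this minimum, which I plan to extract from the quasi-dominion witness structure of Condition~\ref{def:qsidomrepspc(set:qsi)} and Proposition~\ref{prp:qdrpth} rather than from the bare inequalities. Combining the two bounds then closes the equality $\mfElm[\qdrElm](\posElm) = \wghFun(\posElm)$, and I regard securing this lower bound — i.e.\ ruling out a measure strictly below the weight — as the one step requiring care.
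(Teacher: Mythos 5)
Your first two paragraphs are correct and coincide with the only argument the paper itself has in mind: the paper never proves this proposition explicitly, presenting it as an easy consequence of the definition of $\escFun$ together with Conditions~\ref{def:qsidomrepspc(set:plr)} and~\ref{def:qsidomrepspc(set:opp)} of Definition~\ref{def:qsidomrepspc}, which is precisely your chain $0 < \mfElm[\qdrElm](\posElm) \leq \mfElm[\qdrElm](\uposElm) + \posElm = \max\{0, \wghFun(\posElm)\}$ for an escaping move $(\posElm, \uposElm)$ whose target has measure $0$, yielding $\wghFun(\posElm) > 0$ and $\mfElm[\qdrElm](\posElm) \leq \wghFun(\posElm)$. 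Your observation that this under-approximation half, combined with Proposition~\ref{prp:qdrpth}, is all that the downstream proofs (Lemma~\ref{lmm:qdrstr}, Theorem~\ref{thm:tot}, Lemma~\ref{lmm:qdrbnd}) ever use is also accurate.

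The lower bound, however, is a genuine gap in your proposal, and one that cannot be filled. In the $\PlrSym$ case your contradiction argument does validly show that a $\PlrSym$-escape position of $\qsiFun(\qdrElm)$ has no adjacents inside $\qsiFun(\qdrElm)$ at all, but the conclusion that its measure is therefore ``pinned onto'' the stretch realized by the escaping witness move is a non sequitur: eliminating internal adjacents only removes upper-bound constraints on $\mfElm[\qdrElm](\posElm)$; it creates no lower bound. In the $\OppSym$ case you give no argument at all, only the hope of extracting a ``tightness'' property from Condition~\ref{def:qsidomrepspc(set:qsi)} and Proposition~\ref{prp:qdrpth}; but no clause of Definition~\ref{def:qsidomrepspc}, nor Proposition~\ref{prp:qdrpth}, bounds a finite measure from below. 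Indeed, the stated equality is false for arbitrary {\qdr}s: take two $\OppSym$-positions $\posElm$ and $\uposElm$ with moves $(\posElm, \uposElm)$ and $(\uposElm, \uposElm)$, weights $\wghFun(\posElm) = 5$ and $\wghFun(\uposElm) = -1$, measures $\mfElm[\qdrElm](\posElm) = 3$ and $\mfElm[\qdrElm](\uposElm) = 0$, and $\strElm[\qdrElm] = \emptyfun$. Then $\qsiFun(\qdrElm) = \{ \posElm \}$ is a quasi $\PlrSym$-dominion (its unique maximal internal play is $\posElm$ alone, of weight $5 > 0$), $\denot{\mfElm[\qdrElm]}[\PlrSym] = \emptyset$ is trivially a $\PlrSym$-dominion, Condition~\ref{def:qsidomrepspc(set:plr)} is vacuous, and Condition~\ref{def:qsidomrepspc(set:opp)} holds since $3 \leq \max\{0, 0 + 5\} = 5$; moreover, $\posElm \in \escFun(\qdrElm, \qsiFun(\qdrElm))$, yet $\mfElm[\qdrElm](\posElm) = 3 \neq 5 = \wghFun(\posElm)$. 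The equality is thus not a consequence of Definition~\ref{def:qsidomrepspc}; it is an invariant of the {\qdr}s the algorithm actually constructs, whose lift operator always assigns an attained maximum or minimum. Your instinct that the lower bound is the main obstacle was therefore right, but the honest resolution is that only the half you proved is derivable from the stated hypotheses --- and, as you correctly noted, only that half is needed.
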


  Now, going back to the analysis of the algorithm, if the escape set is
  non-empty, we need to select the escape positions that need to be lifted in
  order to satisfy the progress condition.
  The main difficulty is to do so in such a way that the resulting measure
  function still satisfies Condition~\ref{def:qsidomrepspc(set:opp)} of
  Definition~\ref{def:qsidomrepspc}, for all the $\OppSym$-positions with
  positive measure.
  The problem occurs when a $\OppSym$-position can exit either immediately or
  passing through a path leading to another position in the escape set.
  Consider again the example above, where $\QSet = \dmnFun(\qdrElm) = \{ \cSym,
  \dSym, \fSym \}$.
  If position $\dSym$ immediately escapes from $\QSet$ using the move $(\dSym,
  \bSym)$, it would change its measure to $\mfElm'(\dSym) = \mfElm(\bSym) +
  \dSym = 2 > \mfElm(\dSym) = 1$.
  Now, position $\cSym$ has two ways to escape, either directly with move
  $(\cSym, \aSym)$ or by reaching the other escape position $\dSym$ passing
  through $\fSym$.
  The first choice would set its measure to $\mfElm(\aSym) + \cSym = 4$.
  The resulting measure function, however, would not satisfy
  Condition~\ref{def:qsidomrepspc(set:opp)} of
  Definition~\ref{def:qsidomrepspc}, as the new measure of $\cSym$ would be
  greater than $\mfElm'(\dSym) + \cSym = 2$, preventing to obtain a \qdr.
  Similarly, if position $\dSym$ escapes from $\QSet$ passing through $\cSym$
  via the move $(\cSym, \aSym)$, we would have $\mfElm''(\dSym) =
  \mfElm''(\cSym) + \dSym = (\mfElm(\aSym) + \cSym) + \dSym = 4 > 2 =
  \mfElm(\bSym) + \dSym$, still violating
  Condition~\ref{def:qsidomrepspc(set:opp)}.
  Therefore, in this specific case, the only possible way to escape is to reach
  $\bSym$.
  The solution to this problem is simply to lift in the current iteration only
  those positions that obtain the lowest possible measure increase, hence
  position $\dSym$ in the example, leaving the lift of $\cSym$ to some
  subsequent iteration of the algorithm that would choose the correct escape
  route via $\dSym$.
  To do so, we first compute the minimal measure increase, called the
  \emph{best-escape forfeit}, that each position in the escape set would obtain
  by exiting the quasi dominion immediately.
  The positions with the lowest possible forfeit, called \emph{best-escape
  positions}, can all be lifted at the same time.
  The intuition is that the measure of all the positions that escape from a
  (weak) quasi dominion will necessarily be increased of at least the minimal
  best-escape forfeit.
  This observation is at the core of the proof of Theorem~\ref{thm:tot} (see
  the appendix) ensuring that the desired properties of {\qdr}s are preserved by
  the operator $\prgFun[+]$.
  The set of best-escape positions is computed by the operator $\bepFun \colon
  \QDRSet \!\times\! \pow{\PosSet} \to \pow{\PosSet}$ as follows:
  \[
    \bepFun(\qdrElm, \QSet) \defeq \argmin_{\posElm \in \escFun(\qdrElm, \QSet)}
    \befFun(\mfElm[\qdrElm], \QSet, \posElm),
  \]
  where the operator $\befFun \colon \MFSet \!\times\! \pow{\PosSet} \!\times\!
  \PosSet \to \SetN[\infty]$ computes, for each position $\posElm$ in a quasi
  dominion $\QSet$, its best-escape forfeit:
  \[
    \befFun(\mfElm, \QSet, \posElm) \defeq
    \begin{cases}
      \max {\set{ \mfElm(\uposElm) + \posElm - \mfElm(\posElm) }{ \uposElm \in
      \MovRel(\posElm) \setminus \QSet }},
      & \text{if } \posElm \in \PosSet[\PlrSym]; \\
      \min {\set{ \mfElm(\uposElm) + \posElm - \mfElm(\posElm) }{ \uposElm \in
      \MovRel(\posElm) \setminus \QSet }},
      & \text{otherwise}.
    \end{cases}
  \]
  In our example, $\befFun(\mfElm, \QSet, \cSym) = \mfElm(\aSym) + \cSym -
  \mfElm(\cSym) = 4 - 1 = 3$, while $\befFun(\mfElm, \QSet, \dSym) =
  \mfElm(\bSym) + \dSym - \mfElm(\dSym) = 2 - 1 = 1$.
  Therefore, $\bepFun(\qdrElm, \QSet) = \{ \dSym \}$.

  Once the set $\ESet$ of best-escape positions is identified (Line~3 of the
  algorithm), the procedure simply lifts them restricting the possible moves to
  those leading outside the current quasi dominion (Line~4).
  Those positions are, then, removed from the set (Line~5), thus obtaining a
  smaller weak quasi dominion ready for the next iteration.

  The algorithm terminates when the (possibly empty) current quasi dominion
  $\QSet$ is closed.
  By virtue of Proposition~\ref{prp:win}, all those positions belong to
  $\WinSet[\PlrSym]$ and their measure is set to $\infty$ by means of the
  operator $\winFun \colon \QDRSet \!\times\! \pow{\PosSet} \pto \QDR$ (Line~6),
  which also computes the winning $\PlrSym$-strategy on those positions.
  \[
    \winFun (\qdrElm, \QSet) \defeq \qdrElm[][\star]\text{, where}
  \]
  \[
    \mfElm[{\qdrElm[][\star]}] \defeq {\mfElm[\qdrElm]}[\QSet \mapsto \infty]
  \]
  and, for all $\PlrSym$-positions $\posElm \in \qsiFun(\qdrElm[][\star]) \cap
  \PosSet[\PlrSym]$,
  \[
    \strElm[ {\qdrElm[][\star]} ](\posElm) \in \argmax_{\uposElm \in
    \MovRel(\posElm) \cap \QSet}\: \mfElm[\qdrElm](\uposElm) + \posElm, \text{
    if } \strElm[\qdrElm](\posElm) \not\in \QSet \text{ and }
    \strElm[ {\qdrElm[][\star]} ](\posElm) = \strElm[\qdrElm](\posElm), \text{
    otherwise}.
  \]
  Observe that, since we know that every $\PlrSym$-position $\posElm \in \QSet
  \cap  \PosSet[\PlrSym]$, whose current $\PlrSym$-witness leads outside
  $\QSet$, is not an escape position, any move $(\posElm, \uposElm)$ within
  $\QSet$ that grants the maximal stretch $\mfElm[\qdrElm](\uposElm) + \posElm$
  strictly increases its measure and, therefore, is a possible choice for a
  $\PlrSym$-witness of the $\PlrSym$-dominion $\QSet$.

  At this point, it should be quite evident that the progress operator
  $\prgFun[+]$ is responsible of enforcing the progress condition on the
  positions inside the quasi dominion $\qsiFun(\qdrElm)$, thus, the following
  necessarily holds.

  \begin{lemma}
    \label{lmm:prgpls}
    Let $\qdrElm \in \QDRSet$ be a fixpoint of $\prgFun[+]$.
    Then, $\mfElm[\qdrElm]$ is a progress measure over $\qsiFun(\qdrElm)$.
  \end{lemma}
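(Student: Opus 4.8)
The plan is to reduce the statement to the single claim that, at a fixpoint $\qdrElm$ of $\prgFun[+]$, the set of \emph{non-progress positions} $\nppFun(\qdrElm)$ is empty. Unwinding Definition~\ref{def:prgmsr} relative to $\qsiFun(\qdrElm)$, a position $\posElm \in \qsiFun(\qdrElm)$ violates the progress condition exactly when it is a $\PlrSym$-position with some move whose stretch exceeds its measure, or an $\OppSym$-position all of whose moves have stretch exceeding its measure; this is precisely the membership condition defining $\nppFun(\qdrElm)$. First I would dispose of the positions carrying measure $\infty$: for these, $\mfElm[\qdrElm](\uposElm) + \posElm \leq \mfElm[\qdrElm](\posElm) = \infty$ holds along every (\resp, some) move, so they satisfy progress trivially and never belong to $\nppFun(\qdrElm)$. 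Hence $\mfElm[\qdrElm]$ is a progress measure over $\qsiFun(\qdrElm)$ iff $\nppFun(\qdrElm) = \emptyset$, and it suffices to prove the latter.

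The core of the argument rests on monotonicity of the operator. I would first record that every step performed by $\prgFun[+]$ is non-decreasing on the measure component: the escape lift of Line~4 never lowers the measure of an escaping position, as follows from Conditions~\ref{def:qsidomrepspc(set:plr)} and~\ref{def:qsidomrepspc(set:opp)} of Definition~\ref{def:qsidomrepspc} applied to its witness move (\resp, every move), while $\winFun$ only raises measures to $\infty$. Consequently, if $\qdrElm$ is a fixpoint, i.e. $\prgFun[+](\qdrElm) = \qdrElm$, each position's measure traverses a non-decreasing sequence that returns to its starting value, so $\mfElm[\qdrElm]$ stays \emph{literally constant} throughout the execution; in particular $\nppFun$, $\dmnFun$, $\escFun$, and $\bepFun$ are all computed with respect to the fixed $\mfElm[\qdrElm]$, and any strict increase of any measure at any step would contradict the fixpoint. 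I then assume, towards a contradiction, that $\nppFun(\qdrElm) \neq \emptyset$ and fix some $\posElm \in \nppFun(\qdrElm)$; note that $\posElm$ has finite positive measure and, since $\nppFun(\qdrElm) \subseteq \dmnFun(\qdrElm)$, it lies in the initial value $\QSet = \dmnFun(\qdrElm)$ of the while-loop, whose local variable only shrinks by removing the nonempty best-escape set at each iteration until $\QSet$ is closed.

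Then I would track $\posElm$ through the loop. If it is never removed, it survives into the final closed $\QSet$ and $\winFun$ raises its measure from a finite value to $\infty$, a strict increase, contradiction. Otherwise $\posElm$ is removed at the iteration where it enters $\ESet = \bepFun(\qdrElm, \QSet) \subseteq \escFun(\qdrElm, \QSet)$ for the then-current $\QSet$, and I must show the ensuing lift strictly increases $\mfElm[\qdrElm](\posElm)$. For $\posElm \in \PosSet[\OppSym]$ this is immediate: membership in $\nppFun(\qdrElm)$ forces the stretch of \emph{every} move above $\mfElm[\qdrElm](\posElm)$, so the minimum over the (nonempty set of) escaping moves is strictly larger. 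The delicate case is $\posElm \in \PosSet[\PlrSym]$, where the witness move $\uposElm[][\star]$ with $\mfElm[\qdrElm](\posElm) < \mfElm[\qdrElm](\uposElm[][\star]) + \posElm$ could a priori lie inside $\QSet$, whereas the lift maximizes only over moves leaving $\QSet$. The observation that resolves this, and which I expect to be the main obstacle to phrase cleanly, is that the $\PlrSym$-clause of $\escFun$ guarantees that no move of an escaping $\PlrSym$-position that stays in $\QSet$ can strictly increase its measure; hence $\uposElm[][\star]$ must itself be an escaping move, and the maximum defining the lift is at least $\mfElm[\qdrElm](\uposElm[][\star]) + \posElm > \mfElm[\qdrElm](\posElm)$, again a strict increase. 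In every case we contradict the fixpoint hypothesis, so $\nppFun(\qdrElm) = \emptyset$ and, by the first paragraph, $\mfElm[\qdrElm]$ is a progress measure over $\qsiFun(\qdrElm)$.
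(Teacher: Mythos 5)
Your proof is correct, and most of its machinery coincides with the paper's: both proofs unfold the run of $\prgFun[+]$ into the shrinking sequence $\QSet[0] = \dmnFun(\qdrElm) \supseteq \QSet[1] \supseteq \ldots \supseteq \QSet[k]$ with $\ESet[i] = \bepFun(\qdrElm, \QSet[i])$, both exploit the fixpoint hypothesis to conclude that every intermediate lift and the final $\winFun$ must leave the measures untouched, and both rest on the same interplay between $\escFun$ and the progress condition --- in particular, your resolution of the delicate $\PlrSym$-case (a violating move cannot stay inside $\QSet[i]$, because the $\PlrSym$-clause of $\escFun$ bounds every internal stretch by the measure) is exactly the fact the paper uses. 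What differs is the logical organization. The paper argues directly: it verifies Definition~\ref{def:prgmsr} for every $\posElm \in \qsiFun(\qdrElm)$ by cases --- $\posElm \notin \dmnFun(\qdrElm)$ (progress, since such positions lie outside $\nppFun(\qdrElm)$), $\posElm \in \QSet[k]$ (measure $\infty$, trivially progress), and $\posElm \in \ESet[i]$ (the no-op lift value itself witnesses the required inequalities). You argue contrapositively: you reduce the claim to $\nppFun(\qdrElm) = \emptyset$ and show that a non-progress position would force a strict measure increase somewhere in the run, either at its own lift or at the final $\winFun$. Your route is, in effect, ``fixpoint plus Lemma~\ref{lmm:qdrchg} restricted to $\nppFun(\qdrElm)$'': you re-derive inline the strict-increase property that the paper proves separately (and uses only for the complexity analysis). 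This buys a slightly leaner case analysis --- you never need to check progress for the positions of $\dmnFun(\qdrElm) \setminus \nppFun(\qdrElm)$ --- at the price of tracking strict inequalities where the paper only verifies non-strict ones.

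One point to tighten (though the paper is no more explicit here): your claim that ``every step performed by $\prgFun[+]$ is non-decreasing'' invokes Conditions~\ref{def:qsidomrepspc(set:plr)} and~\ref{def:qsidomrepspc(set:opp)}, which are guaranteed for the \qdr $\qdrElm$ but not a priori for the intermediate objects produced mid-run, so the induction threatens to be circular. A clean fix avoids per-step monotonicity altogether: each position's measure is modified at most once during the run (when it is extracted as a best-escape position, or by the final $\winFun$) and is frozen before and after, since later lifts only touch subsets of the remaining $\QSet[j]$; hence, at a fixpoint, each such modification must already be a no-op, and inductively every intermediate object coincides with $\qdrElm$. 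This is what legitimizes computing $\escFun$, $\bepFun$, and the lifts with respect to the fixed $\mfElm[\qdrElm]$, as both you and the paper do.
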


  \begin{wrapfigure}[6]{i}{0.225\textwidth}
    \vspace{-1.50em}
    \figmpgexmc
    \vspace{-1.0em}
    \caption{\label{fig:mpgexmc} Yet another \MPG.}
  \end{wrapfigure}
  We now exemplify the lack of monotonicity of the progress operator
  $\prgFun[+]$.
  To do so, consider the game of Figure~\ref{fig:mpgexmc} and the following two
  {\qdr}s $\qdrElm[1]$ and $\qdrElm[2]$ defined via their components:
  $\mfElm[{\qdrElm[1]}] = \{ \aSym \mapsto 3; \bSym \mapsto 0; \cSym \mapsto 2;
  \dSym, \eSym \mapsto 1 \}$ and $\strElm[{\qdrElm[1]}] = \{ \eSym \mapsto
  \dSym \}$; $\mfElm[{\qdrElm[2]}] = \{ \aSym \mapsto 3; \bSym \mapsto 0; \cSym,
  \eSym \mapsto 2; \dSym \mapsto 1 \}$ and $\strElm[{\qdrElm[2]}] = \{ \eSym
  \mapsto \cSym \}$.
  Obviously, $\qdrElm[1] \sqsubset \qdrElm[2]$.
  However, $\qdrElm[1][\star] \defeq \prgFun[+](\qdrElm[1]) \not\sqsubseteq
  \qdrElm[2][\star] \defeq \prgFun[+](\qdrElm[2])$.
  Indeed, $\mfElm[{\qdrElm[1][\star]}] = \{ \aSym \mapsto 3; \bSym \mapsto 0;
  \cSym \mapsto 2; \dSym, \eSym \mapsto 4 \}$, while
  $\mfElm[{\qdrElm[2][\star]}] = \{ \aSym \mapsto 3; \bSym \mapsto 0;
  \cSym, \eSym \mapsto 2; \dSym, \mapsto 3 \}$, which implies that
  $\qdrElm[2][\star] \sqsubset \qdrElm[1][\star]$.
  Moreover, $\qdrElm[1][\star]$ is already a progress measure, while
  $\qdrElm[2][\star]$ requires another application of $\prgFun[+]$ in order to
  solve the game, since $\qdrElm[1][\star] = \prgFun[+](\qdrElm[2][\star])$.

  In order to prove the correctness of the proposed algorithm, we first need to
  ensure that any quasi-dominion space $\QDRName$ is indeed closed under the
  operators $\prgFun[0]$ and $\prgFun[+]$. This is established by the following
  theorem, which states that the operators are total functions on that space.

  \begin{theorem}[Totality]
    \label{thm:tot}
    The progress operators $\prgFun[0]$ and $\prgFun[+]$ are total inflationary
    functions.
  \end{theorem}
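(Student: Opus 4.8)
The plan is to prove, separately for $\prgFun[0]$ and $\prgFun[+]$, that each is \emph{inflationary} and \emph{total}, where totality means that the image of a \qdr is again a \qdr, so that $\QDRSet$ is closed under the operator. Inflationarity is the easy half. For $\prgFun[0]$ it is immediate from the definition $\prgFun[0](\qdrElm) = \sup \{ \qdrElm, \liftFun(\qdrElm, \dual{\qsiFun(\qdrElm)}, \PosSet) \}$, which by construction dominates $\qdrElm$. For $\prgFun[+]$ I would first check that no step of Algorithm~\ref{alg:prg} ever lowers a measure: by Conditions~\ref{def:qsidomrepspc(set:plr)} and~\ref{def:qsidomrepspc(set:opp)} of Definition~\ref{def:qsidomrepspc}, every escape position already satisfies $\mfElm[\qdrElm](\posElm) \leq \mfElm[\qdrElm](\uposElm) + \posElm$ along its outgoing moves, so the controlled lift on $\ESet$ can only raise its measure, while $\winFun$ merely promotes a set to $\infty$. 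Since the strategy component is rewritten only on positions whose measure strictly grows, the resulting pair dominates $\qdrElm$ in the ordering of Definition~\ref{def:qsidomrepspc}.

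For the totality of $\prgFun[0]$ I would verify the four membership conditions on the supremum. Conditions~\ref{def:qsidomrepspc(set:plr)} and~\ref{def:qsidomrepspc(set:opp)} are established locally by the lift: a freshly lifted $\PlrSym$-position receives the maximum of $\mfElm[\qdrElm](\uposElm) + \posElm$ over its moves, recorded by the new $\argmax$ witness, and an $\OppSym$-position the minimum over all its moves; for positions already in $\qsiFun(\qdrElm)$ the measure is unchanged whereas that of their adjacents can only increase, so each inequality $\mfElm[\qdrElm](\posElm) \leq \mfElm[\qdrElm](\uposElm) + \posElm$ persists because its right-hand side grows monotonically. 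Once these two conditions hold, the path-weight bound of Proposition~\ref{prp:qdrpth} holds verbatim, since its induction uses only those two conditions, and Condition~\ref{def:qsidomrepspc(set:qsi)} follows: the new witness forces positive weight on finite plays and, by the regularity argument preceding Proposition~\ref{prp:win}, infinite weight on infinite ones, so $\qsiFun$ of the output is a quasi $\PlrSym$-dominion. Condition~\ref{def:qsidomrepspc(set:dom)} is preserved because $\prgFun[0]$ produces a fresh $\infty$ only by propagating it from the existing $\PlrSym$-dominion $\denot{\qdrElm}[\PlrSym]$ --- through a $\PlrSym$-position with a move into it, or an $\OppSym$-position all of whose moves enter it --- so the enlarged $\infty$-region remains won by $\PlrSym$.

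The bulk of the work is the totality of $\prgFun[+]$, which I would organise around a loop invariant for Algorithm~\ref{alg:prg}. At the head of each iteration the invariant should state that the current set $\QSet$ is a weak quasi $\PlrSym$-dominion witnessed by $\strElm[\qdrElm]$, that every position already removed from $\dmnFun(\qdrElm)$ has its measure fixed to its escape value and satisfies Conditions~\ref{def:qsidomrepspc(set:plr)} and~\ref{def:qsidomrepspc(set:opp)}, and that these two conditions still hold on $\qsiFun(\qdrElm) \setminus \QSet$. Termination is immediate: while the loop runs $\escFun(\qdrElm, \QSet)$ is non-empty, hence so is $\ESet = \bepFun(\qdrElm, \QSet)$, and $\ESet$ is deleted from the finite set $\QSet$ at each pass, the auxiliary fixpoints defining $\dmnFun$ via $\preFun$ terminating for the same finiteness reason. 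On exit $\QSet$ is a \emph{closed} weak quasi dominion, so Proposition~\ref{prp:win} makes it a $\PlrSym$-dominion; the assignment $\winFun(\qdrElm, \QSet)$ then sets it to $\infty$ and fixes its witness, delivering Conditions~\ref{def:qsidomrepspc(set:qsi)}, \ref{def:qsidomrepspc(set:dom)} and~\ref{def:qsidomrepspc(set:plr)} on that region.

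I expect the crux --- the obstacle flagged by the discussion of Figure~\ref{fig:mpgexmb} --- to be the preservation of Condition~\ref{def:qsidomrepspc(set:opp)} while escape positions are being lifted. The hazard is an $\OppSym$-position that may leave $\QSet$ either directly or by first routing through another escape position; lifting it along the wrong move would overshoot and make its measure exceed $\mfElm[\qdrElm](\uposElm) + \posElm$ for some adjacent $\uposElm$, breaking the \qdr. The key lemma I would isolate states that by restricting each pass to the best-escape positions, namely those of minimal forfeit $\befFun(\mfElm[\qdrElm], \QSet, \posElm)$, every position that ultimately leaves the quasi dominion gains at least this minimal forfeit, so a position with a cheaper indirect route is necessarily processed only after the target of that route has already settled to its correct escape measure. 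Together with Proposition~\ref{prp:qdresc}, which pins each escape position's measure to exactly its weight, this monotone-forfeit ordering shows that the minimum defining the lifted measure of an $\OppSym$-position is never undercut by a later escape, so Condition~\ref{def:qsidomrepspc(set:opp)} survives each lift and $\prgFun[+]$ is total.
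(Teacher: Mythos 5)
Your plan has the same skeleton as the paper's proof: inflationarity read off the definitions, a condition-by-condition verification of Definition~\ref{def:qsidomrepspc} for $\prgFun[0]$, an induction over the iterations of Algorithm~\ref{alg:prg} for $\prgFun[+]$ with the closed residue handled by Proposition~\ref{prp:win} and $\winFun$, and, as the crux, the monotonicity of the best-escape forfeits used to rescue Condition~\ref{def:qsidomrepspc(set:opp)}. That last point is indeed the paper's central argument (proved there by contradiction, with a case split on the owner of the offending escape position, and then invoked a second time to refute a violation of Condition~\ref{def:qsidomrepspc(set:opp)} in the final \qdr), so on that front your proposal identifies the right lemma.

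The genuine gap is Condition~\ref{def:qsidomrepspc(set:qsi)}: you never prove that the rewritten strategy remains a $\PlrSym$-witness, and the argument you sketch for it cannot work. For $\prgFun[0]$ you derive it from Proposition~\ref{prp:qdrpth} plus the regularity of plays, but applying Proposition~\ref{prp:qdrpth} around a cycle $\pthElm''$ compatible with the new strategy only yields $\wghFun(\pthElm'') \geq 0$ (the measures cancel), and non-negativity of cycles is exactly what does \emph{not} suffice: the whole point of (weak) quasi dominions is to exclude zero-weight infinite plays. The game of Figure~\ref{fig:mpgexmb} shows the failure concretely: redirecting $\fSym$ to its zero-weight self-loop preserves Conditions~\ref{def:qsidomrepspc(set:plr)} and~\ref{def:qsidomrepspc(set:opp)} at every position, yet creates an infinite play of payoff $0$, destroying the witness property. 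What is needed is the paper's Lemma~\ref{lmm:qdrstr}: if the new strategy $\strElm[][\star]$ differs from $\strElm[\qdrElm]$ only at positions $\posElm$ where the \emph{strict} inequality $\mfElm[\qdrElm](\posElm) < \mfElm[\qdrElm](\strElm[][\star](\posElm)) + \posElm$ holds, then $\strElm[][\star]$ is still a $\PlrSym$-witness; its proof is a separate induction on the number of changed positions, decomposing each play by how often it meets a changed position and using Propositions~\ref{prp:qdrpth} and~\ref{prp:qdresc} to turn the strict local inequality into strict positivity of every cycle and every finite play. You do note that strategy rewrites occur only where measures strictly grow, but you spend that fact solely on inflationarity; it must instead be the hypothesis of this witness-preservation lemma, which is required three times --- for the newly lifted positions in $\prgFun[0]$, for the argmax re-targeting of escaping $\PlrSym$-positions inside the loop of $\prgFun[+]$, and for the final re-targeting performed by $\winFun$ on the closed set $\QSet$. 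Without it, your loop invariant asserts that $\QSet$ and $\qsiFun(\qdrElm)$ stay (weak) quasi dominions but nothing maintains that assertion.
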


  Since both operators are inflationary, so is their composition, which admits
  fixpoint.
  Therefore, the operator $\solFun$ is well defined.
  Moreover, following the same considerations discussed at the end of
  Section~\ref{sec:solprgmsr}, it can be proved the fixpoint is obtained after
  at most $n \cdot (\SElm + 1)$ iterations.
  Let $\ifpFun[k] \,\XElm\, . \,\FFun(\XElm)$ denote the $k$-th iteration of an
  inflationary operator $\FFun$.
  Then, we have the following theorem.

  \begin{theorem}[Termination]
    \label{thm:ter}
    The solver operator $\solFun \defeq \ifpFun\, \qdrElm \,.\,
    \prgFun[+](\prgFun[0](\qdrElm))$ is a well-defined total function.
    Moreover, for every $\qdrElm \in \QDRSet$ it holds that $\solFun(\qdrElm) =
    (\ifpFun[k]\, \qdrElm[][\star] \,.\,
    \prgFun[+](\prgFun[0](\qdrElm[][\star])))(\qdrElm)$, for some index $k \leq
    n \cdot (\SElm + 1)$, where $n$ is the number of positions in the \MPG and
    $\SElm \defeq \sum \set{ \wghFun(\posElm) \in \SetN }{ \posElm \in \PosSet
    \land \wghFun(\posElm) > 0 }$ the total sum of its positive weights.
  \end{theorem}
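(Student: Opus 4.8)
The plan is to reduce the entire statement to a finite potential-increase argument, using Theorem~\ref{thm:tot} as the only nontrivial input. First I would observe that, by Theorem~\ref{thm:tot}, both $\prgFun[0]$ and $\prgFun[+]$ are total inflationary self-maps of $\QDRSet$, and that inflationarity passes to their composition $\FFun \defeq \prgFun[+] \cmp \prgFun[0]$: for every $\qdrElm \in \QDRSet$ transitivity of $\sqsubseteq$ gives $\qdrElm \sqsubseteq \prgFun[0](\qdrElm) \sqsubseteq \prgFun[+](\prgFun[0](\qdrElm)) = \FFun(\qdrElm)$. Hence the iteration $\qdrElm[0] \defeq \qdrElm$, $\qdrElm[i+1] \defeq \FFun(\qdrElm[i])$ is a $\sqsubseteq$-nondecreasing chain, and to conclude that $\solFun$ is a well-defined total function it suffices to show this chain stabilises in finitely many steps — which the iteration bound will supply, thereby also exhibiting the fixpoint directly and making any appeal to general inflationary-fixpoint existence~\cite{Bou49,Wit50} unnecessary.

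The core step I would isolate is a strict-progress lemma: setting $\qdrElm[][\star] \defeq \FFun(\qdrElm)$, if $\qdrElm[][\star] \neq \qdrElm$ then the measure component strictly increases, $\mfElm[\qdrElm] \sqsubset \mfElm[{\qdrElm[][\star]}]$. The point is that the $\QDRSet$-ordering of Definition~\ref{def:qsidomrepspc}(\ref{def:qsidomrepspc(ord)}) ties the strategy component to the measure component: it demands strategy agreement precisely on the $\PlrSym$-positions whose positive measure is left unchanged, and a $\PlrSym$-witness is defined nowhere else than on $\qsiFun(\cdot) \cap \PosSet[\PlrSym]$. Thus, should $\mfElm[\qdrElm] = \mfElm[{\qdrElm[][\star]}]$, the quasi dominions $\qsiFun(\qdrElm)$ and $\qsiFun(\qdrElm[][\star])$ (each the complement of the zero set of the common measure) coincide, and on this common domain the two witnesses must agree, forcing $\qdrElm[][\star] = \qdrElm$ — the contrapositive of the lemma. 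Since $\FFun$ is inflationary, $\qdrElm \sqsubseteq \qdrElm[][\star]$ always, so a genuine change is necessarily an increase, and the lemma yields $\mfElm[\qdrElm] \sqsubset \mfElm[{\qdrElm[][\star]}]$ (at least one position strictly up, none down).

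Next I would install the potential. By the observation at the end of Section~\ref{sec:solprgmsr}, any measure exceeding $\SElm$ may be replaced by $\infty$, so every measure lives in the finite chain $\{ 0, 1, \ldots, \SElm \} \cup \{ \infty \}$. Let $\phi$ act identically on $\{ 0, \ldots, \SElm \}$ and send $\phi(\infty) \defeq \SElm + 1$, and define the potential $\Phi(\qdrElm) \defeq \sum_{\posElm \in \PosSet} \phi(\mfElm[\qdrElm](\posElm))$, a quantity in $\{ 0, \ldots, n \cdot (\SElm + 1) \}$ that is monotone in the $\sqsubseteq$-order. By the strict-progress lemma, each non-fixpoint iteration increases $\Phi$ by at least one. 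Therefore, starting from $\qdrElm$, a fixpoint is reached after at most $n \cdot (\SElm + 1) - \Phi(\qdrElm) \leq n \cdot (\SElm + 1)$ applications of $\FFun$; this makes $\solFun$ total and well defined, and gives $\solFun(\qdrElm) = (\ifpFun[k]\, \qdrElm[][\star] \,.\, \prgFun[+](\prgFun[0](\qdrElm[][\star])))(\qdrElm)$ for some $k \leq n \cdot (\SElm + 1)$, as claimed.

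I expect the only genuinely delicate point to be the strict-progress lemma, specifically the verification that equality of the measure components really does force equality of the whole \qdr. This is where the asymmetric clause of the $\QDRSet$-ordering must be used exactly as stated — strategy equality is required only on unchanged positive-measure $\PlrSym$-positions, which is enough precisely because the $\PlrSym$-witness has no other domain. Everything else (inflationarity of the composition, the capping of measures at $\SElm$, and the counting bound on $\Phi$) is routine once Theorem~\ref{thm:tot} is granted.
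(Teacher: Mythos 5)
Your skeleton matches the paper's proof: the chain $\qdrElm[0] \sqsubseteq \qdrElm[1] \sqsubseteq \cdots$ obtained by iterating $\prgFun[+] \cmp \prgFun[0]$ and appealing to Theorem~\ref{thm:tot}, the fact that two distinct comparable {\qdr}s must differ in their measure components (your strict-progress lemma is correct: equal measures force equal quasi dominions, and the ordering of Definition~\ref{def:qsidomrepspc} then forces the witnesses to agree on their entire common domain $\qsiFun(\cdot) \cap \PosSet[\PlrSym]$), and a count of how many strict measure increases can occur. Indeed, your potential $\Phi$ is a cleaner rendering of the paper's terse counting step.

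There is, however, a genuine gap precisely at the point you declare routine: the claim that all finite measures arising along the iteration lie in $\{0, \ldots, \SElm\}$. The remark at the end of Section~\ref{sec:solprgmsr} does not give you this. That remark says that a measure exceeding $\SElm$ \emph{witnesses} a win for player $\PlrSym$, so that the \emph{classic} algorithm may soundly be \emph{modified} to replace any such measure by $\infty$; the cap there is an explicit alteration of the lift operator, and termination is then argued for the capped operator. The operators of the present theorem perform no such replacement: inside $\prgFun[+]$ measures grow by uncapped stretches $\mfElm[\qdrElm](\uposElm) + \posElm$, and $\infty$ is assigned only by $\winFun$ to a closed weak quasi dominion; likewise $\prgFun[0]$ never caps. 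Hence ``finite measures never exceed $\SElm$'' is a nontrivial property of what these particular operators compute, not a convention one may adopt. Without it your potential is unbounded, and your argument shows only that some measure strictly increases at every non-fixpoint step --- which, over $\SetNI$, yields neither termination nor the bound $k \leq n \cdot (\SElm + 1)$. This missing ingredient is exactly Lemma~\ref{lmm:qdrbnd} of the appendix, which the paper proves by induction over the peeling sequence $\QSet[0] \supseteq \QSet[1] \supseteq \cdots$ internal to $\prgFun[+]$, using Propositions~\ref{prp:qdrpth} and~\ref{prp:qdresc} to show that every finite measure under-approximates the weight of a finite escape path without repeated positions, and is therefore at most $\SElm$. (Building the cap into the operators instead would change the operators the theorem speaks about, and would require re-checking that the capped operators still map $\QDRSet$ into $\QDRSet$, in particular Condition~\ref{def:qsidomrepspc(set:dom)} of Definition~\ref{def:qsidomrepspc} for the newly capped positions.) Your proof becomes correct once the appeal to the Section~\ref{sec:solprgmsr} remark is replaced by a proof of this bound, or by an explicit invocation of Lemma~\ref{lmm:qdrbnd}.
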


  Consider, as a final example, the game depicted in Figure~\ref{fig:sim}, with
  $k > 2$, where the numbers denote the weights of the positions of the game, in
  the picture labeled $(0)$, and the measures assigned by the procedure, in the
  remaining ones.
  Each picture also features both the $\PlrSym$-witness strategy in dashed blue
  and the best counter $\OppSym$-strategy in dashed red for the current quasi
  dominion.
  Moreover, solid colored moves are moves along which the measure strictly
  increases.
  Below each picture, we also indicate the phase, $\prgFun[0]$ or $\prgFun[+]$,
  that produces the displayed result.
  \begin{figure}[htbp]
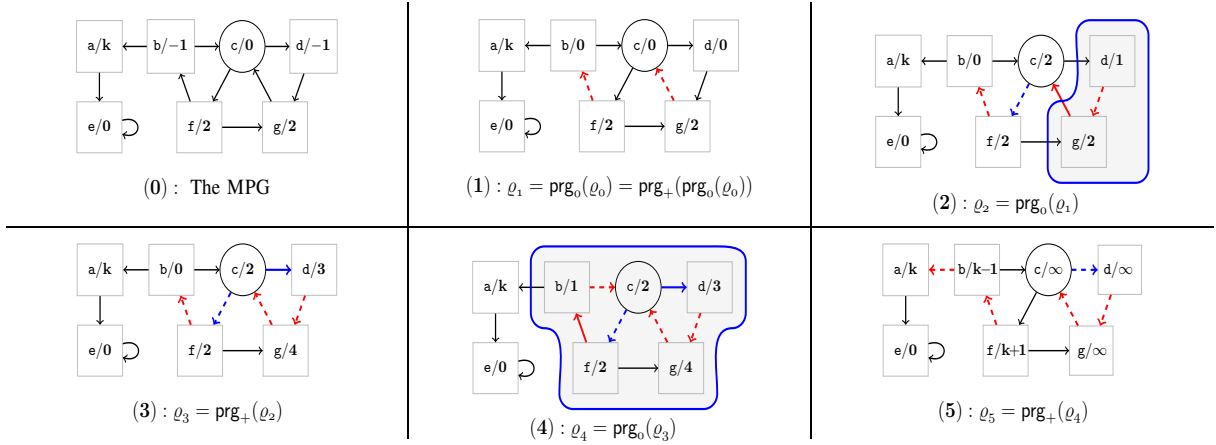

    \vspace{-0.50em}
    \tabsimexm
    \caption{\label{fig:sim} A simulation.}
    \vspace{-0.50em}
  \end{figure}
  The computation starts from the initial \qdr $\qdrElm[0] = (\mfElm[0],
  \strElm[0])$, assigning measure $0$ to all the positions of the game with the
  associated empty strategy.
  The first iteration applies $\prgFun[0]$ to $\qdrElm[0]$, which lifts
  positions $\aSym$, $\fSym$, and $\gSym$ to their respective weights, leading
  to $\qdrElm[1]$ as shown in Picture~$(1)$.
  At this point, $\qsiFun(\qdrElm[1]) = \{ \aSym, \fSym, \gSym\}$ but
  $\dmnFun(\qdrElm[1])$ is empty, as all those positions already satisfy the
  progress condition, thus, $\prgFun[+]$ does nothing.
  In the next iteration, $\prgFun[0]$ applied to $\qdrElm[1]$ results in the
  lifting of positions $\cSym$ and $\dSym$, as reported in Picture~$(2)$.
  Position $\cSym$ is a $\PlrSym$-position and the lift operator chooses
  $(\cSym, \fSym)$ as its strategy.
  The resulting quasi-dominion is $\qsiFun(\qdrElm[2]) = \{ \aSym, \cSym, \dSym,
  \fSym, \gSym \}$ and $\dmnFun(\qdrElm[2]) = \{ \dSym, \gSym \}$, with $\gSym$
  the only escape position that is also non-progress.
  The measure of $\gSym$ is lifted to $\mfElm[2](\cSym) + \gSym = 4$.
  Finally, it is the turn of position $\dSym$ to be lifted to $\mfElm[2](\gSym)
  + \dSym = 3$.
  Picture~$(3)$ shows the resulting \qdr $\qdrElm[3]$.
  The final iteration first applies $\prgFun[0]$ to $\qdrElm[3]$
  (Picture~$(4)$), lifting position $\bSym$ to measure $1$ via the move $(\bSym,
  \cSym)$.
  This change of measure triggers another application of $\prgFun[+]$, as
  position $\fSym$ is now non-progress.
  The resulting \qdr $\qdrElm[4]$ is such that $\qsiFun(\qdrElm[4]) = \{ \aSym,
  \bSym, \cSym, \dSym, \fSym, \gSym \}$ and $\dmnFun(\qdrElm[4]) = \{ \bSym,
  \cSym, \dSym, \fSym, \gSym \}$.
  The only escape position is $\bSym$, which is lifted directly to measure $k -
  1$.
  In the remaining set $\{ \cSym, \dSym, \fSym, \gSym \}$, the only escape
  position is $\fSym$, which is lifted to measure $k + 1$.
  The resulting weak quasi dominion $\{ \cSym, \dSym, \gSym\}$, however, is
  closed, since $\mfElm[{\qdrElm[4]}](\cSym) = 2 < \mfElm[{\qdrElm[4]}](\dSym) +
  \cSym = 3$.
  Therefore, player $\PlrSym$ changes strategy and chooses the move
  $(\cSym, \dSym)$.
  Since no escape positions remain, the set $\{ \cSym, \dSym, \gSym \}$ is
  winning for player $\PlrSym$ and the $\winFun$ operator lifts all their
  measures to $\infty$, leading to $\qdrElm[5]$ in Picture~$(5)$.
  The measure function $\mfElm[5]$ is now a progress measure and the algorithm
  terminates.
  The total number of single measure updates for \QDPM to reach the fixpoint on
  the example of Figure~\ref{fig:sim} is $13$, regardless of the value of the
  maximal weight $k$ in the game assigned to position $\aSym$.

  On the other hand, it can easily be proved that \BRIM~\cite{BCDGR11} requires
  $3k + 8$ applications of its lift operator to compute a progress measure, for
  a total of $5k + 9$ measure updates.
  Indeed, the first two evaluations of $\liftFun$, starting from $\mfElm[0]$,
  lead to $\mfElm[2] = \{ \aSym \mapsto k; \bSym, \eSym \mapsto 0; \cSym, \fSym,
  \gSym \mapsto 2; \dSym \mapsto 1 \}$, as in Picture~$(2)$, and require $5$
  measure lifts.
  Then, the algorithm iteratively increases the measures of $\bSym$, $\gSym$,
  $\dSym$, $\fSym$, and $\cSym$ by applying $3(k - 1)$ times the lift operator,
  for a total of $5(k - 1)$ measure lifts: $\mfElm[3i] = {\mfElm[3i - 1]}[\bSym
  \mapsto i; \gSym \mapsto i + 3]$, $\mfElm[3i + 1] = {\mfElm[3i]}[\dSym, \fSym
  \mapsto i + 2]$, and $\mfElm[3i + 2] = {\mfElm[3i + 1]}[\cSym \mapsto i + 2]$,
  for all $i \in \numcc{1}{k - 1}$.
  At this point, $\bSym$ and $\fSym$ have obtained measures $k - 1$ and $k + 1$,
  respectively, which suffice to satisfy the progress relation along the moves
  $(\fSym, \bSym)$ and $(\bSym, \aSym)$.
  However, the $\OppSym$-position $\gSym$ does not satisfy such a relation along
  its unique move $(\gSym, \cSym)$, since $\mfElm[3k - 1](\gSym) = k + 2 <
  \mfElm[3k - 1](\cSym) + \gSym = (k + 1) + 2 = k + 3$.
  Therefore, other six applications of $\liftFun$ are needed before $\gSym$ can
  exceed the bound $\SElm = \wghFun(\aSym) + \wghFun(\fSym) + \wghFun(\gSym) =
  k + 4$.
  Each one of them modifies the measure of one position only, for a total of $6$
  lifts: $\mfElm[3(k + i)] = {\mfElm[3(k + i) - 1]}[\gSym \mapsto k + 3 + i]$,
  $\mfElm[3(k + i) + 1] = {\mfElm[3(k + i)]}[\dSym \mapsto k + 2 + i]$, and
  $\mfElm[3(k + i) + 2] = {\mfElm[3(k + i) + 1]}[\cSym \mapsto k + 2 + i]$, for
  $i \in \{ 0, 1 \}$.
  At this point, we have $\mfElm[3k + 6] = {\mfElm[3k + 5]}[\gSym \mapsto
  \infty]$, $\mfElm[3k + 7] = {\mfElm[3k + 6]}[\dSym \mapsto \infty]$, and,
  finally, $\mfElm[3k + 8] = {\mfElm[3k + 7]}[\cSym \mapsto \infty]$, which
  contribute with the remaining $3$ lifts.
  From this observation, the next result immediately follows.

  \begin{theorem}[Efficiency]
    \label{thm:eff}
    An infinite family of {\MPG}s $\{ \GamName[k] \}_{k}$ exists on which \QDPM
    requires a constant number of measure updates, while \BRIM requires
    $\AOmicron{k}$ such updates.
  \end{theorem}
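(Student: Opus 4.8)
The plan is to take the explicit family $\{\GamName[k]\}_{k}$ depicted in Figure~\ref{fig:sim}, parametrized by the weight $k > 2$ assigned to position $\aSym$, and to reduce the statement to a direct count of the measure updates performed by the two algorithms on this family. Since the two counts turn out to be, respectively, a fixed constant and a quantity growing linearly in $k$, the claimed separation follows immediately. Thus the whole argument amounts to validating and packaging the two execution traces already sketched in the discussion preceding the theorem.

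For the \QDPM side, I would carry out the simulation phase by phase, exactly as illustrated in Pictures~$(0)$ through $(5)$, counting the single lifts performed by each invocation of $\prgFun[0]$ and $\prgFun[+]$. The point to emphasize is \emph{why} the total is insensitive to $k$: once the quasi dominion $\qsiFun(\qdrElm)$ has been detected, the best-escape mechanism $\bepFun$ singles out the escape position $\bSym$ and the lift lifts it \emph{in one update} directly to its best-escape measure $k-1$ (and, in a later iteration, $\fSym$ to $k+1$), rather than traversing all intermediate values. Summing the updates across all phases yields the constant total of $13$ for every $k > 2$, establishing the $\AOmicron{1}$ bound.

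For the \BRIM side, I would instead unfold the least-fixpoint iteration of the monotone lift operator of Section~\ref{sec:solprgmsr} under its minimal measure-increase policy for player $\OppSym$. The crucial phenomenon is that, lacking any quasi-dominion information, the procedure keeps the $\OppSym$-position $\cSym$ inside the region and raises the measures along the dependency chain $\bSym, \gSym, \dSym, \fSym, \cSym$ one unit at a time; this forces $3(k-1)$ applications of $\liftFun$, hence $5(k-1)$ single updates, before $\bSym$ and $\fSym$ reach the thresholds $k-1$ and $k+1$. Adding the $5$ initial updates and the $9$ further updates needed to push $\gSym$, $\dSym$, and $\cSym$ past $\SElm = k+4$ and then to $\infty$, one obtains the grand total $5k+9$, which is $\AOmicron{k}$.

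The main obstacle is not conceptual but lies in certifying the \BRIM lower bound rigorously: one must argue that the minimal-increase tie-breaking genuinely precludes any shortcut and that the measures of $\bSym$, $\gSym$, $\dSym$, $\fSym$, $\cSym$ advance in lockstep exactly $k-1$ times, no position skipping a value. Concretely, this means verifying the precise interleaving of lift applications encoded by $\mfElm[3i]$, $\mfElm[3i + 1]$, $\mfElm[3i + 2]$ for $i \in \numcc{1}{k-1}$, and checking that each intermediate measure function indeed fails the progress condition precisely at the expected position (in particular that $\gSym$ blocks progress along $(\gSym, \cSym)$ until the bound $\SElm$ is exceeded). Dually, one should confirm that the \QDPM trace is valid for all $k > 2$, i.e.\ that $\qsiFun$, $\dmnFun$, and the best-escape set are computed identically regardless of the concrete value of $k$, so that the update count stays fixed at $13$.
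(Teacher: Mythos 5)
Your proposal is correct and takes essentially the same route as the paper: the paper establishes Theorem~\ref{thm:eff} exactly by tracing both algorithms on the family of Figure~\ref{fig:sim} parametrized by the weight $k$, obtaining the constant count of $13$ updates for \QDPM and the exact count of $5k + 9$ updates (from $3k + 8$ lift applications, with the chain $\bSym, \gSym, \dSym, \fSym, \cSym$ advancing in lockstep $3(k-1)$ times) for \BRIM. The only difference is bookkeeping, since you merge the paper's final $6 + 3$ lifts into ``$9$ further updates''; otherwise your verification plan coincides with the paper's argument.
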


  From Theorem~\ref{thm:prgmsr}, together with Lemmas~\ref{lmm:prgzer}
  and~\ref{lmm:prgpls}, it follows that the solution provided by the algorithm
  is indeed a progress measure, hence establishing soundness.

  \begin{theorem}[Soundness]
    \label{thm:snd}
    $\denot{\solFun(\qdrElm)}[\OppSym] \subseteq \WinSet[\OppSym]$, for every
    $\qdrElm \in \QDRSet$.
  \end{theorem}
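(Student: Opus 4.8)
The plan is to reduce Soundness to Theorem~\ref{thm:prgmsr} by showing that the measure function underlying the computed fixpoint is a genuine (global) progress measure. Fix $\qdrElm \in \QDRSet$ and let $\qdrElm[][\star] \defeq \solFun(\qdrElm)$, which is well defined by Theorem~\ref{thm:ter}. By construction $\qdrElm[][\star]$ is an inflationary fixpoint of the composite operator, i.e.\ $\qdrElm[][\star] = \prgFun[+](\prgFun[0](\qdrElm[][\star]))$. Once I know that $\mfElm[{\qdrElm[][\star]}]$ is a progress measure in the sense of Definition~\ref{def:prgmsr}, Theorem~\ref{thm:prgmsr} gives $\denot{\mfElm[{\qdrElm[][\star]}]}[\OppSym] \subseteq \WinSet[\OppSym]$; and since the $\OppSym$-denotation of a \qdr is by definition that of its measure function, this is exactly $\denot{\solFun(\qdrElm)}[\OppSym] \subseteq \WinSet[\OppSym]$.

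The crux is to promote the single fixpoint of the composition to a \emph{simultaneous} fixpoint of $\prgFun[0]$ and $\prgFun[+]$. Writing $\qdrElm' \defeq \prgFun[0](\qdrElm[][\star])$, I would invoke the fact that both operators are inflationary (Theorem~\ref{thm:tot}): this yields $\qdrElm[][\star] \sqsubseteq \qdrElm'$ on the one hand and $\qdrElm' \sqsubseteq \prgFun[+](\qdrElm') = \qdrElm[][\star]$ on the other. Antisymmetry of the partial order $\sqsubseteq$ then forces $\qdrElm' = \qdrElm[][\star]$, so that $\prgFun[0](\qdrElm[][\star]) = \qdrElm[][\star]$ and, consequently, $\prgFun[+](\qdrElm[][\star]) = \prgFun[+](\qdrElm') = \qdrElm[][\star]$. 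Hence $\qdrElm[][\star]$ is a fixpoint of each operator separately.

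With this in hand, the two earlier lemmas combine cleanly. Being a fixpoint of $\prgFun[0]$, Lemma~\ref{lmm:prgzer} tells me that $\mfElm[{\qdrElm[][\star]}]$ satisfies the progress conditions of Definition~\ref{def:prgmsr} at every position of $\dual{\qsiFun(\qdrElm[][\star])}$; being a fixpoint of $\prgFun[+]$, Lemma~\ref{lmm:prgpls} gives the same at every position of $\qsiFun(\qdrElm[][\star])$. Since $\qsiFun(\qdrElm[][\star]) \cup \dual{\qsiFun(\qdrElm[][\star])} = \PosSet$, the progress conditions hold at all $\posElm \in \PosSet$, so $\mfElm[{\qdrElm[][\star]}]$ is a progress measure, which closes the argument via Theorem~\ref{thm:prgmsr}.

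I expect the simultaneous-fixpoint step to be the only delicate point. Because the lift operator is not monotone, there is no least-fixpoint machinery to lean on, and the whole reduction rests on the inflationary (non-decreasing) nature of $\prgFun[0]$ and $\prgFun[+]$ together with the antisymmetry of $\sqsubseteq$; the remaining steps are a direct chaining of Lemmas~\ref{lmm:prgzer} and~\ref{lmm:prgpls} with Theorem~\ref{thm:prgmsr}.
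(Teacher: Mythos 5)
Your proposal is correct and follows essentially the same route as the paper: the simultaneous-fixpoint step you derive inline (via inflationarity of $\prgFun[0]$ and $\prgFun[+]$ plus antisymmetry of $\sqsubseteq$) is precisely the paper's auxiliary Lemma~\ref{lmm:fixpnt}, and the remainder—chaining Lemmas~\ref{lmm:prgzer} and~\ref{lmm:prgpls} to cover $\qsiFun(\qdrElm[][\star])$ and $\dual{\qsiFun(\qdrElm[][\star])}$, then applying Theorem~\ref{thm:prgmsr} through the denotation identity of Definition~\ref{def:qsidomrepspc}—matches the paper's proof of Theorem~\ref{thm:snd} step for step. You also correctly identified the delicate point: with no monotonicity and hence no least fixpoint, the whole reduction rests on the inflationary property established in Theorem~\ref{thm:tot}.
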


  On the other hand, Theorem~\ref{thm:ter}, together with
  Condition~\ref{def:qsidomrepspc(set:dom)} of
  Definition~\ref{def:qsidomrepspc}, ensures that all the positions with
  infinite measure are winning for player $\PlrSym$, hence the algorithm is also
  complete.

  \begin{theorem}[Completeness]
    \label{thm:com}
    $\denot{\solFun(\qdrElm)}[\PlrSym] \subseteq \WinSet[\PlrSym]$, for every
    $\qdrElm \in \QDRSet$.
  \end{theorem}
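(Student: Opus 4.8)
The plan is to observe that completeness reduces, almost in its entirety, to two facts already secured earlier: that $\solFun(\qdrElm)$ is a genuine element of the $\QDR$ space, and that Condition~\ref{def:qsidomrepspc(set:dom)} of Definition~\ref{def:qsidomrepspc} forces the infinite-measure positions of \emph{any} \qdr to constitute a $\PlrSym$-dominion. Chaining these with the definition of the winning region yields the inclusion.

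First I would establish that $\solFun(\qdrElm) \in \QDRSet$. By Theorem~\ref{thm:tot}, each of the progress operators $\prgFun[0]$ and $\prgFun[+]$ is a total inflationary function on $\QDRSet$, so their composition $\qdrElm[][\star] \mapsto \prgFun[+](\prgFun[0](\qdrElm[][\star]))$ maps $\QDRSet$ into itself. Theorem~\ref{thm:ter} then guarantees that, starting from the input $\qdrElm \in \QDRSet$, the inflationary fixpoint defining $\solFun$ is reached after finitely many iterations of this composition (at most $n \cdot (\SElm + 1)$). Since each iterate remains within $\QDRSet$, so does the fixpoint, and hence $\solFun(\qdrElm)$ is itself a \qdr.

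Next I would apply Condition~\ref{def:qsidomrepspc(set:dom)} of Definition~\ref{def:qsidomrepspc} to this particular \qdr: the condition states precisely that $\denot{\solFun(\qdrElm)}[\PlrSym] = \mfElm[{\solFun(\qdrElm)}][-1](\infty)$ is a $\PlrSym$-dominion. I would then close the argument by recalling the definition of $\PlrSym$-dominion, namely a set of positions on which player $\PlrSym$ has a positional strategy whose induced plays all attain value strictly above the threshold against every $\OppSym$-strategy. Such a strategy is winning from every position of the set, so by the very definition of $\WinSet[\PlrSym]$ every element of $\denot{\solFun(\qdrElm)}[\PlrSym]$ belongs to $\WinSet[\PlrSym]$; this last implication ($\PlrSym$-dominion $\subseteq \WinSet[\PlrSym]$) is already invoked as immediate in the discussion following Proposition~\ref{prp:win}.

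The hard part is, in fact, absent from this proof: it has been front-loaded into Theorem~\ref{thm:tot}. The genuinely delicate invariant to preserve across the non-monotone operator $\prgFun[+]$ is exactly Condition~\ref{def:qsidomrepspc(set:dom)} --- that positions reaching measure $\infty$ are truly won by $\PlrSym$. Once totality is granted, completeness carries no further combinatorial or game-theoretic content; it is merely the chaining of $\QDR$-closure under the solver, the dominion condition, and the definition of the winning region.
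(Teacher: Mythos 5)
Your proposal is correct and follows essentially the same route as the paper's own proof: invoke Theorem~\ref{thm:ter} (with Theorem~\ref{thm:tot} behind it) to conclude that $\solFun(\qdrElm)$ is itself a \qdr, then apply Condition~\ref{def:qsidomrepspc(set:dom)} of Definition~\ref{def:qsidomrepspc} to get that its $\PlrSym$-denotation is a $\PlrSym$-dominion, hence contained in $\WinSet[\PlrSym]$. Your closing observation that the real work lives in preserving Condition~\ref{def:qsidomrepspc(set:dom)} across the non-monotone operator $\prgFun[+]$ (i.e., inside Theorem~\ref{thm:tot}) matches the paper's own framing exactly.
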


  The following lemma ensures that each execution of the operator $\prgFun[+]$
  strictly increases the measure of all the positions in $\dmnFun(\qdrElm)$.

  \begin{lemma}
    \label{lmm:qdrchg}
    Let $\qdrElm[][\star] \defeq \prgFun[+](\qdrElm)$, for some $\qdrElm \in
    \QDRSet$.
    Then, $\mfElm[{\qdrElm[][\star]}](\posElm) > \mfElm[\qdrElm](\posElm)$, for
    all positions $\posElm \in \dmnFun(\qdrElm)$.
  \end{lemma}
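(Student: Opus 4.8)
The plan is to follow the execution of $\prgFun[+]$ (Algorithm~\ref{alg:prg}) on the input \qdr $\qdrElm$, writing $\mfElm[\qdrElm], \strElm[\qdrElm]$ for the input components and $\mfElm', \strElm'$ for the running ones (initially equal to the input), and to maintain three bookkeeping invariants throughout the while-loop. Since each $\liftFun(\cdot, \ESet, \cdot)$ and the final $\winFun(\cdot, \QSet)$ alter only the positions in their source set, which is always contained in the current $\QSet \subseteq \dmnFun(\qdrElm)$, one obtains: (A) a position still in $\QSet$ keeps its input measure and witness, $\mfElm'(\posElm) = \mfElm[\qdrElm](\posElm)$ and $\strElm'(\posElm) = \strElm[\qdrElm](\posElm)$; (B) a position already removed from $\QSet$ keeps the value it received on escaping; (C) a position outside $\dmnFun(\qdrElm)$ never changes measure. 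As a preliminary Step~0 I would show that $\dmnFun(\qdrElm)$ contains only positions of finite positive measure: positivity is membership in $\qsiFun(\qdrElm)$, and an $\infty$-valued position is excluded from both $\nppFun(\qdrElm)$ and every stage of the $\preFun$-fixpoint --- for $\OppSym$-positions because $\infty < \mfElm[\qdrElm](\uposElm) + \posElm$ cannot hold, and for $\PlrSym$-positions because Condition~\ref{def:qsidomrepspc(set:plr)} would force the witness itself to be $\infty$-valued, contradicting the inductive finiteness of the set built so far.

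With Step~0 the lemma reduces to one claim: at every iteration, each escape position $\posElm \in \escFun(\qdrElm, \QSet)$, a fortiori each best-escape position in $\bepFun(\qdrElm, \QSet)$, has strictly positive forfeit $\befFun(\mfElm', \QSet, \posElm) > 0$. This is enough because the move restriction to $\dual{\QSet}$ in $\liftFun$ coincides with the complement of $\QSet$ used by $\befFun$, so Line~4 assigns $\posElm$ exactly $\mfElm'(\posElm) + \befFun(\mfElm', \QSet, \posElm)$, a strict increase; and every position that is never removed lands in the final closed $\QSet$, where $\winFun$ raises it from a finite value to $\infty$ (Step~0), again strictly. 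As each iteration removes the nonempty set $\bepFun(\qdrElm, \QSet)$, the loop terminates and every position of $\dmnFun(\qdrElm)$ is raised exactly once.

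I would establish the forfeit claim by a case split on the owner of $\posElm$, after distilling the construction of $\dmnFun(\qdrElm)$ into: every $\OppSym$-position of $\dmnFun(\qdrElm)$ strictly increases its measure along each adjacent lying outside $\dmnFun(\qdrElm)$ (immediate for $\nppFun$-seeds, and inherited through the $\preFun$-condition since the set of outside adjacents only shrinks as the fixpoint grows); every $\PlrSym$-position of $\dmnFun(\qdrElm)$ is either a non-progress position or has its input witness inside $\dmnFun(\qdrElm)$. For an escaping $\OppSym$-position I check each out-of-$\QSet$ adjacent $\uposElm$: if $\uposElm \notin \dmnFun(\qdrElm)$, the distilled property with (A) and (C) gives $\mfElm'(\uposElm) + \posElm > \mfElm'(\posElm)$ directly; if $\uposElm$ is an already-escaped member of $\dmnFun(\qdrElm)$, I combine (B), $\mfElm'(\uposElm) > \mfElm[\qdrElm](\uposElm)$, with Condition~\ref{def:qsidomrepspc(set:opp)} on the input, $\mfElm[\qdrElm](\posElm) \leq \mfElm[\qdrElm](\uposElm) + \posElm$, and $\mfElm[\qdrElm](\posElm) > 0$ to push the strict inequality through the stretch operator. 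For an escaping $\PlrSym$-position the escape condition forces every in-$\QSet$ adjacent to be dominated, so any strictly increasing adjacent must lie outside $\QSet$; such an adjacent is furnished either by the non-progress clause or by the already-escaped input witness, and the same two sub-cases conclude.

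The step I expect to be the main obstacle is the ``already-escaped internal adjacent'' sub-case of the $\OppSym$-analysis. There the input relation between $\posElm$ and $\uposElm$ may be an \emph{equality} (position $\posElm$ satisfied progress along $\uposElm$ while $\uposElm$ was still inside $\QSet$), so the required strict increase is absent from the input and must be synthesized from the fact that $\uposElm$ has itself already been lifted by a positive amount; this is exactly why invariants (A)--(C) must be carried jointly and why the layer-by-layer, outermost-first processing order of $\prgFun[+]$ is essential. The only genuinely delicate computation is the stretch-operator bookkeeping --- verifying that $\mfElm[\qdrElm](\posElm) > 0$ rules out the clamping-at-zero corner case, so that a strict increase of $\uposElm$'s measure really does propagate to a strict increase of $\mfElm'(\uposElm) + \posElm$.
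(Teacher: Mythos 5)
Your proof is correct in substance and, for its key step, takes a genuinely different route from the paper's. The two arguments coincide on the first layer: the paper also starts from $\ESet \defeq \bepFun(\qdrElm, \dmnFun(\qdrElm))$ and proves strictness there by the same owner case analysis you use (for $\OppSym$-positions, every move leaving $\dmnFun(\qdrElm)$ strictly increases the measure by construction of $\nppFun$ and $\preFun$; for escaping $\PlrSym$-positions, membership in $\dmnFun(\qdrElm)$ forces membership in $\nppFun(\qdrElm)$, whose strictly better successor must lie outside the set). They part ways on all later layers: the paper never repeats the case analysis, but instead invokes the monotonicity $\fElm[0] \leq \fElm[1] \leq \ldots$ of the best-escape forfeits established inside the proof of Theorem~\ref{thm:tot}, so every later escape inherits a forfeit at least equal to the first, strictly positive one, while positions that never escape have finite input measure and are sent to $\infty$. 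You instead run a strong induction on the iterations of the while-loop of Algorithm~\ref{alg:prg}, re-proving forfeit positivity at each layer; the genuinely new sub-case (an out-of-$\QSet$ adjacent that is an already-escaped member of $\dmnFun(\qdrElm)$) is discharged by the inductive hypothesis together with Conditions~\ref{def:qsidomrepspc(set:plr)} and~\ref{def:qsidomrepspc(set:opp)} and your de-clamping observation that $\mfElm[{\qdrElm}](\posElm) > 0$ lets strict inequalities pass through the $\max\{0,\cdot\}$ of the stretch operator. The paper's route buys brevity by reusing machinery it must develop anyway for totality; yours buys self-containment, since you inline exactly the fragment of the forfeit-monotonicity argument that this lemma needs, at the cost of redoing the algebra at every layer.

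There is one small hole, in your Step~0. Your reason why no $\infty$-valued $\OppSym$-position can enter the $\preFun$-fixpoint (``$\infty < \mfElm[{\qdrElm}](\uposElm) + \posElm$ cannot hold'') does not cover the case $\MovRel(\posElm) \subseteq \QSet$ for the growing set $\QSet$: there the universally quantified clause of $\preFun$ ranges over the empty set $\MovRel(\posElm) \setminus \QSet$, holds vacuously, and the position would be added. The fix is the same argument you already use on the $\PlrSym$ side: by Condition~\ref{def:qsidomrepspc(set:opp)}, an $\infty$-valued $\OppSym$-position of $\qsiFun(\qdrElm)$ can only have $\infty$-valued adjacents, which contradicts the inductive finiteness of the positions already collected. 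This patch is needed, not cosmetic, because both your $\winFun$ step (finite to $\infty$ is strict) and your de-clamping computation rely on Step~0; with it, the proof goes through.
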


  Recall that each position can at most be lifted $\SElm + 1 = \AOmicron{ n
  \cdot \WElm}$ times and, by the previous lemma, the complexity of $\solFun$
  only depends on the cumulative cost of such lift operations.
  We can express, then, the total cost as the sum, over the set of positions in
  the game, of the cost of all the lift operations performed on that positions.
  Each such operation can be computed in time linear in the number of incoming
  and outgoing moves of the corresponding lifted position $\posElm$, namely
  $\AOmicron{(\card{\MovRel(\posElm)} + \card{\MovRel[][-1](\posElm)}) \cdot
  \log \SElm}$, with $\AOmicron{\log \SElm}$ the cost of each arithmetic
  operation involved.
  Summing all up, the actual asymptotic complexity of the procedure can,
  therefore, be expressed as $\AOmicron{n \cdot m \cdot \WElm \cdot \log(n \cdot
  \WElm)}$.

  \begin{theorem}[Complexity]
    \label{thm:cmp}
    \QDPM requires time $\AOmicron{n \cdot m \cdot \WElm \cdot \log(n \cdot
    \WElm)}$ to solve an \MPG with $n$ positions, $m$ moves, and maximal
    positive weight $\WElm$.
  \end{theorem}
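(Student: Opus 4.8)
The plan is to prove the bound by an amortized charging argument that rests on Theorem~\ref{thm:ter} and Lemma~\ref{lmm:qdrchg}. By Theorem~\ref{thm:ter} the solver $\solFun$ reaches its fixpoint after finitely many iterations, and, as observed at the end of Section~\ref{sec:solprgmsr}, any measure exceeding $\SElm$ may be collapsed to $\infty$. Since the operators $\prgFun[0]$ and $\prgFun[+]$ are inflationary (Theorem~\ref{thm:tot}), the measure $\mfElm[\qdrElm](\posElm)$ of each position is non-decreasing along the whole computation and ranges over the at most $\SElm + 2 = \AOmicron{n \cdot \WElm}$ values in $\{ 0, 1, \dots, \SElm \} \cup \{ \infty \}$. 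Hence each position undergoes at most $\SElm + 1 = \AOmicron{n \cdot \WElm}$ strict measure increases over the entire run.

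First I would isolate the cost of a single measure update. Lifting a position $\posElm$ amounts to recomputing the relevant extremum of $\mfElm[\qdrElm](\uposElm) + \posElm$ over its adjacents and, during back-propagation, inspecting its incoming moves; each arithmetic comparison involved costs $\AOmicron{\log \SElm} = \AOmicron{\log(n \cdot \WElm)}$, because all finite measures are bounded by $\SElm + 1$. Thus a single lift of $\posElm$ can be carried out in time $\AOmicron{(\card{\MovRel(\posElm)} + \card{\MovRel[][-1](\posElm)}) \cdot \log(n \cdot \WElm)}$.

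The core step is then to charge all of the work performed by $\prgFun[0]$ and $\prgFun[+]$ — including the auxiliary computations of $\dmnFun$, $\escFun$, $\bepFun$, $\befFun$, and $\winFun$ — to genuine strict measure increases, so that the work attributed to $\posElm$ is proportional to the number of its strict increases times its local cost. Here Lemma~\ref{lmm:qdrchg} is decisive: each invocation of $\prgFun[+]$ strictly raises the measure of every position in $\dmnFun(\qdrElm)$, and it is precisely for this reason — as stressed after the definition of $\dmnFun$ — that $\prgFun[+]$ is restricted to $\dmnFun(\qdrElm)$ rather than to the whole quasi dominion $\qsiFun(\qdrElm)$. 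Consequently, every position processed by a phase actually advances, and the incremental counters maintained by the efficient implementations of the operators ensure that a position $\posElm$ is re-examined only through its incident moves, and only when one of its neighbours has just changed measure. Summing the single-lift cost over all strict increases of all positions, and using $\sum_{\posElm \in \PosSet} (\card{\MovRel(\posElm)} + \card{\MovRel[][-1](\posElm)}) = 2m$ together with the per-position bound $\SElm + 1 = \AOmicron{n \cdot \WElm}$, yields
\[
  \sum_{\posElm \in \PosSet} (\SElm + 1) \cdot (\card{\MovRel(\posElm)} + \card{\MovRel[][-1](\posElm)}) \cdot \AOmicron{\log(n \cdot \WElm)} = \AOmicron{n \cdot m \cdot \WElm \cdot \log(n \cdot \WElm)},
\]
which is the claimed bound.

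I expect the main obstacle to lie in making the charging rigorous for $\prgFun[+]$, governed by Algorithm~\ref{alg:prg}: one must verify that the efficient implementation never scans a position more than $\AOmicron{\card{\MovRel(\posElm)} + \card{\MovRel[][-1](\posElm)}}$ times per measure increase, i.e.\ that the computation of the weak quasi dominion $\dmnFun(\qdrElm)$, the escape set $\escFun(\qdrElm, \QSet)$, and the best-escape forfeits $\befFun$ can all be folded into the edge-traversals that accompany an actual lift, rather than incurring an independent per-iteration cost over the full set $\qsiFun(\qdrElm)$. Confirming that the incremental bookkeeping maintains exactly this invariant is the delicate part; once it is in place, the summation above is routine.
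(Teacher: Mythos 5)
Your proposal follows the same route as the paper's own proof: an amortized analysis that (i) bounds the number of lifts of each position by $\SElm + 1 = \AOmicron{n \cdot \WElm}$, (ii) prices a single lift of $\posElm$ at $\AOmicron{(\card{\MovRel(\posElm)} + \card{\MovRel[][-1](\posElm)}) \cdot \log \SElm}$, and (iii) invokes Lemma~\ref{lmm:qdrchg} to argue that, since $\prgFun[+]$ is restricted to $\dmnFun(\qdrElm)$, every position processed in that phase can be charged to an actual measure increase. The closing summation is identical to the paper's.

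There is, however, a genuine gap, and you have located it yourself without closing it: the assertion that ``the incremental counters maintained by the efficient implementations of the operators'' confine all work to edges incident to lifted positions is assumed rather than proved, and establishing it is precisely the substance of the paper's proof. The functional definitions of $\prgFun[0]$, $\prgFun[+]$, $\nppFun$, $\dmnFun$, $\escFun$ and $\bepFun$ quantify over whole sets of positions (\eg, $\dual{\qsiFun(\qdrElm)}$ or the current quasi dominion), so evaluating them literally costs $\Omega(m)$ per iteration; with up to $n \cdot (\SElm + 1)$ iterations this yields roughly $\AOmicron{n^{2} \cdot m \cdot \WElm}$, not the claimed bound. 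The paper closes this gap constructively, by exhibiting imperative procedures (Algorithms~\ref{alg:sol},~\ref{alg:prgzer},~\ref{alg:dmn} and~\ref{alg:prgpls}) together with the data structures that make your charging step valid: a counter $\cFun$ recording, for each $\OppSym$-position, how many of its moves still satisfy the progress inequality, so that violation along \emph{all} moves is detected through decrements triggered by incoming-edge events instead of rescans of out-edges; a working copy $\dFun$ of $\cFun$ for the backward fixpoint that computes $\dmnFun(\qdrElm)$; a counter $\gFun$ detecting when a $\PlrSym$-position becomes an escape (its witness leaves $\QSet$ and no internal move strictly increases its measure); and a min-heap priority queue $\TSet$ over best-escape forfeits, holding at most $\SElm$ distinct keys, so that extracting $\bepFun(\qdrElm, \QSet)$ costs $\AOmicron{\log \SElm}$ per operation rather than a scan of $\escFun(\qdrElm, \QSet)$. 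It also maintains explicit loop invariants tying the trigger sets $\NSet[0]$ and $\NSet[+]$ to the positions to be lifted next and to $\nppFun(\qdrElm)$, which is what allows each phase to start from those sets instead of recomputing them. Without constructing this machinery (or an equivalent) and verifying its invariants, the summation in your final display is unjustified; with it, your argument coincides with the paper's.
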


\end{section}




\begin{section}{Experimental Evaluation}

  \begin{wrapfigure}[11]{r}{0.48\textwidth}
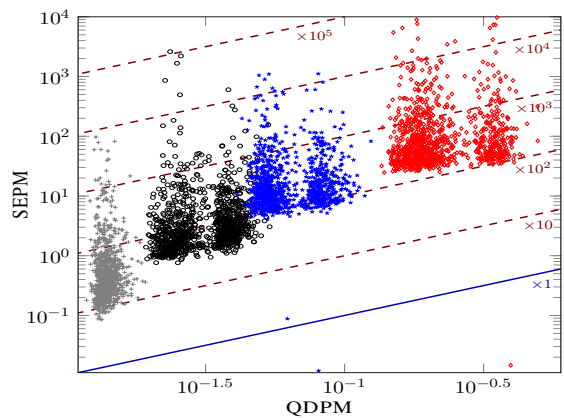

    \vspace{-5.5em}
    \figsctplt
    \vspace{-1em}
    \caption{\label{fig:expevl}\! Experiments on random games with $5000$
      positions.}
  \end{wrapfigure}
  In order to assess the effectiveness of the proposed approach, we implemented
  both \QDPM and \BRIM~\cite{BCDGR11}, the most efficient known solution to the
  problem and the more closely related one to \QDPM, in \texttt{C++} within
  \textsc{Oink}~\cite{Dij18}.
  \textsc{Oink} has been developed as a framework to compare parity game
  solvers.
  However, extending the framework to deal with {\MPG}s is not difficult.
  The form of the arenas of the two types of games essentially coincide, the
  only relevant difference being that {\MPG}s allow negative numbers to label
  game positions.
  We ran the two solvers against randomly generated {\MPG}s of various
  sizes.~\footnote{The experiments were carried out on a 64-bit 3.9GHz quad-core
  machine, with \textsc{Intel} i5-6600K processor and 8GB of RAM, running
  \textsc{Ubuntu}~18.04.}

  Figure~\ref{fig:expevl} compares the solution time, expressed in seconds, of
  the two algorithms on $4000$ games, each with $5000$ positions and randomly
  assigned weights in the range $[-15000,15000]$.
  The scale of both axes is logarithmic.
  The experiments are divided in $4$ clusters, each containing $1000$ games.
  The benchmarks in different clusters differ in the maximal number $m$ of
  outgoing moves per position, with $m \in \{ 10, 20, 40, 80 \}$.
  These experiments clearly show that \QDPM substantially outperforms \BRIM.
  Most often, the gap between the two algorithms is between two and three orders
  of magnitude, as indicated by the dashed diagonal lines.
  It also shows that \BRIM is particularly sensitive to the density of the
  underlying graph, as its performance degrades significantly as the number of
  moves increases.
  The maximal solution time was $8940$ sec. for \BRIM and $0.5$ sec. for \QDPM.

  Figure~\ref{fig:exptwo}, instead, compares the two algorithms fixing the
  maximal out-degree of the underlying graphs to $2$, in the left-hand picture,
  and to $40$, in the right-hand one, while increasing the number of positions
  from $10^3$ to $10^5$ along the x-axis.
  Each picture displays the performance results on $2800$ games.
  Each point shows the total time to solve $100$ randomly generated games with
  that given number of positions, which increases by $1000$ up to size $2 \cdot
  10^3$ and by $10000$, thereafter.
  In both pictures the scale is logarithmic.
  For the experiments in the right-hand picture we had to set a timeout for
  \BRIM to 45 minutes per game, which was hit most of the times on the bigger
  ones.

    \begin{figure}[t]
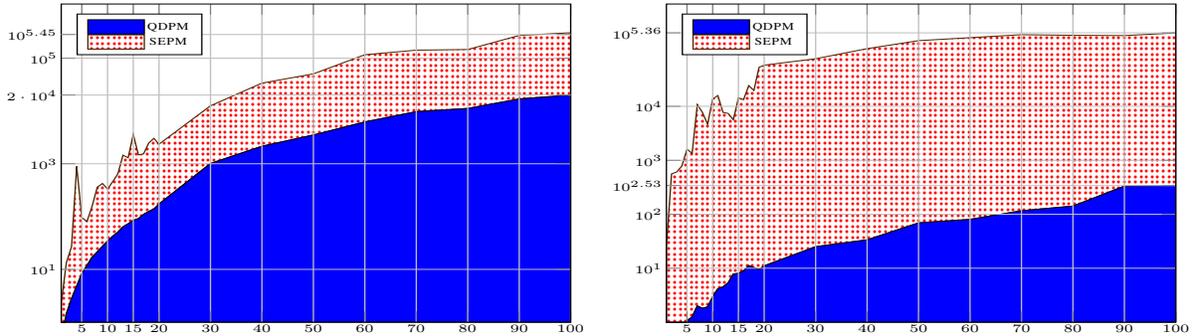

    \expmoves
    \vspace{-1em}
    \caption{\label{fig:exptwo} Total solution times in seconds of \BRIM and
      \QDPM on $5600$ random games.}
    \vspace{-.7em}
  \end{figure}

  Once again, the \QDPM significantly outperforms \BRIM on both kinds of
  benchmarks, with a gap of more than an order of magnitude on the first ones,
  and a gap of more than three orders of magnitude on the second ones.
  The results also confirm that the performance gap grows considerably as the
  number of moves per position increases.

  We are not aware of actual concrete benchmarks for {\MPG}s.
  However, exploiting the standard encoding of parity games into mean-payoff
  games~\cite{Jur98}, we can compare the behavior of \BRIM and \QDPM on concrete
  verification problems encoded as parity games.
  For completeness, Table~\ref{tab:prty} reports some experiments on such
  problems.
  \begin{wraptable}[11]{R}{0.52\textwidth}
    \vspace{-1.2em}
    \begin{center}
      \scriptsize
      \scalebox{1.00}[1.00] {
        \begin{tabular}{|l|r|r||r|r|}
          \hline & & & & \\[-.7em]
          Benchmark & Positions & Moves & \BRIM & \QDPM
          \\ \hline
          \hline  & & & & \\[-.7em]
          Elevator 1  & $144$  & $234$     & $0.04$     & $\mathbf{0.0010}$
          \\
          Elevator 2  & $564$   & $950$   & $8.80$     & $\mathbf{0.0042}$
          \\
          Elevator 3  & $2688$ & $4544$  & $4675.71$  & $\mathbf{0.0064}$
          \\ \hline & & & & \\[-.7em]
          Lang. Incl. 1  & $170$  & $1094$   & $3.18$  & $\mathbf{0.0021}$
          \\
          Lang. Incl. 2  & $304$  & $1222$   & $16.76$  & $\mathbf{0.0019}$
          \\
          Lang. Incl. 3  & $428$  & $878$   & $20.25$  & $\mathbf{0.0033}$
          \\
          Lang. Incl. 4  & $628$  & $1538$   & $135.51$  & $\mathbf{0.0029}$
          \\
          Lang. Incl. 5  & $509$  & $2126$   & $148.37$  & $\mathbf{0.0034}$
          \\
          Lang. Incl. 6  & $835$  & $2914$   & $834.90$  & $\mathbf{0.0051}$
          \\
          Lang. Incl. 7  & $1658$ & $4544$   & $2277.87$  & $\mathbf{0.0100}$
          \\
          \hline
        \end{tabular}
      } \vspace{0.125em}
      \caption{\label{tab:prty} \small Experiments
        on concrete verification problems.}
    \end{center}
  \end{wraptable}

  The table reports the execution times, expressed in seconds, required by the
  two algorithms to solve instances of two classic verification problems: the
  Elevator Verification and the Language Inclusion problems.
  These two benchmarks are included in the PGSolver~\cite{FL09} toolkit and are
  often used as benchmarks for parity games solvers.
  The first benchmark is a \emph{verification under fairness} constraints of a
  simple model of an elevator, while the second one encodes the \emph{language
  inclusion} problem between a non-deterministic B\"uchi automaton and a
  deterministic one.
  The results on various instances of those problems confirm that \QDPM
  significantly outperforms the classic progress measure approach.
  Note also that the translation into {\MPG}s, which encodes priorities as
  weights whose absolute value is exponential in the values of the priorities,
  leads to games with weights of high magnitude.
  Hence, the results in Table~\ref{tab:prty} provide further evidence that \QDPM
  is far less dependent on the absolute value of the weights.
  They also show that \QDPM can be very effective for the solution of real-world
  qualitative verification problems.
  It is worth noting, though, that the translation from parity to {\MPG}s gives
  rise to weights that are exponentially distant from each other~\cite{Jur98}.
  As a consequence, the resulting benchmarks are not necessarily representative
  of {\MPG}s, being a very restricted subclass.
  Nonetheless, they provide evidence of the applicability of the approach in
  practical scenarios.


\end{section}









\begin{section}{Concluding Remarks}

  We proposed a novel solution algorithm for the decision problem of {\MPG}s
  that integrates progress measures and quasi dominions.
  We argue that the integration of these two concepts may offer significant
  speed up in convergence to the solution, at no additional computational cost.
  This is evidenced by the existence of a family of games on which the combined
  approach can perform arbitrarily better than a classic progress measure based
  solution.
  Experimental results also show that the introduction of quasi dominions can
  often reduce solution times up to three order of magnitude, suggesting that
  the approach may be very effective in practical applications as well.
  We believe that the integration approach we devised is general enough to be
  applied to other types of games.
  In particular, the application of quasi dominions in conjunction with progress
  measure based approaches, such as those of~\cite{JL17} and~\cite{FJSSW17}, may
  lead to practically efficient quasi polynomial algorithms for parity games and
  their quantitative extensions.

\end{section}




  \bibliographystyle{IEEEtranS}
  \bibliography{References}

  \appendices



\begin{section}{Proofs}
  \label{app:prf}

  In this appendix, we collect some supplementary material, providing three
  further lemmas (Lemmas~\ref{lmm:qdrstr},~\ref{lmm:qdrbnd},
  and~\ref{lmm:fixpnt}), and the proofs of the theorems and lemmas introduced in
  Sections~\ref{sec:solprgmsr} and~\ref{sec:solqsidom}.

  \renewcounter{theorem}{thm:prgmsr}
  \begin{theorem}[Progress Measure]
    Let $\mfElm \in \MFSet$ be a progress measure.
    Then, $\denot{\mfElm}[\OppSym] \subseteq \WinSet[\OppSym]$.
  \end{theorem}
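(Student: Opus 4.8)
The plan is to read off from $\mfElm$ a positional $\OppSym$-strategy and to verify that it forces a non-positive mean payoff from every position of finite measure. Since {\MPG}s are positionally determined~\cite{EM79}, $\WinSet[\OppSym]$ is exactly the set of positions from which player $\OppSym$ can guarantee $\vFun(\pthElm) \leq 0$, so it suffices to exhibit one such strategy from an arbitrary $\posElm \in \denot{\mfElm}[\OppSym]$. First I would define $\strElm[\OppSym]$ by choosing, for every $\OppSym$-position $\posElm$, an adjacent $\strElm[\OppSym](\posElm) = \uposElm \in \MovRel(\posElm)$ witnessing $\mfElm(\uposElm) + \posElm \leq \mfElm(\posElm)$; such a move exists by Condition~\ref{def:prgmsr(opp)} of Definition~\ref{def:prgmsr}.

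The core is a single local inequality holding along every move consistent with $\strElm[\OppSym]$. At a $\PlrSym$-position, Condition~\ref{def:prgmsr(plr)} gives $\mfElm(\uposElm) + \posElm \leq \mfElm(\posElm)$ for \emph{all} successors $\uposElm$, hence whatever player $\PlrSym$ plays; at an $\OppSym$-position the same bound holds for the chosen successor by construction. I would first use this to show that a play issuing from a finite-measure position never leaves the finite-measure region: were some successor $\uposElm$ to carry measure $\infty$, the stretch $\mfElm(\uposElm) + \posElm = \max\{0, \mfElm(\uposElm) + \wghFun(\posElm)\}$ would equal $\infty$, contradicting $\mfElm(\uposElm) + \posElm \leq \mfElm(\posElm) < \infty$. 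With all measures along the play finite, applying $\max\{0, x\} \geq x$ to the same bound yields $\wghFun(\posElm) \leq \mfElm(\posElm) - \mfElm(\uposElm)$ on each traversed edge, so the drop in measure absorbs the edge weight.

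Then I would telescope. For the induced play $\pthElm = \playFun((\strElm[\PlrSym], \strElm[\OppSym]), \posElm)$ and any index $i$, summing the edge inequality along the initial segment $\pthElm_{\leq i}$ collapses to $\wghFun(\pthElm_{\leq i}) \leq \mfElm(\pthElm_{0}) - \mfElm(\pthElm_{i + 1}) \leq \mfElm(\pthElm_{0})$, a constant independent of $i$ because measures are non-negative. Dividing by $i$ and passing to the limit gives $\vFun(\pthElm) = \liminf_{i \to \infty} \frac{1}{i} \cdot \wghFun(\pthElm_{\leq i}) \leq 0$, and since $\strElm[\PlrSym]$ was arbitrary this holds against every counter-strategy. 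Thus $\strElm[\OppSym]$ secures the $\OppSym$-objective from $\posElm$, whence $\posElm \in \WinSet[\OppSym]$ and $\denot{\mfElm}[\OppSym] \subseteq \WinSet[\OppSym]$.

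The telescoping and the limit are routine; the only place demanding care is the arithmetic over $\SetNI$ — the saturation at $0$ inside the stretch and the exclusion of an $\infty$-valued successor along the play — which is exactly what keeps the telescoped sum finite and well-defined. This is the easy, soundness-style half of correctness: a given progress measure directly certifies an $\OppSym$-winning strategy, in contrast with the completeness argument for \QDPM, where the lift is only inflationary and a least fixpoint need not exist.
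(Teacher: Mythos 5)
Your proof is correct, and it takes a genuinely different route from the paper's. Both arguments start identically, extracting a positional $\OppSym$-strategy from Condition~\ref{def:prgmsr(opp)} of Definition~\ref{def:prgmsr} and exploiting the same local inequality $\mfElm(\uposElm) + \posElm \leq \mfElm(\posElm)$ along every move consistent with that strategy. From there, however, the paper argues by contradiction: it assumes the induced play is won by player $\PlrSym$, extracts a simple cycle of strictly positive weight (which itself relies on the play being regular, \ie, eventually periodic under memoryless strategies), and sums the progress inequalities around that cycle so that the measures at the two coinciding endpoints cancel, yielding cycle weight $\leq 0$ and hence a contradiction. You instead telescope the inequality directly along every finite prefix, obtaining the uniform bound $\wghFun(\pthElm_{\leq i}) \leq \mfElm(\pthElm_{0}) - \mfElm(\pthElm_{i+1}) \leq \mfElm(\pthElm_{0})$, from which the mean-payoff bound follows by dividing by $i$. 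Your route buys three things: it is more elementary (no appeal to regularity of plays or cycle extraction), it yields a slightly stronger conclusion (even the $\limsup$ of the average payoffs is non-positive, and every prefix weight is bounded by the initial measure, matching the paper's informal reading of measures as payoff estimates), and it cleanly isolates the only delicate point, namely that the play never reaches an $\infty$-measure position — a point the paper handles with an induction claiming $\mfElm(\pthElm_{i}) \leq \SElm$, a bound that is not actually valid for an arbitrary progress measure (adding a positive constant to all finite measures preserves the progress conditions), whereas only finiteness along the play is needed, which is exactly what you prove. The paper's cycle-based version, in exchange, localizes the argument to the combinatorial core — positive cycles are what player $\PlrSym$ needs — but as a proof your direct telescoping is the tighter one.
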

  \begin{proof}
    Consider a $\OppSym$-strategy $\strElm[\OppSym] \in \StrSet[\OppSym]$ for
    which all measures $\mfElm(\posElm)$ of positions $\posElm \in
    \denot{\mfElm}[\OppSym] \cap \PosSet[\OppSym]$ are a progress at $\posElm$
    \wrt the measures $\mfElm(\strElm[\OppSym](\posElm))$ of their adjacents
    $\strElm[\OppSym](\posElm)$, formally, $\mfElm(\strElm[\OppSym](\posElm)) +
    \posElm \leq \mfElm(\posElm)$.
    The existence of such a strategy is ensured by the fact that $\mfElm$ is a
    progress measure.
    Indeed, by Condition~\ref{def:prgmsr(opp)} of Definition~\ref{def:prgmsr},
    there necessarily exists a adjacent $\uposElm[][\star] \in
    \MovRel(\posElm)$ of $\posElm$ such that $\mfElm(\uposElm[][\star]) +
    \posElm \leq \mfElm(\posElm)$.
    Now, it can be shown that $\strElm[\OppSym]$ is a winning strategy for
    player~$\OppSym$ from all the positions in $\denot{\mfElm}[\OppSym]$, which
    implies that $\denot{\mfElm}[\OppSym] \subseteq \WinSet[\OppSym]$.
    To do this, let us consider a $\PlrSym$-strategy $\strElm[\PlrSym] \in
    \StrSet[\PlrSym]$ and the associated play $\pthElm =
    \playFun((\strElm[\PlrSym], \strElm[\OppSym]), \posElm)$ starting at a
    position $\posElm \in \denot{\mfElm}[\OppSym]$.
    Assume, by contradiction, that $\pthElm$ is won by player~$\PlrSym$.
    Since the game $\GamName$ is finite, $\pthElm$ must contain a finite simple
    cycle, and so a finite simple path, with strictly positive total weight sum.
    In other words, there exist two natural numbers $h \in \SetN$ and $k \in
    \SetN[+]$ such that $(\pthElm)_{h} = (\pthElm)_{h + k}$ and $\wghFun(\rho)
    = \sum_{i = h}^{h + k - 1} \wghFun((\pthElm)_{i}) > 0$, where $\rho \defeq
    ((\pthElm)_{\geq h})_{< h + k}$ is the simple path named above.
    Now, recall that, $((\pthElm)_{i}, (\pthElm)_{i + 1}) \in \MovRel$, for
    all indexes $i \in \SetN$.
    Thus, by both conditions of Definition~\ref{def:prgmsr}, and the notion of
    play, we have that
    \[
      \mfElm((\pthElm)_{i + 1}) + (\pthElm)_{i} \leq \mfElm((\pthElm)_{i}).
    \]
    Via a trivial induction, it is immediate to see that $\mfElm((\pthElm)_{i})
    \leq \SElm$, where $\SElm \defeq \sum \set{ \wghFun(\posElm) \in \SetN }{
    \posElm \in \PosSet \land \wghFun(\posElm) > 0 } < \infty$, for all $i \in
    \SetN$, since $\mfElm((\pthElm)_{0}) = \mfElm(\posElm) \neq \infty$, being
    $\posElm \in \denot{\mfElm}[\OppSym]$.
    As a consequence, due to the definition of the measure stretch operator, it
    holds that
    \[
      \mfElm((\pthElm)_{i + 1}) + \wghFun((\pthElm)_{i}) \leq
      \mfElm((\pthElm)_{i}) \leq \SElm.
    \]
    Hence, by summing together all the inequalities having indexes $i \in \SetN$
    with $h \leq i < h + k$, we obtain
    \[
      \sum_{i = h + 1}^{h + k} \payFun[\mfElm]((\pthElm)_{i}) +
      \sum_{i = h}^{h + k - 1} \wghFun((\pthElm)_{i}) \leq
      \sum_{i = h}^{h + k - 1} \payFun[\mfElm]((\pthElm)_{i}) < \infty,
    \]
    which simplifies in $\wghFun(\rho) = \sum_{i = h}^{h + k - 1}
    \wghFun((\pthElm)_{i}) \leq 0$, since $\payFun[\mfElm]((\pthElm)_{h + k}) =
    \payFun[\mfElm]((\pthElm)_{h})$.
    However, this contradicts the above assumption $\wghFun(\rho) > 0$.
    Therefore, $\strElm[\OppSym]$ is a winning strategy for player~$\OppSym$
    on $\denot{\mfElm}[\OppSym]$ as required by the theorem statement.
  \end{proof}

  In the following, for a finite path $\pthElm$ of an \MPG, we denote by
  $\fst{\pthElm}$ and $\lst{\pthElm}$ the first and last positions of $\pthElm$,
  respectively.

  \begin{lemma}
    \label{lmm:qdrstr}
    Let $\qdrElm \in \QDRSet$ and $\strElm[][\star] \in
    \StrSet[\PlrSym](\qsiFun(\qdrElm))$ a $\PlrSym$-strategy such that, if
    $\strElm[][\star](\posElm) \neq \strElm[\qdrElm](\posElm)$, then
    $\mfElm[\qdrElm](\posElm) < \mfElm[\qdrElm](\strElm[][\star](\posElm)) +
    \posElm$, for all positions $\posElm \in \qsiFun(\qdrElm) \cap
    \PosSet[\PlrSym]$.
    Then, $\strElm[][\star]$ is a $\PlrSym$-witness for $\qsiFun(\qdrElm)$.
  \end{lemma}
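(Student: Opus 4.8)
The plan is to verify directly that $\strElm[][\star]$ satisfies the defining condition of a $\PlrSym$-witness in Definition~\ref{def:qsidom}, namely that every play inside $\qsiFun(\qdrElm)$ compatible with $\strElm[][\star]$ has strictly positive weight. First I would record the one structural fact that makes $\strElm[][\star]$ behave like the genuine witness $\strElm[\qdrElm]$: at every $\PlrSym$-position $\posElm \in \qsiFun(\qdrElm) \cap \PosSet[\PlrSym]$ the inequality $\mfElm[\qdrElm](\posElm) \leq \mfElm[\qdrElm](\strElm[][\star](\posElm)) + \posElm$ holds. This is a two-line case split: if $\strElm[][\star](\posElm) = \strElm[\qdrElm](\posElm)$ it is exactly Condition~\ref{def:qsidomrepspc(set:plr)} of Definition~\ref{def:qsidomrepspc}, and otherwise it is the strict inequality granted by the lemma's hypothesis. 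Combined with the strategy-independent Condition~\ref{def:qsidomrepspc(set:opp)} at $\OppSym$-positions, this is precisely the ingredient used in the proof of Proposition~\ref{prp:qdrpth}, so the same induction yields its telescoping estimate verbatim for paths compatible with $\strElm[][\star]$: for every finite path $\posElm \pthElm \uposElm$ contained in $\qsiFun(\qdrElm)$ and compatible with $\strElm[][\star]$, one has $\mfElm[\qdrElm](\posElm) \leq \wghFun(\posElm \pthElm) + \mfElm[\qdrElm](\uposElm)$.

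Next I would fix an arbitrary $\OppSym$-strategy $\strElm[\OppSym] \in \StrSet[\OppSym](\qsiFun(\qdrElm))$ and a position $\posElm \in \qsiFun(\qdrElm)$, and split on whether the induced play $\pthElm$ is finite or infinite. For the finite case, since the arena has no sinks the play can only stop by leaving $\qsiFun(\qdrElm)$; hence the prescribed successor $\uposElm$ of $\lst{\pthElm}$ lies outside $\qsiFun(\qdrElm)$ and thus carries measure $\mfElm[\qdrElm](\uposElm) = 0$. Applying the telescoping estimate to the extended path $\pthElm \uposElm$ gives $\mfElm[\qdrElm](\posElm) \leq \wghFun(\pthElm) + \mfElm[\qdrElm](\uposElm) = \wghFun(\pthElm)$, and since $\posElm \in \qsiFun(\qdrElm)$ has strictly positive measure we conclude $\wghFun(\pthElm) > 0$, as required.

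For the infinite case, memorylessness of both strategies makes $\pthElm$ eventually periodic, so it contains a simple cycle $\pthElm''$ lying entirely within $\qsiFun(\qdrElm)$; it suffices to show $\wghFun(\pthElm'') > 0$, as this forces $\wghFun(\pthElm) = \infty > 0$. The telescoping identity around the cycle collapses to $0 \leq \wghFun(\pthElm'')$, which alone is too weak, so the crux is to upgrade it to a strict inequality. Here I would argue by contradiction, assuming $\wghFun(\pthElm'') = 0$: then the summed inequalities must all be tight, so at each $\PlrSym$-position of the cycle $\mfElm[\qdrElm](\posElm) = \mfElm[\qdrElm](\strElm[][\star](\posElm)) + \posElm$ holds with equality. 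By the contrapositive of the lemma's hypothesis, which forbids equality whenever $\strElm[][\star]$ deviates from $\strElm[\qdrElm]$, this forces $\strElm[][\star](\posElm) = \strElm[\qdrElm](\posElm)$ at every $\PlrSym$-position of the cycle. Consequently $\pthElm''$ is also compatible with the genuine $\PlrSym$-witness $\strElm[\qdrElm]$, so by Condition~\ref{def:qsidomrepspc(set:qsi)} of Definition~\ref{def:qsidomrepspc} together with the observation following Definition~\ref{def:qsidom} that witness-induced cycles carry strictly positive weight, $\wghFun(\pthElm'') > 0$, contradicting $\wghFun(\pthElm'') = 0$.

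The main obstacle is exactly this strictness upgrade in the infinite case: plain telescoping cannot rule out zero-weight cycles, and the reduction to $\strElm[\qdrElm]$ via the strict-deviation hypothesis is the essential move. A minor point to handle carefully throughout is the saturating nature of the stretch operator $+$; since every position on the relevant paths has strictly positive measure, each local inequality $\mfElm[\qdrElm](\posElm) \leq \mfElm[\qdrElm](\uposElm) + \posElm$ may be rewritten in the non-saturated form $\mfElm[\qdrElm](\posElm) - \mfElm[\qdrElm](\uposElm) \leq \wghFun(\posElm)$, which is what makes the telescoping sums collapse cleanly and what justifies passing between the two forms when reading off tightness.
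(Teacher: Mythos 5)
Your route is genuinely different from the paper's. The paper proves Lemma~\ref{lmm:qdrstr} by induction on the number of positions at which $\strElm[][\star]$ deviates from $\strElm[\qdrElm]$, swapping in one deviation at a time and analyzing each play according to how often it visits the swapped position, with Propositions~\ref{prp:win}, \ref{prp:qdrpth} and~\ref{prp:qdresc} doing the local work. You avoid the induction entirely: you extend Proposition~\ref{prp:qdrpth} once and for all to $\strElm[][\star]$-compatible paths (the case split between Condition~\ref{def:qsidomrepspc(set:plr)} of Definition~\ref{def:qsidomrepspc} and the lemma's hypothesis is exactly what its inductive step needs) and then argue directly on plays. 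Your finite-play case is correct and even lighter than the paper's, since you use that positions outside $\qsiFun(\qdrElm)$ have measure $0$ by definition instead of invoking Proposition~\ref{prp:qdresc}, and the tightness/rigidity idea for cycles is a legitimate replacement for the paper's induction.

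There is, however, a genuine gap in your infinite case: the cycle argument silently assumes that every measure on the cycle is finite. Measures live in $\SetNI$, and Condition~\ref{def:qsidomrepspc(set:dom)} of Definition~\ref{def:qsidomrepspc} explicitly contemplates positions of $\qsiFun(\qdrElm)$ with measure $\infty$; such {\qdr}s really arise, for instance after $\winFun$ has set measures to $\infty$ in an earlier application of $\prgFun[+]$. If the cycle $\pthElm''$ contains a position of infinite measure, both of your key steps fail: the telescoping cancellation producing $0 \leq \wghFun(\pthElm'')$ is unavailable (the non-saturated rewriting $\mfElm[\qdrElm](\posElm) - \mfElm[\qdrElm](\uposElm) \leq \wghFun(\posElm)$ in your closing paragraph is undefined when both measures are $\infty$), and the extraction ``the sum is tight, hence every summand is tight'' likewise requires finite quantities. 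So, as written, the proof does not cover infinite plays whose cycle lies in the $\infty$-measure region, and for those you cannot even exclude $\wghFun(\pthElm'') < 0$, let alone $\wghFun(\pthElm'') = 0$.

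The repair stays entirely within your framework. First, the lemma's hypothesis can never be satisfied at a position with $\mfElm[\qdrElm](\posElm) = \infty$, because no element of $\SetNI$ is strictly greater than $\infty$; hence every deviation position has finite measure and $\strElm[][\star]$ agrees with $\strElm[\qdrElm]$ on the whole $\infty$-measure region. Second, the local inequalities propagate $\infty$ forward along any $\strElm[][\star]$-compatible move, so if one position of the cycle has infinite measure then all of them do. Consequently, either the cycle carries a deviation position, in which case all its measures are finite and your telescoping-plus-tightness argument applies verbatim, or it carries none, in which case it is compatible with $\strElm[\qdrElm]$ and has positive weight directly by Condition~\ref{def:qsidomrepspc(set:qsi)} and the remark following Definition~\ref{def:qsidom}. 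With this dichotomy inserted before the cancellation step, your proof is complete.
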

  \begin{proof}
    The proof proceed by induction on the number $i \defeq \card{\DSet}$ of the
    positions in $\DSet \defeq \set{ \posElm \in \PosSet[\PlrSym] }{
    \strElm[][\star](\posElm) \neq \strElm[\qdrElm](\posElm) }$ on which the two
    strategies $\strElm[][\star]$ and $\strElm[\qdrElm]$ differ.
    The base case $i = 0$ is immediate, since $\qdrElm$ is a \qdr.
    Therefore, assume $i > 0$, let $\posElm \in \DSet$, and consider the
    strategy $\der{\strElm} \in \StrSet[\PlrSym](\qsiFun(\qdrElm))$ such that
    $\der{\strElm}(\posElm) = \strElm[\qdrElm](\posElm)$ and
    $\der{\strElm}(\uposElm) = \strElm[][\star](\uposElm)$, for all positions
    $\uposElm \in \qsiFun(\qdrElm) \cap \PosSet[\PlrSym]$ with $\uposElm \neq
    \posElm$.
    By the inductive hypothesis, we have that $\der{\strElm}$ is a
    $\PlrSym$-witness for the quasi dominion $\qsiFun(\qdrElm)$.
    Now, consider an arbitrary path $\pthElm$ compatible with the
    $\PlrSym$-strategy $\strElm[][\star]$.
    If $\pthElm$ does not meet $\posElm$, it is necessarily compatible with the
    $\PlrSym$-strategy $\der{\strElm}$, thus, $\wghFun(\pthElm) > 0$.
    If $\pthElm$ meets $\posElm$ once, then it can be decomposed as $\pthElm'
    \posElm \pthElm''$, where $\pthElm'$ and $\pthElm''$ are paths not meeting
    $\posElm$, where only the first can be possibly empty.
    On the one hand, if $\pthElm''$ is infinite, by Proposition~\ref{prp:win},
    we have $\wghFun(\pthElm'') = \infty$ and, so, $\wghFun(\pthElm) =
    \wghFun(\pthElm' \posElm \pthElm'') = \infty$.
    On the other hand, if $\pthElm''$ is finite, then, by
    Propositions~\ref{prp:qdrpth} and~\ref{prp:qdresc}, we have that $0 <
    \mfElm[\qdrElm](\fst{\pthElm' \posElm}) \leq \wghFun(\pthElm') +
    \mfElm[\qdrElm](\lst{\pthElm' \posElm}) = \wghFun(\pthElm') +
    \mfElm[\qdrElm](\posElm)$ and $\mfElm[\qdrElm](\fst{\pthElm''}) \leq
    \wghFun(\pthElm'')$, since both
    $\pthElm'$ and $\pthElm''$ are compatible with $\der{\strElm}$.
    Moreover, $\mfElm[\qdrElm](\posElm) <
    \mfElm[\qdrElm](\strElm[][\star](\posElm)) + \posElm =
    \mfElm[\qdrElm](\fst{\pthElm''}) + \posElm =
    \mfElm[\qdrElm](\fst{\pthElm''}) + \wghFun(\posElm)$.
    Now, by putting all things together, we have $0 <
    \mfElm[\qdrElm](\fst{\pthElm' \posElm}) \leq \wghFun(\pthElm') +
    \mfElm[\qdrElm](\posElm) < \wghFun(\pthElm') +
    \mfElm[\qdrElm](\fst{\pthElm''}) + \wghFun(\posElm) \leq \wghFun(\pthElm') +
    \wghFun(\posElm) + \wghFun(\pthElm'') = \wghFun(\pthElm' \posElm
    \pthElm'')$, \ie, $\wghFun(\pthElm) > 0$.
    Finally, consider the case where $\pthElm$ meets $\posElm$ more than once
    and, so, infinitely many times, due to the regularity of the path, which is
    in its turn due to the memoryless strategies.
    Then, $\pthElm$ can be written as $\pthElm' (\posElm \pthElm'')^{\omega} =
    \pthElm' \posElm (\pthElm'' \posElm)^{\omega}$, where $\pthElm'$ and
    $\pthElm''$ are possibly empty paths not meeting $\posElm$.
    First observe that the finite path $\pthElm'' \posElm$ is compatible with
    $\der{\strElm}$, thus, by Proposition~\ref{prp:qdrpth}, we have that
    $\mfElm[\qdrElm](\fst{\pthElm'' \posElm}) \leq \wghFun(\pthElm'') +
    \mfElm[\qdrElm](\posElm)$.
    Moreover, $\mfElm[\qdrElm](\posElm) < \mfElm[\qdrElm](\fst{\pthElm''
    \posElm}) + \wghFun(\posElm)$, as already shown above.
    Hence, $\mfElm[\qdrElm](\posElm) < \mfElm[\qdrElm](\fst{\pthElm''}) +
    \wghFun(\posElm) \leq \wghFun(\pthElm'') + \wghFun(\posElm) +
    \mfElm[\qdrElm](\posElm) = \wghFun(\pthElm'' \posElm) +
    \mfElm[\qdrElm](\posElm)$, which implies $\wghFun(\pthElm'' \posElm) > 0$.
    As a consequence, $\wghFun((\pthElm'' \posElm)^{\omega}) = \infty$ and, so,
    $\wghFun(\pthElm) = \wghFun(\pthElm' \posElm (\pthElm'' \posElm)^{\omega}) >
    0$.
    Summing up, $\strElm[][\star]$ is a $\PlrSym$-witness for the quasi dominion
    $\qsiFun(\qdrElm)$ as required by the lemma statement.
  \end{proof}

  \renewcounter{theorem}{thm:tot}
  \begin{theorem}[Totality]
    The progress operators $\prgFun[0]$ and $\prgFun[+]$ are total inflationary
    functions.
  \end{theorem}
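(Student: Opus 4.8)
The plan is to treat the two operators separately, and for each to prove (i) that it maps $\QDRSet$ into itself, which amounts to re-establishing the four conditions of Definition~\ref{def:qsidomrepspc} for the output, and (ii) that it is inflationary for $\sqsubseteq$. The two workhorses throughout are Lemma~\ref{lmm:qdrstr}, which certifies that a revised $\PlrSym$-strategy stays a $\PlrSym$-witness as long as it is altered only on positions where the new move strictly increases the measure, and the path estimate (Proposition~\ref{prp:qdrpth}), whose induction needs only Conditions~\ref{def:qsidomrepspc(set:plr)} and~\ref{def:qsidomrepspc(set:opp)}, together with Proposition~\ref{prp:qdresc} pinning the measure of an escape position to its weight. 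The stretch operator never lowers a measure, and both operators rewire $\strElm[\qdrElm]$ only on positions whose measure strictly increases; since the strategy-agreement clause of Condition~\ref{def:qsidomrepspc(ord)} concerns only positions with \emph{equal} measure in both arguments, it is vacuous exactly on the positions we touch, so inflationarity follows once totality is in place.

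For $\prgFun[0]$ the lift acts on source $\dual{\qsiFun(\qdrElm)}$ with target $\PosSet$, hence leaves every position of $\qsiFun(\qdrElm)$ untouched and can only raise a measure-$0$ position to a non-negative value; thus $\liftFun(\qdrElm, \dual{\qsiFun(\qdrElm)}, \PosSet)$ already dominates $\mfElm[\qdrElm]$ pointwise, the supremum with $\qdrElm$ is well defined and coincides with the lifted measure function, and inflationarity is immediate. For totality I would check the four conditions: Conditions~\ref{def:qsidomrepspc(set:plr)} and~\ref{def:qsidomrepspc(set:opp)} hold on the newly positive positions directly from the lift (the $\PlrSym$-case attains the max and the witness records a maximizing move, the $\OppSym$-case attains the min over \emph{all} moves), and are inherited unchanged on $\qsiFun(\qdrElm)$. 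With these in hand, the path estimate gives Condition~\ref{def:qsidomrepspc(set:qsi)}: combining the inherited witness on the old part with the maximizing choices on the new $\PlrSym$-positions, every compatible finite play has positive weight and every infinite one diverges. For Condition~\ref{def:qsidomrepspc(set:dom)} I observe that a single application produces measure $\infty$ only at a $\PlrSym$-position with a move into, or an $\OppSym$-position with all moves into, the already-winning region $\denot{\mfElm[\qdrElm]}[\PlrSym]$; the new $\infty$-positions are therefore the one-step $\PlrSym$-attractor of a $\PlrSym$-dominion, hence again a $\PlrSym$-dominion.

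For $\prgFun[+]$ (Algorithm~\ref{alg:prg}) the argument is an invariant for the while-loop stating that: $\qdrElm$ restricted to the complement of the current set $\QSet$ already meets the progress conditions; $\QSet$ is a weak quasi $\PlrSym$-dominion with $\strElm[\qdrElm]$ as witness; measures have only grown since entry; and, the delicate clause, Condition~\ref{def:qsidomrepspc(set:opp)} survives for every $\OppSym$-position. Each iteration lifts exactly $\ESet = \bepFun(\qdrElm, \QSet)$ along moves leaving $\QSet$ and deletes it, so $\QSet$ strictly shrinks and the loop terminates in a closed (possibly empty) weak quasi dominion. Preservation of the first three clauses is routine: by Proposition~\ref{prp:qdresc} the lifted measure of an escape position equals a genuine weight, the witness is rewired only where the measure strictly increases (Lemma~\ref{lmm:qdrstr}), and the surviving positions of $\QSet$ still force positive-weight plays. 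The closing call to $\winFun$ raises the surviving closed weak quasi dominion to $\infty$; by Proposition~\ref{prp:win} that set is a $\PlrSym$-dominion, re-establishing Condition~\ref{def:qsidomrepspc(set:dom)}, and $\winFun$ fixes a winning $\PlrSym$-witness on it, restoring Conditions~\ref{def:qsidomrepspc(set:qsi)} and~\ref{def:qsidomrepspc(set:plr)}.

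The main obstacle is the delicate clause: lifting the best-escape positions must not break Condition~\ref{def:qsidomrepspc(set:opp)} at any $\OppSym$-position. The danger, illustrated by Figure~\ref{fig:mpgexmb}, is an $\OppSym$-position $\posElm$ that can leave $\QSet$ either directly or through another escape position; lifting it to the measure of a too-generous direct exit would push $\mfElm[\qdrElm](\posElm)$ above the stretch along its cheaper move, destroying the ``for all adjacents'' inequality. The quantitative fact I would isolate is that \emph{every} position eventually escaping $\QSet$ must grow its measure by at least the minimal forfeit $\min_{\uposElm \in \escFun(\qdrElm, \QSet)} \befFun(\mfElm[\qdrElm], \QSet, \uposElm)$; lifting only the positions attaining this minimum keeps each lifted $\OppSym$-measure at or below the current and future stretch along its internal moves, so Condition~\ref{def:qsidomrepspc(set:opp)} is preserved both at the lifted positions and at their predecessors. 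This is the step singled out before Proposition~\ref{prp:qdresc}, and it is where the non-monotonicity of the operator is tamed; with it the invariant closes and, together with the $\winFun$ analysis, yields that $\prgFun[+]$, like $\prgFun[0]$, is a total inflationary map on $\QDRSet$.
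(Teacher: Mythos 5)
Your decomposition and toolkit coincide with the paper's own proof: both operators are handled separately and shown total and inflationary; for $\prgFun[0]$ you verify the four conditions of Definition~\ref{def:qsidomrepspc} exactly as the paper does (inheritance on $\qsiFun(\qdrElm)$, the max/min structure of the lift for Conditions~\ref{def:qsidomrepspc(set:plr)} and~\ref{def:qsidomrepspc(set:opp)}, a one-step attractor argument for Condition~\ref{def:qsidomrepspc(set:dom)}, and Propositions~\ref{prp:win},~\ref{prp:qdrpth} and~\ref{prp:qdresc} for Condition~\ref{def:qsidomrepspc(set:qsi)}); for $\prgFun[+]$ you unfold the while-loop into a decreasing sequence of sets and increasing sequence of {\qdr}s, use Lemma~\ref{lmm:qdrstr} to preserve the witness, and close with $\winFun$ and Proposition~\ref{prp:win}, as in the paper.

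There is, however, one genuine flaw: clause (d) of your loop invariant --- that Condition~\ref{def:qsidomrepspc(set:opp)} ``survives for every $\OppSym$-position'' at every iteration --- is false, as is the supporting claim that each lifted $\OppSym$-measure stays at or below the \emph{current} stretch along its internal moves. The paper's own example around Figure~\ref{fig:mpgexmb} refutes it: starting from $\QSet = \dmnFun(\qdrElm) = \{ \cSym, \dSym, \fSym \}$, the first iteration lifts the unique best-escape position $\dSym$ from measure $1$ to $\mfElm[\qdrElm](\bSym) + \dSym = 2$, while the internal adjacent $\cSym$ still has measure $1$; at that moment the stretch along $(\dSym, \cSym)$ is $1 < 2$, so Condition~\ref{def:qsidomrepspc(set:opp)} fails at $\dSym$, and it is restored only two iterations later, when $\cSym$ is itself lifted to $3$. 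This is precisely why the paper's induction along $\qdrElm[0] \sqsubseteq \qdrElm[1] \sqsubseteq \ldots$ carries only Conditions~\ref{def:qsidomrepspc(set:qsi)} and~\ref{def:qsidomrepspc(set:plr)}, and establishes Condition~\ref{def:qsidomrepspc(set:opp)} only for the final output: it first proves that the sequence of minimal best-escape forfeits $\fElm[0], \fElm[1], \ldots$ is non-decreasing (a case analysis on the owner of a hypothetical violating escape position --- the real technical core of the theorem), observes that a position lifted at step $i$ increases by \emph{exactly} $\fElm[i]$, and then refutes any violation of Condition~\ref{def:qsidomrepspc(set:opp)} in $\prgFun[+](\qdrElm)$ by contradiction. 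Your last paragraph does state the right quantitative fact (every escaping position grows by at least the minimal forfeit), but packaged as an invariant asserting intermediate preservation of Condition~\ref{def:qsidomrepspc(set:opp)} the induction cannot close; you must weaken the invariant as the paper does and actually prove forfeit monotonicity, which your sketch assumes rather than establishes.
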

  \begin{proof}
    The proof proceeds by showing that, for each $\qdrElm \in \QDRSet$, the
    elements $\prgFun[0](\qdrElm)$ and $\prgFun[+](\qdrElm)$ are \qdr too.
    We also prove that $\qdrElm \sqsubseteq \prgFun[0](\qdrElm)$ and $\qdrElm
    \sqsubseteq \prgFun[+](\qdrElm)$.
    The two operators are analyzed separately.
    \begin{itemize}
      \item
        \textbf{[$\prgFun[0]$].}
        Let $\qdrElm[][\star] \defeq \prgFun[0](\qdrElm) = \sup \{ \qdrElm,
        \liftFun(\qdrElm, \dual{\qsiFun(\qdrElm)}, \PosSet) \} \sqsupseteq
        \qdrElm$.
        It is obvious, so, that $\prgFun[0]$ is inflationary.
        Consider now a position $\posElm \in \qsiFun(\qdrElm[][\star])$.
        Recall that $\mfElm[{\qdrElm[][\star]}](\posElm) > 0$.
        If $\posElm \in \qsiFun(\qdrElm)$, by definition of the lift operator,
        it holds that $\mfElm[{\qdrElm[][\star]}](\posElm) =
        \mfElm[\qdrElm](\posElm)$ and $\strElm[{\qdrElm[][\star]}](\posElm) =
        \strElm[\qdrElm](\posElm)$, thus the appropriate condition between
        Conditions~\ref{def:qsidomrepspc(set:plr)}
        and~\ref{def:qsidomrepspc(set:opp)} of Definition~\ref{def:qsidomrepspc}
        is verified, since $\qdrElm \in \QDRSet$.
        Thus, assume $\posElm \in \dual{\qsiFun(\qdrElm)}$.
        If $\posElm \in \PosSet[\PlrSym]$, we have that
        $\mfElm[{\qdrElm[][\star]}](\posElm) = \max \set{
        \mfElm[\qdrElm](\uposElm) + \posElm }{ \uposElm \in \MovRel(\posElm) } =
        \mfElm[\qdrElm](\strElm[{\qdrElm[][\star]}](\posElm)) + \posElm =
        \mfElm[{\qdrElm[][\star]}](\strElm[{\qdrElm[][\star]}](\posElm)) +
        \posElm$, since $\strElm[{\qdrElm[][\star]}](\posElm) \in
        \qsiFun(\qdrElm)$.
        As a consequence, Condition~\ref{def:qsidomrepspc(set:plr)} is
        satisfied.
        If $\posElm \in \PosSet[\OppSym]$, instead, we have that
        $\mfElm[{\qdrElm[][\star]}](\posElm) = \min \set{
        \mfElm[\qdrElm](\uposElm) + \posElm }{ \uposElm \in \MovRel(\posElm) }$,
        which implies $\mfElm[{\qdrElm[][\star]}](\posElm) \leq
        \mfElm[\qdrElm](\uposElm) + \posElm =
        \mfElm[{\qdrElm[][\star]}](\uposElm) + \posElm$, for all adjacents
        $\uposElm \in \MovRel(\posElm)$, as required by
        Condition~\ref{def:qsidomrepspc(set:opp)}.
        To complete the proof that $\prgFun[0]$ is a total function from
        $\QDRSet$ to itself, we need to show that $\qdrElm[][\star]$ satisfies
        Conditions~\ref{def:qsidomrepspc(set:dom)}
        and~\ref{def:qsidomrepspc(set:qsi)} too.
        It is immediate to see that $\denot{\mfElm[\qdrElm]}[\PlrSym] \subseteq
        \denot{\mfElm[{\qdrElm[][\star]}]}[\PlrSym]$.
        Since $\qdrElm$ is a \qdr, $\denot{\mfElm[\qdrElm]}[\PlrSym]$ is a
        $\PlrSym$-dominion.
        Moreover, for all positions $\posElm \in
        \denot{\mfElm[{\qdrElm[][\star]}]}[\PlrSym] \setminus
        \denot{\mfElm[\qdrElm]}[\PlrSym]$, it holds that
        $\strElm[{\qdrElm[][\star]}](\posElm) \in
        \denot{\mfElm[\qdrElm]}[\PlrSym]$, if $\posElm \in \PosSet[\PlrSym]$,
        and $\MovRel(\posElm) \subseteq \denot{\mfElm[\qdrElm]}[\PlrSym]$,
        otherwise.
        Therefore, $\denot{\mfElm[{\qdrElm[][\star]}]}[\PlrSym]$ is necessarily
        a $\PlrSym$-dominion, so Condition~\ref{def:qsidomrepspc(set:dom)} is
        verified.
        Finally, let us focus on Condition~\ref{def:qsidomrepspc(set:qsi)} and
        consider a $(\strElm[{\qdrElm[][\star]}], \posElm)$-play $\posElm
        \pthElm$.
        If, on the one hand, $\pthElm$ is infinite and does not meet $\posElm$,
        thanks to Proposition~\ref{prp:win}, we have $\wghFun(\pthElm) =
        \infty$, thus $\wghFun(\posElm \pthElm) = \infty$ and, so,
        $\wghFun(\posElm \pthElm) > 0$.
        If $\pthElm$ is finite, instead, it holds that $\lst{\pthElm} \in
        \escFun(\qdrElm, \qsiFun(\qdrElm))$ and, so,
        $\mfElm[{\qdrElm[][\star]}](\lst{\pthElm}) = \wghFun(\lst{\pthElm})$,
        due to Proposition~\ref{prp:qdresc}.
        Now, by Proposition~\ref{prp:qdrpth}, we have that
        $\mfElm[\qdrElm](\fst{\pthElm}) \leq \mfElm[\qdrElm](\lst{\pthElm}) +
        \wghFun(\pthElm_{< \ell - 1}) = \wghFun(\lst{\pthElm}) +
        \wghFun(\pthElm_{< \ell - 1}) = \wghFun(\pthElm)$, where $\ell \in
        \SetN$ is the length of $\pthElm$
        Moreover, $0 < \mfElm[\qdrElm](\posElm) \leq
        \mfElm[\qdrElm](\fst{\pthElm}) + \posElm =
        \mfElm[\qdrElm](\fst{\pthElm}) + \wghFun(\posElm)$, thanks to the
        previously proved Conditions~\ref{def:qsidomrepspc(set:plr)}
        and~\ref{def:qsidomrepspc(set:opp)}.
        Hence, $0 < \mfElm[\qdrElm](\posElm) \leq \mfElm[\qdrElm](\fst{\pthElm})
        + \wghFun(\posElm) \leq \wghFun(\posElm) + \wghFun(\pthElm) =
        \wghFun(\posElm \pthElm)$, as required by the definition of quasi
        $\PlrSym$-dominion.
        Finally, if $\pthElm$ is infinite and does meet $\posElm$, it can be
        decomposed as $(\posElm \pthElm')^{\omega}$, where $\pthElm$ is a
        non-empty finite path that does not meet $\posElm$.
        Then, by exploiting the same reasoning done above for the case where
        $\pthElm$ is finite, we have that $\wghFun(\posElm \pthElm') > 0$,
        which implies $\wghFun(\pthElm) = \wghFun((\posElm \pthElm')^{\omega}) =
        \infty$.
      \item
        \textbf{[$\prgFun[+]$].}
        Let $\qdrElm[][\star] \defeq \prgFun[+](\qdrElm)$ and consider the two
        infinite monotone sequences $\QSet[0] \supseteq \QSet[1] \supseteq
        \ldots$ and $\qdrElm[0] \sqsubseteq \qdrElm[1] \sqsubseteq \ldots$
        defined as follows: $\QSet[0] \defeq \dmnFun(\qdrElm)$ and $\qdrElm[0]
        \defeq \qdrElm$; $\QSet[i + 1] \defeq \QSet[i] \setminus \ESet[i]$ and
        $\qdrElm[i + 1] = \liftFun(\qdrElm[i], \ESet[i], \dual{\QSet[i]})$,
        where $\ESet[i] \defeq \bepFun(\qdrElm[i], \QSet[i]) \subseteq
        \escFun(\qdrElm[i], \QSet[i])$, for all $i \in \SetN$.
        Since $\card{\QSet[0]} < \infty$, there necessarily exists an index $k
        \in \SetN$ such that $\QSet[k + 1] = \QSet[k]$, $\qdrElm[k + 1] =
        \qdrElm[k]$.
        Moreover, observe that $\qdrElm[][\star] = \winFun(\qdrElm[k],
        \QSet[k])$.
        We first prove, by induction on the index $i \in \SetN$ of the
        sequences, that every $\qdrElm[i]$ satisfies
        Conditions~\ref{def:qsidomrepspc(set:qsi)}
        and~\ref{def:qsidomrepspc(set:plr)} of
        Definition~\ref{def:qsidomrepspc}.
        Finally, we show that $\qdrElm[][\star]$ is a \qdr.

        The base case $i = 0$ is trivial, since $\qdrElm[i] = \qdrElm$ is a
        \qdr.
        Now, let us consider the inductive case $i > 0$.
        Since the lift operator only modifies the measure of positions belonging
        to $\ESet[i - 1] \subseteq \QSet[i - 1] \subseteq \dmnFun(\qdrElm)
        \subseteq \qsiFun(\qdrElm)$, it immediately follows that
        $\qsiFun(\qdrElm[i]) = \qsiFun(\qdrElm[i - 1]) = \qsiFun(\qdrElm)$.
        Moreover, if $\strElm[{\qdrElm[i]}](\posElm) \neq \strElm[{\qdrElm[i -
        1]}](\posElm)$, we have that $\mfElm[{\qdrElm[i - 1]}](\posElm) <
        \mfElm[{\qdrElm[i]}](\posElm) =
        \mfElm[{\qdrElm[i]}](\strElm[{\qdrElm[i]}](\posElm)) = \mfElm[{\qdrElm[i
        - 1]}](\strElm[{\qdrElm[i]}](\posElm))$, for all positions $\posElm \in
        \qsiFun(\qdrElm[i]) \cap \PosSet[\PlrSym]$, where the latter equality is
        due to the fact that $\strElm[{\qdrElm[i]}](\posElm) \not\in \ESet[i -
        1]$.
        Thus, by Lemma~\ref{lmm:qdrstr}, it holds that $\strElm[{\qdrElm[i]}]$
        is a $\PlrSym$-witness for $\qsiFun(\qdrElm[i])$, \ie,
        Condition~\ref{def:qsidomrepspc(set:qsi)} is verified.
        Also, Condition~\ref{def:qsidomrepspc(set:plr)} directly follows from
        the definition of the $\PlrSym$-strategy inside the lift operator.

        At this point, we can conclude the proof by showing that
        $\qdrElm[][\star]$ is a \qdr.
        Indeed, by Lemma~\ref{lmm:qdrstr}, $\strElm[{\qdrElm[][\star]}]$ is a
        $\PlrSym$-witness for $\qsiFun(\qdrElm[k]) = \qsiFun(\qdrElm)$, so,
        Condition~\ref{def:qsidomrepspc(set:qsi)} is satisfied.
        Similarly to the inductive analysis developed above,
        Condition~\ref{def:qsidomrepspc(set:plr)} directly follows from the
        definition of the $\PlrSym$-strategy inside the $\winFun$ function.
        Moreover, the set $\QSet[k]$ is a closed subset of
        $\qsiFun(\qdrElm[k])$, since $\ESet[k] = \emptyset$ and, so,
        $\escFun(\qdrElm[k], \QSet[k]) = \emptyset$.
        Therefore, $\QSet[k] \subseteq \WinSet[\PlrSym]$, by
        Proposition~\ref{prp:win}.
        In addition, all positions in $\denot{\mfElm[{\qdrElm[k]}]}[\PlrSym]
        \setminus (\denot{\mfElm}[\PlrSym] \cup \QSet[k])$ necessarily reach
        $(\denot{\mfElm}[\PlrSym] \cup \QSet[k]) \subseteq \WinSet[\PlrSym]$.
        As a consequence, Condition~\ref{def:qsidomrepspc(set:dom)} is verified
        as well.

        It remains to prove Condition~\ref{def:qsidomrepspc(set:opp)}.
        To do so, let $\fElm[i] \defeq \min[\posElm \in {\escFun(\qdrElm[i],
        \QSet[i])}] \befFun(\mfElm[{\qdrElm[i]}], \QSet[i], \posElm)$.
        We now first show that the sequence of natural numbers $\fElm[0],
        \fElm[1], \ldots$ is monotone, \ie, $\fElm[i] \leq \fElm[i + 1]$.
        Suppose by contradiction that $\fElm[i] > \fElm[i + 1]$, for some index
        $i \in \SetN$.
        Then, there necessarily exists a position $\posElm \in \escFun(\qdrElm[i
        + 1], \QSet[i + 1]) \setminus \escFun(\qdrElm[i], \QSet[i])$ with
        $\posElm \in \ESet[i + 1]$ such that $\fElm[i + 1] =
        \befFun(\mfElm[{\qdrElm[i + 1]}], \QSet[i + 1],\posElm) < \fElm[i]$.
        We proceeds by a case analysis on the owner of the position $\posElm$.
        \begin{itemize}
          \item
            \textbf{[$\posElm \in \PosSet[\PlrSym]$].}
            By definition of the best-escape forfeit function, we have that
            $\fElm[i + 1] = \max \set{ \mfElm[{\qdrElm[i + 1]}](\uposElm) +
            \posElm - \mfElm[{\qdrElm[i + 1]}](\posElm) }{ \uposElm \in
            \MovRel(\posElm) \setminus \QSet[i + 1] } \geq
            \mfElm[{\qdrElm[i + 1]}](\strElm[{\qdrElm[i]}](\posElm)) + \posElm -
            \mfElm[{\qdrElm[i + 1]}](\posElm)$, since
            $\strElm[{\qdrElm[i]}](\posElm) \in \ESet[i]$ and, so,
            $\strElm[{\qdrElm[i]}](\posElm) \not\in \QSet[i + 1]$.
            Therefore, the following equalities and inequalities hold, which
            lead to the contradiction $\fElm[i] \leq \fElm[i + 1] < \fElm[i]$:
            \begin{linenomath}
            \begin{align*}
              \fElm[i + 1]
              & \geq \mfElm[{\qdrElm[i + 1]}](\strElm[{\qdrElm[i]}](\posElm)) +
              \posElm - \mfElm[{\qdrElm[i + 1]}](\posElm) \\
              & = \mfElm[{\qdrElm[i + 1]}](\strElm[{\qdrElm[i]}](\posElm)) +
              \wghFun(\posElm) - \mfElm[{\qdrElm[i + 1]}](\posElm) \\
              & = \mfElm[{\qdrElm[i]}](\strElm[{\qdrElm[i]}](\posElm)) +
              \fElm[i] + \wghFun(\posElm) - \mfElm[{\qdrElm[i + 1]}](\posElm) \\
              & = \mfElm[{\qdrElm[i]}](\strElm[{\qdrElm[i]}](\posElm)) +
              \fElm[i] + \wghFun(\posElm) - \mfElm[{\qdrElm[i]}](\posElm) \\
              & = \mfElm[{\qdrElm[i]}](\strElm[{\qdrElm[i]}](\posElm)) + \posElm
              - \mfElm[{\qdrElm[i]}](\posElm) + \fElm[i] \\
              & \geq \fElm[i].
            \end{align*}
            \end{linenomath}
            Notice that the first and last equality are due to the definition
            of the measure stretch operator.
            The second one is derived from the fact that
            $\strElm[{\qdrElm[i]}](\posElm) \in \ESet[i]$, while the third one
            from $\posElm \in \ESet[i + 1]$, which implies $\mfElm[{\qdrElm[i +
            1]}](\posElm) = \mfElm[{\qdrElm[i]}](\posElm)$.
            Finally, the last inequality follows from
            Condition~\ref{def:qsidomrepspc(set:plr)} applied to $\qdrElm[i]$,
            \ie, $\mfElm[{\qdrElm[i]}](\posElm) \leq
            \mfElm[{\qdrElm[i]}](\strElm[{\qdrElm[i]}](\posElm)) + \posElm$.
          \item
            \textbf{[$\posElm \in \PosSet[\OppSym]$].}
            Again by definition of the best-escape forfeit function, we have
            that
            $\fElm[i + 1] = \min \set{ \mfElm[{\qdrElm[i + 1]}](\uposElm) +
            \posElm - \mfElm[{\qdrElm[i + 1]}](\posElm) }{ \uposElm \in
            \MovRel(\posElm) \setminus \QSet[i + 1] }$.
            In addition, $\MovRel(\posElm) \setminus \QSet[i + 1] \subseteq
            \ESet[i]$
            Therefore, the following equalities hold:
            \begin{linenomath}
            \begin{align*}
              \fElm[i + 1]
              & = \min {\set{ \mfElm[{\qdrElm[i + 1]}](\uposElm) + \posElm -
              \mfElm[{\qdrElm[i + 1]}](\posElm) }{ \uposElm \in \MovRel(\posElm)
              \setminus \QSet[i + 1] }} \\
              & = \min {\set{ \mfElm[{\qdrElm[i + 1]}](\uposElm) +
              \wghFun(\posElm) - \mfElm[{\qdrElm[i + 1]}](\posElm) }{ \uposElm
              \in \MovRel(\posElm) \setminus \QSet[i + 1] }} \\
              & = \min {\set{ \mfElm[{\qdrElm[i]}](\uposElm) + \fElm[i] +
              \wghFun(\posElm) - \mfElm[{\qdrElm[i + 1]}](\posElm) }{ \uposElm
              \in \MovRel(\posElm) \setminus \QSet[i + 1] }} \\
              & = \min {\set{ \mfElm[{\qdrElm[i]}](\uposElm) + \fElm[i] +
              \wghFun(\posElm) - \mfElm[{\qdrElm[i]}](\posElm) }{ \uposElm
              \in \MovRel(\posElm) \setminus \QSet[i + 1] }} \\
              & = \min {\set{ \mfElm[{\qdrElm[i]}](\uposElm) + \posElm -
              \mfElm[{\qdrElm[i]}](\posElm) + \fElm[i] }{ \uposElm \in
              \MovRel(\posElm) \setminus \QSet[i + 1] }} \\
              & \geq \fElm[i].
            \end{align*}
            \end{linenomath}
            Notice that the second and last equality are due to the definition
            of the measure stretch operator.
            The third one is derived from the fact that $\uposElm \in
            \MovRel(\posElm) \setminus \QSet[i + 1] \subseteq \ESet[i]$, while
            the fourth one from $\posElm \in \ESet[i + 1]$, which implies
            $\mfElm[{\qdrElm[i + 1]}](\posElm) = \mfElm[{\qdrElm[i]}](\posElm)$.
            Finally, the last inequality follows from
            Condition~\ref{def:qsidomrepspc(set:opp)} applied to $\qdrElm$, \ie,
            $\mfElm[{\qdrElm[i]}](\posElm) = \mfElm[\qdrElm](\posElm) \leq
            \mfElm[\qdrElm](\uposElm) + \posElm \leq
            \mfElm[{\qdrElm[i]}](\uposElm) + \posElm$, for all adjacents
            $\uposElm \in \MovRel(\posElm)$.
        \end{itemize}
        Now suppose by contradiction that
        Condition~\ref{def:qsidomrepspc(set:opp)} does not hold for
        $\qdrElm[][\star]$.
        Then, there exist a $\OppSym$-position $\posElm \in
        \qsiFun(\qdrElm[][\star]) \cap \PosSet[\OppSym]$ and one of its
        adjacents $\uposElm \in \MovRel(\posElm)$ such that
        $\mfElm[{\qdrElm[][\star]}](\uposElm) + \posElm <
        \mfElm[{\qdrElm[][\star]}](\posElm)$.
        Due to the process used to compute $\qdrElm[][\star]$, there are indexes
        $i, j \in \numcc{0}{k}$ such that $\mfElm[{\qdrElm[][\star]}](\uposElm)
        = \mfElm[{\qdrElm[i + 1]}](\uposElm) = \mfElm[\qdrElm](\uposElm) +
        \fElm[i]$ and $\mfElm[{\qdrElm[][\star]}](\posElm) = \mfElm[{\qdrElm[j +
        1]}](\posElm) = \mfElm[\qdrElm](\posElm) + \fElm[j]$.
        Now, by Condition~\ref{def:qsidomrepspc(set:opp)} applied to $\qdrElm$,
        we have $\mfElm[\qdrElm](\posElm) \leq \mfElm[\qdrElm](\uposElm) +
        \posElm$, which implies that $0 \leq \mfElm[\qdrElm](\uposElm) +
        \posElm - \mfElm[\qdrElm](\posElm) < \fElm[j] - \fElm[i]$ and,
        consequently, both $i < j$ and $\uposElm \not\in \QSet[j]$.
        However,
        \begin{linenomath}
        \begin{align*}
          \fElm[j] - \fElm[i]
          & = \min {\set{ \mfElm[{\qdrElm[j]}](\zposElm) + \posElm -
          \mfElm[{\qdrElm[j]}](\posElm) }{ \zposElm \in \MovRel(\posElm)
          \setminus \QSet[j] }} - \fElm[i] \\
          & \leq \mfElm[{\qdrElm[j]}](\uposElm) + \posElm -
          \mfElm[{\qdrElm[j]}](\posElm) - \fElm[i] \\
          & = \mfElm[{\qdrElm[j]}](\uposElm) + \posElm -
          \mfElm[\qdrElm](\posElm) - \fElm[i] \\
          & = \mfElm[{\qdrElm[i + 1]}](\uposElm) + \posElm -
          \mfElm[\qdrElm](\posElm) - \fElm[i] \\
          & = (\mfElm[\qdrElm](\uposElm) + \fElm[i]) + \posElm -
          \mfElm[\qdrElm](\posElm) - \fElm[i] \\
          & = \mfElm[\qdrElm](\uposElm) + \posElm -
          \mfElm[\qdrElm](\posElm),
        \end{align*}
        \end{linenomath}
        leading to the contradiction $\mfElm[\qdrElm](\uposElm) + \posElm -
        \mfElm[\qdrElm](\posElm) < \fElm[j] - \fElm[i] \leq
        \mfElm[\qdrElm](\uposElm) + \posElm - \mfElm[\qdrElm](\posElm)$.
        Notice that the first equality is due to the definition of the
        best-escape forfeit function.
        The second and third ones, instead, follows from the fact that $\posElm$
        and $\uposElm$ changed their values at iterations $j + 1$ and $i + 1$,
        respectively.
        Finally, the fourth equality derives from the operation of lift and
        best-escape forfeit computed on $\uposElm$.
    \end{itemize}
  \end{proof}

  \begin{lemma}
    \label{lmm:qdrbnd}
    Let $\qdrElm[][\star] \defeq \prgFun[+](\qdrElm)$, for some $\qdrElm \in
    \QDRSet$, and $\SElm \defeq \sum \set{ \wghFun(\posElm) \in \SetN }{ \posElm
    \in \PosSet \land \wghFun(\posElm) > 0 }$.
    Then, for all positions $\posElm \in \qsiFun(\qdrElm[][\star])$ with
    $\qdrElm[][\star](\posElm) \neq \infty$, it holds that
    $\qdrElm[][\star](\posElm) \leq \SElm$.
  \end{lemma}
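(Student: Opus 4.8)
The plan is to prove the bound by a weight-accounting argument along a single witness-compatible path, using Proposition~\ref{prp:qdrpth} as the main engine. First, by Theorem~\ref{thm:tot}, $\qdrElm[][\star] = \prgFun[+](\qdrElm)$ is again a \qdr; write $\mfElm \defeq \mfElm[{\qdrElm[][\star]}]$, let $\strElm \defeq \strElm[{\qdrElm[][\star]}]$ be its $\PlrSym$-witness, and let $\QSet \defeq \qsiFun(\qdrElm[][\star])$ be the associated quasi dominion. Fix a position $\posElm \in \QSet$ with $0 < \mfElm(\posElm) < \infty$. The goal is to exhibit a finite, witness-compatible, \emph{simple} path from $\posElm$ to a zero-measure position, so that Proposition~\ref{prp:qdrpth} bounds $\mfElm(\posElm)$ by the weight of that path, which is at most $\SElm$.

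The core step is to show that from every finite-measure position of $\QSet$ player $\OppSym$ can drive the play, against the witness $\strElm$, out of $\QSet$ and into the zero-measure region $\dual{\QSet}$. I would argue this by contradiction: the set of positions reachable from $\posElm$ by following $\strElm$ at $\PlrSym$-positions and arbitrary moves at $\OppSym$-positions, were it never to meet $\dual{\QSet}$, would be a subset of $\QSet$ closed under the witness, all of whose induced plays are infinite; by Proposition~\ref{prp:win} these plays then have infinite weight, so this set is a closed weak quasi $\PlrSym$-dominion. But the terminal $\winFun$ phase of $\prgFun[+]$ assigns measure $\infty$ to precisely such trapped positions, contradicting $\mfElm(\posElm) < \infty$. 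Hence a witness-compatible escaping path $\posElm \pthElm \uposElm$ with $\uposElm \in \dual{\QSet}$, i.e. $\mfElm(\uposElm) = 0$, must exist.

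Next I would make this escaping path simple. A repeated $\PlrSym$-position would force periodicity, since $\strElm$ is deterministic and its successor is fixed, so the play could never reach $\dual{\QSet}$; thus no $\PlrSym$-position recurs on an escaping path. Any repeated $\OppSym$-position can instead be excised, because at $\OppSym$-positions every outgoing move is compatible with $\strElm$, and the resulting shorter path still ends in $\uposElm$ and still avoids $\PlrSym$-repeats. Iterating yields a simple witness-compatible path $\posElm \pthElm \uposElm$ with $\mfElm(\uposElm) = 0$. Proposition~\ref{prp:qdrpth} then gives $\mfElm(\posElm) \leq \wghFun(\posElm \pthElm) + \mfElm(\uposElm) = \wghFun(\posElm \pthElm)$, and since $\posElm \pthElm$ visits pairwise distinct positions, $\wghFun(\posElm \pthElm) \leq \SElm$, the sum of all positive weights. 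Therefore $\mfElm(\posElm) \leq \SElm$.

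The hard part is the escape claim of the second paragraph: converting the hypothesis $\mfElm(\posElm) \neq \infty$ into the statement that $\OppSym$ can actually reach $\dual{\QSet}$. This is exactly where the assumption that $\qdrElm[][\star]$ is the \emph{output} of $\prgFun[+]$ is indispensable, since for a generic \qdr a finite-measure position could be trapped inside a closed quasi dominion and still carry a large finite measure. Pinning this down cleanly requires reusing the construction from the proof of Theorem~\ref{thm:tot}, namely the decreasing sequence $\QSet[0] \supseteq \ldots \supseteq \QSet[k]$ together with the closedness $\escFun(\qdrElm[k], \QSet[k]) = \emptyset$ and the fact that $\winFun(\qdrElm[k], \QSet[k])$ lifts $\QSet[k]$ (and every position forced into it) to $\infty$; everything else in the argument is routine.
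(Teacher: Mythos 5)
Your high-level plan --- bound $\mfElm[{\qdrElm[][\star]}](\posElm)$ by the weight of a simple, witness-compatible path into the zero-measure region and invoke Proposition~\ref{prp:qdrpth} --- is a legitimate route, but the proposal never actually proves the step that carries all the weight, and the justification you offer for it is incorrect. You claim that if the set $R$ of positions reachable from $\posElm$ (following $\strElm[{\qdrElm[][\star]}]$ at $\PlrSym$-positions, arbitrary moves at $\OppSym$-positions) never meets $\dual{\qsiFun(\qdrElm[][\star])}$, then a contradiction follows because ``the terminal $\winFun$ phase of $\prgFun[+]$ assigns measure $\infty$ to precisely such trapped positions.'' That is not what $\winFun$ does: it is applied exactly once, to the set $\QSet[k] \subseteq \dmnFun(\qdrElm)$ at which the escape-removal iteration stabilizes, a set computed \emph{during} the run with respect to the \emph{intermediate} {\qdr}s, not to post-hoc sets that happen to be closed under the \emph{output} witness. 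A priori your trapped set $R$ may contain positions of $\qsiFun(\qdrElm) \setminus \dmnFun(\qdrElm)$, which $\prgFun[+]$ never touches, and positions of the intermediate layers $\ESet[i]$, none of which are in the scope of $\winFun$. The statement you need --- that every finite-measure position of $\qsiFun(\qdrElm[][\star])$ reaches $\dual{\qsiFun(\qdrElm[][\star])}$ along a witness-compatible path, equivalently that trapped positions all carry measure $\infty$ --- is essentially the lemma itself, and establishing it requires exactly the layered bookkeeping you relegate to a side remark: a position lifted at round $i$ takes its (finite) value from a finite-measure target in $\dual{\QSet[i]}$, hence from an earlier layer $\ESet[j]$ with $j < i$ or from outside $\dmnFun(\qdrElm)$; and inside $\qsiFun(\qdrElm) \setminus \dmnFun(\qdrElm)$ one reaches $\escFun(\qdrElm, \qsiFun(\qdrElm))$ using Conditions~\ref{def:qsidomrepspc(set:plr)} and~\ref{def:qsidomrepspc(set:opp)} of Definition~\ref{def:qsidomrepspc} together with the fact that witness-compatible cycles inside a quasi dominion have positive weight. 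Your closing remark that ``everything else in the argument is routine'' inverts the situation: that induction \emph{is} the argument, and it is missing.

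For contrast, the paper avoids any global escape-path claim for the output \qdr. It splits on $\posElm$: for $\posElm \in \qsiFun(\qdrElm) \setminus \dmnFun(\qdrElm)$ the measure is unchanged and is bounded via a path to $\escFun(\qdrElm, \qsiFun(\qdrElm))$ using Propositions~\ref{prp:qdrpth} and~\ref{prp:qdresc}; for $\posElm \in \dmnFun(\qdrElm)$ it runs an induction over the rounds $\QSet[0] \supseteq \ldots \supseteq \QSet[k]$ of $\prgFun[+]$ with the growing budget $\SElm[i]$, the sum of the positive weights outside $\QSet[i]$, so that each lift adds at most the weight of the lifted position and no weight is counted twice. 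Finally, a small but real flaw in your path-simplification step: a repeated $\PlrSym$-position does \emph{not} force the play to be periodic, since $\OppSym$ may choose differently at later visits to intervening $\OppSym$-positions; this does no harm, though, because excising the cycle between two occurrences of \emph{any} repeated position preserves witness-compatibility and the endpoint, which is all the simplification needs.
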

  \begin{proof}
    Suppose by contradiction that there exists a position $\posElm \in
    \qsiFun(\qdrElm[][\star])$ with $\qdrElm[][\star](\posElm) \neq \infty$, but
    $\qdrElm[][\star](\posElm) > \SElm$.
    If $\posElm \in \qsiFun(\qdrElm) \setminus \dmnFun(\qdrElm)$, then
    $\qdrElm[][\star](\posElm) = \qdrElm(\posElm)$.
    Moreover, there exists a finite path $\pthElm$ compatible with the
    $\PlrSym$-strategy $\strElm[\qdrElm]$ and entirely contained in
    $\qsiFun(\qdrElm) \setminus \dmnFun(\qdrElm)$, which starts in $\posElm$ and
    ends in $\escFun(\qdrElm, \qsiFun(\qdrElm))$, \ie, $\fst{\pthElm} = \posElm$
    and $\lst{\pthElm} \in \escFun(\qdrElm, \qsiFun(\qdrElm))$.
    By Propositions~\ref{prp:qdrpth} and~\ref{prp:qdresc}, we have that
    $\SElm < \mfElm[\qdrElm](\posElm) \leq \wghFun(\pthElm) \leq
    \SElm[][\star]$, with $\SElm[][\star] \defeq \sum \set{ \wghFun(\posElm)
    \in \SetN }{ \posElm \in \qsiFun(\qdrElm) \setminus \dmnFun(\qdrElm) \land
    \wghFun(\posElm) > 0 }$, where the last inequality is obviously due to the
    fact that there are no repeated positions in $\pthElm$, being it finite.
    However, $\SElm[][\star] \leq \SElm$, which means that a contradiction has
    been reached with $\posElm \in \qsiFun(\qdrElm) \setminus \dmnFun(\qdrElm)$.
    Thus, assume $\posElm \in \dmnFun(\qdrElm)$ and consider the two infinite
    monotone sequences $\QSet[0] \supseteq \QSet[1] \supseteq \ldots$ and
    $\qdrElm[0] \sqsubseteq \qdrElm[1] \sqsubseteq \ldots$ defined as in the
    proof of Theorem~\ref{thm:tot}: $\QSet[0] \defeq \dmnFun(\qdrElm)$ and
    $\qdrElm[0] \defeq \qdrElm$; $\QSet[i + 1] \defeq \QSet[i] \setminus
    \ESet[i]$ and $\qdrElm[i + 1] = \liftFun(\qdrElm[i], \ESet[i],
    \dual{\QSet[i]})$, where $\ESet[i] \defeq \bepFun(\qdrElm[i], \QSet[i])
    \subseteq \escFun(\qdrElm[i], \QSet[i])$, for all $i \in \SetN$.
    Also, let $\SElm[0] \leq \SElm[1] \leq \ldots < \top$ be the sequence of
    natural numbers defined as $\SElm[i] \defeq \sum \set{ \wghFun(\posElm) \in
    \SetN }{ \posElm \in \qsiFun(\qdrElm) \setminus \QSet[i] \land
    \wghFun(\posElm) > 0 }$.
    Since $\qdrElm[][\star](\posElm) \neq \infty$, there exists an index $k$
    such that $\posElm \in \ESet[k]$ with $\befFun(\qdrElm[k], \QSet[k],
    \posElm) < \infty$.
    Therefore, to prove the thesis, it suffices to show that
    $\mfElm[{\qdrElm[i]}](\zposElm) \leq \SElm[i]$, for all positions $\zposElm
    \in \QSet[i]$ and index $i \in \numcc{0}{k}$.
    The base case $i = 0$ follows by applying the same reasoning previously done
    for the case $\posElm \in \qsiFun(\qdrElm) \setminus \dmnFun(\qdrElm)$ and
    by noticing that $\SElm[0] = \SElm[][\star]$.
    Now, let $i > 0$.
    By definition of the lift operator, there exists at least one adjacent
    $\xposElm$ of $\zposElm$ such that $\mfElm[{\qdrElm[i + 1]}](\zposElm) =
    \mfElm[{\qdrElm[i]}](\xposElm) + \zposElm$.
    By the inductive hypothesis, $\mfElm[{\qdrElm[i]}](\xposElm) \leq \SElm[i]$.
    Thus, $\mfElm[{\qdrElm[i + 1]}](\zposElm) \leq \SElm[i] + \wghFun(\zposElm)
    \leq \SElm[i + 1]$, since $\wghFun(\zposElm) \not\in \SElm[i]$.
  \end{proof}

  \renewcounter{theorem}{thm:ter}
  \begin{theorem}[Totality]
    The solver operator $\solFun \defeq \ifpFun\, \qdrElm \,.\,
    \prgFun[+](\prgFun[0](\qdrElm))$ is a well-defined total function.
    Moreover, for every $\qdrElm \in \QDRSet$ it holds that $\solFun(\qdrElm) =
    (\ifpFun[k]\, \qdrElm[][\star] \,.\,
    \prgFun[+](\prgFun[0](\qdrElm[][\star])))(\qdrElm)$, for some index $k \leq
    n \cdot (\SElm + 1)$, where $n$ is the number of positions in the \MPG and
    $\SElm \defeq \sum \set{ \wghFun(\posElm) \in \SetN }{ \posElm \in \PosSet
    \land \wghFun(\posElm) > 0 }$ the total sum of its positive weights.
  \end{theorem}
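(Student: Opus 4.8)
The plan is to establish two things in turn: that $\solFun$ is a well-defined total function, and that its defining fixpoint iteration stabilises within $n \cdot (\SElm + 1)$ steps. For the first point I would rely entirely on Theorem~\ref{thm:tot}: since $\prgFun[0]$ and $\prgFun[+]$ are total inflationary self-maps of $\QDRSet$, their composition $\prgFun[+] \circ \prgFun[0]$ is again a total inflationary self-map of the partial order $\QDRName$. An inflationary operator on a partial order always admits a fixpoint by the Bourbaki--Witt theorem~\cite{Bou49,Wit50}, so the inflationary fixpoint operator $\ifpFun$ is meaningful and $\solFun$ is total on $\QDRSet$.

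For the quantitative bound I would fix $\qdrElm \in \QDRSet$ and follow the iteration $\qdrElm[0] \defeq \qdrElm$, $\qdrElm[i+1] \defeq \prgFun[+](\prgFun[0](\qdrElm[i]))$, which by Theorem~\ref{thm:tot} is a non-decreasing chain in $\QDRName$. The quantity to track is the capped measure sum $\Phi(\qdrElm[i]) \defeq \sum_{\posElm \in \PosSet} \min \{ \mfElm[{\qdrElm[i]}](\posElm), \SElm + 1 \}$, with $\infty$ read as any value above $\SElm + 1$. The decisive structural fact is that in every \qdr every finite measure is at most $\SElm$: this is exactly the content of Lemma~\ref{lmm:qdrbnd} for the outputs of $\prgFun[+]$, and it holds for an arbitrary \qdr by the same under-approximation argument via Propositions~\ref{prp:qdrpth} and~\ref{prp:qdresc}. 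Consequently the cap is faithful, $\Phi$ is integer-valued in $[0, n \cdot (\SElm + 1)]$, and, since measures never decrease, $\Phi$ is non-decreasing and strictly increases exactly when some position's measure strictly increases.

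The heart of the argument, and the step I expect to be the main obstacle, is to show that every non-fixpoint iteration strictly raises some measure --- equivalently, that the composition cannot make a strict step by altering only the $\PlrSym$-witness. I would prove the contrapositive: suppose the step leaves all measures unchanged, and write $\qdrElm[][\circ] \defeq \prgFun[0](\qdrElm[i])$. Then $\prgFun[+]$ changes no measure, so by Lemma~\ref{lmm:qdrchg} we must have $\dmnFun(\qdrElm[][\circ]) = \emptyset$; hence the while-loop of Algorithm~\ref{alg:prg} is vacuous and $\prgFun[+]$ collapses to $\winFun(\qdrElm[][\circ], \emptyset) = \qdrElm[][\circ]$. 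Similarly, since $\prgFun[0]$ leaves every measure fixed, the lift operator updates no witness (it only ever rewrites the strategy where a measure strictly increases) and the supremum returns $\qdrElm[i]$ unchanged, so $\qdrElm[][\circ] = \qdrElm[i]$. Combining, $\qdrElm[i+1] = \qdrElm[i]$, i.e. the step was a fixpoint after all. Unfolding the definitions of $\prgFun[0]$, the lift, and $\winFun$ precisely enough to certify that a pure witness change is impossible is the delicate part of this case analysis.

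With this claim in hand the count is immediate: along the chain each position's measure, being non-decreasing and ranging over $\{0, \dots, \SElm\} \cup \{\infty\}$, can strictly increase at most $\SElm + 1$ times, for a total of at most $n \cdot (\SElm + 1)$ strict measure increases over the whole run; and every non-fixpoint iteration consumes at least one of them. Hence a fixpoint is reached after at most $k \le n \cdot (\SElm + 1)$ applications of $\prgFun[+] \circ \prgFun[0]$, which yields $\solFun(\qdrElm) = (\ifpFun[k]\, \qdrElm[][\star] \,.\, \prgFun[+](\prgFun[0](\qdrElm[][\star])))(\qdrElm)$ with the stated bound, reconfirming along the way that the iteration terminates.
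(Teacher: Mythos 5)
Your overall strategy is exactly the paper's: iterate $\prgFun[+] \cmp \prgFun[0]$ from $\qdrElm$, get a non-decreasing chain of {\qdr}s via Theorem~\ref{thm:tot}, bound the measures via Lemma~\ref{lmm:qdrbnd}, and count. However, one step you rely on is genuinely false: the claim that \emph{every} \qdr has all finite measures bounded by $\SElm$, which you need to make the capped potential $\Phi$ faithful on the whole chain, including its initial element $\qdrElm[0] = \qdrElm$. The under-approximation argument via Propositions~\ref{prp:qdrpth} and~\ref{prp:qdresc} cannot yield this, because it only bounds the measure of a position admitting a witness-compatible path to an escape position of $\qsiFun(\qdrElm)$, and a \qdr may contain closed sub-quasi-dominions with no escape positions at all. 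Concretely, take a single $\PlrSym$-position $\posElm$ with only the self-loop $(\posElm, \posElm) \in \MovRel$ and $\wghFun(\posElm) = 1$, so that $\SElm = 1$; then $\mfElm[\qdrElm](\posElm) \defeq c$ with witness $\strElm[\qdrElm](\posElm) \defeq \posElm$ is a legitimate \qdr for \emph{every} finite $c$: Condition~\ref{def:qsidomrepspc(set:qsi)} holds since the unique induced play has diverging weight, Condition~\ref{def:qsidomrepspc(set:plr)} holds since $c \leq \max\{0, c + 1\}$, and the remaining conditions hold vacuously. Here $\escFun(\qdrElm, \qsiFun(\qdrElm)) = \emptyset$, so no path-to-escape bound exists; Lemma~\ref{lmm:qdrbnd} genuinely uses the fact that $\prgFun[+]$, through $\winFun$, sends every closed part of the quasi dominion to $\infty$. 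The damage to your argument is localized but real: starting from such a $\qdrElm[0]$, the first iteration must raise every above-cap finite measure to $\infty$, a strict measure increase under which $\Phi$ does not move, so your assertion that $\Phi$ strictly increases exactly when some measure strictly increases fails at $i = 0$. The repair is precisely what the paper does in its own proof: invoke Lemma~\ref{lmm:qdrbnd} only for the iterates $\qdrElm[i]$ with $i > 0$ (each of which \emph{is} an output of $\prgFun[+]$) and run the potential count from $\qdrElm[1]$ onwards; the rest of your argument then goes through unchanged.

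Two further remarks. First, the step you single out as the heart of the argument --- that a non-fixpoint iteration cannot consist of a pure witness change --- needs none of your operator analysis, which moreover leans on the ill-defined corner case $\winFun(\qdrElm, \emptyset)$ (an $\argmax$ over an empty move set). It is immediate from Item~\ref{def:qsidomrepspc(ord)} of Definition~\ref{def:qsidomrepspc}: inflationarity gives $\qdrElm[i] \sqsubseteq \qdrElm[i + 1]$, and if the measure components coincide, the order forces $\strElm[{\qdrElm[i]}]$ and $\strElm[{\qdrElm[i + 1]}]$ to agree on every $\PlrSym$-position of the common quasi dominion $\qsiFun(\qdrElm[i]) = \qsiFun(\qdrElm[i + 1])$, i.e., on their entire domain, hence $\qdrElm[i + 1] = \qdrElm[i]$. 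This is also the fact silently underlying the paper's terse count of chain length, of which your $\Phi$ is a clean formalization. Second, your appeal to Bourbaki--Witt for well-definedness requires chain-completeness of $\QDRName$, which neither you nor the paper establishes; the point is moot, though, since the quantitative bound itself shows that the iteration from any $\qdrElm$ stabilizes, which is all that the inflationary fixpoint $\ifpFun$ needs.
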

  \begin{proof}
    Consider the sequence $\qdrElm[0], \qdrElm[1], \ldots$ recursively defined
    as follows: $\qdrElm[0] \defeq (\ifpFun[0]\, \qdrElm[][\star] \,.\,
    \prgFun[+](\prgFun[0](\qdrElm[][\star])))(\qdrElm) = \qdrElm$ and $\qdrElm[i
    + 1] \defeq (\ifpFun[i + 1]\, \qdrElm[][\star] \,.\,
    \prgFun[+](\prgFun[0](\qdrElm[][\star])))(\qdrElm) =
    \prgFun[+](\prgFun[0](\qdrElm[i]))$, for all $i \in \SetN$.
    By induction on the index $i$, thanks to the totality and inflationary
    properties of the progress operators $\prgFun[0]$ and $\prgFun[+]$
    previously proved in Theorem~\ref{thm:tot}, one can easily show that every
    $\qdrElm[i]$ is a \qdr satisfying $\qdrElm[i] \sqsubseteq \qdrElm[i + 1]$.
    Moreover, by Lemma~\ref{lmm:qdrbnd}, we have that $\qdrElm[i](\posElm) \leq
    \SElm$, for all positions $\posElm \in \qsiFun(\qdrElm[i])$ with
    $\qdrElm[i](\posElm) \neq \infty$ and index $i > 0$.
    Now, there are at most $n \cdot (\SElm + 1)$ such {\qdr}s, thus, there
    necessarily exists an index $k \leq n \cdot (\SElm + 1)$ such that
    $\qdrElm[k + 1] = \qdrElm[k]$, which implies $\solFun(\qdrElm) = (\ifpFun\,
    \qdrElm[][\star] \,.\, \liftFun(\qdrElm[][\star]))(\qdrElm) = \qdrElm[k]$.
    Hence, the thesis immediately follows.
  \end{proof}

  \begin{lemma}
    \label{lmm:fixpnt}
    Let $\qdrElm[][\star] \defeq \solFun(\qdrElm)$ be the result of the solver
    operator applied to an arbitrary $\qdrElm \in \QDRSet$.
    Then, $\qdrElm[][\star]$ is a fixpoint of the progress operators, \ie,
    $\qdrElm[][\star] = \prgFun[0](\qdrElm[][\star]) =
    \prgFun[+](\qdrElm[][\star])$.
  \end{lemma}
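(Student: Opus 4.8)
The plan is to extract the two individual fixpoint equations from the single fixpoint equation for the composition $\prgFun[+] \circ \prgFun[0]$, relying only on the inflationary nature of the two operators together with the antisymmetry of the order $\sqsubseteq$. First I would invoke Theorem~\ref{thm:ter}, which guarantees that $\solFun$ is a well-defined total function and that $\qdrElm[][\star] \defeq \solFun(\qdrElm)$ is a genuine fixpoint of the composed operator, namely $\qdrElm[][\star] = \prgFun[+](\prgFun[0](\qdrElm[][\star]))$ (this is the stabilisation $\qdrElm[k + 1] = \qdrElm[k]$ witnessed in the termination argument). Next I would recall from Theorem~\ref{thm:tot} that both $\prgFun[0]$ and $\prgFun[+]$ are inflationary on the \QDR space, i.e. $\qdrElm \sqsubseteq \prgFun[0](\qdrElm)$ and $\qdrElm \sqsubseteq \prgFun[+](\qdrElm)$ for every $\qdrElm \in \QDRSet$.

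The core of the argument is a sandwich. Applying inflationarity of $\prgFun[0]$ to $\qdrElm[][\star]$ and then inflationarity of $\prgFun[+]$ to $\prgFun[0](\qdrElm[][\star])$, and closing the chain with the fixpoint equation above, I obtain
\[
  \qdrElm[][\star] \sqsubseteq \prgFun[0](\qdrElm[][\star]) \sqsubseteq \prgFun[+](\prgFun[0](\qdrElm[][\star])) = \qdrElm[][\star].
\]
Hence $\qdrElm[][\star] \sqsubseteq \prgFun[0](\qdrElm[][\star])$ and $\prgFun[0](\qdrElm[][\star]) \sqsubseteq \qdrElm[][\star]$ simultaneously, so antisymmetry of $\sqsubseteq$ forces $\prgFun[0](\qdrElm[][\star]) = \qdrElm[][\star]$. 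Substituting this equality back into the fixpoint equation for the composition immediately yields $\prgFun[+](\qdrElm[][\star]) = \prgFun[+](\prgFun[0](\qdrElm[][\star])) = \qdrElm[][\star]$, which is exactly the second fixpoint identity. Together these give the claim.

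The one step that deserves genuine care — and which I expect to be the main obstacle — is justifying antisymmetry of $\sqsubseteq$ on $\QDRSet$, since the order of Definition~\ref{def:qsidomrepspc} constrains the strategy component only on those $\PlrSym$-positions of $\qsiFun(\cdot)$ where the two measure functions agree. I would handle this by noting that if $\qdrElm[1] \sqsubseteq \qdrElm[2]$ and $\qdrElm[2] \sqsubseteq \qdrElm[1]$, then the pointwise order on measure functions (itself a partial order) forces $\mfElm[{\qdrElm[1]}] = \mfElm[{\qdrElm[2]}]$ everywhere; consequently $\qsiFun(\qdrElm[1]) = \qsiFun(\qdrElm[2])$, and since the measures now coincide on every $\PlrSym$-position of this common quasi dominion, the strategy-agreement clause applies to the whole strategy domain, giving $\strElm[{\qdrElm[1]}] = \strElm[{\qdrElm[2]}]$ and thus $\qdrElm[1] = \qdrElm[2]$. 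With antisymmetry in hand, the remainder is the purely formal sandwich above and requires no further computation.
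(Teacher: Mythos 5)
Your proof is correct and follows essentially the same route as the paper: the paper's argument is exactly the inflationary sandwich $\qdrElm[][\star] = \prgFun[+](\prgFun[0](\qdrElm[][\star])) \sqsupseteq \prgFun[0](\qdrElm[][\star]) \sqsupseteq \qdrElm[][\star]$, followed by substitution to get the second identity. Your additional verification of antisymmetry of $\sqsubseteq$ on $\QDRSet$ is sound, though the paper simply takes it for granted since Definition~\ref{def:qsidomrepspc} declares $\QDRName$ to be a partial order.
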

  \begin{proof}
    By definition of inflationary fixpoint, $\qdrElm[][\star]$ is a fixpoint of
    the composition of the two progress operators, \ie, $\qdrElm[][\star] =
    \prgFun[+](\prgFun[0](\qdrElm[][\star]))$, which are inflationary functions,
    due to Theorem~\ref{thm:tot}.
    As a consequence, we have that $\qdrElm[][\star] =
    \prgFun[+](\prgFun[0](\qdrElm[][\star])) \sqsupseteq
    \prgFun[0](\qdrElm[][\star]) \sqsupseteq \qdrElm[][\star]$.
    Thus, $\prgFun[0](\qdrElm[][\star]) = \qdrElm[][\star]$ and, so,
    $\prgFun[+](\qdrElm[][\star]) = \qdrElm[][\star]$.
  \end{proof}

  \renewcounter{lemma}{lmm:prgzer}
  \begin{lemma}
    Let $\qdrElm \in \QDRSet$ be a fixpoint of $\prgFun[0]$.
    Then, $\mfElm[\qdrElm]$ is a progress measure over
    $\dual{\qsiFun(\qdrElm)}$.
  \end{lemma}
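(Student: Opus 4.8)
The plan is to unfold the fixpoint hypothesis and then read the two progress inequalities directly off the defining max/min clauses of the lift operator. First I would name $\qdrElm[][\star] \defeq \liftFun(\qdrElm, \dual{\qsiFun(\qdrElm)}, \PosSet)$, so that the assumption that $\qdrElm$ is a fixpoint of $\prgFun[0]$ reads $\qdrElm = \prgFun[0](\qdrElm) = \sup \{ \qdrElm, \qdrElm[][\star] \}$. Since a supremum is in particular an upper bound of $\qdrElm[][\star]$ with respect to the ordering of Definition~\ref{def:qsidomrepspc}, this forces $\qdrElm[][\star] \sqsubseteq \qdrElm$; by the first conjunct of Condition~\ref{def:qsidomrepspc(ord)}, together with the pointwise order on measure functions, this yields $\mfElm[{\qdrElm[][\star]}](\posElm) \leq \mfElm[\qdrElm](\posElm)$ for every position $\posElm \in \PosSet$. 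This ``no strict increase'' consequence is the only fact about the fixpoint that I will need.

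Next I would fix an arbitrary position $\posElm \in \dual{\qsiFun(\qdrElm)}$, which by construction lies in the source set of the lift, and split on its owner. If $\posElm \in \PosSet[\PlrSym]$, the definition of $\liftFun$ with target set $\PosSet$ (so that $\MovRel(\posElm) \cap \PosSet = \MovRel(\posElm)$, which is nonempty because the arena has no sinks) gives $\mfElm[{\qdrElm[][\star]}](\posElm) = \max \set{ \mfElm[\qdrElm](\uposElm) + \posElm }{ \uposElm \in \MovRel(\posElm) }$. Combining this with $\mfElm[{\qdrElm[][\star]}](\posElm) \leq \mfElm[\qdrElm](\posElm)$ gives $\mfElm[\qdrElm](\uposElm) + \posElm \leq \mfElm[\qdrElm](\posElm)$ for all adjacents $\uposElm \in \MovRel(\posElm)$, which is exactly Condition~\ref{def:prgmsr(plr)} of Definition~\ref{def:prgmsr}. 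Symmetrically, if $\posElm \in \PosSet[\OppSym]$, the lift uses the minimum, so $\mfElm[{\qdrElm[][\star]}](\posElm) = \min \set{ \mfElm[\qdrElm](\uposElm) + \posElm }{ \uposElm \in \MovRel(\posElm) } \leq \mfElm[\qdrElm](\posElm)$, and any adjacent realizing the minimum witnesses $\mfElm[\qdrElm](\uposElm) + \posElm \leq \mfElm[\qdrElm](\posElm)$, i.e.\ Condition~\ref{def:prgmsr(opp)}.

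Since both progress conditions then hold at every $\posElm \in \dual{\qsiFun(\qdrElm)}$, the measure function $\mfElm[\qdrElm]$ is a progress measure over $\dual{\qsiFun(\qdrElm)}$, as required. I do not anticipate a genuine obstacle here: in contrast with $\prgFun[+]$, whose failure of monotonicity makes its totality analysis delicate, $\prgFun[0]$ is an ordinary simultaneous lift, and the progress inequalities are literally the unfolding of its max/min clauses once the fixpoint rules out a strict increase. The only two points deserving a line of care are \emph{(i)} deriving $\qdrElm[][\star] \sqsubseteq \qdrElm$ from the universal property of the supremum rather than presupposing it, and \emph{(ii)} observing that ``progress measure over $\dual{\qsiFun(\qdrElm)}$'' restricts the quantification of Definition~\ref{def:prgmsr} to positions of measure zero, leaving the measures of their (possibly internal) adjacents $\uposElm$ entirely unconstrained.
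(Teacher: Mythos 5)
Your proposal is correct and follows essentially the same route as the paper's own proof: both derive $\qdrElm[][\star] \defeq \liftFun(\qdrElm, \dual{\qsiFun(\qdrElm)}, \PosSet) \sqsubseteq \qdrElm$ from the fixpoint equation and the supremum, extract the pointwise inequality $\mfElm[{\qdrElm[][\star]}](\posElm) \leq \mfElm[\qdrElm](\posElm)$ via the \QDR ordering, and then split on the owner of $\posElm$ to read off Conditions~\ref{def:prgmsr(plr)} and~\ref{def:prgmsr(opp)} from the max/min clauses of the lift operator. The two remarks you add (non-emptiness of $\MovRel(\posElm)$ and the scope of the quantification) are harmless refinements of the same argument.
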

  \begin{proof}
    By definition of the progress operator $\prgFun[0]$, we have that $\qdrElm =
    \prgFun[0](\qdrElm) = \sup \{ \qdrElm, \liftFun(\qdrElm,
    \dual{\qsiFun(\qdrElm)}, \PosSet) \}$, from which we derive
    $\qdrElm[][\star] \defeq \liftFun(\qdrElm, \dual{\qsiFun(\qdrElm)}, \PosSet)
    \sqsubseteq \qdrElm$.
    Now, consider an arbitrary position $\posElm \in \dual{\qsiFun(\qdrElm)}$
    and observe that $\mfElm[{\qdrElm[][\star]}](\posElm) \leq
    \mfElm[\qdrElm](\posElm)$, due to Item~\ref{def:qsidomrepspc(ord)} of
    Definition~\ref{def:qsidomrepspc}.
    At this point, the proof proceeds by a case analysis on the owner of the
    position $\posElm$ itself.
    \begin{itemize}
      \item
        \textbf{[$\posElm \in \PosSet[\PlrSym]$].}
        By definition of the lift operator, we have that
        $\mfElm[\qdrElm](\uposElm) + \posElm \leq \max \set{
        \mfElm[\qdrElm](\uposElm) + \posElm }{ \uposElm \in \MovRel(\posElm) } =
        \mfElm[{\qdrElm[][\star]}](\posElm)$, for all adjacents $\uposElm \in
        \MovRel(\posElm)$ of $\posElm$.
        Thus, $\mfElm[\qdrElm](\uposElm) + \posElm \leq
        \mfElm[{\qdrElm[][\star]}](\posElm) \leq \mfElm[\qdrElm](\posElm)$,
        thanks to the above observation.
        Consequently, Condition~\ref{def:prgmsr(plr)} of
        Definition~\ref{def:prgmsr} is satisfied on $\dual{\qsiFun(\qdrElm)}$.
      \item
        \textbf{[$\posElm \in \PosSet[\OppSym]$].}
        Again by definition of the lift operator, we have that
        $\mfElm[\qdrElm](\uposElm) + \posElm \leq \min \set{
        \mfElm[\qdrElm](\uposElm) + \posElm }{ \uposElm \in \MovRel(\posElm) } =
        \mfElm[{\qdrElm[][\star]}](\posElm)$, for some adjacent $\uposElm \in
        \MovRel(\posElm)$ of $\posElm$.
        Due to the above observation, it holds that $\mfElm[\qdrElm](\uposElm) +
        \posElm \leq \mfElm[{\qdrElm[][\star]}](\posElm) \leq
        \mfElm[\qdrElm](\posElm)$.
        Hence, Condition~\ref{def:prgmsr(opp)} of Definition~\ref{def:prgmsr} is
        satisfied on $\dual{\qsiFun(\qdrElm)}$ as well.
    \end{itemize}
  \end{proof}

  \renewcounter{lemma}{lmm:prgpls}
  \begin{lemma}
    Let $\qdrElm \in \QDRSet$ be a fixpoint of $\prgFun[+]$.
    Then, $\mfElm[\qdrElm]$ is a progress measure over $\qsiFun(\qdrElm)$.
  \end{lemma}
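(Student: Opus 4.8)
The plan is to recast the conclusion as the single assertion that the non-progress set $\nppFun(\qdrElm)$ is empty, and then to obtain this emptiness from the fixpoint hypothesis by means of the strict-increase guarantee of Lemma~\ref{lmm:qdrchg}. First I would unfold the definitions to check that, for a position $\posElm \in \qsiFun(\qdrElm)$, the progress condition of Definition~\ref{def:prgmsr} fails at $\posElm$ exactly when $\posElm \in \nppFun(\qdrElm)$. For $\posElm \in \qsiFun(\qdrElm) \cap \PosSet[\PlrSym]$, the negation of Condition~\ref{def:prgmsr(plr)} reads ``there is some $\uposElm \in \MovRel(\posElm)$ with $\mfElm[\qdrElm](\posElm) < \mfElm[\qdrElm](\uposElm) + \posElm$'', which is precisely the first component in the definition of $\nppFun(\qdrElm)$; dually, for $\posElm \in \qsiFun(\qdrElm) \cap \PosSet[\OppSym]$, the negation of Condition~\ref{def:prgmsr(opp)} reads ``for every $\uposElm \in \MovRel(\posElm)$, $\mfElm[\qdrElm](\posElm) < \mfElm[\qdrElm](\uposElm) + \posElm$'', the second component. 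Hence $\mfElm[\qdrElm]$ is a progress measure over $\qsiFun(\qdrElm)$ if and only if $\nppFun(\qdrElm) = \emptyset$, and it suffices to prove the latter.

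Next I would record the inclusion $\nppFun(\qdrElm) \subseteq \dmnFun(\qdrElm)$. This is immediate from the definition of $\dmnFun(\qdrElm)$ as the inflationary fixpoint of the operator $\QSet \mapsto \preFun(\qdrElm, \QSet)$ seeded at $\nppFun(\qdrElm)$: since $\preFun(\qdrElm, \QSet) \supseteq \QSet$ for every $\QSet$, that fixpoint contains its seed. With the inclusion in hand, the argument closes by contradiction. Suppose $\qdrElm$ is a fixpoint of $\prgFun[+]$, \ie, $\prgFun[+](\qdrElm) = \qdrElm$, and assume toward a contradiction that $\nppFun(\qdrElm) \neq \emptyset$. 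Then $\dmnFun(\qdrElm) \neq \emptyset$, so fix any $\posElm \in \dmnFun(\qdrElm)$ and write $\qdrElm[][\star] \defeq \prgFun[+](\qdrElm)$. By Lemma~\ref{lmm:qdrchg}, $\mfElm[{\qdrElm[][\star]}](\posElm) > \mfElm[\qdrElm](\posElm)$, which contradicts $\qdrElm[][\star] = \qdrElm$. Therefore $\nppFun(\qdrElm) = \emptyset$, and by the equivalence of the first step $\mfElm[\qdrElm]$ is a progress measure over $\qsiFun(\qdrElm)$.

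The step I would scrutinize most is the definitional equivalence of the first paragraph: one must match the universal quantifier over $\MovRel(\posElm)$ for $\PlrSym$-positions and the existential one for $\OppSym$-positions against the correct quantifier and polarity in each disjunct of $\nppFun(\qdrElm)$, taking care that the adjacents range over all of $\MovRel(\posElm)$ and are not tacitly confined to $\qsiFun(\qdrElm)$. Everything else is routine: the inclusion into $\dmnFun(\qdrElm)$ is a one-line inflationarity remark, and the contradiction is a direct appeal to Lemma~\ref{lmm:qdrchg}, which furnishes exactly the strict increase that no fixpoint can tolerate. Should one wish to avoid relying on Lemma~\ref{lmm:qdrchg}, the alternative would be to trace the while-loop of Algorithm~\ref{alg:prg} and show that a nonempty $\dmnFun(\qdrElm)$ forces either a lift that strictly raises some measure or a final application of $\winFun$ that sends a finite measure to $\infty$; this case analysis is where the genuine work would reside.
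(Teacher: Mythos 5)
Your proof is correct, but it follows a genuinely different route than the paper's. The paper proves the lemma by direct verification: it unfolds the fixpoint hypothesis into the internal run of $\prgFun[+]$ on $\qdrElm$, namely the decreasing sequence $\QSet[0] = \dmnFun(\qdrElm) \supseteq \QSet[1] \supseteq \ldots$ with $\QSet[i + 1] = \QSet[i] \setminus \ESet[i]$ and $\ESet[i] = \bepFun(\qdrElm, \QSet[i])$, observes that at a fixpoint every internal step must leave $\qdrElm$ untouched, \ie, $\qdrElm = \liftFun(\qdrElm, \ESet[i], \dual{\QSet[i]})$ for all $i$ and $\qdrElm = \winFun(\qdrElm, \QSet[k])$ for the final closed set $\QSet[k]$, and then reads off the conditions of Definition~\ref{def:prgmsr} position by position: positions outside $\dmnFun(\qdrElm)$ satisfy progress by the same definitional equivalence you spell out, positions in $\QSet[k]$ have measure $\infty$ by the $\winFun$ fixpoint, and positions in some $\ESet[i]$ obtain the required inequality for moves inside $\QSet[i]$ from membership in $\escFun(\qdrElm, \QSet[i])$ and for moves outside $\QSet[i]$ from the lift fixpoint. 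You replace this entire case analysis with a single appeal to Lemma~\ref{lmm:qdrchg}: at a fixpoint no measure can strictly increase, so $\dmnFun(\qdrElm) = \emptyset$, hence $\nppFun(\qdrElm) = \emptyset$, which by your first step is exactly the claim. Your argument is shorter and more modular, and it is not circular: the paper's proof of Lemma~\ref{lmm:qdrchg} relies only on the definitions and on the forfeit-monotonicity argument inside the proof of Theorem~\ref{thm:tot}, never on the present lemma, so invoking it here is legitimate even though the paper happens to state it later. What you trade away is self-containedness, since Lemma~\ref{lmm:qdrchg} carries the real combinatorial work that the paper's direct case analysis performs in place; what you gain, besides brevity, is the slightly stronger intermediate fact that $\dmnFun(\qdrElm)$ is empty at every fixpoint of $\prgFun[+]$, something the paper's proof never needs and never establishes.
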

  \begin{proof}
    Let us consider the infinite monotone sequence of position sets $\QSet[0]
    \supseteq \QSet[1] \supseteq \ldots$ defined as follows: $\QSet[0] \defeq
    \dmnFun(\qdrElm)$; $\QSet[i + 1] \defeq \QSet[i] \setminus \ESet[i]$, where
    $\ESet[i] \defeq \bepFun(\qdrElm, \QSet[i])$, for all $i \in \SetN$.
    Since $\card{\QSet[0]} < \infty$, there necessarily exists an index $k \in
    \SetN$ such that $\QSet[k + 1] = \QSet[k]$.
    By definition of the progress operator $\prgFun[+]$ and the equality
    $\qdrElm = \prgFun[+](\qdrElm)$, we have that $\qdrElm = \liftFun(\qdrElm,
    \ESet[i], \dual{\QSet[i]})$, for all $i \in \numco{0}{k}$, and $\qdrElm =
    \winFun(\qdrElm, \QSet[k])$.
    Now, consider an arbitrary position $\posElm \in \qsiFun(\qdrElm)$.
    If $\posElm \not\in \dmnFun(\qdrElm)$, due to the definition of the set
    $\dmnFun(\qdrElm)$, the position $\posElm$ satisfies by definition of the
    appropriate condition of Definition~\ref{def:prgmsr} on $\qsiFun(\qdrElm)$.
    Therefore, let us assume $\posElm \in \dmnFun(\qdrElm)$.
    Then, it is obvious that either $\posElm \in \QSet[k]$ or there is a unique
    index $i \in \numco{0}{k}$ such that $\posElm \in \QSet[i] \setminus \QSet[i
    + 1]$, \ie, $\posElm \in \ESet[i]$.
    In the first case, we have $\mfElm[\qdrElm](\posElm) = \infty$, due to the
    definition of the function $\winFun$.
    Therefore, $\posElm$ is a progress position.
    In the other case, the proof proceeds by a case analysis on the owner of the
    position $\posElm$ itself.
    \begin{itemize}
      \item
        \textbf{[$\posElm \in \PosSet[\PlrSym]$].}
        First observe that $\bepFun(\qdrElm, \QSet[i]) \subseteq
        \escFun(\qdrElm, \QSet[i])$.
        Thus, due to the definition of the function $\escFun$, we have that
        $\mfElm[\qdrElm](\uposElm) + \posElm \leq \mfElm[\qdrElm](\posElm)$, for
        all positions $\uposElm \in \MovRel(\posElm) \cap \QSet[i]$.
        Now, by the definition of the lift operator, we have that
        $\mfElm[\qdrElm](\uposElm) + \posElm \leq \max \set{
        \mfElm[\qdrElm](\uposElm) + \posElm }{ \uposElm \in \MovRel(\posElm)
        \cap \dual{\QSet[i]} } = \mfElm[\qdrElm](\posElm)$, for all adjacents
        $\uposElm \in \MovRel(\posElm)\cap \dual{\QSet[i]}$ of $\posElm$.
        Consequently, $\mfElm[\qdrElm](\uposElm) + \posElm \leq
        \mfElm[\qdrElm](\posElm)$, for all positions $\uposElm \in
        \MovRel(\posElm)$, as required by Condition~\ref{def:prgmsr(plr)} of
        Definition~\ref{def:prgmsr} on $\qsiFun(\qdrElm)$.
      \item
        \textbf{[$\posElm \in \PosSet[\OppSym]$].}
        Again by definition of the lift operator, we have that
        $\mfElm[\qdrElm](\uposElm) + \posElm \leq \min \set{
        \mfElm[\qdrElm](\uposElm) + \posElm }{ \uposElm \in \MovRel(\posElm)
        \cap \dual{\QSet[i]} } = \mfElm[\qdrElm](\posElm)$, for some adjacent
        $\uposElm \in \MovRel(\posElm) \cap \dual{\QSet[i]} \subseteq
        \MovRel(\posElm)$ of $\posElm$.
        Hence, Condition~\ref{def:prgmsr(opp)} of Definition~\ref{def:prgmsr} is
        satisfied on $\qsiFun(\qdrElm)$ as well.
    \end{itemize}
  \end{proof}

  \renewcounter{theorem}{thm:snd}
  \begin{theorem}[Soundness]
    $\denot{\solFun(\qdrElm)}[\OppSym] \subseteq \WinSet[\OppSym]$, for every
    $\qdrElm \in \QDRSet$.
  \end{theorem}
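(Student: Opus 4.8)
The plan is to establish that the measure function produced by the solver is a \emph{global} progress measure in the sense of Definition~\ref{def:prgmsr}, and then to read off the conclusion directly from Theorem~\ref{thm:prgmsr}. To begin, I would set $\qdrElm[][\star] \defeq \solFun(\qdrElm)$ and invoke Lemma~\ref{lmm:fixpnt}, which asserts that $\qdrElm[][\star]$ is \emph{simultaneously} a fixpoint of both progress operators, that is, $\qdrElm[][\star] = \prgFun[0](\qdrElm[][\star]) = \prgFun[+](\qdrElm[][\star])$. This joint-fixpoint property is the linchpin of the whole argument: since $\solFun$ is by definition the inflationary fixpoint of the composite map $\prgFun[+] \cmp \prgFun[0]$, one a priori only obtains a fixpoint of the composition, and it is precisely Lemma~\ref{lmm:fixpnt}, leveraging the inflationary character of both operators, that upgrades this to a common fixpoint of the two operators taken separately.

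With the joint fixpoint in hand, I would splice together the two localized progress guarantees. Being a fixpoint of $\prgFun[0]$, Lemma~\ref{lmm:prgzer} makes $\mfElm[{\qdrElm[][\star]}]$ a progress measure over $\dual{\qsiFun(\qdrElm[][\star])}$, i.e., over the positions carrying measure $0$. Dually, being a fixpoint of $\prgFun[+]$, Lemma~\ref{lmm:prgpls} makes $\mfElm[{\qdrElm[][\star]}]$ a progress measure over $\qsiFun(\qdrElm[][\star])$, i.e., over the positions carrying positive measure. Since $\qsiFun(\qdrElm[][\star])$ and its complement $\dual{\qsiFun(\qdrElm[][\star])}$ partition $\PosSet$, the progress conditions of Definition~\ref{def:prgmsr} are verified at \emph{every} position, so $\mfElm[{\qdrElm[][\star]}]$ is a genuine progress measure over the entire arena.

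It then remains to apply Theorem~\ref{thm:prgmsr} to this progress measure, which yields $\denot{\mfElm[{\qdrElm[][\star]}]}[\OppSym] \subseteq \WinSet[\OppSym]$. Finally, recalling that the $\OppSym$-denotation of a \qdr is by definition the $\OppSym$-denotation of its measure component, we have $\denot{\solFun(\qdrElm)}[\OppSym] = \denot{\mfElm[{\qdrElm[][\star]}]}[\OppSym]$, whence the claim.

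I do not anticipate a genuine obstacle here, as the substantive work has already been discharged into Theorem~\ref{thm:tot} and Lemmas~\ref{lmm:prgzer},~\ref{lmm:prgpls}, and~\ref{lmm:fixpnt}; the argument is essentially an assembly step. The only point demanding a little care is the coverage observation, namely that the measure-$0$ positions and the positive-measure positions jointly exhaust $\PosSet$, so that the two restricted progress statements splice seamlessly into the unrestricted condition of Definition~\ref{def:prgmsr}. Absent the joint fixpoint furnished by Lemma~\ref{lmm:fixpnt}, this splicing would break down; the real conceptual content therefore lives upstream, in the tension between the merely inflationary (not monotone) nature of the lift-based operators and the guaranteed existence of a shared fixpoint.
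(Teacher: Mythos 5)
Your proposal is correct and follows essentially the same route as the paper's own proof: invoke Lemma~\ref{lmm:fixpnt} to obtain the joint fixpoint, combine Lemmas~\ref{lmm:prgzer} and~\ref{lmm:prgpls} to conclude that $\mfElm[{\qdrElm[][\star]}]$ is a progress measure, and finish via Theorem~\ref{thm:prgmsr} together with the identification $\denot{\qdrElm[][\star]}[\OppSym] = \denot{\mfElm[{\qdrElm[][\star]}]}[\OppSym]$ from Definition~\ref{def:qsidomrepspc}. The only difference is that you spell out the partition of $\PosSet$ into $\qsiFun(\qdrElm[][\star])$ and $\dual{\qsiFun(\qdrElm[][\star])}$, which the paper leaves implicit.
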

  \begin{proof}
    Let $\qdrElm[][\star] \defeq \solFun(\qdrElm)$ be the result of the solver
    operator applied to $\qdrElm \in \QDRSet$.
    By Lemma~\ref{lmm:fixpnt}, it holds that $\qdrElm[][\star] =
    \prgFun[0](\qdrElm[][\star]) = \prgFun[+](\qdrElm[][\star])$.
    As a consequence, $\mfElm[{\qdrElm[][\star]}]$ is a progress measure, due to
    Lemmas~\ref{lmm:prgzer} and~\ref{lmm:prgpls}.
    At this point, by recalling that $\denot{\qdrElm[][\star]}[\OppSym] =
    \denot{\mfElm[{\qdrElm[][\star]}]}[\OppSym]$, as reported in
    Definition~\ref{def:qsidomrepspc}, the thesis is immediately derived by
    applying Theorem~\ref{thm:prgmsr} to $\mfElm[{\qdrElm[][\star]}]$.
  \end{proof}

  \renewcounter{theorem}{thm:com}
  \begin{theorem}[Completeness]
    $\denot{\solFun(\qdrElm)}[\PlrSym] \subseteq \WinSet[\PlrSym]$, for every
    $\qdrElm \in \QDRSet$.
  \end{theorem}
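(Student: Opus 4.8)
The plan is to obtain completeness as an almost immediate corollary of the invariant already threaded through the construction, namely that every reachable object is a genuine \qdr whose infinite-measure denotation is a $\PlrSym$-dominion. First I would fix an arbitrary $\qdrElm \in \QDRSet$ and abbreviate $\qdrElm[][\star] \defeq \solFun(\qdrElm)$. By Theorem~\ref{thm:ter} the solver operator $\solFun$ is a well-defined total function from $\QDRSet$ into itself, so $\qdrElm[][\star]$ is again a \qdr; in particular, all four closure conditions of Definition~\ref{def:qsidomrepspc} hold of it, this totality being precisely what Theorem~\ref{thm:tot} and Theorem~\ref{thm:ter} were set up to guarantee.

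Next I would read off Condition~\ref{def:qsidomrepspc(set:dom)} of Definition~\ref{def:qsidomrepspc} applied to $\qdrElm[][\star]$, which states verbatim that $\denot{\mfElm[{\qdrElm[][\star]}]}[\PlrSym]$ is a $\PlrSym$-dominion. Unfolding the $\alpha$-denotation clause at the end of that definition, one has $\denot{\qdrElm[][\star]}[\PlrSym] = \denot{\mfElm[{\qdrElm[][\star]}]}[\PlrSym] = \mfElm[{\qdrElm[][\star]}][-1](\infty)$, so the set asserted to be a dominion is exactly the set of positions carrying infinite measure in the computed solution, which is the left-hand side of the claim.

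Finally I would invoke the definition of $\PlrSym$-dominion from Section~\ref{sec:mpg}: a set $\VSet$ being a $\PlrSym$-dominion (at threshold $0$) means there is a single $\PlrSym$-strategy forcing $\vFun(\pthElm) > 0$ from every position of $\VSet$ against all opponent strategies, which is exactly the assertion $\VSet \subseteq \WinSet[\PlrSym]$. Chaining the identifications, $\denot{\solFun(\qdrElm)}[\PlrSym] = \denot{\mfElm[{\qdrElm[][\star]}]}[\PlrSym] \subseteq \WinSet[\PlrSym]$, which is the desired inclusion.

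I do not expect a genuine obstacle at this stage: the entire difficulty has been front-loaded into maintaining Condition~\ref{def:qsidomrepspc(set:dom)} as an invariant, which is why the delicate reasoning about reachability to already-won positions appears inside the totality argument rather than here. The only points requiring a moment's care are confirming that $\solFun(\qdrElm)$ is treated as a bona fide element of $\QDRSet$ (so that the closure conditions may be applied) and that the $\alpha$-denotation of the final \qdr coincides with its infinite-measure set; both are immediate, from Theorem~\ref{thm:ter} and Definition~\ref{def:qsidomrepspc}, respectively.
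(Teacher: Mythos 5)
Your proposal is correct and follows essentially the same route as the paper's own proof: invoke Theorem~\ref{thm:ter} to conclude that $\solFun(\qdrElm)$ is a bona fide \qdr, then apply Condition~\ref{def:qsidomrepspc(set:dom)} of Definition~\ref{def:qsidomrepspc} to identify its $\PlrSym$-denotation as a $\PlrSym$-dominion, hence a subset of $\WinSet[\PlrSym]$. The extra unfolding you do of the denotation and of the dominion definition is sound and merely makes explicit what the paper leaves implicit.
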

  \begin{proof}
    The thesis immediately follows by considering Theorem~\ref{thm:ter} and
    Condition~\ref{def:qsidomrepspc(set:dom)} of
    Definition~\ref{def:qsidomrepspc}.
    Indeed, by the statement of the recalled theorem, $\solFun(\qdrElm)$ is a
    \qdr, independently of the element $\qdrElm \in \QDRSet$ given as input to
    the solver operator.
    Thus, thanks to the above condition, it holds the
    $\denot{\solFun(\qdrElm)}[\PlrSym] \subseteq \WinSet[\PlrSym]$.
  \end{proof}

  \renewcounter{lemma}{lmm:qdrchg}
  \begin{lemma}
    Let $\qdrElm[][\star] \defeq \prgFun[+](\qdrElm)$, for some $\qdrElm \in
    \QDRSet$.
    Then, $\mfElm[{\qdrElm[][\star]}](\posElm) > \mfElm[\qdrElm](\posElm)$, for
    all positions $\posElm \in \dmnFun(\qdrElm)$.
  \end{lemma}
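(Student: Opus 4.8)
The plan is to reuse the two monotone sequences introduced in the proof of Theorem~\ref{thm:tot}. Writing $\qdrElm[][\star] \defeq \prgFun[+](\qdrElm)$, recall that $\prgFun[+]$ produces $\QSet[0] \supseteq \QSet[1] \supseteq \ldots$ and $\qdrElm[0] \sqsubseteq \qdrElm[1] \sqsubseteq \ldots$ with $\QSet[0] \defeq \dmnFun(\qdrElm)$, $\qdrElm[0] \defeq \qdrElm$, $\ESet[i] \defeq \bepFun(\qdrElm[i], \QSet[i])$, $\QSet[i+1] \defeq \QSet[i] \setminus \ESet[i]$, and $\qdrElm[i+1] \defeq \liftFun(\qdrElm[i], \ESet[i], \dual{\QSet[i]})$, stabilizing at an index $k$ with $\qdrElm[][\star] = \winFun(\qdrElm[k], \QSet[k])$. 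Hence each $\posElm \in \dmnFun(\qdrElm) = \QSet[0]$ is of exactly one kind: either $\posElm \in \QSet[k]$ (it survives every escape round), or there is a unique $i < k$ with $\posElm \in \ESet[i] = \QSet[i] \setminus \QSet[i+1]$. In the latter case the lift never touches $\posElm$ again, so $\mfElm[{\qdrElm[][\star]}](\posElm) = \mfElm[{\qdrElm[i+1]}](\posElm)$, and while $\posElm$ still lies in $\QSet[i]$ its measure equals $\mfElm[\qdrElm](\posElm)$.

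First I would dispose of the $\QSet[k]$ case. Since $\winFun(\qdrElm[k], \QSet[k])$ assigns measure $\infty$ to every position of $\QSet[k]$, it suffices to observe that $\mfElm[\qdrElm](\posElm)$ is finite for all $\posElm \in \dmnFun(\qdrElm)$. This follows from a straightforward induction on the fixpoint computation of $\dmnFun$ showing that no position of $\denot{\mfElm[\qdrElm]}[\PlrSym]$ ever enters $\dmnFun(\qdrElm)$: such a position trivially satisfies both progress conditions, hence lies outside $\nppFun(\qdrElm)$ (the base); the $\preFun$ clause for $\OppSym$-positions demands a strict increase, impossible at measure $\infty$; and by Condition~\ref{def:qsidomrepspc(set:plr)} the $\PlrSym$-witness of a maximal-measure $\PlrSym$-position is again maximal, so by the inductive hypothesis it is not in the set built so far and the $\preFun$ witness clause cannot pull $\posElm$ in. Thus $\mfElm[{\qdrElm[][\star]}](\posElm) = \infty > \mfElm[\qdrElm](\posElm)$.

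For the escaping case I would reduce everything to the positivity of the best-escape forfeit. For $\posElm \in \ESet[i]$, unfolding the definitions of $\liftFun$ and $\befFun$ and using $\MovRel(\posElm) \setminus \QSet[i] = \MovRel(\posElm) \cap \dual{\QSet[i]}$ yields the identity $\mfElm[{\qdrElm[i+1]}](\posElm) = \mfElm[{\qdrElm[i]}](\posElm) + \befFun(\mfElm[{\qdrElm[i]}], \QSet[i], \posElm)$; and since every element of $\ESet[i] = \bepFun(\qdrElm[i], \QSet[i])$ shares the common forfeit $\fElm[i] \defeq \min[\posElm \in {\escFun(\qdrElm[i], \QSet[i])}] \befFun(\mfElm[{\qdrElm[i]}], \QSet[i], \posElm)$, we get $\mfElm[{\qdrElm[i+1]}](\posElm) = \mfElm[{\qdrElm[i]}](\posElm) + \fElm[i] = \mfElm[\qdrElm](\posElm) + \fElm[i]$. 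So the goal becomes $\fElm[i] > 0$, and because the sequence $\fElm[0] \leq \fElm[1] \leq \ldots$ was already shown non-decreasing in the proof of Theorem~\ref{thm:tot}, it is enough to prove $\fElm[0] > 0$.

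The crux, which I expect to be the main obstacle, is precisely $\fElm[0] > 0$: every escape position of $\QSet[0] = \dmnFun(\qdrElm)$ must have strictly positive forfeit, and this is where one has to combine the fixpoint characterization of $\dmnFun(\qdrElm)$ with the escape condition. For a $\PlrSym$-position $\posElm \in \escFun(\qdrElm, \QSet[0])$, the escape condition forces $\strElm[\qdrElm](\posElm) \notin \QSet[0]$, so $\posElm$ cannot have been added through the $\preFun$ witness clause and must be a genuine non-progress position; its witnessing move $\mfElm[\qdrElm](\posElm) < \mfElm[\qdrElm](\wposElm) + \posElm$ cannot land in $\QSet[0]$ (the escape inequality forbids it), hence lies in $\MovRel(\posElm) \setminus \QSet[0]$ and gives $\befFun(\mfElm[\qdrElm], \QSet[0], \posElm) > 0$. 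For an $\OppSym$-position $\posElm \in \escFun(\qdrElm, \QSet[0])$, whether it entered via $\nppFun(\qdrElm)$ or via a $\preFun$ stage $\QSet[j] \subseteq \QSet[0]$, in both cases one obtains $\mfElm[\qdrElm](\posElm) < \mfElm[\qdrElm](\uposElm) + \posElm$ for every exiting move $\uposElm \in \MovRel(\posElm) \setminus \QSet[0]$, so again $\befFun(\mfElm[\qdrElm], \QSet[0], \posElm) > 0$. Taking the minimum over the escape set, which is nonempty precisely when some position escapes, gives $\fElm[0] > 0$ and completes the argument.
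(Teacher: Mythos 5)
Your proof is correct and follows essentially the same route as the paper's own proof: both arguments establish that the first-round best-escape forfeit is strictly positive through the same case analysis on the owner of the escape position (a $\PlrSym$-escape of $\dmnFun(\qdrElm)$ must be a non-progress position whose witnessing move exits the set, while an $\OppSym$-position of $\dmnFun(\qdrElm)$ strictly increases along every exiting move), then transfer positivity to all later rounds via the forfeit-monotonicity already proved for Theorem~\ref{thm:tot}, and settle the positions that never escape through $\winFun$ and the finiteness of their measures. The only notable differences are presentational: you make explicit the identity $\mfElm[{\qdrElm[i+1]}](\posElm) = \mfElm[{\qdrElm[i]}](\posElm) + \fElm[i]$ on which the reduction rests, and you actually prove that measures are finite inside $\dmnFun(\qdrElm)$, a point the paper dismisses as ``obvious''.
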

  \begin{proof}
    Consider the set $\ESet \defeq \bepFun(\qdrElm, \dmnFun(\qdrElm)) \subseteq
    \escFun(\qdrElm, \dmnFun(\qdrElm))$ and let $\der{\qdrElm} \defeq
    \liftFun(\qdrElm, \ESet, \dual{\dmnFun(\qdrElm)})$.
    First observe that $\mfElm[\der{\qdrElm}](\posElm) =
    \mfElm[{\qdrElm[][\star]}](\posElm)$, for all escape positions $\posElm \in
    \ESet$.
    We now show that $\mfElm[{\qdrElm[][\star]}](\posElm) >
    \mfElm[\qdrElm](\posElm)$, via a case analysis on the owner of the position
    $\posElm$ itself.
    \begin{itemize}
      \item
        \textbf{[$\posElm \in \PosSet[\PlrSym]$].}
        By definition of the function $\escFun$, it holds that
        $\strElm[\qdrElm](\posElm) \not\in \dmnFun(\qdrElm)$ and
        $\mfElm[\qdrElm](\posElm) \geq \mfElm[\qdrElm](\uposElm) +
        \posElm$, for all adjacents $\uposElm \in \MovRel(\posElm) \cap
        \dmnFun(\qdrElm)$.
        Since $\posElm \in \dmnFun(\qdrElm)$, due to the way this specific weak
        quasi dominion is constructed, $\posElm \in \nppFun(\qdrElm)$.
        Thus, there exists a successor $\uposElm[][\star] \in \MovRel(\posElm)$
        with $\mfElm[\qdrElm](\posElm) < \mfElm[\qdrElm](\uposElm[][\star]) +
        \posElm$, from which it follows that $\uposElm[][\star] \not\in
        \dmnFun(\qdrElm)$, \ie, $\uposElm[][\star] \in \dual{\dmnFun(\qdrElm)}$.
        As a consequence, we obtain that $\mfElm[\der{\qdrElm}](\posElm) \geq
        \mfElm[\qdrElm](\uposElm[][\star]) + \posElm >
        \mfElm[\qdrElm](\posElm)$.
        Hence, $\mfElm[{\qdrElm[][\star]}](\posElm) > \mfElm[\qdrElm](\posElm)$.
      \item
        \textbf{[$\posElm \in \PosSet[\OppSym]$].}
        Since $\posElm \in \dmnFun(\qdrElm)$, we have that
        $\mfElm[\qdrElm](\posElm) < \mfElm[\qdrElm](\uposElm) + \posElm$, for
        all adjacents $\uposElm \in \MovRel(\posElm) \setminus
        \dmnFun(\qdrElm)$.
        Thus, $\mfElm[\der{\qdrElm}](\posElm) = \min \set{
        \mfElm[\qdrElm](\uposElm) + \posElm }{ \uposElm \in \MovRel(\posElm)
        \setminus \dmnFun(\qdrElm) } > \mfElm[\qdrElm](\posElm)$.
        Hence, $\mfElm[{\qdrElm[][\star]}](\posElm) > \mfElm[\qdrElm](\posElm)$
        in this case as well.
    \end{itemize}
    Now, consider a position $\posElm \in \dmnFun(\qdrElm) \setminus \ESet$.
    Obviously, $\mfElm[\qdrElm](\posElm) < \infty$.
    If $\mfElm[{\qdrElm[][\star]}](\posElm) = \infty$, the thesis immediately
    follows.
    Otherwise, it will be considered as an escape of some weak quasi dominion
    $\QSet \subset \dmnFun(\qdrElm)$, after the removal of the first escape
    positions in $\ESet$.
    Due to the non-decreasing property of the sequence of best-escape forfeit
    shown in the proof of Theorem~\ref{thm:tot}, $\posElm$ exits from $\QSet$
    with a forfeit $\fElm[][\star]$ at least as great as the one $\fElm$ of
    $\ESet$ that we just proved to be strictly positive.
    Indeed, $\fElm = \mfElm[{\qdrElm[][\star]}](\zposElm) -
    \mfElm[\qdrElm](\zposElm) > 0$, for all $\zposElm \in \ESet$.
    Therefore, $\mfElm[{\qdrElm[][\star]}](\posElm) - \mfElm[\qdrElm](\posElm) =
    \fElm[][\star] \geq \fElm > 0$, which implies
    $\mfElm[{\qdrElm[][\star]}](\posElm) > \mfElm[\qdrElm](\posElm)$.
  \end{proof}

  \renewcounter{theorem}{thm:cmp}
  \begin{theorem}[Complexity]
    \QDPM requires time $\AOmicron{n \cdot m \cdot \WElm \cdot \log(n \cdot
    \WElm)}$ to solve an \MPG with $n$ positions, $m$ moves, and maximal
    positive weight $\WElm$.
  \end{theorem}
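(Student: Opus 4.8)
The plan is to bound the total running time by an amortized argument over the \emph{lift} operations, which dominate the cost of the whole procedure: all the auxiliary bookkeeping performed inside $\prgFun[0]$ and $\prgFun[+]$ (computing $\dmnFun$, the escape and best-escape sets, and maintaining the opponent counters) is charged against individual lifts. It therefore suffices to count, for each position, how many times its measure is recomputed and at what unit cost, and then sum.

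First I would bound the number of genuine lifts incurred by a single position $\posElm$ over the entire run of $\solFun$. By Lemma~\ref{lmm:qdrchg}, every application of $\prgFun[+]$ \emph{strictly} increases $\mfElm[\qdrElm](\posElm)$ for each $\posElm \in \dmnFun(\qdrElm)$, and a symmetric observation holds for $\prgFun[0]$, which only acts on positions of measure $0$ outside the quasi dominion and increases them strictly when it absorbs them into it; since the solver is inflationary (Theorem~\ref{thm:tot}), measures never decrease along the computation. By Lemma~\ref{lmm:qdrbnd}, any finite measure is bounded by $\SElm$, so the measure of $\posElm$ can strictly increase at most $\SElm$ times before the final jump to $\infty$, giving at most $\SElm + 1$ genuine lifts of $\posElm$. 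As $\SElm$ is a sum of at most $n$ positive weights each bounded by $\WElm$, we have $\SElm \leq n \cdot \WElm$, hence $\SElm = \AOmicron{n \cdot \WElm}$ and each position is lifted $\AOmicron{n \cdot \WElm}$ times.

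Next I would account for the cost of a single lift of a position $\posElm$. Recomputing $\mfElm[\qdrElm](\posElm)$ scans the outgoing moves of $\posElm$ restricted to the current target set, while restoring the invariants of its predecessors (updating their counters and escape status) scans its incoming moves; each comparison or addition of measures manipulates numbers bounded by $\SElm$, hence costs $\AOmicron{\log \SElm}$. Thus one lift of $\posElm$ costs $\AOmicron{(\card{\MovRel(\posElm)} + \card{\MovRel[][-1](\posElm)}) \cdot \log \SElm}$. Multiplying the per-lift cost by the $\AOmicron{\SElm}$ lift bound and summing over all positions, and using $\sum_{\posElm \in \PosSet} (\card{\MovRel(\posElm)} + \card{\MovRel[][-1](\posElm)}) = 2m$ (each move contributes once as an outgoing and once as an incoming edge), the total cost collapses to $\AOmicron{\SElm \cdot m \cdot \log \SElm}$. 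Substituting $\SElm = \AOmicron{n \cdot \WElm}$ then yields the claimed bound $\AOmicron{n \cdot m \cdot \WElm \cdot \log(n \cdot \WElm)}$.

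The hard part will be justifying the amortization rigorously: that the layer-by-layer processing inside $\prgFun[+]$ (Algorithm~\ref{alg:prg}) never performs more than a constant amount of work per incident move of a position that is actually lifted. This is exactly where the design choice of processing only the positions in $\dmnFun(\qdrElm)$, rather than all of $\qsiFun(\qdrElm)$, becomes essential: by Lemma~\ref{lmm:qdrchg} every such position strictly increases its measure, so the examination of a move $(\posElm, \uposElm)$ can be attributed to a strict measure increase of one of its two endpoints, and no move is examined more than $\AOmicron{\SElm}$ times overall. I would make this precise by maintaining, for each opponent position, a counter recording how many of its outgoing moves still witness the progress condition, so that the escape and best-escape computations touch a move only when the measure of its target changes; this ties every unit of work to one of the $\AOmicron{n \cdot \WElm}$ lifts already counted, closing the argument.
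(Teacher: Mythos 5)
Your proposal is correct and follows essentially the same route as the paper's proof: the $\SElm+1$ per-position lift bound from Lemmas~\ref{lmm:qdrchg} and~\ref{lmm:qdrbnd}, the per-lift cost of $\AOmicron{(\card{\MovRel(\posElm)} + \card{\MovRel[][-1](\posElm)}) \cdot \log \SElm}$, and the counter-based amortization that ties every move examination to a strict measure increase of a position in $\dmnFun(\qdrElm)$ are exactly the ingredients of the paper's argument. The paper simply carries out the part you flag as "the hard part" concretely, by giving imperative implementations (Algorithms~\ref{alg:sol}, \ref{alg:prgzer}, \ref{alg:dmn}, and~\ref{alg:prgpls}) with the counters $\cFun$, $\dFun$, $\gFun$ you describe plus a min-heap priority queue (with $\AOmicron{\log \SElm}$ operations) for extracting best-escape positions, and by verifying the loop invariants that guarantee only lifted positions' incident moves are ever touched.
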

  \begin{proof}
    To compute $\solFun$ efficiently, we now provide an imperative reformulation
    of the functional fixpoint algorithm $\solFun \defeq \ifpFun\, \qdrElm \,.\,
    \prgFun[+](\prgFun[0](\qdrElm))$ with the desired complexity.
    Recall that, by Lemma~\ref{lmm:qdrbnd}, each position can only be lifted
    $\SElm + 1$ times, where $\SElm \defeq \sum \set{ \wghFun(\posElm) \in \SetN
    }{ \posElm \in \PosSet \land \wghFun(\posElm) > 0 } = \AOmicron{n \cdot
    \WElm }$.
    Therefore, to obtain the claimed complexity, we have to guarantee that the
    cost of all the computational steps be linear in the number of measure
    increases.
    To do so, it suffices to ensure that the algorithm explores the incoming and
    outgoing moves only of those positions whose measures are actually lifted.
    This is clearly the case for the lift operator itself, since it only
    explores the outgoing moves of each position in its source set.
    The only remaining problem is to be able to identify the positions that need
    to be lifted in the next iteration, by only exploring the incoming moves of
    the positions just lifted.
    Solving this problem requires some technical tricks.
    Specifically, inspired by~\cite{BCDGR11}, will employ vectors of counters,
    namely $\cFun$, $\dFun$ and $\gFun$, that associates with
    $\PlrSym$-positions the number of moves that do not satisfy the progress
    condition, and with $\OppSym$-positions the number of moves that satisfy
    it.
    In addition, we will also use a priority queue $\TSet$ to allow an efficient
    identification of the \emph{best-escape positions} during the computation of
    the operator $\prgFun[+]$.

    \algsol

      Algorithm~\ref{alg:sol} reports the procedural implementation of
      $\solFun(\qdrElm[0])$, where $\qdrElm[0]$ is the smallest possible \qdr,
      as defined at Line~1.
      At the beginning of each iteration $i \in \SetN$ of the while-loop at
      Line~4, the variable $\qdrElm$ maintains the \qdr $\qdrElm[i]$ computed
      by applying to $\qdrElm[0]$ the composition $\prgFun[+] \cmp \prgFun[0]$
      $i$ times.
      Moreover, the sets $\NSet[0]$ and $\NSet[+]$ contain, respectively, the
      positions that need to be lifted by $\prgFun[0]$ and the non-progress
      positions in $\qdrElm[i]$.
      The formal invariants at Line~4 are: $\NSet[0] = \set{ \posElm \in \PosSet
      }{ \mfElm[{\qdrElm[i]}](\posElm) = 0 \neq \mfElm[{\qdrElm[i +
      1]}](\posElm) }$ and $\NSet[+] = \nppFun(\qdrElm[i])$.
      Observe that these invariants are trivially satisfied for $i = 0$, thanks
      to Line~3.
      After the execution of the progress procedure $\prgFun[0]$ at Line~5,
      whose code is reported in Algorithm~\ref{alg:prgzer}, we have that
      $\NSet[0] \subseteq \set{ \posElm \in \PosSet }{ \mfElm[{\qdrElm[i +
      1]}](\posElm) = 0 \neq \mfElm[{\qdrElm[i + 2]}](\posElm) }$ and $\NSet[+]
      \cup \ASet = \nppFun(\qdrElm[i][\star])$, where $\qdrElm[i][\star] \defeq
      \prgFun[0](\qdrElm[i])$.
      Thus, Line~6 ensures that $\NSet[+] = \nppFun(\qdrElm[i][\star])$.
      Line~7 calls the progress procedure $\prgFun[+]$, which is reported in
      Algorithm~\ref{alg:prgpls}, and forces the lift of the measures of all
      the positions in $\dmnFun(\qdrElm[i][\star])$, as stated by
      Lemma~\ref{lmm:qdrchg}.
      In addition, the verified invariants are $\NSet[0] \cup \ASet = \set{
      \posElm \in \PosSet }{ \mfElm[{\qdrElm[i + 1]}](\posElm) = 0 \neq
      \mfElm[{\qdrElm[i + 2]}](\posElm) }$ and $\NSet[+] = \nppFun(\qdrElm[i +
      1])$.
      Finally, after Line~8, it holds that $\NSet[0] = \set{ \posElm \in \PosSet
      }{ \mfElm[{\qdrElm[i + 1]}](\posElm) = 0 \neq \mfElm[{\qdrElm[i +
      2]}](\posElm) }$, as required by the previously discussed invariants for
      the next iteration $i + 1$.
      Observe that Line~2 is used to initialize, for each $\OppSym$-position
      $\posElm \in \PosSet[\OppSym]$, the counter $\cFun(\posElm)$ to the number
      of adjacents $\uposElm \in \MovRel(\posElm)$ of $\posElm$ that satisfy the
      progress inequality $\mfElm[{\qdrElm[0]}](\posElm) \geq
      \mfElm[{\qdrElm[0]}](\uposElm) + \posElm$.

    The subsequent analysis of Algorithms~\ref{alg:prgzer} and~\ref{alg:prgpls}
    shows that the procedures $\prgFun[0](\qdrElm, \cFun, \NSet[0])$ and
    $\prgFun[+](\qdrElm, \cFun, \NSet[+])$ require time
    \[
      \AOmicron{\sum_{\posElm \in \NSet[0]} (\card{\MovRel(\posElm)} +
      \card{\MovRel[][-1](\posElm)}) \cdot \log \SElm}
      \text{ and }
      \AOmicron{\sum_{\posElm \in \dmnFun(\qdrElm)} (\card{\MovRel(\posElm)} +
      \card{\MovRel[][-1](\posElm)}) \cdot \log \SElm},
    \]
    respectively, where $\nppFun(\qdrElm) = \NSet[+]$.
    In particular, the factor $\log \SElm$ is due to all the arithmetic
    operations required to compute the stretch of the measures.
    Since during the entire execution of the algorithm each position $\posElm
    \in \PosSet$ can appear at most once in some $\NSet[0]$ and at most $\SElm$
    times in some $\dmnFun(\qdrElm)$, it follows that the total cost of
    Algorithm~\ref{alg:sol} is
    \[
      \AOmicron{n + (\SElm + 1) \cdot\! \sum_{\posElm \in \PosSet}
      (\card{\MovRel(\posElm)} \!+\! \card{\MovRel[][-1](\posElm)}) \cdot \log
      \SElm} = \AOmicron{n + \SElm \cdot m \cdot \log \SElm} = \AOmicron{n \cdot
      m \cdot \WElm \cdot \log(n \cdot \WElm)},
    \]
    where the term $n$ is due to the initialization operations at Lines~1-3.

    Observe that, the two procedures $\prgFun[0]$ and $\prgFun[+]$, together
    with the auxiliary one reported in Algorithm~\ref{alg:dmn}, share with
    Algorithm~\ref{alg:sol} both the current \qdr $\qdrElm$ and the counter
    $\cFun$ as global variables.

    \vspace{0.5em}
    \algprgzer
    \vspace{0.5em}

    Algorithm~\ref{alg:prgzer} simply computes the lift of all the positions
    contained in its input set $\NSet$ (Line~3) and then identifies the new
    positions that will be lifted by either the next application of
    $\prgFun[0]$, namely $\ZSet \cap \mfElm[\qdrElm][-1](0)$, or the subsequent
    application of $\prgFun[+]$, namely $\ZSet \setminus
    \mfElm[\qdrElm][-1](0)$.
    To do so, it first reinitializes the counter for the positions just lifted
    (Line~4) and, then, for each of their incoming moves (Line~5), verifies if
    there exists a new position whose measure needs to be increased.
    The case of an incoming $\PlrSym$-move is trivial (Lines~6-7).
    Therefore, let us consider the opponent player.
    A position $\posElm \in \PosSet[\OppSym]$ needs to be lifted only if
    $\mfElm[\qdrElm](\posElm) < \mfElm[\qdrElm](\uposElm) + \posElm$, for all
    adjacents $\uposElm \in \MovRel(\posElm)$.
    Therefore, we decrement the associated counter (Line~8) every time a
    non-progress move, that previously satisfied the progress condition \wrt the
    unlifted \qdr, is identified.
    The counter reaching zero means that the above condition is satisfied, thus,
    the considered position need to be lifted in the next iteration (Line~9).

    \vspace{0.5em}
    \algdmn
    \vspace{0.5em}

    Algorithm~\ref{alg:dmn} computes the weak quasi dominion $\dmnFun(\qdrElm)$,
    starting from its trigger set $\NSet = \nppFun(\qdrElm)$ that contains all
    the non-progress positions in $\qsiFun(\qdrElm)$.
    The implementation almost precisely follows the functional definition of the
    two operators $\dmnFun$ and $\preFun$, by caring only about keeping the
    whole computation cost linear in the number of incoming moves in each
    position contained in the resulting set.
    To do so, we exploit the same tricks used in the previous procedure
    employing a counter $\dFun$ for the $\OppSym$-positions.
    Note that, $\dFun$ contains a copy of the values in $\cFun$, in order to
    preserve the values in $\cFun$ for the other procedure.

    Finally, Algorithm~\ref{alg:prgpls} implements the procedure described in
    Algorithm~\ref{alg:prg}.
    It first computes the weak quasi dominion $\dmnFun(\qdrElm)$, by calling
    Algorithm~\ref{alg:dmn} (Line~2).
    After that, it identifies its escape positions and the associated forfeit,
    in order to identify the set of best-escape positions that need to be
    lifted (Line~4).
    To do so, we employ a priority queue $\TSet$ based on a min-heap, which will
    contain at most $\SElm$ different forfeit values during the entire
    execution of the algorithm (positions associated with the same forfeit are
    clustered together).
    Obviously, each insert, decrease-key, and remove-min operation on $\TSet$
    will require time $\AOmicron{\log \SElm}$.
    The while-loop at Line~6 simulates the while-loop at Line~2 of
    Algorithm~\ref{alg:prg}, where instructions at Lines~7-9 precisely
    correspond to those at Lines~3-5.
    After the measure update of the best-escape positions in $\ESet$, the
    associated counters in $\cFun$ are reinitialized (line~10).
    At this point, an analysis on the incoming moves of $\ESet$ takes place
    (Line~11).
    For all moves $(\posElm, \uposElm) \in \MovRel$ with $\uposElm \in \ESet$
    and $\posElm \not\in \QSet$, the algorithm performs, at Lines~17-22, almost
    exactly the same operations done by Algorithm~\ref{alg:prgzer} at Lines~6-9.
    The only difference here is that $\OppSym$-positions can only be forced to
    lift their measure if they are not yet contained in the quasi dominion
    $\qsiFun(\qdrElm)$.
    The case $\posElm \in \QSet$, instead, identifies a possible discovering of
    a new escape of the remaining weak quasi dominion (Line~12).
    If $\posElm \in \PosSet[\OppSym]$, this is obviously an escape from $\QSet$,
    thus, it needs to be added to the priority queue $\TSet$ paired with the
    associated best-escape forfeit computed along the move $(\posElm,
    \uposElm)$ (Lines~13-14).
    If $\posElm$ is already contained in $\TSet$, the associated valued is
    decreased, if necessary.
    The case $\posElm \in \PosSet[\PlrSym]$ is more complicated, since a
    $\PlrSym$-position is an escape \iff its current strategy exits from $\QSet$
    and it has no move within $\QSet$ that allows an increase of its measure.
    To do this check, once again, we employ the counter trick, where this time
    we associate with a $\PlrSym$-position in $\dmnFun(\qdrElm)$ the number of
    moves that satisfy the above property (Line~5).
    If the move $(\posElm, \uposElm)$ satisfies the property \wrt the unlifted
    \qdr (\ie, before the lifted of $\uposElm$ occurred), then the corresponding
    counter $\gFun(\posElm)$ is decreased (Line~15).
    When the counter reaches value $0$, the position is necessarily an escape,
    so, it is added to the queue paired with its best possible forfeit
    (Line~16).
    Line~23 calls the $\winFun$ function in order to identify a possible new
    $\PlrSym$-dominion.
    Finally, Lines~24-29 update both the set of positions $\ZSet$ to be lifted
    in the next iteration and the counter $\cFun$, by executing exactly the same
    instructions as those at Lines~18-22 on the moves that reach the dominion
    $\QSet$.
  \end{proof}

  \algprgpls

\end{section}





\end{document}
